\pdfoutput=1
\documentclass[11pt]{article}
\PassOptionsToPackage{numbers,compress}{natbib}
\input commands.sty
\usepackage{tikz}

\usetikzlibrary{arrows.meta, decorations.pathreplacing, calc}

\newcommand{\A}{\ms{A}}
\newcommand{\OB}{\ms{O}}
\newcommand{\Gg}{\mc{G}}
\newcommand{\Ff}{\mc{F}}
\newcommand{\Mm}{\mc{M}}
\newcommand{\lgp}{\log_2^{+}}

\newtheorem{openprob}[thm]{Open Problem}

\crefname{thm}{Theorem}{Theorems}
\Crefname{thm}{Theorem}{Theorems}
\crefname{lem}{Lemma}{Lemmas}
\Crefname{lem}{Lemma}{Lemmas}
\crefname{cor}{Corollary}{Corollaries}
\Crefname{cor}{Corollary}{Corollaries}
\crefname{prop}{Proposition}{Propositions}
\Crefname{prop}{Proposition}{Propositions}
\crefname{openprob}{Open Problem}{Open Problems}
\Crefname{openprob}{Open Problem}{Open Problems}
\crefname{remark}{Remark}{Remarks}
\Crefname{remark}{Remark}{Remarks}
\crefname{claim}{Claim}{Claims}
\Crefname{claim}{Claim}{Claims}

\title{Tight Bounds on the Cost of Adaptivity for the Meyerson Sketch}

\author{%
  Edith Cohen\thanks{Google Research and Tel Aviv University.
    \texttt{edith@cohenwang.com}}
  \and
  Elena Gribelyuk\thanks{Princeton University.
   \texttt{eg5539@princeton.edu}}
  \and
  Pasin Manurangsi\thanks{Google Research.
    \texttt{pasin@google.com}}
  \and
  Uri Stemmer\thanks{Tel Aviv University and Google Research.
    \texttt{u@uri.co.il}}%
}

\date{}

\begin{document}
\maketitle

\begin{abstract}
In online facility location, points arrive one at a time, and the
algorithm must either open a facility at the arriving point or route
the point to an existing facility. The Meyerson sketch opens a facility at each
arriving point with probability proportional to the point's distance
to the closest open facility, and requires no state beyond the set of open
centers. Due to its simplicity, space efficiency, and strong guarantees against the
offline optimum, the Meyerson sketch has become a workhorse of streaming and online
clustering. In many such applications, however, the set of open facilities are visible to the process that generates the stream, which can adaptively select future points
based on the algorithm's past random choices, voiding its classical
guarantees. In this work, we quantify the effect of such adaptivity. We compare an
adaptively generated run of the sketch against an \emph{oblivious
replay}, an independent execution, with fresh coins, on the very same generated sequence, and study the \emph{adaptivity ratio} of
expected adaptive cost to expected replay cost. We determine the
worst-case ratio in both directions: adaptivity can neither inflate nor
deflate the expected cost, or the number of open facilities, by more than an $O(\log\Delta/\log\log\Delta)$ factor,
where $\Delta$ is the aspect ratio of the input points (the ratio of the
largest to the smallest pairwise distance). This is asymptotically tight as
there are deterministic generators on the real line that inflate or deflate the cost by an $\Om(\log\Delta/\log\log\Delta)$ factor. We show that these robustness guarantees carry over to Meyerson-based sketches for approximate $k$-clustering with sketch size $O(k\,\mathrm{polylog}(n))$.

\end{abstract}

\section{Introduction}

In the \emph{online facility location} (OFL) problem, points of an underlying metric space
$(\Mm, d)$ arrive one at a time; upon each arrival, the algorithm must
irrevocably either open a facility at this point and pay a fixed price
$f$, or choose to route the point to the closest existing facility and pay for the routing distance. The total cost is defined as $C = f \cdot K + Q$, where $K$ denotes the total number of opened facilities (centers) and $Q$ is the total routing cost. The task of the algorithm is to open a set of facilities such that the total cost is close to the \emph{offline optimum} 
\begin{equation} \label{optf:eq}
\mathrm{OPT}_f(x) = \min_{F \subseteq \Mm}\, \bigl(f\,|F| +
\sum_{t} d(x_t, F)\bigr)
\end{equation}
on a given input sequence $x = (x_1, \ldots x_T)$.
A common performance measure of an online algorithm is its
\emph{competitive ratio}, defined as the supremum of
$\E[C]/\mathrm{OPT}_f(x)$ over all input sequences $x$ of length $n$ (with the expectation over the algorithm's coins).

\subsection{The Meyerson Sketch}

\citet{meyerson2001online} proposed a strikingly simple randomized rule
(\Cref{alg:meyerson}): for each arriving point, open a new facility at this point with probability
proportional to its current distance to the closest facility, capped at one. The rule
rent-or-buys in expectation: since a point at distance $d < f$ has opening probability $d/f$, the expected total facility opening cost matches the routing cost. Moreover, the sketch is space efficient in that its state is only the current set of opened facilities, which generally is much smaller than the input size, and decreases with $f$.

\begin{algorithm2e}[t]
\caption{The Meyerson sketch with facility price $f$.}
\label{alg:meyerson}
$S \gets \emptyset$\;
\ForEach{arriving point $x$}{
  $d \gets d(x, S)$ \tcp*{convention: $d(x, \emptyset) = +\infty$}
  draw a fresh coin $U \sim \mathrm{Unif}[0,1]$\;
  \eIf{$U \le \min\{1,\ d/f\}$}{
    open a center: $S \gets S \cup \{x\}$\;
  }{
    route $x$ to its nearest center in $S$, at cost $d$\;
  }
}
\end{algorithm2e}

Over the last few decades, a sequence of works has analyzed the competitive ratio of the Meyerson sketch under \textit{oblivious} inputs, where the input stream does not depend on the runtime choices of the algorithm and can be considered fixed in advance. In his seminal work, Meyerson proved that this algorithm is $O(\log n)$-competitive in expectation for worst-case sequences of length $n$ \citep{meyerson2001online}. \citet{fotakis2008competitive} later settled the
competitive ratio of the OFL problem at $\Theta(\log n / \log\log n)$. 
The lower bound
holds for every online algorithm, randomized or deterministic, with no
restriction on its state, even when the points lie on a line segment. The upper bound is achieved by the randomized space-efficient Meyerson sketch and  also by a deterministic algorithm that uses $\Theta(n)$ space and maintains all still-unsatisfied
past demands. 
Another thread analyzed the Meyerson sketch when the stream of points is presented in a uniformly-random order. In this setting, \citet{meyerson2001online} established that the sketch is $O(1)$-competitive and it was later shown to be $4$-competitive \citep{KNR23}; \citet{Lang:SODA2018} showed that the sketch is $\Theta(\log t/\log\log t)$-competitive on $t$-semirandom orders.
We refer to the survey of \citet{fotakis2011online} for a more thorough coverage of the surrounding literature.

Importantly, the definitions, and the $O(\log n / \log\log n)$ competitive ratio, and high probability bounds \citep{BMORST11,ShindlerWM11} 
(see \Cref{app:semiratio}), apply when $(\Mm,d)$ is a relaxation that satisfies all the axioms
of a metric, except that the triangle inequality is relaxed to the
$\alpha$-approximate form
$d(x,z) \le \alpha\bigl(d(x,y) + d(y,z)\bigr)$ for a constant
$\alpha \ge 1$. We refer to this relaxation as a \emph{semi-metric}. A very useful semi-metric, with $\alpha=2$, is the squared Euclidean distance, which is used in $k$-means clustering.  

Beyond the online facility location problem, the Meyerson sketch has found numerous applications as an online \emph{summarization} primitive: the center set, weighted by the number of associated points, constitutes an online summary of the full stream. 
One central application of the Meyerson sketch is to the $k$-clustering problem, where the objective constrains the number of
centers instead of pricing them:
\begin{equation} \label{optk:eq}
\mathrm{OPT}_k(x) \;=\; \min_{C \subseteq \Mm,\ |C| = k}\,
\sum_{t} d(x_t, C) .
\end{equation}
For the space $(\Mm,d)$, when $\alpha=1$, the objective is $k$-medians. With squared distances ($\alpha=2$), the objective is $k$-means. 
Streaming or online clustering algorithms \citep{CharikarOP03} apply the Meyerson sketch in phases: run the sketch until it exceeds a budget of centers; once the budget is exhausted, double the price, compress the opened centers into a weighted prefix for the next phase, and continue. Each center in the summary is weighted by the number of associated points. With $O(k\log^2 n)$ centers, the  service cost
is $O(1)$ times the $k$-median optimum; applying an offline
$O(1)$-approximation clustering algorithm to the weighted centers then yields
exactly $k$ centers at an $O(1)$-approximation overall
\citep{CharikarOP03}. This approach was later extended to $k$-means clustering, and sharpened to $O(k\log n)$ centers via high-probability
bounds \citep{BMORST11,ShindlerWM11,Lang:SODA2018}.

\subsection{Meyerson Sketch under Adaptive Inputs}

 A common assumption in the analysis of randomized algorithms, which also underlines prior works on the Meyerson sketch, is that the sequence of inputs or queries is chosen independently of the algorithm’s internal randomness. While natural in one-shot or offline settings, this \emph{oblivious inputs} assumption can be too strong when an algorithm is used repeatedly and its past outputs are visible. In such settings, future inputs may be chosen in response to previous answers of the algorithm, allowing the interaction history to reveal information about the algorithm’s random choices and invalidating correctness guarantees proved only for oblivious inputs. Motivated by these concerns, adaptive and adversarially chosen inputs have been studied extensively across statistical queries \citep{Freedman:1983,Ioannidis:2005,FreedmanParadox:2009,HardtUllman:FOCS2014,DworkFHPRR:STOC2015}, sketching and streaming algorithms \citep{MironovNS:STOC2008,HardtW:STOC2013,BenEliezerJWY21,DBLP:conf/nips/CherapanamjeriN20,HassidimKMMS20,WoodruffZ21,AttiasCSS21,BEO21,DBLP:conf/icml/CohenLNSSS22,TrickingHashingTrick:arxiv2022,AhmadianCohen:ICML2024,GribelyukLWYZ25,CohenNSSS:SODA2026,CohenGNS:ICML2026}, dynamic graph algorithms \citep{ShiloachEven:JACM1981,AhnGM:SODA2012,gawrychowskiMW:ICALP2020,GutenbergPW:SODA2020,Wajc:STOC2020,BKMNSS22}, and adversarial robustness in machine learning  \citep{szegedy2013intriguing,goodfellow2014explaining,athalye2018synthesizing,papernot2017practical}.

For many of the OFL applications, the algorithm's center set is not just an internal state, but is also published as a summary of the evolving input stream. Moreover, future input points may be chosen adaptively, in a way that depends on these previously published summaries.
As deterministic algorithms must satisfy correctness guarantees for any fixed input stream, deterministic algorithms automatically maintain correctness under adaptively chosen input sequences. In particular, the $\Theta(n)$ space deterministic OFL algorithm of  \citet{fotakis2008competitive} achieves the same behavior under adaptive inputs. However, all known sublinear-space OFL algorithms are randomized, the guarantees hold statistically for any fixed input under the algorithm randomness, and such analysis  may not transfer.

Adaptive attacks on sketching and streaming algorithms typically work by \emph{learning
persistent randomness}: for instance, the algorithm may sample a sketch matrix or hash function at the beginning of the stream, and
thereafter the algorithm is deterministic conditioned on the sampled random bits. In fact, adaptive attacks against such sketches operate by gradually learning the initial sampled bits, and then carefully designing hard queries on which this sampled randomness fails ~\citep{HardtW:STOC2013,BenEliezerJWY21,DBLP:conf/nips/CherapanamjeriN20,TrickingHashingTrick:arxiv2022,AhmadianCohen:ICML2024, GribelyukLinWoodruffYuZhou2024, GribelyukLWYZ25,CohenGNS:ICML2026}. 
In contrast, the structure of the Meyerson sketch is quite different: the sketch does not have any persistent randomness, as each decision is made using a fresh coin flip. It was previously observed that such randomized algorithms are often automatically robust~\citep{BravermanHMSSZ21,CohenSS:ITCS2026}, or at least can be meaningfully robustified~\cite{CohenGNS:ICML2026}. While the lack of persistent randomness may seem hopeful, each output of the algorithm reveals the entire current internal state of the algorithm (the center set), which is itself a function of the algorithm's previous coin flips. Thus, an adaptive generator which observes all of these internal states may adaptively select input points in order to steer the sketch toward a coin-correlated configuration which could not have been produced by an oblivious input sequence (with high probability). As a result, neither the adaptive attacks against sketches relying on persistent-randomness nor the
classical algorithm analyses for oblivious inputs apply to our setting. This motivates the following question.

\begin{question}\label{q:main}
Is the Meyerson sketch
adaptively robust? Quantitatively: by how much can an adaptive generator that
observes the sketch's open centers inflate or deflate its expected cost, relative
to an independent execution of the sketch, with fresh coins, on the
very same generated sequence?
\end{question}

The comparison in \Cref{q:main}, of an adaptive run versus \emph{oblivious
replay} on the same sequence, isolates the effect of adaptivity from
the difficulty of the sequence itself: an adaptive generator may of
course produce a stream that is simply hard, but that hardness is
charged to both runs. 

\subsection{Overview of the Model and Results}

An \emph{adaptive generator} interacts with one execution of the sketch, choosing each next point as a function of the transcript so far (including the current center set), and may also decide the stopping time adaptively. This interaction produces a random realized sequence $X=(x_1,\ldots,x_\tau)$. 
To isolate the effect of adaptivity from the difficulty of the resulting sequence itself, we compare this \emph{adaptive run} with an \emph{oblivious replay}: after the sequence $X$ has been generated, we run the sketch again from scratch on exactly the same realized sequence, using fresh independent coins.
Thus, conditioned on $X$, the replay is
therefore distributed exactly as the sketch on a fixed oblivious input.
For either run $\mathcal R\in \{\A,\OB\}$ ($\A$ for the adaptive run and $\OB$ for the oblivious replay), we track the center count $K^{\mathcal R}_\tau$, the routing cost
$Q^{\mathcal R}_\tau$, and the total cost $C^{\mathcal R}_\tau = fK^{\mathcal R}_\tau + Q^{\mathcal R}_\tau$. 

\subsubsection{Quantifying the price of adaptivity}
We study the adaptivity (adaptive-to-replay) ratio
$\E[C^{\A}_\tau]/\E[C^{\OB}_\tau]$. We compare \emph{expected costs}
since the realized ratio is noisy even on
oblivious inputs, as either run may open unusually few or unusually many centers just by the luck of its coins.
The adaptivity ratios for the center count and routing
components are defined analogously. We consider both directions: in particular, the \emph{inflation gap} is defined to be the supremum of the adaptivity
ratio over generators and stopping times, and the \emph{deflation
gap} is the supremum of its reciprocal.

Our main results resolve \Cref{q:main} with matching upper and lower bounds of $\Theta(\log\Delta/\log\log\Delta)$,  for both adaptivity gaps,  where $\Delta$  is  the \emph{aspect ratio} of the points (ratio of maximum to minimum non-zero distance between points in the stream).
In particular, this implies that the cost of adaptivity for the Meyerson sketch is relatively small and does not depend on the input length. 
Thus, the sketch is inherently robust to adaptive inputs as-is, up to this modest multiplicative factor, without the need to apply expensive robustification wrappers such as those based on differential privacy. 

Our upper bounds establish robustness by bounding the extent to which adaptivity can
inflate or deflate the expected cost or center count. We also provide high probability uniform bounds on the inflation or deflation:
\begin{theorem}[Robustness; informal; see \Cref{thm:upper,thm:upper_reversed,thm:upper_whp}]\label{thm:introupper}
Let $\Lambda \le \Delta$ denote the aspect ratio capped at the facility
price $f$, and let $\rho_\Lambda = O\!\bigl(\log\Lambda / \log\log\Lambda\bigr)$.
For every (semi-)metric space, every adaptive generator, and every stopping
time $\tau$,
\[
\E\bigl[C^{\A}_\tau\bigr] \;\le\; \rho_\Lambda\, \E\bigl[C^{\OB}_\tau\bigr]
\qquad\text{and}\qquad
\E\bigl[C^{\OB}_\tau\bigr] \;\le\; \rho_\Lambda\, \E\bigl[C^{\A}_\tau\bigr].
\]
Moreover, for every $\delta \in (0,\tfrac12]$, with probability at least
$1-\delta$, simultaneously for all $t \ge 0$,
\[
C^{\A}_t \;\le\; \rho_\Lambda\, C^{\OB}_t
+ \rho_\Lambda\, f \ln\tfrac1\delta
\qquad\text{and}\qquad
C^{\OB}_t \;\le\; \rho_\Lambda\, C^{\A}_t
+ \rho_\Lambda\, f \ln\tfrac1\delta .
\]
The same bounds hold for the center counts $K^{\A}_t, K^{\OB}_t$ in
place of the costs, with additive term $\rho_\Lambda \ln\tfrac1\delta$.
All upper bounds hold even for \emph{two-sided} adaptive generators, which observe the
replay's centers as well.\footnote{The default generators of our model are \emph{one-sided}: they observe only the adaptive run's centers, and the replay is an unobserved counterfactual with fresh coins. The class of two-sided adaptive generators is only an analytical tool.
}
\end{theorem}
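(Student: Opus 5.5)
The plan is to reduce everything to per-step comparisons that hold for every fixed prefix, and then use the martingale structure of the fresh coins. For either run $\mathcal R$, let $D^{\mathcal R}_t = d(x_t, S^{\mathcal R}_{t-1})$ be the distance from the arriving point to that run's current center set. Conditioned on the transcript, the expected cost increment is
\[
f\min\{1,D/f\} + D(1-\min\{1,D/f\}) \;\asymp\; \min\{f, D\},
\]
and the expected center increment is $\min\{1,D/f\}$. Since the coins are fresh, $C^{\mathcal R}_t - \sum_{s\le t} 2\min\{f,D^{\mathcal R}_s\}$ is, up to constants, a martingale difference sequence with respect to the joint filtration of both runs. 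This holds even for a two-sided generator, because the generator chooses $x_t$ before either coin at step $t$ is flipped. Optional stopping then reduces the expectation statements to comparing the potentials $\Phi^{\mathcal R}_\tau=\sum_{s\le\tau}\min\{f,D^{\mathcal R}_s\}$.

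The two runs see the same points but have different center sets, so we cannot compare $D^{\A}_s$ and $D^{\OB}_s$ pointwise. Instead, I would use a scale decomposition. Bucket distances into geometric scales $f/\beta^j$ for $j=0,\dots,J$, with $J\approx\log\Lambda/\log\beta$ and $\beta\approx\log\Lambda$. Below the smallest scale every point is at distance $0$ or at least the minimum distance, so the cap at $f$ and the aspect ratio bound the number of scales at $J=O(\log\Lambda/\log\log\Lambda)$. The key claim is a symmetric charging lemma: at each scale $j$, the total potential one run accrues at that scale is at most $O(1)$ times the other run's total potential, plus a martingale error. The argument goes as follows. When run $\mathcal R$ pays $\approx f/\beta^j$ at point $x_s$, either the other run $\mathcal R'$ also pays $\gtrsim f/\beta^{j}$ (charge directly), or $\mathcal R'$ has a center within $f/\beta^{j+1}$ of $x_s$ while $\mathcal R$ has none. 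In the second case, that nearby $\mathcal R'$-center $c$ was opened at some earlier time, so $\mathcal R'$ paid for it. Each time $\mathcal R$ pays at scale $j$ near $c$, $\mathcal R$ opens a center near $c$ with probability $\approx\beta^{-j}$. By the (semi-)triangle inequality with constant $\alpha$, once $\mathcal R$ opens such a center, $\mathcal R$'s future distances near $c$ drop below scale $j$. Hence the expected $\mathcal R$-payment charged to $c$ at scale $j$ is a geometric sum, $O(f)$, which is at most $O(1)$ times the opening cost $f$ that $\mathcal R'$ paid for $c$. Summing over the $J$ scales gives the factor $\rho_\Lambda$. The choice $\beta=\log\Lambda$ handles the slack between adjacent scales: the loss of a factor $\beta$ at the boundary between scales is absorbed because the deeper scales contribute geometrically less. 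Balancing the $J$ scales against the $\beta$ factor is the usual Fotakis-style trade-off that yields $\log\Lambda/\log\log\Lambda$ rather than $\log\Lambda$.

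Because the charging is symmetric in $\A$ and $\OB$, and only uses the fact that each coin is fresh given the joint transcript, the same argument gives both inflation and deflation for two-sided generators, and one-sided generators are a special case. Center counts follow the same way by dividing the facility part by $f$. For the high-probability uniform statement, I would apply a Freedman-type exponential supermartingale inequality to the differences $C^{\mathcal R}_t-\rho\,C^{\mathcal R'}_t$, whose increments are bounded by $O(f)$ after capping. Conditional variance is at most $f$ times the conditional mean, so Ville's maximal inequality gives the additive term $\rho_\Lambda f\ln(1/\delta)$ simultaneously for all $t$.

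I expect the main obstacle to be the charging lemma. The difficulty is that an adaptive generator can steer points so that $\mathcal R$ repeatedly pays near a $\mathcal R'$-center without the geometric opening argument applying cleanly. For example, it can move slightly between scales, or exploit chains of centers through the $\alpha$-relaxed triangle inequality. My first approach would be to charge each payment to the unique nearest $\mathcal R'$-center and a fixed scale, and to use the fact that each payment attempt is an independent Bernoulli trial with success probability $\gtrsim\beta^{-j}$. This bounds the expected number of attempts before closure regardless of adaptivity, via a stopped geometric martingale.
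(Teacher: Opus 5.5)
Your architecture matches the paper's: reduce to $\sum_s\min\{f,D_s\}$ by the fair-count identity, charge one run's expensive steps to the other run's nearby centers, use geometric shrinkage, and get both directions by symmetry for two-sided generators. But the expectation charging is mis-calibrated as written, and the high-probability step does not work.

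\textbf{Calibration.} Take closure first. If the second case only gives an $\mathcal R'$-center $c$ within $f/\beta^{j+1}$ of $x_s$, then after $\mathcal R$ opens at $x_s$, a later point $y$ near $c$ is only guaranteed $D^{\mathcal R}(y)\le\alpha\bigl(d(y,c)+d(c,x_s)\bigr)\le 2\alpha f/\beta^{j+1}$. That can still lie in scale $j$. The nearness radius must be $f/(\gamma\beta^{j+1})$ with $\gamma>2\alpha$, which is the paper's requirement $\theta>2\alpha$.

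The direct case then compares the two runs \emph{at the same step}, with ratio up to $\gamma\beta$. Nothing about deeper scales absorbs this. Your argument gives $\gamma\beta+\log_\beta\Lambda+O(1)$, which is $\Theta(\log\Lambda)$ at $\beta=\log\Lambda$. You must balance, e.g.\ $\beta\approx\sqrt{\log\Lambda}$, which is the paper's $\theta^*=\alpha(1+\lceil\sqrt{\log_2\Lambda}\,\rceil)$.

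The per-class ``geometric sum $O(f)$'' also needs the exact identity $\min\{f,D\}=f\min\{1,D/f\}$. Bounding expected attempts by $\beta^{j+1}$ and payments by $f\beta^{-j}$ loses another factor $\beta$. The clean route, which the paper takes, is this. The class (dear) indicator is predictable, so the class potential has expectation $f$ times the expected number of \emph{realized} $\mathcal R$-openings at class steps. Once closure is repaired, those openings are at most one per class on every path.

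The paper dispenses with buckets entirely. A step is dear if $a_t>\theta b_t$, and each realized dear adaptive opening is charged to the nearest replay center $w$. Consecutive charged radii shrink by $(\theta-\alpha)/\alpha$, so each replay center absorbs at most $\log_{(\theta-\alpha)/\alpha}\Lambda+2$ charges, deterministically (\Cref{lem:charge}).

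\textbf{High probability.} This is the genuine gap. The process $C^{\mathcal R}_t-\rho\,C^{\mathcal R'}_t$ is not a supermartingale, so Freedman/Ville cannot be applied to it. At a step where $x_t$ sits at an $\mathcal R'$-center that $\mathcal R$ lacks, the $\mathcal R'$-drift is $0$, while the $\mathcal R$-drift is of order $\min\{f,D^{\mathcal R}_t\}>0$.

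That positive drift is offset only by $\mathcal R'$'s \emph{past} openings, through the charging. Your charging of those steps is an expectation statement, so it gives no pathwise control. Indeed, the dear compensator $\sum_{\text{dear}}a_s$ is not pathwise bounded by anything about $\mathcal R'$.

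What is needed is a deterministic bound on realized dear openings. Take $\mathcal R$ to be the adaptive run, with the paper's $a_s,b_s,o_s$ and cheapness indicator $h_s=\mathbf{1}\{a_s\le\theta b_s\}$. Then on every path,
$\#\{s\le t: o_s=1,\ h_s=0\}\le(\log_{(\theta-\alpha)/\alpha}\Lambda+2)K^{\mathcal R'}_t$.

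With this, only the cheap part needs concentration. There are two time-uniform Ville bounds for $[0,1]$-valued increments (\Cref{lem:ville}): the upper tail of $\sum_s o_sh_s$ against $\sum_s a_sh_s\le\theta\sum_s b_s$, and the lower tail of $K^{\mathcal R'}_t$ against $\sum_s b_s$. For costs you need two more such events: $C_t/f$ against its compensator, which is at most $2\sum_s a_s$, and $\sum_s a_s$ against $K_t$. This is how the paper obtains the additive $O(\rho_\Lambda\ln\frac1\delta)$ term uniformly in $t$.
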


For the lower bounds, we construct deterministic generators on a line segment with adaptivity ratios that match the upper bounds up to a constant factor:  
\begin{theorem}[informal; see \Cref{thm:anticert} and
\Cref{prop:reverse}]\label{thm:introlower}
There are deterministic generators on the real line, observing only
the adaptive transcript, that force inflation
$\E[C^{\A}_\tau] = \Om(\log\Delta/\log\log\Delta)\cdot
\E[C^{\OB}_\tau]$ or, respectively, deflation
$\E[C^{\OB}_\tau] = \Om(\log\Delta/\log\log\Delta)\cdot
\E[C^{\A}_\tau]$. Each generator forces this factor in the opening, routing, and total-cost components simultaneously.
\end{theorem}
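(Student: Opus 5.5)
The plan is to build both generators on the real line from one primitive, a \emph{probe-and-react} gadget at geometrically shrinking scales. Let $L \approx \log\Delta/\log\log\Delta$ and let the scale ratio be $r \approx \log\Delta$, so that $r^L \approx \Delta$. Fix $f$ and work with distance scales $D_j = f\, r^{-j}$ for $j=0,\dots,L$. The key fact is that a point at distance $d<f$ from the nearest center opens a center with probability $d/f$. A deterministic generator that observes the center set therefore learns, after each arrival, whether that coin succeeded. It can then choose the next point in a way that correlates later costs with earlier coin outcomes. An oblivious replay on the same sequence cannot produce that correlation.

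\textbf{Inflation.} Start with a center at $0$. In round $j$ the generator repeatedly places points at distance $D_j$ from the current nearest center. It keeps doing so until the sketch opens one, which takes about $f/D_j = r^j$ points in expectation, at routing cost $\approx f$. Once a center opens at the new location, the generator descends to scale $D_{j+1}$ measured from that new center, so the nesting follows the realized openings. The length of each block is chosen adaptively, at the stopping time when an opening occurs, and this is what drives the gap.

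Total adaptive cost is $\Theta(Lf)$. In the replay, the same multiset of points at distance $\approx D_j$ from nearby locations gets served far more cheaply. Because the scales shrink by factor $r$, once the replay opens any center in the cluster it routes all remaining nested points at cost $O(D_j \cdot r^{j+1}) = O(f/r)\cdot r$ per level. I will tune $r$ so that the replay's cost over all levels is $O(f)$ plus lower-order terms, for example by making the points of level $j$ within distance $D_{j+1}$ of one another so a single replay center serves the rest. The expected-cost calculation for the replay is then a sum over levels in which the replay opens at most $O(1)$ centers in expectation once $r \gtrsim L$. This gives ratio $\Omega(L)$.

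\textbf{Deflation.} Reverse the roles. The generator streams points at a scale and stops the block as soon as the sketch \emph{fails} to make a costly choice. Equivalently, it exploits lucky early openings: after an early opening, all remaining points of the block are placed at that center, giving zero cost. The replay, in contrast, must pay its unconditioned expected cost of roughly $f$ per level. Each of the $L$ levels then contributes $\Theta(f)$ to the replay but only $O(f/L)$ to the adaptive run, balanced by the choice $r \approx \log\Delta$.

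In both directions I would verify that openings, routing, and total cost are each separated by $\Omega(L)$. This follows from rent-or-buy balance: the expected opening cost equals the expected routing cost for points with $d<f$, in both runs.

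\textbf{Main obstacle.} The hard step is the replay analysis: showing that fresh coins on the fixed realized sequence cannot reproduce the correlated structure. The level-$j$ block length is random and correlated with the nested locations, so I would condition on the realized sequence and bound the replay cost level by level using the oblivious Meyerson bounds. The $\log\log\Delta$ loss comes from choosing $r$ large enough that cross-level interference in the replay is $O(1)$ per level, while keeping $r^L \le \Delta$.
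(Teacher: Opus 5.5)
\textbf{Inflation: your generator gives only a constant ratio.} You place each new location $p_j$ at distance $D_j$ from the adaptive run's newest center $p_{j-1}$. For $r\ge 2$, every previously presented point is then at distance at least $D_j$ from $p_j$. So until the replay opens a copy of $p_j$, its opening probability is at least the adaptive one. By chasing the adaptive run's own openings, the generator hands the \emph{adaptive} run the advantage. Concretely, in round $j$ only copies of $p_j$ arrive, so the replay opens each copy with a fixed probability $q'\ge D_j/f$ until its first opening. The round length $T_j$ is geometric with parameter $D_j/f$ and independent of the replay's coins. Hence the replay opens during round $j$ with probability at least $\Pr[G\le T_j]\ge 1/2$, where $G$ is an independent geometric variable with the same parameter. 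Thus $\E[K^{\OB}_\tau]\ge L/2$ while $\E[K^{\A}_\tau]\approx L$, a ratio of about $2$ (about $4$ in total cost). Your own replay estimate, $D_j r^{j+1}=fr$ per level, already reflects this. Clustering the level-$j$ points does not help, because the replay must still open a center at every level.

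What is missing is the opposite differential. You need a point $z^*$ that the replay holds and the adaptive run lacks, certified from the adaptive transcript alone with confidence $1-O(1/L)$. That confidence is needed because on a silent failure the replay may open a center at every level; \Cref{lem:posterior} shows it is unavoidable. The paper builds this in two steps.
\begin{itemize}
\item It selects an attempt $u^*$ among spread-out attempts near a hub, so the replay holds $u^*$ with probability at most $2/R$.
\item It presents $f/\eta$ copies of $z^*=u^*+\eta$ and keeps the region only if the adaptive run rejects all of them. The replay, lacking $u^*$, opens $z^*$ with probability at least $1-1/R$.
\end{itemize}
Only then do the extraction points $y_j=z^*-\eta R^{-j}$ approach $z^*$ from the adaptive side. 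The adaptive run's nearest center $y_{j-1}$ is then a factor $R-1$ farther than the replay's center $z^*$. This gives one adaptive opening per level against $1/(R-1)$ replay openings.

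\textbf{Deflation: the target numbers are right, but the mechanism is absent.} An opening the adaptive run earns inside a block costs it $f$ in expectation. If it is exploited only by copies placed at it for the rest of that block, it buys at most about $2f$ of extra replay cost (a waiting cost of at most $f$ plus one opening). So per-block lucky openings again yield a constant ratio. For the adaptive run to pay only $O(f/L)$ per level, it must open only $O(1/L)$ centers per level. Its small distances must therefore come from a \emph{single} center that is created once, is missing from the replay with constant probability, and is closer by a factor of about $r$ to all later points than any center the replay can hold.

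This is the paper's seed. The point $z=f/2$, next to a hub at $0$, is opened by each run independently with probability $1/2$. If the adaptive run opened it, the generator presents fixed-length blocks of $4R^{j-1}$ copies of $y_j=z+\tfrac12R^{-j}f$. The adaptive run, holding $z$, opens at most $2/R$ centers per block in expectation. The replay, when it lacks $z$, is at distance at least $\tfrac12(R^{1-j}-R^{-j})f$ and opens with probability at least $1-e^{-1}$ per block.

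Finally, rent-or-buy does not make expected routing equal to expected opening cost. At distance $d<f$ the expected routing is $(1-d/f)d$, so you only get $\E[Q]\le f\,\E[K]$, which bounds the cheaper run's routing. The lower bound on the expensive run's routing needs its own computation. The paper uses the waiting cost $f(1-a_j)$ per extraction round for inflation, and the bound $d\le\tfrac34 f$ in the deflation blocks.
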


As for the routing component of the cost, we show that the adaptivity ratio can be much larger:
\begin{prop}[informal; see \Cref{prop:starvesummary}]\label{prop:introstarve}
There are
(one-sided) generators which induce an adaptivity ratio
of $\Om(\Delta)$ for the routing cost. Moreover, for the stronger \emph{two-sided}
generators, no function of $\Delta$ bounds the routing ratio in
\emph{either direction}: there exist generators in Euclidean $\mathbb{R}^{D+1}$ which generate a stream of points with constant
aspect ratio and force
$\E[Q^{\A}_\tau] \ge 2^{\Omega(D)}\,\E[Q^{\OB}_\tau]$ or,
symmetrically, $\E[Q^{\OB}_\tau] \ge 2^{\Omega(D)}\,\E[Q^{\A}_\tau]$.
\end{prop}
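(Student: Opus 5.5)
The plan is to prove the two parts of \Cref{prop:introstarve} by explicit constructions. The single mechanism behind both is to let the generator repeatedly query locations where the \emph{observed} run has no nearby center but the other run probably does. Since a point lying on an existing center has routing cost $0$, the opening and total costs stay balanced up to the $\Theta(\log\Delta/\log\log\Delta)$ factor of \Cref{thm:introupper}. The routing component alone, however, can be concentrated on the run that lacks the center. One caveat up front: none of the calibrations below has been checked, and the naive versions I tried balance out to a constant ratio.

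\textbf{One-sided, $\Omega(\Delta)$.} I would work on a line segment of length $1$ with granularity $1/\Delta$ and take $f$ of order $1$, so that a point at distance $d$ opens with probability about $d$. A single anchor point opens with probability one in both runs.

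The constructions I tried first are symmetric and fail. Sending copies of a point until the adaptive run opens there, and then stopping, gives both runs routing cost $\Theta(1)$. Conditioning on a rare adaptive opening and then flooding that location also balances: the replay's own opening there has probability $\approx d$, and its routing up to opening is $\approx 1$.

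The asymmetry has to come from \emph{where} the burst goes rather than how long it lasts. The generator should place a long burst of $\Theta(\Delta)$ points at spacing $1/\Delta$ around an adaptive center that the replay opened only with small probability. The adaptive run then routes every burst point at cost $O(1/\Delta)$, for total adaptive routing $O(1)$. The replay, lacking that center, routes at distance $\Theta(1)$ until it opens nearby. This should give $\E[Q^{\OB}]=\Omega(\Delta)\cdot \E[Q^{\A}]$, or the mirrored inflation statement.

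The main obstacle is computational. Because the replay opens a burst point with probability proportional to its distance, it self-corrects after $O(1)$ expected routing. Beating this requires choosing the burst's geometry so that the replay's expected routing stays $\Omega(1)$ while the adaptive run's stays $O(1/\Delta)$. This calibration is where I expect the real work to be.

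\textbf{Two-sided, constant aspect ratio.} In $\mathbb{R}^{D+1}$ I would fix a code of $M=2^{\Omega(D)}$ points $c_1,\dots,c_M$ on a sphere, pairwise at distance $\Theta(1)$, all at distance $1$ from an anchor at the origin. The extra coordinate is for lifted copies $c_i'$ placed at height $\epsilon$ above $c_i$ (with $\epsilon$ a constant). All inter-point distances are then $\Theta(1)$, so the aspect ratio is constant.

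In Phase~1 the generator sends each $c_i$ once, and each run independently opens it with probability $\approx 1/f$. Since the generator is two-sided, it sees the set $I$ of indices where the replay opened and the adaptive run did not. In Phase~2 it sends the $c_i'$ for $i\in I$ many times. These are at routing cost about $\epsilon$ for the replay but about $1$ for the adaptive run, and the generator stops each block as soon as the adaptive run opens there. Swapping the roles of the two runs gives the symmetric statement.

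The difficulty is the same as in the one-sided part. One has to tune $f$, $\epsilon$ and the block lengths so that the Phase~1 routing, which is common to both runs and $\Theta(M)$, does not dominate, while the Phase~2 routing differs by a $2^{\Omega(D)}$ factor. I would try making Phase~1 cheap by using $f\gg M$ and taking many Phase~2 repetitions per index.
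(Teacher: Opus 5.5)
\textbf{There is a genuine gap, in both parts.} Both of your constructions put the routing on the run that \emph{lacks} the differential center. That mechanism is capped at a constant ratio, for the self-correction reason you noticed. By the waiting-cost argument of \Cref{clm:wait}, a run pays at most $f$ expected routing on a group of points before it opens one of them. Meanwhile, manufacturing the differential by repeated attempts already costs both runs routing of order $f$.

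Your own numbers show this:
\begin{itemize}
\item In the one-sided burst, the adaptive run pays $\Theta(\Delta)\cdot O(1/\Delta)=O(1)$ and the replay at most $O(f)=O(1)$. That is a constant ratio, not $\Omega(\Delta)$.
\item In the two-sided sketch, Phase~1 yields $|I|\approx M/f$ differentials at common routing $\Theta(M)$. Each Phase~2 block gives the adaptive run $\approx f$ and the replay $\approx \epsilon f$, so both totals are $\Theta(M)$. The lift buys at most the constant $1/\epsilon$, and nothing in the design produces a $2^{\Omega(D)}$ factor. Taking $f\gg M$ only makes $I$ empty with high probability.
\end{itemize}
Your primary one-sided target, $\E[Q^{\OB}_\tau]=\Omega(\Delta)\,\E[Q^{\A}_\tau]$, is the direction the paper leaves open (\Cref{op:routing}). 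It also does not mirror into the required inflation statement for one-sided generators (\Cref{rem:exchange}).

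\textbf{The missing idea is to reverse the roles and use the saturation of the opening probability at $f$.} The routing should be paid by the run that \emph{holds} the differential center $A$. Present payoff points, each once, at distance just below $f$ from $A$ and at distance at least $f$ from every other presented point, including each other.
\begin{itemize}
\item The run lacking $A$ then opens every payoff surely and pays zero routing.
\item The run holding $A$ rejects with a fixed probability ($0.1$ in the Euclidean construction, $\eps'$ in the one-sided one) and then pays $\approx f$.
\item It never self-corrects, since its earlier openings are at least $f$ away.
\end{itemize}
So one differential, bought for $O(f)$, is harvested $m$ times.

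In $\mathbb{R}^{D+1}$ (\Cref{prop:starve}):
\begin{itemize}
\item Take a hub $H$, $A=H+\tfrac f2 e_1$, and $w_i=A+0.9f\,v_i$, where $v_1,\dots,v_m$ with $m=\lfloor e^{D/36}\rfloor$ are unit vectors orthogonal to $e_1$ with pairwise inner products at most $1/3$ (\Cref{lem:signvec}). This gives $d(w_i,w_j)\ge 1.03f$, $d(w_i,H)\ge 1.02f$ and $\Delta\le 29$.
\item The two-sided generator certifies $o=1,\ o'=0$ at $A$; this has probability $1/4$ and is retried over four regions.
\item The result is $\E[Q^{\A}_\tau]\ge 0.06fm$ against $\E[Q^{\OB}_\tau]\le f$. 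Flipping the test gives the deflation direction (\Cref{cor:starveflip}).
\end{itemize}

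For one-sided inflation (\Cref{prop:onesidedstarve}), the adaptive run must hold $A$ and the replay lack it.
\begin{itemize}
\item A one-sided generator can certify this with confidence $1-\eps'$. It presents attempts at distance $\eps' f$ from a hub until the adaptive run opens one, $A_J$; the replay holds $A_J$ only with probability $\eps'$.
\item Then $m=\eps'^{-2}$ payoff points at distance $(1-\eps')f$ from $A_J$ and $f$ from everything else give adaptive routing $\approx m\eps' f=\sqrt m\,f$. The replay's routing is at most $2f$.
\item The ratio is $\sqrt m/2=\Omega(\Delta)$, since $\Delta=1/\eps'$.
\end{itemize}
The paper realizes this in a synthetic finite metric. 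A segment cannot host it: on a line, the interval of radius below $f$ around $A$ contains at most two points pairwise $f$ apart.
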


\subsubsection{Competitive ratio}
The competitive ratio compares the cost to the offline optimum of the generated stream.
By combining the robustness expectation upper bound with the optimal oblivious-sequence competitive ratio \citep{fotakis2008competitive} (\Cref{thm:semiratio} for semi-metrics), we bound the competitive ratio under adaptively generated streams:
\begin{cor}[Adaptive competitiveness; informal; see \Cref{cor:transfer}]
\label{cor:introopt}
For every semi-metric space, adaptive generator, and stopping time $\tau$,
\[
\E\bigl[C^{\A}_\tau\bigr]
\;=\; O\Bigl(\tfrac{\log\Lambda}{\log\log\Lambda}\cdot
\tfrac{\log n}{\log\log n}\Bigr)\cdot
\E\bigl[\mathrm{OPT}_f(x_1,\dots,x_\tau)\bigr] .
\]
\end{cor}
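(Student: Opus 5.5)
The plan is to chain the first expectation bound of \Cref{thm:introupper} with the oblivious competitive ratio, applied pointwise to each realized sequence. The one subtlety is the length $n$: the stopping time $\tau$ is random and adaptive, so $n$ is not fixed, and I must either let $n$ denote a deterministic bound $\tau\le n$ or handle a random length.

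\textbf{Step 1 (adaptive to replay).} By \Cref{thm:introupper} (formally \Cref{thm:upper}), for every generator and stopping time,
\[
\E\bigl[C^{\A}_\tau\bigr]\le \rho_\Lambda\,\E\bigl[C^{\OB}_\tau\bigr],\qquad \rho_\Lambda=O\bigl(\log\Lambda/\log\log\Lambda\bigr).
\]

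\textbf{Step 2 (replay to optimum).} Condition on the realized sequence $X=(x_1,\dots,x_\tau)$. By construction, the replay conditioned on $X$ is exactly the Meyerson sketch run with fresh coins on the fixed input $X$. The oblivious competitive ratio of \citet{fotakis2008competitive}, extended to semi-metrics in \Cref{thm:semiratio}, then gives
\[
\E\bigl[C^{\OB}_\tau\mid X\bigr]\le c\,\frac{\log|X|}{\log\log|X|}\,\mathrm{OPT}_f(X)
\]
for an absolute constant $c$ depending only on $\alpha$.

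\textbf{Step 3 (uniformity in the length).} Since $\log m/\log\log m$ is nondecreasing for $m$ beyond a constant, and $|X|=\tau\le n$, I replace $|X|$ by $n$ in the bound of Step 2. If $\tau$ is unbounded, I would instead state the bound with $\E$ taken inside, or take $n$ to be the essential supremum of $\tau$; the informal statement implicitly assumes $\tau\le n$.

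\textbf{Step 4 (combine).} Taking the outer expectation over $X$ via the tower rule gives
\[
\E\bigl[C^{\OB}_\tau\bigr]\le c\,\frac{\log n}{\log\log n}\,\E\bigl[\mathrm{OPT}_f(X)\bigr].
\]
Substituting this into Step 1 yields the claim.

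\textbf{Main obstacle.} The only real check is Step 2: the oblivious bound must hold with the same constant for every fixed finite input, with no dependence on how $X$ arose. This holds because \Cref{thm:semiratio} is a worst-case guarantee over all sequences of length at most $n$, and because the replay's coins are independent of $X$. A minor detail is measurability of $\mathrm{OPT}_f(X)$ as a function of $X$, which causes no difficulty.
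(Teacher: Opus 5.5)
Your proposal is correct and is essentially the paper's argument behind \Cref{cor:transfer}: \Cref{thm:upper} bounds $\E[C^{\A}_\tau]$ by $2\kappa_\Lambda\,\E[C^{\OB}_\tau]$, and conditioning on the realized sequence turns the replay into a fixed-input run bounded by \Cref{thm:semiratio}; this conditioning works because, for a one-sided generator and an $(\Ff_t)$-stopping time, the realized sequence is independent of the replay coins $U'$. Your length step (Step 3) can be simplified by fixing $\mu = 2+\lceil \ln n/\ln\ln n\rceil$ in \Cref{thm:semiratio}, whose proof applies verbatim to every sequence of at most $n$ points, so you do not need monotonicity of $\log m/\log\log m$ at small lengths.
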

Here the factor $\tfrac{\log\Lambda}{\log\log\Lambda}$ is the cost of
adaptivity, and the $\tfrac{\log n}{\log\log n}$ factor is the optimal
competitive ratio for oblivious streams.

\subsubsection{Streaming $k$-clustering}
In streaming $k$-clustering, the task of the algorithm is to maintain a small sketch of the input stream at at each time $t$, to report a set of $k$ centers on the prefix of the stream so far, such that the clustering cost approximates the optimum \eqref{optk:eq}. We present an adaptively robust $k$-clustering sketch (\Cref{alg:multiprice}) that  uses $O(\log n)$ concurrent capacitated copies of the Meyerson sketch. Compared with the known guarantees for oblivious streams, the approximation factor of our algorithm only degrades by the Meyerson sketch's price of adaptivity.
\begin{thm}[Online adaptive $k$-clustering; informal; see
\Cref{thm:kmedianmain}]
\label{thm:introkmedian}
There is an online $k$-clustering sketch ($k$-median, and $k$-means via squared distances) built from $O(\log n)$ copies of the Meyerson sketch and a black-box offline $\rho$-approximation algorithm for the respective objective. 
On a stream of $n$ points with 
aspect ratio $\Delta$\footnote{We may assume $\Delta = n^{O(1)}$ without loss of generality; see the footnote to \Cref{thm:kmedianmain}.}, it publishes, after every point on demand, a set of $k$ centers together with a cost estimate, using 
$O\bigl(k \cdot \tfrac{\log\Delta}{\log\log\Delta} \cdot \log^2 n\bigr) = O(k\,\log^3 n/\log\log n)$ words of memory. Against any adaptive generator that observes all published center sets, the algorithm guarantees:
\begin{itemize}
\item Deterministically, the published estimate is a correct upper bound on the clustering cost of the published centers;
\item With high probability, simultaneously at every time $t$, the published centers cost at most
$O\bigl(\rho \cdot \tfrac{\log\Delta}{\log\log\Delta}\bigr) \cdot \mathrm{OPT}_k(P_t)$, whenever the prefix has more than $k$ distinct locations. 
\end{itemize}
The worst-case update time is linear in the space, measured in distance
evaluations, and reporting runs the offline algorithm on $O(\log n)$ weighted
instances with $O(k\,\mathrm{polylog}\, n)$ points in total.
\end{thm}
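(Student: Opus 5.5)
The plan is to build the sketch from the doubling scheme of \citet{CharikarOP03}, run concurrently at all scales, and to derive correctness from the robustness bound \Cref{thm:introupper} rather than from the oblivious analysis. For each guess $\gamma_j = 2^j$ of the optimum, $j=0,\dots,O(\log(n\Delta)) = O(\log n)$, we maintain one copy of the Meyerson sketch with facility price $f_j = \gamma_j/(k\log n)$. Each copy is capped at $B = O\bigl(k\log n\cdot \tfrac{\log\Delta}{\log\log\Delta}\bigr)$ centers, and each center carries a weight counter and an accumulated routing-cost counter. A copy whose center count exceeds $B$ is declared dead and stops. To report, we take the smallest live index $j$, run the offline $\rho$-approximation on the weighted centers of copy $j$ to obtain $k$ centers $Z$, and publish $Z$ with the estimate $\widehat{c} = Q_j + \sum_{s} w_s\, d(s,Z)$, suitably scaled by $\alpha$. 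The deterministic upper bound then follows from the (relaxed) triangle inequality, since each point $x$ was routed to a center $s$ and so $d(x,Z)\le \alpha(d(x,s)+d(s,Z))$. Memory is $O(\log n)\cdot B$ words, and the update time is linear because each arrival is processed by every copy with one distance evaluation per center.

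\textbf{Step 1 (a good copy survives).} Fix $j^*$ with $\gamma_{j^*}\in[\mathrm{OPT}_k(P_t), 2\,\mathrm{OPT}_k(P_t)]$. Since $f_{j^*}\ge \mathrm{OPT}_k/(k\log n)$, we have $\mathrm{OPT}_{f_{j^*}}(P_t)\le k f_{j^*} + \mathrm{OPT}_k = O(\gamma_{j^*})$. The high-probability oblivious bounds of \citep{BMORST11,ShindlerWM11} (\Cref{app:semiratio}) give that an oblivious run has $K^{\OB}_t = O(k\log n)$ and $Q^{\OB}_t = O(\gamma_{j^*})$ with probability $1-1/\mathrm{poly}(n)$. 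We then invoke the high-probability uniform form of \Cref{thm:introupper} with $\delta = 1/\mathrm{poly}(n)$ and the capped aspect ratio $\Lambda\le\Delta$. This yields, simultaneously for all $t$, $K^{\A}_t \le \rho_\Lambda K^{\OB}_t + \rho_\Lambda\ln(1/\delta) = O(k\log n\cdot\rho_\Lambda)$ and $C^{\A}_t = O(\rho_\Lambda \gamma_{j^*})$. So copy $j^*$ never exceeds $B$. A union bound over the $O(\log n)$ copies handles all scales.

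\textbf{Step 2 (the first live copy is cheap).} Let $j\le j^*$ be the smallest live index. Its center count is at most $B$ by construction, but we also need its routing cost $Q^{\A}_j$ to be $O(\rho_\Lambda\gamma_{j^*})$. We bound it through the total cost: $Q\le C \le \rho_\Lambda\,\mathrm{OPT}$-type bound at price $f_j\le f_{j^*}$, since $\mathrm{OPT}_{f}$ is monotone in $f$. This uses Step 1's inequality applied to copy $j$ in place of $j^*$. The standard argument of \citep{CharikarOP03,BMORST11} then shows that clustering the weighted summary with a $\rho$-approximation costs $O(\rho)(Q_j + \mathrm{OPT}_k)$. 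Combined with the deterministic estimate, this gives the claimed $O(\rho\,\rho_\Lambda)\,\mathrm{OPT}_k(P_t)$. The assumption of more than $k$ distinct locations ensures $\mathrm{OPT}_k>0$, so some $j^*$ exists in range given $\Delta=n^{O(1)}$.

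\textbf{Main obstacle.} The subtle point is that the published centers $Z$ depend on all copies, so the generator is adaptive with respect to every copy jointly, and the copies are capped, which changes their behavior. I would handle this as follows. First, observe that \Cref{thm:introupper} holds against any generator observing arbitrary functions of the transcript, including other copies' coins, provided each copy uses fresh coins per step. Each copy is then a Meyerson run facing an adaptive generator, while its own replay remains a fresh-coin counterfactual. Second, capping is handled by analyzing the uncapped process coupled up to the stopping time of first overflow; this is a valid stopping time, so the theorem applies. The remaining delicate point is that the replay's high-probability bound must hold on the random realized sequence; since the replay is oblivious conditioned on $X$, the bound holds conditionally and hence unconditionally.
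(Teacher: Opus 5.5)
Your architecture matches the paper's: capped Meyerson copies on a price grid, each capped copy coupled to an uncapped one up to its overflow time, each copy treated as a one-sided instance for the time-uniform adaptive-vs-replay comparison, and the replay's fixed-sequence bound applied conditionally on the stream. The coupling step is fine. The problems are in how you assemble the pieces.

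In Step 1, ``copy $j^*$ never exceeds $B$'' is false. \Cref{thm:semiratio_whp} gives $K^{\OB}_t=O\bigl(k\log n+\mathrm{OPT}_k(P_t)/f_{j^*}+\log\frac1\delta\bigr)$, which is $O(k\log n)$ only while $\mathrm{OPT}_k(P_t)=O(\gamma_{j^*})$. A stream that keeps adding points at distance more than $f_{j^*}$ from all earlier ones makes copy $j^*$ overflow once more than $B$ of them have arrived. Since you chose $j^*$ from a single time $t$, you have shown survival at that time only. For a guarantee simultaneously at all times, you need each copy $j$ to survive through the end of its window $T_j=\sup\{t:\mathrm{OPT}_k(P_t)\le\lambda_j\}$. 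The paper gets this by applying the replay bound at $T_j$, which is a fixed time once you condition on the stream, and then using monotonicity of the count; a union bound over all $n$ times would also work. Likewise, $\mathrm{OPT}_k(P_t)>0$ does not by itself put some grid value within a constant factor of the optimum. You need $\mathrm{OPT}_k(P_t)\ge d_{\min}/\alpha$ and a grid anchored at $d_{\min}$.

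Step 2 is where your selection rule costs you. Reporting from the smallest live copy obliges you to control every live copy below the designated level, at every time. ``$\mathrm{OPT}_f$ is monotone in $f$'' does not do this. The replay has no $O(\mathrm{OPT}_f)$ guarantee (its competitive ratio against $\mathrm{OPT}_f$ is $\Theta(\log n/\log\log n)$, even in expectation), so that chain loses a $\log n/\log\log n$ factor in the final approximation. What works is the bicriteria bound $C^{\OB}_t=O\bigl(\mathrm{OPT}_k(P_t)+f_j(k\log n+\log\frac1\delta)\bigr)$ from \Cref{thm:semiratio_whp}. It is monotone in $f_j$, is $O(\gamma_{j^*})$ whenever $f_j\le f_{j^*}$, and must be applied to every (level, time) pair.

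Separately, killing a copy at capacity breaks the \emph{deterministic} certificate. A dead copy stops summarizing the stream, and with positive (if tiny) probability every copy dies, leaving nothing valid to publish; even the top copy has about $k\log n$ expected openings.

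The paper's two design choices remove both issues. A frozen copy keeps routing and updating its weights and distortion, so every $\mathrm{UB}^{(j)}_t$ is always a valid bound. Reporting $j^*(t)\in\arg\min_j\mathrm{UB}^{(j)}_t$ then gives $\mathrm{cost}(P_t,C^{(j^*(t))}_t)\le\mathrm{UB}^{(j)}_t$ for the designated level $j$. So only that one level needs analysis, at $T_j$.
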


Note that the factor $\tfrac{\log\Delta}{\log\log\Delta}$ is the price of adaptivity, and a $O(\rho)$-approximation is known for oblivious streams \citep{CharikarOP03,BMORST11}.
Previously, \citet{BravermanHMSSZ21} obtained an adversarially robust $(1+\epsilon)$-approximate clustering algorithm by applying the merge-and-reduce framework to strong coresets for the \emph{Euclidean} metric. On the other hand, our construction gives a different tradeoff: it is built directly from the Meyerson summary and applies more broadly to general metrics (and semi-metrics, with constants depending on $\alpha$). Our sketch uses $O(\kappa_\Delta k\log^2 n)$ words, and outputs a solution up to an $O(\rho\,\kappa_\Delta)$ approximation factor. We discuss potential approaches to improve our result in \Cref{thm:introkmedian} in \Cref{op:samplinglayer}.
\subsection{Technical Overview}

We first observe a helpful property of the Meyerson sketch that allows us to account for center
openings as a proxy for the total cost: the expected cost of each step is within a factor of $2$ of $f$
times the opening probability of this point. In fact, this property is implicit in the previous analyses since \citet{meyerson2001online}. We show that this
property remains valid even when the inputs and stopping time are generated adaptively  (\Cref{sec:prelims}); therefore, adaptivity-ratio bounds for center
counts transfer to total costs, within a factor of $4$. With this in mind, we now describe the ideas behind our bounds.

\paragraph{Inflation upper bound (\Cref{thm:introupper}; \Cref{thm:upper}):}
 Our goal is to upper bound the expected number of
centers opened by the adaptive run by the expected number opened in the replay execution, up to some small multiplicative factor. To this end, our proof proceeds by carefully charging center openings in the adaptive run to openings in the replay execution as follows:
\begin{itemize}
\item \textbf{Accounting identity.} First, note that the expected number of centers opened is simply the sum of the
opening probabilities. At each step, the coin is fair given everything seen so far, and this holds under an adaptively chosen sequence of points and an adaptive choice of the
stopping time. Therefore, it suffices to relate the sums of opening probabilities, which are determined by the distances to the
closest open center that the two runs see on the same sequence.
\item \textbf{Cheap steps.} Suppose that the shortest distance from the new point to some existing center in the adaptive execution is within a factor of $3$ of the shortest distance in the replay execution. In fact, such ``cheap'' steps can be easily accounted for by noting that in expectation, the adaptive run will open at most $3$ times as many centers as the replay. Indeed, this $O(1)$ factor blow-up is acceptable to us, as we aim to show an $O\left(\frac{\log \Delta}{\log \log \Delta}\right)$ upper bound for the adaptivity ratio.
\item \textbf{Dear steps.} The tricky steps are those when the new point
is far from every adaptive center yet close to some replay center $w$:
in this case, the adaptive run will open new centers with much higher probability, while the replay will likely elect to pay the routing cost (and not open a new center). For each such step, we charge this ``dear'' opening to the closest replay center $w$. 
\item \textbf{Few charges per center.} The heart of the analysis is
showing that no replay center can absorb too many charges, since
each charge ``blocks'' its own scale (\Cref{fig:halving}): the opening
plants an adaptive center next to $w$, say at distance $r$. After this, any later point which is dear and whose closest replay center $w$ must be at distance less than $r/2$ from $w$. Therefore, the radius for potential future dear charges to $w$ decreases by a factor of $2$ on each such charge. Since distances matter only between $f/\Lambda$ and $f$
(above distance $f$, every point opens a center with probability $1$), this implies that there can be at most 
$\log_2\Lambda + O(1)$ charges per replay center.
\item \textbf{A tighter bound.}
As described, with a dearness threshold $\theta=3$, the cheap steps are accounted for by the replay openings up to an $O(1)$ factor, whereas the dear steps have a $O(\log_2\Lambda)$ factor blowup. The inflation upper bound can be tightened to $O(\log \Lambda/\log\log\Lambda)$ by selecting the threshold $\theta$ such that the cheap and dear costs are balanced.
\end{itemize}
This charging argument echoes the ring decompositions of classical OFL analyses
\citep{meyerson2001online,fotakis2008competitive,Lang:SODA2018}, but in our setting the charges are assigned to the replay's random centers rather than a fixed optimal
solution. Since it relies only on the freshness of the coins in each step, our
bound holds even for against powerful generators which observe the replay's
centers as well.

\paragraph{Deflation upper bound (\Cref{thm:introupper}; \Cref{thm:upper_reversed}):}
The inflation bound holds even for two-sided generators, which observe the outputs of both runs, and for stopping times that depend on both runs. This class is symmetric in the two runs, so by exchanging their roles, we directly obtain the deflation bound as a corollary of the inflation bound argument above (See \Cref{rem:exchange}).
We note that usual (one-sided) generators which only observe outputs from the adaptive run do not have this symmetry, so the two attacks below require separate constructions.

\paragraph{Inflation attack (\Cref{thm:introlower}; \Cref{thm:anticert}):}
We describe the generator that achieves the lower bound. The construction is parametrized by $R$, which will be related to the aspect ratio of the constructed sequence of points.
\begin{itemize}
\item \textbf{Finding an Anti-certificate.} The generator designs a point
$z^*$ that the adaptive run does not hold as a center, but we can certify with probability at least $\approx 1-O(1/R)$ that the replay does hold $z^*$ as a center (See~\Cref{fig:preparation}). 
A certification attempt opens at most a \textit{constant} number of centers and succeeds with constant probability.
\item \textbf{Extraction Phase.} Next, the adaptive generator presents points on a sequence of $R$ locations
(points) that get geometrically closer to $z^*$: each location in the sequence is a factor of $R$ closer to $z^*$ than the previous one. For each location  in this sequence, the adaptive generator issues copies of the same point until a center is opened there (See~\Cref{fig:extraction}). 

\item \textbf{One center per point vs.\ $O(1)$ in total.} For each point, the closest replay center is $z^*$, even if the replay holds prior locations. On the other hand, the closest adaptive center is at least a factor $R$ further. Now, we examine the outcomes of the two executions: observe that the adaptive run opens a center at every location and $R$ centers in total, while the replay opens only $1/R$ centers per location (in expectation), resulting in a total of only $O(1)$ centers in expectation. Therefore, the adaptivity ratio of this input sequence is $\Om(R)$.
\item \textbf{Aspect ratio:}  The aspect ratio of this layout is $\Delta \approx R^{\Theta(R)}$. By expressing the adaptivity ratio in terms of the aspect ratio $\Delta$, we obtain
$\Om(\log\Delta/\log\log\Delta)$.
\end{itemize}

\paragraph{Deflation attack (\Cref{thm:introlower}; \Cref{prop:reverse}):}
The adaptive generator for the deflation attack mirrors the approach in the previous section with some necessary adaptations. In particular, the attack finds a point $z^*$
that the \emph{adaptive} run holds as a center, while the replay does not hold this point with probability at least $\tfrac12$. Note that this is easy to do by observing the outputs of the adaptive execution (in contrast, for the inflation attack to work we needed high certainty that the \textit{replay} holds a particular point $z^*$). The generator then exploits $z^*$ by using exactly the same layout as in the inflation attack: concretely, the generator inserts points into $R$ locations which get geometrically closer to $z^*$, decreasing the distance of each new location to $z^*$ by a factor of $R$. At each such location, the adaptive generator presents many copies of the same point, such that the following holds: given that the adaptive run previously opened a center at $z^*$, the adaptive execution is very unlikely to open a new center at these presented points; on the other hand, the replay's closest center is at least a factor of $R$ further, and thus the replay is $R$ times more likely to open a center. Note that
in contrast with the inflation attack, the adaptive generator cannot observe the replay openings. To handle this, we set the number of copies a point (in each location) to be fixed so that the replay execution will eventually open a center with probability $\approx \frac12$. 
Then, at each location there are $\Omega(1)$ replay centers, and only $O(1/R)$ adaptive centers in expectation. Over all $R$ locations, it follows that the adaptive generator induces a gap of $\Omega(R)$ versus $O(1)$ openings between the two runs. Using the same calculation as before, we can express $R$ in terms of the aspect ratio $\Delta \approx R^R$ to obtain the adaptivity gap 
$R = O(\log\log\Delta/\log\Delta)$.%

\paragraph{Routing cost.}
The routing component of the cost is dominated by 
the openings component $\E[Q_\tau] \le f\,\E[K_\tau]$, also under adaptive generation (see 
\Cref{sec:prelims}).
The inflation and deflation attacks (\Cref{thm:introlower}) extend to the routing
component: On each location, the copies issued before the opening pay their distance in routing;
the replay, holding $c^*$, routes each copy almost for free.
The same extraction sequence therefore that separates the openings separates the routing costs as well and with a similar ratio. The stronger bounds in \Cref{prop:introstarve} are obtained by constructing generators in a high Euclidean dimension or a synthetic metric. 
The generators
\emph{starve} the
replay's routing cost while the adaptive run's total cost and center
count remain no larger than the replay's
($\E[C^{\A}_\tau] \le \E[C^{\OB}_\tau]$ and
$\E[K^{\A}_\tau] \le \E[K^{\OB}_\tau]$). We call this technique \emph{relocating} cost between the
components. For completeness, we present these attacks in\Cref{app:starvation} of the Appendix, with high-level descriptions preceding the formal analysis. 

\paragraph{Roadmap.} The rest of the paper is organized as follows: First, we formalize the model in \Cref{sec:model}, and develop the fair-count identity and cost-center equivalence in \Cref{sec:prelims}. Next, we state and prove the upper bound and cost transfer corollaries in 
\Cref{sec:upperproof}.
\Cref{sec:anticertproof} presents the inflation attack and \Cref{sec:lower_reverse} gives the
deflation attack.  
\Cref{sec:onlinesummary} presents online $k$-clustering sketches which are designed by using the Meyerson sketch as a subroutine. Finally, in \Cref{sec:openproblems} we conclude with a
discussion and open problems. The appendices contain extended results for non-adaptive Meyerson (\Cref{app:semiratio}) and routing-starvation constructions (\Cref{app:starvation}).

\section{The Model}\label{sec:model}

We state the model for a mild relaxation of a metric space, which we call a \textit{semi-metric}.

\begin{defn}[Semi-metric]\label{def:semimetric}
A \emph{semi-metric space} $(\Mm, d)$ consists of a set $\Mm$ and a
map $d : \Mm \times \Mm \to [0,\infty)$ that is symmetric, with
$d(x,x) = 0$ and $d(x,y) > 0$ for distinct $x, y$, and satisfies the
\emph{$\alpha$-approximate triangle inequality}
\[
d(x,z) \;\le\; \alpha\bigl(d(x,y) + d(y,z)\bigr)
\qquad\text{for all } x, y, z \in \Mm ,
\]
for a constant $\alpha \ge 1$. A metric is the case $\alpha = 1$;
squared Euclidean distance satisfies $\alpha = 2$.
\end{defn}

Throughout, $(\Mm, d)$ is a semi-metric space with parameter
$\alpha$, and $f > 0$ denotes the facility price.
The Meyerson sketch (\Cref{alg:meyerson}) maintains a center set
$S$, which is initially empty; upon receiving a point $x$, the sketch opens a center at $x$ with
probability $\min\{1,\, d(x, S)/f\}$, and otherwise routes $x$ at cost
$d(x, S)$. We adopt the convention $d(x, \emptyset) = +\infty$, so the first
point is opened with probability $1$. Additionally, we use the convention $0 \cdot \infty = 0$, as this simplifies the accounting of the contributions of deterministically opened points to the routing cost.

\medskip\noindent\textbf{Randomness.}
Let $(U_t)_{t\ge 1}$ and $(U'_t)_{t\ge 1}$ be independent sequences of i.i.d.\
$\mathrm{Unif}[0,1]$ variables, and let $W$ be an independent variable (the
generator's private randomness). The coins $U$ drive the \emph{adaptive} run;
the coins $U'$ drive an independent \emph{oblivious replay} on the same
input sequence.

\medskip\noindent\textbf{Adaptive interaction.}
An \emph{adaptive generator} is a sequence of measurable maps producing
\[
x_t \;=\; g_t\bigl(W;\ (x_1, o_1), \dots, (x_{t-1}, o_{t-1})\bigr)
\;\in\; \Mm \cup \{\bot\},
\]
where $\bot$ is a halting symbol (once emitted, all later points
are $\bot$ and incur no cost). For $x_t \in \Mm$ the adaptive run computes
$d^{\A}_t = d(x_t, S^{\A}_{t-1})$ and opens iff
$o_t = \ind{\{U_t \le \min\{1, d^{\A}_t / f\}\}} = 1$, with
$S^{\A}_0 = \emptyset$. Thus $x_t$ is a function of $(W, U_{<t})$ only:
the generator observes the adaptive run's transcript (equivalently, its center
set, from which the indicators $o_s$ are reconstructible) but is independent
of the replay coins $U'$.
The generator together with the adaptive coins $U$ therefore induces a 
random input sequence $X=(x_1,x_2,\ldots)$ (including its stopping time).
Crucially, this sequence is generated independently of the replay coins
$U'$.

\medskip\noindent\textbf{Oblivious replay.}
The oblivious run executes the sketch on this same random sequence $X$, using the independent coins $U'$. Equivalently, one may first condition
on the realized sequence generated by the adaptive interaction and then
run a fresh copy of the sketch on that fixed sequence.
Conditionally on the sequence, the replay is
distributed exactly as the sketch on a fixed input:
For $x_t \ne \bot$, $d^{\OB}_t = d(x_t, S^{\OB}_{t-1})$ and
$o'_t = \ind{\{U'_t \le \min\{1, d^{\OB}_t/f\}\}}$, with
$S^{\OB}_0 = \emptyset$. 

\medskip\noindent\textbf{Filtrations and stopping times.}
Let $\Ff_t = \sigma(W, U_{\le t})$ (the adaptive transcript filtration;
$x_{t+1}$ is $\Ff_t$-measurable) and
$\Gg_t = \sigma(W, U_{\le t}, U'_{\le t})$ (the joint filtration). Unless
stated otherwise, stopping times are with respect to $(\Ff_t)$; every
$(\Ff_t)$-stopping time is a $(\Gg_t)$-stopping time. Stopping times may be
infinite; all cost processes below are nondecreasing, so values at infinite
times are defined as monotone limits in $[0, \infty]$.

\noindent
Two natural $(\Ff_t)$-stopping rules are worth keeping in mind. A
\emph{fixed horizon} $T$ corresponds to the deterministic stopping time
$\tau=T$. A simple adaptive stopping rule is to run the sketch until it
opens $k$ centers: For
$k \ge 1$ the \emph{budget-exhaustion time} is
$\tau_k = \inf\{t : K^{\A}_t = k\} \in \mathbb{N}\cup\{\infty\}$.

\medskip\noindent\textbf{One-sided and two-sided generators.}
A generator as above observes only the adaptive transcript. We call such generators \emph{one-sided}:
\begin{defn}[One-sided generator]\label{def:onesided}
A one-sided generator chooses $x_t$ as a measurable
function of $(W, U_{<t})$, i.e., $x_t$ is $\Ff_{t-1}$-measurable. The
private randomness $W$ is independent of the coin sequences $(U, U')$.
\end{defn}
We also consider a stronger class: 
\begin{defn}[Two-sided generator]\label{def:twosided}
A \emph{two-sided} generator may choose $x_t$ as a measurable function
of $(W, U_{<t}, U'_{<t})$, i.e., $x_t$ is $\Gg_{t-1}$-measurable; in
particular, it also observes the replay's open centers at each time $t$.
\end{defn}

\begin{remark}[Exchanging the runs]\label{rem:exchange}
The replay is a counterfactual re-execution introduced for the analysis, so no environment interacting with the deployed sketch can condition on its coins. Importantly, we emphasize that one-sided generators are the operationally meaningful class. For a two-sided generator, the replay is no longer oblivious, since its coins influence the stream, and in fact the two runs become interchangeable: exchanging the coin sequences $U$ and $U'$ leaves the distribution unchanged, maps each run to the other, and the generator remains two-sided and $\tau$ remains a $(\Gg_t)$-stopping time. A robustness \emph{upper} bound is stronger when it holds against the more powerful two-sided generators and $(\Gg_t)$-stopping times, and an inflation bound directly implies the deflation bound, with the runs exchanged (\Cref{thm:upper_reversed}). An attack (\emph{lower} bound) is stronger when carried out by a one-sided generator which can only observe the outputs of the adaptive execution; the exchange maps a one-sided generator to one observing only the replay, so the attacks in the two directions require separate constructions.
\end{remark}

\medskip\noindent\textbf{Opening probabilities and costs.}
For $x_t \ne \bot$ define
\[
a_t \;=\; \frac{\min\{f,\, d^{\A}_t\}}{f},
\qquad
b_t \;=\; \frac{\min\{f,\, d^{\OB}_t\}}{f},
\]
to be the center opening probabilities of the two executions; both are $\Gg_{t-1}$-measurable,
and $o_t, o'_t$ are conditionally $\Bern(a_t)$, $\Bern(b_t)$ given
$\Gg_{t-1}$, since $U_t, U'_t$ are drawn independently (For $x_t = \bot$ set
$a_t = b_t = o_t = o'_t = 0$). Center counts, routing (connection) costs,  and facility-location costs up to
time $t \in \mathbb{N}\cup\{\infty\}$ are given by
\[
K^{\A}_t = \sum_{s\le t} o_s, \qquad Q^{\A}_t \;=\; \sum_{s\le t} (1 - o_s)\, d^{\A}_s,
\qquad
C^{\A}_t = f\,K^{\A}_t + Q^{\A}_t = \sum_{s \le t}\bigl[(1-o_s)\, d^{\A}_s + o_s f\bigr],
\]
and $K^{\OB}_t$, $Q^{\OB}_t$, $C^{\OB}_t$  are defined with $o'_s, d^{\OB}_s$ analogously.

\medskip\noindent\textbf{Aspect ratios.}
Let $d_{\min}$ and $d_{\max}$ be the smallest and largest distances between
distinct points of the input stream, let $\Delta = d_{\max}/d_{\min}$, and define
\[
\Lambda \;=\; \frac{\min\{f,\, d_{\max}\}}{d_{\min}} \;\le\; \Delta
\]
to be the \emph{capped} aspect ratio (distances above $f$ are equivalent for the
sketch, whose opening probability is saturated at $1$). Write
$\lgp x = \max\{0, \log_2 x\}$.

\medskip\noindent\textbf{Adaptivity gaps.} Fix a generator, which carries its ambient data, the metric space
it presents points from, and the facility opening cost  $f > 0$. For any cost component $X \in \{K, C, Q\}$ (center count, total cost,
and routing cost respectively) and a stopping time $\tau$, the \emph{adaptivity ratio} is
\[
\frac{\E\bigl[X^{\A}_\tau\bigr]}{\E\bigl[X^{\OB}_\tau\bigr]}\,,
\]
and is defined whenever the denominator is positive. 

For a cost component $X \in \{K, C, Q\}$, the \emph{adaptivity
inflation gap} of $X$ at aspect ratio $\Delta$ is the supremum of the
adaptivity ratio of $X$, and the \emph{adaptivity deflation gap} is
the supremum of its inverse --- in both cases over all one-sided
generators and stopping times $\tau$ such that the presented stream
has aspect ratio at most $\Delta$ almost surely. Both gaps are at
least $1$, witnessed by oblivious generators. Two-sided gaps are
defined identically, with the supremum over two-sided generators
(\Cref{def:twosided}); they are at least their one-sided
counterparts. When the component is not specified we mean the total
cost, and we abbreviate \emph{inflation gap} and \emph{deflation
gap}.

\begin{remark}[Randomized vs.\ deterministic generators]
By conditioning on $W$, it suffices to prove all upper bounds for
deterministic generators; all of our lower-bound constructions are
deterministic.
\end{remark}

\section{Costs, Counts, and the Fair-Count Identity}\label{sec:prelims}
The classic analysis of the Meyerson
sketch~\citep{meyerson2001online} relies on the facts that  the expected number of opened centers equals the sum of the
opening probabilities (linearity of expectation), and that each step's expected cost is within a factor of $2$ of $f$ times its opening
probability. With adaptivity, we need to use the
martingale form of the same statement: the count minus its running sum of conditional opening probabilities has
expectation zero at every stopping time. We call the resulting identity
the \emph{fair-count identity}.

We use the standard martingale machinery: Fix a filtration $(\Gg_t)_{t \ge
0}$, an increasing family of $\sigma$-algebras; $\Gg_t$ represents the
history through step $t$ (transcript including both
runs' coins). A process $(Y_t)$ is \emph{adapted} if $Y_t$ is
$\Gg_t$-measurable for every $t$ (determined by the history through step $t$), and
\emph{predictable} if $Y_t$ is $\Gg_{t-1}$-measurable for every $t$ (determined
already by the history through step $t-1$, before the step-$t$ coins are
flipped). 
In our setting: the opening probability $a_t$ is predictable
as it is a function of the new point $x_t$ and the current center set, which are both fixed before the step-$t$ coin is flipped. On the other hand, the opening indicator $o_t$ is adapted but not predictable, since it depends on the outcome of the coin toss at time $t$.
An adapted process $(M_t)$ with $\E|M_t| < \infty$ is a
\emph{martingale} if $\E[M_t \mid \Gg_{t-1}] = M_{t-1}$ for all $t$:
conditionally on the past, its increments have mean zero. A random time
$\tau$ is a \emph{stopping time} if the event $\{\tau \le t\}$ is
$\Gg_t$-measurable for every $t$: that is, the decision to stop by time $t$
may use only the history through $t$.

\begin{fact}[Optional stopping; Doob~\cite{Doob53}, see
also~{\cite[Thm.~4.8.5]{Durrett19}}]\label{fact:doob}
Let $(M_t)_{t \ge 0}$ be a martingale with respect to $(\Gg_t)$ and let
$\sigma$ be a bounded stopping time. Then $\E[M_\sigma] = \E[M_0]$.
\end{fact}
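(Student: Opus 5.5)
The statement is the optional stopping theorem for bounded stopping times. I would prove it by writing the stopped value as a telescoping sum of martingale increments and showing that each increment has mean zero. This uses only the martingale property and the fact that $\{\sigma \ge t\}$ is determined by the history through step $t-1$.

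Fix $N$ with $\sigma \le N$ almost surely. Write
\[
M_\sigma \;=\; M_0 + \sum_{t=1}^{N} \ind{\{\sigma \ge t\}}\,(M_t - M_{t-1}),
\]
which holds pointwise: on $\{\sigma = s\}$ the indicators equal $1$ exactly for $t \le s$, so the sum telescopes to $M_s - M_0$. The sum is finite, and each term is bounded in absolute value by $|M_t| + |M_{t-1}|$, which is integrable. Hence $M_\sigma$ is integrable and we may take expectations term by term.

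The key observation is predictability of the indicator. We have $\{\sigma \ge t\} = \{\sigma \le t-1\}^c$, and $\{\sigma \le t-1\} \in \Gg_{t-1}$ because $\sigma$ is a stopping time, so $\{\sigma \ge t\} \in \Gg_{t-1}$. By the tower property, and pulling out the $\Gg_{t-1}$-measurable indicator,
\[
\E\bigl[\ind{\{\sigma \ge t\}}(M_t - M_{t-1})\bigr]
= \E\bigl[\ind{\{\sigma \ge t\}}\,\E[M_t - M_{t-1} \mid \Gg_{t-1}]\bigr] = 0,
\]
since $\E[M_t \mid \Gg_{t-1}] = M_{t-1}$ by the martingale property. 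Summing over $t = 1, \dots, N$ gives $\E[M_\sigma] = \E[M_0]$.

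No step here is a genuine obstacle. The only points needing care are the integrability of $M_\sigma$ and the legitimacy of exchanging expectation and sum, and both follow from boundedness of $\sigma$, which makes the sum finite. Unbounded $\sigma$, which the paper needs for stopping times such as $\tau_k$, is outside this statement. There one would apply the fact to $\sigma \wedge N$ and pass to the limit $N \to \infty$ by monotone convergence applied separately to the nondecreasing count process and the nondecreasing compensator sum.
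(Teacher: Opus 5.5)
Your proof is correct. The paper gives no proof of its own; it cites Doob and Durrett, and your argument is the standard martingale-transform proof behind that citation: you write $M_\sigma - M_0$ as $\sum_{t=1}^{N}\mathbf{1}\{\sigma\ge t\}(M_t-M_{t-1})$ with the predictable weight $\mathbf{1}\{\sigma\ge t\}$, using $\{\sigma\ge t\}\in\Gg_{t-1}$. Your closing remark on unbounded stopping times also matches how the paper uses the fact in \Cref{lem:equiv}: it applies the fact at $\tau\wedge T$ and lets $T\to\infty$ by monotone convergence.
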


\begin{lem}[Fair-count identity and cost--center equivalence]
\label{lem:equiv}
For every generator, one-sided or two-sided (\Cref{def:twosided}), every
$(\Gg_t)$-predictable process $h_t \in [0,1]$, and every stopping time
$\tau$,
\begin{equation}\label{eq:faircount}
\E\Bigl[\sum_{t\le\tau} o_t h_t\Bigr] = \E\Bigl[\sum_{t\le\tau} a_t h_t\Bigr],
\qquad
\E\Bigl[\sum_{t\le\tau} o'_t h_t\Bigr] = \E\Bigl[\sum_{t\le\tau} b_t h_t\Bigr],
\end{equation}
with values in $[0,\infty]$; in particular
$\E[K^{\A}_\tau] = \E[\sum_{t\le\tau} a_t]$ and
$\E[K^{\OB}_\tau] = \E[\sum_{t\le\tau} b_t]$. Moreover, for either run,
\begin{equation}\label{eq:costcenter}
f\,\E[K_\tau] \;\le\; \E[C_\tau] \;\le\; 2f\,\E[K_\tau].
\end{equation}
\end{lem}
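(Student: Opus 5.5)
The plan is to build a nonnegative-increment martingale, apply optional stopping (\Cref{fact:doob}) at the truncated times $\tau\wedge N$, and pass to the limit by monotone convergence. For the first identity in \eqref{eq:faircount}, set
\[
M_t \;=\; \sum_{s\le t} (o_s - a_s)\, h_s .
\]
This process is adapted to $(\Gg_t)$, and each increment is bounded by $1$ in absolute value, so $M_t$ is integrable for every fixed $t$. Because $h_s$ and $a_s$ are $\Gg_{s-1}$-measurable, and $o_s$ is conditionally $\Bern(a_s)$ given $\Gg_{s-1}$ (the coin $U_s$ is independent of $\Gg_{s-1}$ and is uniform, for one-sided and two-sided generators alike, since $x_s$ is $\Gg_{s-1}$-measurable), we get
\[
\E\bigl[(o_s-a_s)h_s \mid \Gg_{s-1}\bigr] \;=\; h_s\bigl(\E[o_s\mid\Gg_{s-1}] - a_s\bigr) \;=\; 0 .
\]
Hence $M$ is a martingale. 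Since $\tau\wedge N$ is a bounded stopping time, \Cref{fact:doob} gives
\[
\E\Bigl[\sum_{t\le \tau\wedge N} o_t h_t\Bigr] \;=\; \E\Bigl[\sum_{t\le\tau\wedge N} a_t h_t\Bigr],
\]
and both sides are finite. Both partial sums are nondecreasing in $N$ with nonnegative terms, so monotone convergence as $N\to\infty$ gives the identity in $[0,\infty]$, including the case $\tau=\infty$. The replay identity is the same argument with $U'_s$, $b_s$, $o'_s$; again $U'_s$ is independent of $\Gg_{s-1}$. Taking $h\equiv 1$ gives the center-count formulas, and halted steps contribute $0$ on both sides.

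For \eqref{eq:costcenter}, write the per-step cost as $c_t = (1-o_t)d_t + o_t f$, where $d_t$ is $d^{\A}_t$ or $d^{\OB}_t$. The lower bound $f K_\tau \le C_\tau$ is immediate pathwise. For the upper bound I would compute conditional expectations. When $d_t < f$,
\[
\E[c_t\mid\Gg_{t-1}] \;=\; (1-a_t)d_t + a_t f \;\le\; d_t + d_t \;=\; 2 f a_t ,
\]
using $a_t = d_t/f$. When $d_t\ge f$ (including $d_t=\infty$), we have $a_t=1$, so $c_t = f$ almost surely; the convention $0\cdot\infty=0$ handles the first point. Thus $\E[c_t\mid\Gg_{t-1}]\le 2f a_t$ in every case.

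The step needing most care is turning this per-step bound into a bound at the stopping time, because $d_t$ is unbounded. I would avoid treating $c_t$ itself as a martingale increment. Instead I would write $c_t = f\,\phi_t + (1-o_t)\min\{d_t,f\}$, with $\phi_t = o_t$, and apply the first identity. Since $(1-o_t)d_t = (1-o_t)\min(d_t,f)$ almost surely (routing at distance $\ge f$ never happens), and $\min(d_t,f) = f a_t$, we have
\[
\E\Bigl[\sum_{t\le\tau}(1-o_t) f a_t\Bigr] \;=\; f\,\E\Bigl[\sum_{t\le\tau} a_t\Bigr] - f\,\E\Bigl[\sum_{t\le\tau} o_t a_t\Bigr]
\]
by \eqref{eq:faircount} with the predictable weight $h_t = a_t\in[0,1]$. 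This is at most $f\,\E[K_\tau]$. To avoid an $\infty-\infty$ issue, I would do this subtraction at $\tau\wedge N$ and then take monotone limits. Adding $f\,\E[K_\tau]$ for the openings then yields $\E[C_\tau]\le 2f\,\E[K_\tau]$.
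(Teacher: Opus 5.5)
Your proof is correct. The fair-count part matches the paper's proof: the martingale $\sum_{t\le T}(o_t-a_t)h_t$, optional stopping at $\tau\wedge T$, and monotone convergence.

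For \eqref{eq:costcenter} you take a slightly different route. The paper computes the conditional mean $m_t=(1-q_t)d_t+q_tf$ of each step's cost and shows $fq_t\le m_t\le 2fq_t$. It then sums conditional expectations to get $\E[C_\tau]=\E[\sum_{t\le\tau}m_t]$ before invoking the case $h\equiv 1$. You instead get the lower bound pathwise from $C=fK+Q$. For the upper bound you use the pathwise identity $Q_\tau=f\sum_{t\le\tau}(1-o_t)a_t$, so that only the fair-count identity itself is needed.

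Two simplifications are available. First, your detour through the weight $h_t=a_t$ is not needed for the inequality: $(1-o_t)a_t\le a_t$ pathwise already gives $\E[Q_\tau]\le f\,\E[K_\tau]$. What the detour does buy is the exact formula $\E[Q_\tau]=f\,\E[\sum_{t\le\tau}a_t(1-a_t)]$. Second, your concern about unbounded $d_t$ is unnecessary. As you note yourself, a routed step always has $d_t<f$, so every realized step cost lies in $[0,f]$. This is why the paper can sum the conditional means directly.

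The paper's per-step sandwich $m_t\in[fq_t,2fq_t]$ has the advantage of being reused later, in \Cref{cor:upper_whp_cost}, where the normalized step costs are fed into \Cref{lem:ville}. Your version is more self-contained for the expectation statement alone.
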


\begin{proof}
Fix a predictable $h_t \in [0,1]$. The only thing to verify is that the
opening probabilities are predictable and are the conditional means of
the opening indicators. Whether the generator is one-sided or two-sided,
$x_t$ is $\Gg_{t-1}$-measurable, and the center sets $S^{\A}_{t-1}$,
$S^{\OB}_{t-1}$ are functions of $(x_s, o_s)_{s<t}$ and
$(x_s, o'_s)_{s<t}$; hence $a_t$ and $b_t$ are $\Gg_{t-1}$-measurable.
Moreover, the random bits $U_t$ and $U'_t$ are independent of
$\Gg_{t-1} = \sigma(W, U_{<t}, U'_{<t})$, so
$\E[o_t \mid \Gg_{t-1}] = a_t$ and $\E[o'_t \mid \Gg_{t-1}] = b_t$.
Consequently $M_T = \sum_{t \le T}(o_t - a_t)h_t$ is a
$(\Gg_t)$-martingale. Applying \Cref{fact:doob} at the bounded stopping
time $\tau \wedge T$ gives
$\E[\sum_{t\le\tau\wedge T} o_t h_t] = \E[\sum_{t\le\tau\wedge T} a_t h_t]$
for every finite $T$; letting $T \to \infty$, both sides converge
monotonically (as $h_t \ge 0$) to the first identity
of~\eqref{eq:faircount}. The second is identical with
$(o'_t, b_t, U'_t)$ in place of $(o_t, a_t, U_t)$.

For~\eqref{eq:costcenter}, condition on $\Gg_{t-1}$: the expected cost
of step $t$ is $m_t = (1 - q_t)\, d_t + q_t f$, where
$q_t = \min\{1, d_t/f\}$ is the run's opening probability and $d_t$ its
current distance. If $d_t \ge f$ then $q_t = 1$ and
$m_t = f = \min\{f, d_t\}$. If $d_t < f$ then
$m_t = d_t + \tfrac{d_t}{f}(f - d_t) \in [\,d_t,\, 2 d_t\,]$. In both cases
\[
\min\{f, d_t\} \;\le\; m_t \;\le\; 2 \min\{f, d_t\} \;=\; 2 f q_t .
\]
Summing conditional expectations (monotone convergence as above) gives
$\E[C_\tau] = \E[\sum_{t\le\tau} m_t]$, and the claim follows from the case
$h \equiv 1$ of~\eqref{eq:faircount}.
\end{proof}

\begin{cor}\label{cor:equiv}
For every generator (one- or two-sided) and stopping time $\tau$,
\[
\frac{1}{2}\cdot
\frac{\E[K^{\A}_\tau]}{\E[K^{\OB}_\tau]}
\;\le\;
\frac{\E[C^{\A}_\tau]}{\E[C^{\OB}_\tau]}
\;\le\;
2\cdot \frac{\E[K^{\A}_\tau]}{\E[K^{\OB}_\tau]} .
\]
Hence bounds on the adaptivity gap for center count transfer to total
cost and vice versa, losing a factor of at most $4$.
\end{cor}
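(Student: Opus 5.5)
The plan is to apply the cost--center equivalence \eqref{eq:costcenter} of \Cref{lem:equiv} to each of the two runs separately and then divide. The lemma holds for every generator, one- or two-sided, and every stopping time $\tau$. It gives, for the adaptive run,
\[
f\,\E[K^{\A}_\tau] \;\le\; \E[C^{\A}_\tau] \;\le\; 2f\,\E[K^{\A}_\tau],
\]
and the same sandwich for the replay, $f\,\E[K^{\OB}_\tau] \le \E[C^{\OB}_\tau] \le 2f\,\E[K^{\OB}_\tau]$. No new probabilistic argument is needed, because all the adaptivity (optional stopping, predictability of $a_t,b_t$) is already absorbed into \Cref{lem:equiv}.

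For the upper bound, I will bound the numerator $\E[C^{\A}_\tau]$ from above by $2f\,\E[K^{\A}_\tau]$. I will bound the denominator $\E[C^{\OB}_\tau]$ from below by $f\,\E[K^{\OB}_\tau]$. The factor $f$ cancels, leaving the factor $2$. For the lower bound, I will use $\E[C^{\A}_\tau]\ge f\,\E[K^{\A}_\tau]$ and $\E[C^{\OB}_\tau]\le 2f\,\E[K^{\OB}_\tau]$, which gives the factor $\tfrac12$. Combining the two sides, a ratio bound for $K$ transfers to $C$ and conversely, within a total multiplicative slack of $4$. This is the ``hence'' clause.

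The only step needing care is the degenerate bookkeeping, since the expectations take values in $[0,\infty]$.
\begin{itemize}
\item The ratios are defined only when the denominators are positive, and then $\E[K^{\OB}_\tau]>0$ if and only if $\E[C^{\OB}_\tau]>0$, by the sandwich.
\item If $\E[K^{\OB}_\tau]=\infty$, then $\E[C^{\OB}_\tau]=\infty$ as well. In that case I will interpret the inequalities in the extended sense, or simply restrict to finite denominators, which is the regime in which adaptivity ratios are considered.
\end{itemize}
Subject to these conventions, the division step is routine.
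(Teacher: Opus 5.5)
Your proposal is correct and matches the paper, which states the corollary as an immediate consequence of applying \eqref{eq:costcenter} from \Cref{lem:equiv} to each run separately and dividing. Your remarks on the degenerate cases (zero or infinite denominators) go beyond what the paper writes, but they are harmless and consistent with the paper's convention that the ratio is defined only when the denominator is positive.
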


Since $C_\tau = f K_\tau + Q_\tau$, \eqref{eq:costcenter} also yields,
for either run,
\begin{equation}\label{eq:routingdom}
\E\bigl[Q_\tau\bigr] \;\le\; f\,\E\bigl[K_\tau\bigr].
\end{equation}
In particular, the routing cost component of the total cost never
dominates the center-count cost component.

\section{Robustness Guarantees} \label{sec:upperproof}

In this section we establish that the adaptivity inflation and deflation gaps of the center count are bounded by 
$\kappa_\Lambda$, and of the total cost by at most $2 \kappa_\Lambda$, where $\kappa_\Lambda$ is defined as follows:

\begin{fact}[Threshold optimization]\label{fact:theta}
Let $\alpha \ge 1$ be the semi-metric parameter (\Cref{def:semimetric}).
For $\Lambda \ge 2$ let
\begin{equation} \label{CLambda:eq}
    \kappa_\Lambda \;=\; \inf_{\theta > 2\alpha}\,
\Bigl(\theta + \log_{(\theta-\alpha)/\alpha}\Lambda + 2\Bigr).
\end{equation}
Then $\kappa_\Lambda \le 3\alpha + 2 + \log_2\Lambda$, and
$\kappa_\Lambda = O(\alpha\log\Lambda/\log\log\Lambda)$; explicitly,
\[
\kappa_\Lambda \;\le\; \frac{4\alpha\log_2\Lambda}{\log_2\log_2\Lambda}
\qquad\text{whenever } \log_2\Lambda \ge 16 .
\]
For a metric ($\alpha = 1$), \eqref{CLambda:eq} is replaced by
$\inf_{\theta > 2}\bigl(\theta + \log_{\theta-1}\Lambda + 2\bigr)$.
\end{fact}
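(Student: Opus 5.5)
The plan is to exhibit one admissible threshold $\theta$ for each claimed bound, since $\kappa_\Lambda$ is an infimum and any single admissible $\theta > 2\alpha$ gives an upper bound. Throughout I use the identity
\[
\log_{(\theta-\alpha)/\alpha}\Lambda = \frac{\log_2\Lambda}{\log_2\bigl((\theta-\alpha)/\alpha\bigr)},
\]
which is valid because $(\theta-\alpha)/\alpha > 1$ whenever $\theta > 2\alpha$. The metric case is just $\alpha = 1$ of the same formula, since $(\theta-1)/1 = \theta - 1$. So nothing separate needs to be argued for it.

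\textbf{First bound.} Take $\theta = 3\alpha$, which satisfies $\theta > 2\alpha$. Then $(\theta-\alpha)/\alpha = 2$, so the objective in \eqref{CLambda:eq} equals $3\alpha + \log_2\Lambda + 2$. This gives $\kappa_\Lambda \le 3\alpha + 2 + \log_2\Lambda$.

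\textbf{Second bound.} Write $L = \log_2\Lambda \ge 16$ and $\ell = \log_2 L \ge 4$. The idea is to balance the linear term $\theta$ against the logarithmic term. Choose $\theta = \alpha(1 + L/\ell)$, so that $(\theta-\alpha)/\alpha = L/\ell$. The function $x/\log_2 x$ is increasing for $x \ge e$ and equals $4$ at $x = 16$, so $L/\ell \ge 4$. In particular $\theta \ge 5\alpha > 2\alpha$, and $\theta$ is admissible.

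The objective at this $\theta$ is
\[
\alpha + \frac{\alpha L}{\ell} + \frac{L}{\ell - \log_2\ell} + 2 .
\]
I bound its terms as follows.
\begin{itemize}
\item The middle logarithmic term: since $\ell \ge 4$, we have $\log_2\ell \le \ell/2$ (equality at $\ell = 4$, and the gap only widens). Hence $\frac{L}{\ell-\log_2\ell} \le \frac{2L}{\ell} \le \frac{2\alpha L}{\ell}$, using $\alpha \ge 1$.
\item The constant terms: $\alpha + 2 \le 3\alpha \le 4\alpha \le \frac{\alpha L}{\ell}$, using $\alpha \ge 1$ and $L/\ell \ge 4$.
\end{itemize}
Summing the three contributions, $\frac{\alpha L}{\ell}$, $\frac{2\alpha L}{\ell}$, and $\frac{\alpha L}{\ell}$, gives at most $\frac{4\alpha L}{\ell} = \frac{4\alpha\log_2\Lambda}{\log_2\log_2\Lambda}$.

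\textbf{Asymptotic form.} The statement $\kappa_\Lambda = O(\alpha\log\Lambda/\log\log\Lambda)$ then follows. For $\log_2\Lambda \ge 16$ it is the explicit bound just derived. For $2 \le \Lambda < 2^{16}$, the first bound gives $\kappa_\Lambda \le 3\alpha + 2 + 16 = O(\alpha)$, and this is $O(\alpha\log\Lambda/\log\log\Lambda)$ on that bounded range, with the usual convention $\log\log\Lambda \ge 1$ there.

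\textbf{Main obstacle.} The only delicate point is checking the elementary inequalities $L/\ell \ge 4$ and $\log_2\ell \le \ell/2$ at the boundary $L = 16$. Both hold with equality there and improve monotonically afterward, so I expect no real difficulty.
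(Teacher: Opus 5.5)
Your proof is correct. It follows the paper's overall strategy: bound the infimum by plugging in a single admissible threshold, and your choice $\theta = 3\alpha$ for the first bound is identical to the paper's. For the explicit bound, however, you pick a different threshold.

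The paper takes $\theta^* = \alpha(1+\lceil\sqrt{L}\,\rceil)$. The base is then at least $\sqrt{L}$, so the logarithmic term is at most $2L/\log_2 L$ with no $\alpha$ in it. The linear term $\alpha(\sqrt{L}+4)$ is of lower order. Closing the inequality requires showing $g(L) = 2L/\log_2 L - \sqrt{L} \ge 4$ for $L \ge 16$, which the paper does with a derivative computation on $g$.

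Your choice $\theta = \alpha(1 + L/\ell)$ instead balances the two terms at the same order $L/\ell$. The verification then reduces to two elementary monotone facts, $L/\ell \ge 4$ and $\log_2\ell \le \ell/2$, both tight at $L = 16$. There is no ceiling and no calculus on a combined function, so your argument is somewhat cleaner.

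What you give up is the dependence on $\alpha$ in the leading term. Your bound is $(\alpha+1+o(1))\,L/\ell$. The paper's choice yields $2L/\ell + O(\alpha\sqrt{L})$, so for large $\Lambda$ the factor $\alpha$ enters only a lower-order term. Both choices meet the stated constant $4\alpha L/\ell$.

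Your remark that the range $2 \le \Lambda < 2^{16}$ is covered by the first bound is a detail the paper leaves implicit.
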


\begin{proof}
The first bound is the choice $\theta = 3\alpha$, with base
$(\theta - \alpha)/\alpha = 2$. For the second, write
$L = \log_2\Lambda \ge 16$ and take
$\theta^* = \alpha\bigl(1 + \lceil\sqrt{L}\,\rceil\bigr)$; note
$\theta^* > 2\alpha$ since $L \ge 4$. Since
$(\theta^* - \alpha)/\alpha \ge \sqrt{L}$,
\[
\log_{(\theta^*-\alpha)/\alpha}\Lambda
\;=\; \frac{L}{\log_2\bigl((\theta^* - \alpha)/\alpha\bigr)}
\;\le\; \frac{L}{\log_2 \sqrt{L}}
\;=\; \frac{2L}{\log_2 L}\,,
\]
and $\theta^* + 2 \le \alpha(\sqrt{L} + 2) + 2 \le \alpha(\sqrt{L} + 4)$.
It remains to check
$\alpha(\sqrt{L} + 4) + 2L/\log_2 L \le 4\alpha L/\log_2 L$; dividing
by $\alpha \ge 1$, it suffices that
$\sqrt{L} + 4 \le 2L/\log_2 L$ for all $L \ge 16$, i.e.\ that
$g(L) = 2L/\log_2 L - \sqrt{L}$ satisfies $g(L) \ge 4$. At $L = 16$,
$g(16) = 8 - 4 = 4$ with equality. For $L \ge 16$, $g$ is
nondecreasing:
\[
g'(L)
\;=\; \frac{2\ln 2\,(\ln L - 1)}{\ln^2 L} - \frac{1}{2\sqrt{L}}
\;\ge\; 0,
\]
since the inequality rearranges to
$4\ln 2 \cdot \sqrt{L}\,(\ln L - 1) \ge \ln^2 L$, and using
$\sqrt{L} \ge \ln L$ (valid for $L \ge 16$) it suffices that
$4\ln 2\,(\ln L - 1) \ge \ln L$, i.e.\
$\ln L \ge \tfrac{4\ln 2}{4\ln 2 - 1}$. As the right hand side is $ <2$, this inequality holds for
$L \ge 16$. Hence
$\theta^* + \log_{(\theta^*-\alpha)/\alpha}\Lambda + 2 \le 4\alpha L/\log_2 L$.
\end{proof}

We note that the infimum in \eqref{CLambda:eq} is attained: the
objective is continuous on $(2\alpha,\infty)$ and diverges as $\theta \to 2\alpha^+$
and as $\theta \to \infty$. We write $\theta_\Lambda$ for a minimizer;
trivially $\theta_\Lambda \le \kappa_\Lambda$.

\subsection{The Dear Charging Lemma}

The proofs of the robustness theorems utilize the following charging argument. Call a step \emph{dear} if the adaptive opening probability exceeds $\theta$ times the opening probability in the replay execution. We deterministically bound the number of adaptive centers opened at dear steps by a multiple of the total number of centers held by the replay.

\begin{figure}[t]
\centering
\begin{tikzpicture}[line cap=round]
  \fill[gray!12,even odd rule] (0,0) circle (3.9) (0,0) circle (1.5);
  \fill[gray!25,even odd rule] (0,0) circle (1.5) (0,0) circle (0.625);
  \draw (0,0) circle (3.9);
  \draw[dashed] (0,0) circle (1.5);
  \draw[dashed] (0,0) circle (0.625);
  \draw[dotted] (0,0) circle (0.2);
  \fill (0,0) circle (2pt);
  \node[below right=1pt] at (0,0) {$w$};
  \coordinate (x1) at (115:3.0);
  \draw[dashed] (0,0) -- (x1) node[midway, above, sloped] {\small $\beta_1$};
  \draw[thick] ($(x1)+(-0.11,-0.11)$) -- ($(x1)+(0.11,0.11)$);
  \draw[thick] ($(x1)+(-0.11,0.11)$) -- ($(x1)+(0.11,-0.11)$);
  \node[above left=1pt] at (x1) {$x_{t_1}$};
  \coordinate (x2) at (205:1.25);
  \draw[dashed] (0,0) -- (x2) node[midway, above, sloped] {\small $\beta_2$};
  \draw[thick] ($(x2)+(-0.11,-0.11)$) -- ($(x2)+(0.11,0.11)$);
  \draw[thick] ($(x2)+(-0.11,0.11)$) -- ($(x2)+(0.11,-0.11)$);
  \node[below left=1pt] at (x2) {$x_{t_2}$};
  \coordinate (x3) at (350:0.5);
  \draw[thick] ($(x3)+(-0.11,-0.11)$) -- ($(x3)+(0.11,0.11)$);
  \draw[thick] ($(x3)+(-0.11,0.11)$) -- ($(x3)+(0.11,-0.11)$);
  \node[above right=0pt] at (x3) {$x_{t_3}$};
  \node[right=2pt] at (3.9,0) {$f$};
  \node[above right=0pt] at (45:1.5) {\small $\beta_1/2$};
  \node[above right=-1pt] at (100:0.625) {\small $\beta_2/2$};
  \node[gray, below=2pt] at (270:0.2) {\small $f/\Lambda$};
  \node[align=center, gray!60!black] at (90:2.65)
    {\small blocked by the\\[-2pt]\small 1st charge};
  \node[align=center, gray!60!black] at (250:1.05)
    {\small by the 2nd};
\end{tikzpicture}
\caption{Charging a replay center for dear steps (metric case $\alpha = 1$, threshold $\theta=3$). Each dear charge poisons its own scale.  Dear openings charged
to a replay center $w$ plant adaptive centers (crosses) at radii
$\beta_1 > \beta_2 > \cdots$ with $\beta_{j+1} < \beta_j/2$: once
$x_{t_j}$ is planted, every point of the shaded annulus is within
distance $3\,d(\cdot, w)$ of an adaptive center, so it can never again
produce a dear charge to $w$. All radii lie in $[f/\Lambda,\, f]$
(above $f$ every point opens with probability one), so at most
$\lgp\Lambda + O(1)$ charges fit.\label{fig:halving}}
\end{figure}

\begin{lem}[Charging lemma]\label{lem:charge}
Fix $\theta > 2\alpha$ and call step $t$ \emph{dear} if $a_t > \theta b_t$. Then for every realization and every $t \ge 0$,
\[
\#\bigl\{\,s \le t \;:\; o_s = 1 \text{ and } s \text{ is dear}\,\bigr\} \;\le\; \bigl(\log_{(\theta-\alpha)/\alpha}\Lambda + 2\bigr)\, K^{\OB}_t .
\]
\end{lem}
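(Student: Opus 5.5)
We argue deterministically on a fixed realization, charging each dear adaptive opening to a replay center and showing that no replay center absorbs more than $\log_{(\theta-\alpha)/\alpha}\Lambda + 2$ charges. Since the replay's center set only grows, every replay center existing at a dear step $s\le t$ is among the $K^{\OB}_t$ centers held at time $t$, so this per-center bound gives the lemma.

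\textbf{The charging rule.} Let $s$ be a dear step with $o_s=1$. Then $a_s > \theta b_s \ge 0$, so $b_s<1/\theta<1$. Hence $d^{\OB}_s < f$ and in particular $S^{\OB}_{s-1}\neq\emptyset$. Charge $s$ to a nearest replay center $w_s\in S^{\OB}_{s-1}$ and set $\beta_s = d(x_s,w_s) = d^{\OB}_s$.

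Since $b_s<1$, we have $b_s = \beta_s/f$ and $a_s = \min\{f,d^{\A}_s\}/f$. Dearness therefore gives $\min\{f,d^{\A}_s\} > \theta\beta_s$, so $d^{\A}_s > \theta\beta_s$. Dearness also gives $\beta_s>0$, because $a_s\le 1$ and $a_s>\theta b_s$ is impossible when $b_s=0$. Consequently $x_s\neq w_s$ and $\beta_s\ge d_{\min}$. We also have $\beta_s < f/\theta$, because $a_s \le 1$.

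\textbf{Geometric decay of radii.} Fix $w$, and let $s_1<s_2<\cdots$ be the dear openings charged to $w$, with radii $\beta_j$. After step $s_j$ the point $x_{s_j}$ is an adaptive center, so for $j<k$ we have $d^{\A}_{s_k} \le d(x_{s_k},x_{s_j})$. The $\alpha$-approximate triangle inequality gives
\[
d(x_{s_k},x_{s_j}) \le \alpha(\beta_k+\beta_j).
\]
Combining this with dearness, $\theta\beta_k < d^{\A}_{s_k} \le \alpha(\beta_k+\beta_j)$. Rearranging, $(\theta-\alpha)\beta_k < \alpha\beta_j$, that is,
\[
\beta_k < \beta_j\cdot\frac{\alpha}{\theta-\alpha}.
\]
Because $\theta>2\alpha$, the base $\lambda := (\theta-\alpha)/\alpha$ exceeds $1$. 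So the radii shrink by a factor of more than $\lambda$ with each new charge: $\beta_m < \beta_1\lambda^{-(m-1)}$.

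\textbf{Counting.} If $m$ charges go to $w$, then $d_{\min}\le\beta_m<\beta_1\lambda^{-(m-1)}$. Thus $m-1 < \log_\lambda(\beta_1/d_{\min})$. We have $\beta_1 < f/\theta \le f$, and $\beta_1 = d(x_{s_1},w) \le d_{\max}$ since both points lie in the stream. Hence $\beta_1/d_{\min} \le \Lambda$, and $m < \log_\lambda\Lambda + 1$. This leaves slack of one relative to the stated $+2$, which absorbs boundary conventions such as $\Lambda<\lambda$.

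\textbf{Main obstacle.} The main point needing care is the bookkeeping in the charging rule. The nearest replay center must be taken from $S^{\OB}_{s-1}$, so that it is one of the $K^{\OB}_t$ centers and $\beta_s$ is a genuine interpoint distance bounded between $d_{\min}$ and $\min\{f,d_{\max}\}$. One must also handle the possibility that $x_{s_k}$ coincides with an earlier adaptive center. This cannot happen, since then $d^{\A}_{s_k}=0$, contradicting dearness.
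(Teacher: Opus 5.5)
Your charging rule (nearest replay center in $S^{\OB}_{s-1}$), your shrinkage inequality $(\theta-\alpha)\beta_k<\alpha\beta_j$ via the $\alpha$-approximate triangle inequality through $w$, and your final summation over $S^{\OB}_t$ follow the paper's argument. There is, however, a false step: your claim that dearness forces $\beta_s>0$.

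When $b_s=0$, dearness reads $a_s>\theta\cdot 0=0$. This holds whenever $x_s$ sits at the location of a replay center that the adaptive run does not hold. The case genuinely occurs. For example, in Phase~2 of the inflation attack, once the replay has opened $z^*$, every further copy of $z^*$ has $b=0<a=\eta/f$. An adaptive opening of such a copy is then a zero-radius charge to $w=z^*$. Consequently your assertions $x_s\neq w_s$ and $\beta_s\ge d_{\min}$ are false in general. Your counting step $d_{\min}\le\beta_m$ breaks whenever the last charge to $w$ has radius $0$. Your ``main obstacle'' paragraph rules out $x_{s_k}$ coinciding with an earlier adaptive center, but the coincidence that matters is $x_s=w$.

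The repair is short, and it is exactly where the paper's $+2$ comes from; the extra slack is not for ``boundary conventions''.
\begin{itemize}
\item Your shrinkage inequality was derived without using $\beta>0$. It shows that a radius-$0$ charge must be the \emph{last} charge to $w$, since any later charge would need $(\theta-\alpha)\beta_k<\alpha\cdot 0$.
\item Directly: once the adaptive run holds a center at $w$'s location, every later point whose nearest replay center is $w$ has $a\le b\le\theta b$, so it is not dear.
\item Hence the nonzero radii form an initial segment $\beta_1>\cdots>\beta_{m'}\ge d_{\min}$.
\item Your argument bounds this segment by $m'\le\log_\lambda\Lambda+1$ (use $\le$ rather than $<$ to cover $m'=1$).
\item Adding at most one zero-radius charge gives $m\le\log_\lambda\Lambda+2$ charges per replay center, which is the stated bound.
\end{itemize}
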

 
\begin{proof}
At a dear step $s$ we have $b_s < a_s/\theta \le 1/\theta < 1$, so $d(x_s, S^{\OB}_{s-1}) = b_s f < f$; in particular $S^{\OB}_{s-1} \ne \emptyset$ and there is a nearest replay center $w \in S^{\OB}_{s-1}$ with $d(x_s, w) = b_s f$ (break ties by age).
\emph{If the adaptive run opens a center at a dear step, charge that opening to this $w$} (\Cref{fig:halving}).
 
Fix any realization and any replay center $w$; let $t_1 < t_2 < \cdots < t_J$ be the dear steps whose adaptive openings are charged to $w$, with radii
\[
\beta_j \;=\; \frac{d(x_{t_j}, w)}{f} \;=\; b_{t_j} \;<\; \frac1\theta .
\]
Since the charge records an adaptive opening, $o_{t_j} = 1$ and the point $x_{t_j}$ is an \emph{adaptive center} from time $t_j$ onward: 
$x_{t_j} \in S^{\A}_{t}$ for all $t \ge t_j$. Hence at step $t_{j+1}$, by the $\alpha$-approximate triangle inequality through $w$, it follows that
\[
a_{t_{j+1}}
\;\le\; \frac{d\bigl(x_{t_{j+1}},\, x_{t_j}\bigr)}{f}
\;\le\; \alpha\,\frac{d(x_{t_{j+1}}, w) + d(w, x_{t_j})}{f}
\;=\; \alpha\bigl(\beta_{j+1} + \beta_j\bigr) ,
\]
while dearness at $t_{j+1}$ gives $a_{t_{j+1}} > \theta\,\beta_{j+1}$.
Combining these inequalities, we obtain
\begin{equation}\label{eq:halving}
\beta_{j+1} \;<\; \frac{\alpha\,\beta_j}{\theta - \alpha}
\end{equation}
In particular, this means that consecutive radii shrink by at least the factor $(\theta - \alpha)/\alpha > 1$ for each charge. A charge of radius $0$, i.e. the adaptive run opening at the location of $w$ itself, is necessarily the last charge to $w$: afterwards, the adaptive run holds a center at that location, so for any point $x$ whose nearest replay center is $w$, $a \le d(x, w)/f = b \le \theta b$, and no later dear step will be charged to $w$. Every nonzero radius satisfies
$\beta_j \ge d_{\min}/f$, since $x_{t_j}$ and $w$ are distinct presented points, and $\beta_1 \le \min\{1/\theta,\, d_{\max}/f\}$.
By~\eqref{eq:halving} the number of nonzero charges is at most
$\log_{(\theta-\alpha)/\alpha}\Lambda + 1$, so
$J \le \log_{(\theta-\alpha)/\alpha}\Lambda + 2$. Every charged center belongs to $S^{\OB}_{s-1} \subseteq S^{\OB}_t$ for some $s \le t$; summing over $w \in S^{\OB}_t$ proves the claim. 
\end{proof}

\subsection{Inflation and Deflation Upper Bounds}

\begin{thm}[Robustness to inflation]\label{thm:upper}
For every semi-metric space, every facility price $f > 0$, every adaptive generator, one-sided or two-sided (\Cref{def:twosided}), and
every $(\Gg_t)$-stopping time $\tau$,\footnote{Quantifying over
$(\Gg_t)$-stopping times \emph{strengthens} the theorem: the bound
holds under every stopping rule, including hypothetical ones depending
on both runs. In particular, the class contains the model's default
$(\Ff_t)$-stopping times. }
\[
\E\bigl[K^{\A}_\tau\bigr]
\;\le\; \kappa_\Lambda \, \E\bigl[K^{\OB}_\tau\bigr],
\]
and consequently
$\E[C^{\A}_\tau] \le 2\kappa_\Lambda\, \E[C^{\OB}_\tau]$, where
$\kappa_\Lambda$ is as defined in \Cref{fact:theta}.
\end{thm}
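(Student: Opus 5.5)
Fix $\theta > 2\alpha$; at the end it will be set to the minimizer $\theta_\Lambda$ of \eqref{CLambda:eq}. The plan is to split the adaptive openings at each step $t \le \tau$ into a cheap part and a dear part using a predictable indicator, and to bound each part by the replay count.

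Let $h_t = \ind{\{a_t \le \theta b_t\}}$. This indicator is $(\Gg_t)$-predictable, since $a_t$ and $b_t$ are $\Gg_{t-1}$-measurable by the proof of \Cref{lem:equiv}. Then
\[
K^{\A}_\tau = \sum_{t\le\tau} o_t h_t + \sum_{t\le\tau} o_t(1-h_t).
\]

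\textbf{Cheap part.} Applying \eqref{eq:faircount} with the predictable weight $h_t$ gives $\E[\sum_{t\le\tau} o_t h_t] = \E[\sum_{t\le\tau} a_t h_t]$. On cheap steps $a_t h_t \le \theta b_t$, so this is at most $\theta\,\E[\sum_{t\le\tau} b_t]$. By the second identity of \eqref{eq:faircount} with $h\equiv 1$, that equals $\theta\,\E[K^{\OB}_\tau]$. The identity holds for two-sided generators and $(\Gg_t)$-stopping times, so nothing is lost here.

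\textbf{Dear part.} The count $\sum_{t\le\tau} o_t(1-h_t)$ is exactly the number of adaptive openings at dear steps up to $\tau$. \Cref{lem:charge} bounds it pointwise, for every realization, by $(\log_{(\theta-\alpha)/\alpha}\Lambda + 2)K^{\OB}_\tau$.

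For $\tau = \infty$, I would either apply the lemma at each finite $t$ and take monotone limits, or note that the lemma holds for every $t$ including the limit. Here I need to check that $\Lambda$ is the realized capped aspect ratio, which is bounded by the a.s.\ bound. Taking expectations of the pointwise bound and adding the cheap part gives
\[
\E[K^{\A}_\tau] \le \bigl(\theta + \log_{(\theta-\alpha)/\alpha}\Lambda + 2\bigr)\E[K^{\OB}_\tau].
\]
Optimizing over $\theta$, with the infimum attained at $\theta_\Lambda$, gives $\kappa_\Lambda$.

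\textbf{Cost consequence.} By \eqref{eq:costcenter}, $\E[C^{\A}_\tau] \le 2f\E[K^{\A}_\tau] \le 2\kappa_\Lambda f\E[K^{\OB}_\tau] \le 2\kappa_\Lambda \E[C^{\OB}_\tau]$.

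\textbf{Main obstacle.} The subtle point is the definition of $\Lambda$ when it is random, since it depends on the realized stream. If $\Lambda$ is realization-dependent, the charging bound yields a random factor. I would handle this by interpreting $\Lambda$ as an almost-sure upper bound on the realized capped aspect ratio, as in the definition of the gaps, and applying \Cref{lem:charge} with that deterministic bound. The lemma's proof only uses $\beta_j \ge d_{\min}/f$ and $\beta_1 \le \min\{1/\theta, d_{\max}/f\}$, which remain valid under an upper bound.

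The remaining points are routine:
\begin{itemize}
\item the monotone convergence for infinite $\tau$, and the case where both sides are infinite;
\item predictability of $h_t$;
\item the degenerate case $\Lambda < 2$, where I would use the bound $3\alpha + 2 + \log_2\Lambda$ with $\theta = 3\alpha$.
\end{itemize}
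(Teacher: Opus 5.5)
Your proposal is correct and follows the paper's proof essentially step for step: the same predictable cheap/dear split at a fixed deterministic $\theta$, the fair-count identity for the cheap openings, the pathwise bound of \Cref{lem:charge} for the dear openings, monotone limits for $\tau=\infty$, and the cost transfer via \eqref{eq:costcenter}. Your reading of $\Lambda$ as a deterministic almost-sure bound on the realized capped aspect ratio is a sensible clarification that the paper leaves implicit. It is exactly what keeps $\theta=\theta_\Lambda$ deterministic and hence $h_t$ predictable.
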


\begin{proof}
Fix a deterministic dearness threshold $\theta>2\alpha$. Call step $t$
\emph{cheap} if $a_t \le \theta b_t$ and \emph{dear} otherwise; the
cheapness indicator $h_t$ is $\Gg_{t-1}$-measurable, hence predictable.
Decompose $K^{\A}_\tau$ into adaptive openings at cheap steps and at
dear steps. Openings at dear steps are bounded pathwise by
\Cref{lem:charge}:
\[
\#\{\text{adaptive openings at dear steps up to } \tau\}
\;\le\; \bigl(\log_{(\theta-\alpha)/\alpha}\Lambda + 2\bigr)\, K^{\OB}_\tau .
\]
For cheap steps, by \Cref{lem:equiv} applied with $h_t$ and then with
$h \equiv 1$, and the pointwise inequality
$a_t h_t \le \theta b_t h_t \le \theta b_t$,
\[
\E\Bigl[\sum_{t\le\tau} o_t h_t\Bigr]
\;=\; \E\Bigl[\sum_{t\le\tau} a_t h_t\Bigr]
\;\le\; \theta\,\E\Bigl[\sum_{t\le\tau} b_t\Bigr]
\;=\; \theta\,\E\bigl[K^{\OB}_\tau\bigr].
\]
Taking expectations of the dear bound and adding gives
$\E[K^{\A}_\tau] \le (\theta + \log_{(\theta-\alpha)/\alpha}\Lambda + 2)\,
\E[K^{\OB}_\tau]$ for every deterministic $\theta > 2\alpha$; taking the
infimum over $\theta$ proves the center-count claim. The cost claim
follows from \Cref{cor:equiv} via
$\E[C^{\A}_\tau] \le 2f\,\E[K^{\A}_\tau]$ and
$f\,\E[K^{\OB}_\tau] \le \E[C^{\OB}_\tau]$. For $\tau = \infty$, apply
the argument at $\tau \wedge T$ and let $T \to \infty$; all quantities
are monotone.

The proof applies as-is to two-sided generators (\cref{def:twosided}): The replay's randomness was used only through the fair-count identities~\eqref{eq:faircount} and the predictability of the cheap/dear indicator, which are both valid for two-sided generators, since $a_t$ and $b_t$ are $\Gg_{t-1}$-measurable (\Cref{lem:equiv}); the dear-step bound (\Cref{lem:charge}) holds deterministically.
\end{proof}

\begin{cor}[Robustness to deflation]\label{thm:upper_reversed}
For every semi-metric space, every facility price $f > 0$, every adaptive
generator, one-sided or two-sided (\Cref{def:twosided}), and
every $(\Gg_t)$-stopping time $\tau$,
\[
\E\bigl[K^{\OB}_\tau\bigr]
\;\le\; \kappa_\Lambda\, \E\bigl[K^{\A}_\tau\bigr]
\qquad\text{and}\qquad
\E\bigl[C^{\OB}_\tau\bigr] \;\le\; 2\kappa_\Lambda\, \E\bigl[C^{\A}_\tau\bigr]:
\]
adaptivity cannot \emph{deflate} the expected center count or cost by more than the factor of \Cref{thm:upper}.
\end{cor}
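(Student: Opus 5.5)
The plan is to derive the corollary from \Cref{thm:upper} by the exchange symmetry of \Cref{rem:exchange}, with no new charging argument.

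Start from a generator $g$ (one- or two-sided) and a $(\Gg_t)$-stopping time $\tau$. Both $g$ and $\tau$ may be viewed as functionals of $(W, U, U')$. Define the swapped generator $\tilde g$ and swapped time $\tilde\tau$ by feeding the coin sequences in exchanged roles: $\tilde g_t(W; U_{<t}, U'_{<t}) = g_t(W; U'_{<t}, U_{<t})$, and similarly for $\tilde\tau$. The swap is a measurable map, and under it $x_t$ remains a function of $(W, U_{<t}, U'_{<t})$. Hence $\tilde g$ is a legitimate two-sided generator. Because the joint filtration $\Gg_t = \sigma(W, U_{\le t}, U'_{\le t})$ is symmetric in $U$ and $U'$, the time $\tilde\tau$ is still a $(\Gg_t)$-stopping time. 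Note that even if $g$ was one-sided, $\tilde g$ is in general only two-sided. This is exactly why \Cref{thm:upper} had to be proved for the two-sided class and for $(\Gg_t)$-stopping times.

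Next, check the correspondence pathwise. Let $\omega=(W,U,U')$ and $\tilde\omega = (W,U',U)$. The point sequence generated by $\tilde g$ on $\omega$ coincides with the sequence generated by $g$ on $\tilde\omega$; this follows by induction on $t$, since the center sets in both runs are determined by the points and the respective coins. In the $\tilde g$ system on $\omega$, the ``adaptive'' run uses $U$, which plays the role that $U$ plays as the replay coins in the $g$ system on $\tilde\omega$. So $\tilde K^{\A}_{\tilde\tau}(\omega) = K^{\OB}_\tau(\tilde\omega)$ and $\tilde K^{\OB}_{\tilde\tau}(\omega) = K^{\A}_\tau(\tilde\omega)$, and the same holds for the costs. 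The law of $\omega$ is invariant under the swap, because $U$ and $U'$ are i.i.d.\ and independent of $W$. Therefore
\[
\E\bigl[K^{\OB}_\tau\bigr]=\E\bigl[\tilde K^{\A}_{\tilde\tau}\bigr],\qquad \E\bigl[K^{\A}_\tau\bigr]=\E\bigl[\tilde K^{\OB}_{\tilde\tau}\bigr].
\]
The presented stream is also the same up to this relabeling, so $\Lambda$ has the same distribution. More simply, the bound in \Cref{lem:charge} uses only $\Lambda$ for the realized stream, which is unchanged.

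Finally, apply \Cref{thm:upper} to $(\tilde g,\tilde\tau)$ to get $\E[\tilde K^{\A}_{\tilde\tau}] \le \kappa_\Lambda \E[\tilde K^{\OB}_{\tilde\tau}]$, which is the first claim. The cost claim then follows from \Cref{cor:equiv}:
\[
\E[C^{\OB}_\tau]\le 2f\E[K^{\OB}_\tau]\le 2f\kappa_\Lambda\E[K^{\A}_\tau]\le 2\kappa_\Lambda\E[C^{\A}_\tau].
\]

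The main obstacle is conceptual rather than computational: verifying that $\kappa_\Lambda$ is a valid constant for $\tilde g$. If $\Lambda$ is taken as an almost-sure bound on the capped aspect ratio, the swap preserves it. Otherwise I would simply note that \Cref{lem:charge} is pathwise in the realized stream, and the realized streams correspond bijectively under the swap.
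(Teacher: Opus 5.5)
Your proposal is correct and follows the paper's proof. In both, you exchange the coin sequences $U$ and $U'$ as in \Cref{rem:exchange}, note that the exchanged generator is two-sided and that $\tau$ remains a $(\Gg_t)$-stopping time, observe that the stream law (hence $\Lambda$) is unchanged, and then apply \Cref{thm:upper}. Your pathwise correspondence and your derivation of the cost bound via \Cref{cor:equiv} just spell out details the paper leaves implicit; the paper reads the cost bound directly off \Cref{thm:upper} for the exchanged generator.
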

 
\begin{proof}
Exchange the coin sequences $U$ and $U'$ (\Cref{rem:exchange}). The exchanged generator is two-sided, $\tau$ remains a $(\Gg_t)$-stopping time, and the stream has the same distribution, so $\Lambda$ is unchanged; Thus, the claim follows by applying \Cref{thm:upper} to the exchanged generator.
\end{proof}

One application of \Cref{thm:upper} is converting per-sequence guarantees into adaptive ones:

\begin{cor}[Transfer of oblivious guarantees]\label{cor:transfer}
Let $\Phi$ map finite sequences over a semi-metric space $\Mm$ to $[0,\infty]$ and suppose the
sketch satisfies $\E[C(x)] \le \Phi(x)$ for every fixed finite sequence $x$,
where the expectation is taken over the sketch's coins. Then for every adaptive
generator and every a.s.\ finite stopping time $\tau$,
\[
\E\bigl[C^{\A}_\tau\bigr]
\;\le\; 2 \kappa_\Lambda\cdot \E\bigl[\Phi(X_{\le\tau})\bigr],
\]
where $X_{\le\tau} = (x_1, \dots, x_\tau)$. Concretely, the per-sequence
competitive guarantees $\Phi(x) = \beta\cdot\mathrm{OPT}(x)$ extend to adaptive generators at an $O(\log\Lambda/\log\log\Lambda)$ loss.
By instantiating $\Phi$ with the known fixed-sequence competitive ratio of $\beta= O(\log n/\log\log n)$ \citep{fotakis2008competitive} (see \cref{thm:semiratio} for the extension to semi-metrics), we conclude that the sketch is
$O\bigl(\tfrac{\log\Lambda}{\log\log\Lambda}\cdot\tfrac{\log n}{\log\log n}\bigr)$-competitive against OPT when the request sequence is produced by an adaptive generator. 
\end{cor}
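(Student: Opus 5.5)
The statement to prove is \Cref{cor:transfer}. The plan is to chain \Cref{thm:upper} with the hypothesis applied conditionally on the realized sequence. By \Cref{thm:upper} (total-cost form), for the given generator and the $(\Ff_t)$-stopping time $\tau$ (which is also a $(\Gg_t)$-stopping time),
\[
\E\bigl[C^{\A}_\tau\bigr] \;\le\; 2\kappa_\Lambda\,\E\bigl[C^{\OB}_\tau\bigr],
\]
where $\Lambda$ is an almost-sure bound on the capped aspect ratio of the presented stream, as in the definition of the gaps. It therefore remains to show $\E[C^{\OB}_\tau] \le \E[\Phi(X_{\le\tau})]$.

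For this I condition on the adaptive side. For a one-sided generator, the pair $(X,\tau)$ is a measurable function of $(W,U)$, because $x_t$ is $\Ff_{t-1}$-measurable and $\{\tau\le t\}\in\Ff_t$. Both are independent of $U'$. Conditionally on $\sigma(W,U)$, the prefix $X_{\le\tau}$ is a fixed finite sequence, since $\tau<\infty$ almost surely.

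Next I check that the replay's cost $C^{\OB}_\tau$ depends only on $U'_1,\dots,U'_\tau$ and on $X_{\le\tau}$. It is exactly the cost of the sketch run on that fixed sequence with i.i.d.\ uniform coins, because the replay at step $t$ uses only $x_{\le t}$ and $U'_{\le t}$. So by independence (Fubini, or the freezing lemma for conditional expectations),
\[
\E\bigl[C^{\OB}_\tau \mid W,U\bigr] \;=\; \E\bigl[C(x)\bigr]\big|_{x=X_{\le\tau}} \;\le\; \Phi(X_{\le\tau})
\]
almost surely. Taking expectations gives the claim. Nonnegativity makes all of this valid in $[0,\infty]$, so no integrability issues arise.

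The one delicate point is this conditioning step. The stopping time is random and determined by the adaptive coins, so I must verify that the replay truncated at $\tau$ really is the fixed-sequence run on $X_{\le\tau}$. It is, because $\tau$ is independent of $U'$. Note that this step uses one-sidedness, which is why the corollary is stated for adaptive (one-sided) generators even though \Cref{thm:upper} itself allows two-sided ones.

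For the concrete instantiation, I take $\Phi(x)=\beta\,\mathrm{OPT}_f(x)$ with $\beta=O(\log n/\log\log n)$ from \Cref{thm:semiratio}, where $n$ bounds the stream length. Combined with $\kappa_\Lambda=O(\log\Lambda/\log\log\Lambda)$ from \Cref{fact:theta}, this yields the stated competitive ratio.
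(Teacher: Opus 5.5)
Your proposal is correct and follows the same route as the paper: apply the cost form of \Cref{thm:upper}, then bound $\E[C^{\OB}_\tau]$ by $\E[\Phi(X_{\le\tau})]$ by conditioning on the realized sequence, under which the replay is the sketch on a fixed input. The corollary has no separate proof block; the paper states this conditioning step only briefly, in the proof of \Cref{lem:onlinesummary}, and your explicit check that $(X,\tau)$ is a function of $(W,U)$ and independent of $U'$ is what justifies it. That check also correctly explains why this step needs a one-sided generator and an $(\mathcal{F}_t)$-stopping time, even though \Cref{thm:upper} itself allows more.
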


\subsection{High-probability Robustness}\label{sec:whp}

The proofs above use the fair-count
identity to equate, in expectation at stopping times, the number of
openings with the accumulated sum of opening probabilities. We now establish
that these two quantities stay close with high
probability, and uniformly over all times at once. This allows for high probability and all times bounds on inflation or deflation.
The argument is a Chernoff bound made time-uniform: instead of
applying Markov's inequality to $e^{\gamma N_t}$ at a fixed time $t$, we
apply Ville's inequality, a maximal version of Markov's inequality
for nonnegative supermartingales, to an exponential process that
tracks all times simultaneously. 
The exponential supermartingale goes back to \citet{Freedman1975}, and the time-uniform technique is developed systematically in \citet{HowardRMS20}; we include the short proof to keep the section self-contained.

\begin{fact}[Ville's inequality~\cite{Ville39}]\label{fact:ville}
Let $(Z_t)_{t \ge 0}$ be a nonnegative supermartingale with respect to a
filtration $(\Gg_t)$, i.e., $Z_t \ge 0$ and
$\E[Z_t \mid \Gg_{t-1}] \le Z_{t-1}$ for all $t \ge 1$. Then for every
$a > 0$,
\[
\Pr\Bigl[\,\sup_{t \ge 0} Z_t \;\ge\; a\,\Bigr]
\;\le\; \frac{\E[Z_0]}{a}.
\]
\end{fact}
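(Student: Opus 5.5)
The plan is the standard optional-stopping argument at the first time the process reaches level $a$. If $\E[Z_0]=\infty$ there is nothing to prove, so assume $\E[Z_0]<\infty$. Then every $Z_t$ is integrable, since by the tower property and the supermartingale inequality $\E[Z_t]\le\E[Z_{t-1}]\le\dots\le\E[Z_0]$. Define the hitting time
\[
\sigma \;=\; \inf\{t\ge 0 : Z_t \ge a\}\in\mathbb{N}\cup\{\infty\}.
\]
This is a $(\Gg_t)$-stopping time, because $\{\sigma\le t\}=\bigcup_{s\le t}\{Z_s\ge a\}\in\Gg_t$.

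The first step is to show that the stopped process $Y_t=Z_{t\wedge\sigma}$ is again a nonnegative supermartingale. \Cref{fact:doob} is stated only for martingales, so I would verify this directly rather than quote it. The increment is $Y_t-Y_{t-1}=\ind{\{\sigma\ge t\}}(Z_t-Z_{t-1})$. The indicator is $\Gg_{t-1}$-measurable, since $\{\sigma\ge t\}$ is the complement of $\{\sigma\le t-1\}$, so it is predictable. Hence
\[
\E[Y_t-Y_{t-1}\mid\Gg_{t-1}]=\ind{\{\sigma\ge t\}}\,\E[Z_t-Z_{t-1}\mid\Gg_{t-1}]\le 0 .
\]
Iterating expectations then gives $\E[Z_{T\wedge\sigma}]\le\E[Z_0]$ for every finite $T$.

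The second step is a Markov-type bound at the fixed horizon $T$. On the event $\{\sigma\le T\}=\{\max_{t\le T}Z_t\ge a\}$ we have $Z_{T\wedge\sigma}=Z_\sigma\ge a$. Off this event $Z_{T\wedge\sigma}\ge 0$ by nonnegativity. Therefore
\[
a\,\Pr\Bigl[\max_{t\le T}Z_t\ge a\Bigr]\;\le\;\E[Z_{T\wedge\sigma}]\;\le\;\E[Z_0].
\]
The events $\{\max_{t\le T}Z_t\ge a\}$ increase in $T$ to $\{\sup_t Z_t\ge a\}$, because the supremum is attained at or above level $a$ exactly when some finite $t$ has $Z_t\ge a$. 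Continuity of probability from below then gives the claim.

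I expect no real obstacle in this argument. The only points needing care are that the optional-stopping step for supermartingales must be derived from the predictable-indicator computation rather than quoted, and that the event must be written with $\ge a$, so that it coincides with $\{\sigma<\infty\}$ and no limit-point issues arise.
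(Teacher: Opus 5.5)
Everything up to $\Pr[\sigma<\infty]\le\E[Z_0]/a$ is correct and is the standard argument. The paper gives no proof of this fact, only the citation. Integrability from $\E[Z_0]<\infty$, the predictable-indicator computation for the stopped process, and Markov's inequality at a finite horizon are all fine.

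The step that fails is the last one. The events $\{\max_{t\le T}Z_t\ge a\}$ increase to $\{\sigma<\infty\}=\{\exists t:\ Z_t\ge a\}$, not to $\{\sup_t Z_t\ge a\}$. The latter also contains paths on which $Z_t<a$ for every $t$ but $Z_t$ approaches $a$, so the supremum equals $a$ without being attained.

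This is not a null-set technicality. Fix $c_t=a(1-2^{-(t+1)})$ and let $Z_0=c_0$. Given $Z_{t-1}=c_{t-1}$, let $Z_t=c_t$ with probability $c_{t-1}/c_t$ and $Z_t=0$ otherwise, with $0$ absorbing. This is a nonnegative martingale with $\Pr[\sigma<\infty]=0$ but $\Pr[\sup_t Z_t\ge a]=\prod_{t\ge1}c_{t-1}/c_t=c_0/a$. So your bound says nothing about the event in question, and the inequality is in fact tight here. Your closing remark has the direction backwards: it is the strict event $\{\sup_t Z_t>a\}$ that coincides with $\{\exists t:\ Z_t>a\}$.

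The repair is one line. Let $\sigma_b$ be the hitting time of $[b,\infty)$. For any $b\in(0,a)$ we have $\{\sup_t Z_t\ge a\}\subseteq\{\exists t:\ Z_t> b\}\subseteq\{\sigma_b<\infty\}$. Your argument at level $b$ then gives $\Pr[\sup_t Z_t\ge a]\le\E[Z_0]/b$, and letting $b\uparrow a$ yields the claim.

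Incidentally, the weaker statement you did prove, $\Pr[\exists t:\ Z_t\ge a]\le\E[Z_0]/a$, is all that \Cref{lem:ville} actually uses. There the conclusion drawn is only that $Z_t<1/\delta$ for all $t$.
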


\begin{lem}[Time-uniform fair count ]\label{lem:ville}
Let $(X_t)_{t \ge 1}$ be adapted to a filtration $(\Gg_t)$ with
$X_t \in [0,1]$, and let $\mu_t = \E[X_t \mid \Gg_{t-1}]$. Write
$N_t = \sum_{s \le t} X_s$ and $A_t = \sum_{s \le t} \mu_s$. Then for
every $\gamma \in (0,1]$ and $\delta \in (0,1)$, each of the following
holds with probability at least $1 - \delta$, simultaneously for all
$t \ge 1$:
\begin{enumerate}
\item[(i)] $\displaystyle
N_t \;\le\; (1+\gamma)\,A_t + \tfrac{2}{\gamma}\ln\tfrac1\delta$;
\item[(ii)] $\displaystyle
A_t \;\le\; (1+\gamma)\,N_t + \tfrac{2}{\gamma}\ln\tfrac1\delta$.
\end{enumerate}
\end{lem}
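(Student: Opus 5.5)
We prove both parts with the exponential-supermartingale technique announced before \Cref{fact:ville}. Fix $\lambda>0$ and define
\[
Z_t=\exp\bigl(\lambda N_t-(e^\lambda-1)A_t\bigr)
\]
for part (i), and
\[
Z'_t=\exp\bigl(-\lambda N_t+(1-e^{-\lambda})A_t\bigr)
\]
for part (ii), with $Z_0=Z'_0=1$.

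\textbf{Supermartingale property.} For any $x\in[0,1]$, convexity of $x\mapsto e^{\lambda x}$ gives $e^{\lambda x}\le 1+x(e^\lambda-1)$. Taking conditional expectations,
\[
\E[e^{\lambda X_t}\mid\Gg_{t-1}]\le 1+\mu_t(e^\lambda-1)\le \exp\bigl(\mu_t(e^\lambda-1)\bigr).
\]
Since $\mu_t$ is $\Gg_{t-1}$-measurable, this shows $\E[Z_t\mid\Gg_{t-1}]\le Z_{t-1}$. In the same way, $e^{-\lambda x}\le 1-x(1-e^{-\lambda})\le \exp(-x(1-e^{-\lambda}))$ shows that $Z'_t$ is a nonnegative supermartingale.

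\textbf{Applying Ville's inequality.} By \Cref{fact:ville} with $a=1/\delta$, with probability at least $1-\delta$ we have, simultaneously for all $t$,
\[
\lambda N_t\le (e^\lambda-1)A_t+\ln\tfrac1\delta .
\]
Likewise, on a second event of probability at least $1-\delta$, for all $t$,
\[
(1-e^{-\lambda})A_t\le \lambda N_t+\ln\tfrac1\delta .
\]

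\textbf{Choosing $\lambda$.} For (i), take $\lambda=\ln(1+\gamma)$. Then $(e^\lambda-1)/\lambda=\gamma/\ln(1+\gamma)$, which is at most $1+\gamma$; this follows from $\ln(1+\gamma)\ge \gamma/(1+\gamma)$. We also need $1/\lambda\le 2/\gamma$. Since $\ln(1+\gamma)\ge \gamma-\gamma^2/2\ge\gamma/2$ for $\gamma\le1$, this holds. Together these give (i).

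For (ii), take $\lambda$ with $e^{\lambda}=1+\gamma$ again. Then $1-e^{-\lambda}=\gamma/(1+\gamma)$, so dividing the second inequality by this factor gives
\[
A_t\le \frac{(1+\gamma)\ln(1+\gamma)}{\gamma}N_t+\frac{1+\gamma}{\gamma}\ln\tfrac1\delta .
\]
Here $(1+\gamma)\ln(1+\gamma)/\gamma\le 1+\gamma$ because $\ln(1+\gamma)\le\gamma$, and $(1+\gamma)/\gamma\le 2/\gamma$ because $\gamma\le1$. This yields (ii).

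\textbf{Main obstacle.} The only delicate part is the constant bookkeeping: one must choose $\lambda$ so that both the multiplicative factor stays at most $1+\gamma$ and the additive term stays at most $\tfrac{2}{\gamma}\ln\tfrac1\delta$ on all of $\gamma\in(0,1]$. The two elementary inequalities $\gamma/(1+\gamma)\le\ln(1+\gamma)$ and $\ln(1+\gamma)\ge\gamma/2$ handle this. Measurability poses no difficulty, since $\mu_t$ is predictable by definition.
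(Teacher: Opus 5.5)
Your proof is correct and is essentially the paper's argument: the same two exponential supermartingales (with the convexity bound conditioned first and $1-y\le e^{-y}$ applied at $y=\mu_t(1-e^{-\lambda})$, not pointwise in $x$), Ville's inequality at level $1/\delta$, and elementary bookkeeping of the constants. The only difference is that you take the exponent $\lambda=\ln(1+\gamma)$ where the paper takes $\lambda=\gamma$. This makes $e^{\lambda}-1=\gamma$ exact and is a cosmetic reparametrization: both choices land within the stated factor $1+\gamma$ and additive term $\tfrac{2}{\gamma}\ln\tfrac1\delta$.
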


\begin{proof}
Both parts follow from the exponential-supermartingale argument of \citet{HowardRMS20}; we spell it out to fix the constants. 

For (i), let
$Z_t = \exp\bigl(\gamma N_t - (e^{\gamma}-1)A_t\bigr)$, $Z_0 = 1$.
Since $e^{\gamma x} \le 1 + x(e^{\gamma}-1)$ for $x \in [0,1]$
(by convexity),
\[
\E[e^{\gamma X_t} \mid \Gg_{t-1}]
\;\le\; 1 + \mu_t(e^{\gamma}-1)
\;\le\; e^{\mu_t(e^{\gamma}-1)},
\]
so $(Z_t)$ is a nonnegative supermartingale. By \Cref{fact:ville} with
$a = 1/\delta$, with probability at least $1-\delta$ we have
$Z_t < 1/\delta$ for all $t$; taking logarithms,
\[
N_t \;\le\; \frac{e^{\gamma}-1}{\gamma}\,A_t
+ \frac{1}{\gamma}\ln\frac1\delta
\qquad\text{for all } t.
\]
Since $\frac{e^{\gamma}-1}{\gamma} = 1 + \frac{\gamma}{2} +
\frac{\gamma^2}{6} + \cdots \le 1 + (e-2)\gamma \le 1 + \gamma$ for
$\gamma \le 1$, bound (i) follows (weakening $\tfrac1\gamma$ to
$\tfrac2\gamma$ to match (ii)).

For (ii), let
$Z'_t = \exp\bigl((1-e^{-\gamma})A_t - \gamma N_t\bigr)$. Note that
$e^{-\gamma x} \le 1 - x(1-e^{-\gamma})$ for $x \in [0,1]$ (by convexity), so
$\E[e^{-\gamma X_t} \mid \Gg_{t-1}]
\le 1 - \mu_t(1-e^{-\gamma})
\le e^{-\mu_t(1-e^{-\gamma})}$,
so $(Z'_t)$ is a nonnegative supermartingale. Also, by \Cref{fact:ville}, we have that with probability at least $1-\delta$,
\[
A_t \;\le\; \frac{\gamma}{1-e^{-\gamma}}\,N_t
+ \frac{1}{1-e^{-\gamma}}\ln\frac1\delta
\qquad\text{for all } t.
\]
Since $1 - e^{-\gamma} \ge \gamma(1-\tfrac{\gamma}{2}) \ge
\tfrac{\gamma}{2}$ for $\gamma \le 1$, we have
$\frac{\gamma}{1-e^{-\gamma}} \le \frac{1}{1-\gamma/2} \le 1+\gamma$
and $\frac{1}{1-e^{-\gamma}} \le \frac{2}{\gamma}$. This implies the claim in (ii).
\end{proof}

Substituting \Cref{lem:ville} for the fair-count identity in the proof of \Cref{thm:upper} gives the inflation bound with high probability, uniformly in time; the deflation bound follows by \Cref{rem:exchange}, which applies to probability bounds as it does to expectations, since the exchange leaves the distribution of the whole trajectory unchanged.
 
\begin{thm}[High-probability robustness, uniform in time]\label{thm:upper_whp}
For every semi-metric space, every facility price $f > 0$, every adaptive generator, one-sided or two-sided, and every $\gamma \in (0,1]$,
$\delta \in (0,1)$: with probability at least $1 - 2\delta$,
simultaneously for all $t \ge 0$,
\[
K^{\A}_t \;\le\; (1+\gamma)^2\, \kappa_\Lambda\, K^{\OB}_t + \eta_\delta,
\qquad
\eta_\delta = \frac{5\,\kappa_\Lambda}{\gamma}\ln\frac1\delta .
\]
The same holds with the runs exchanged, and both bounds hold simultaneously with probability at least $1 - 4\delta$.
\end{thm}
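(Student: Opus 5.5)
Fix $\theta>2\alpha$ (eventually the minimizer $\theta_\Lambda$ of \Cref{fact:theta}), and let $h_t$ be the predictable cheapness indicator $\ind{\{a_t\le\theta b_t\}}$. We split $K^{\A}_t$ pathwise into the cheap openings $N^c_t=\sum_{s\le t}o_sh_s$ and the dear openings. \Cref{lem:charge} bounds the dear openings deterministically and for all $t$ by $(\log_{(\theta-\alpha)/\alpha}\Lambda+2)\,K^{\OB}_t$, so no probability is needed there. Only the cheap part must be controlled, and it will be compared with $K^{\OB}_t$ through two applications of \Cref{lem:ville}, in opposite directions.

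\textbf{Step 1 (cheap adaptive openings versus their compensator).} Apply \Cref{lem:ville}(i) to $X_t=o_th_t\in[0,1]$. Its conditional mean is $\mu_t=a_th_t$, because $h_t$ is $\Gg_{t-1}$-measurable and $\E[o_t\mid\Gg_{t-1}]=a_t$, exactly as in \Cref{lem:equiv}; this holds for two-sided generators too. With probability at least $1-\delta$, for all $t$,
\[
N^c_t\le(1+\gamma)\sum_{s\le t}a_sh_s+\tfrac2\gamma\ln\tfrac1\delta\le(1+\gamma)\,\theta\sum_{s\le t}b_s+\tfrac2\gamma\ln\tfrac1\delta .
\]

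\textbf{Step 2 (replay compensator versus replay count).} Apply \Cref{lem:ville}(ii) to $X_t=o'_t$ with $\mu_t=b_t$. With probability at least $1-\delta$, for all $t$,
\[
\sum_{s\le t}b_s\le(1+\gamma)K^{\OB}_t+\tfrac2\gamma\ln\tfrac1\delta .
\]

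\textbf{Step 3 (combine).} By a union bound, with probability at least $1-2\delta$ both events hold, and then for all $t$
\[
K^{\A}_t\le(1+\gamma)^2\theta K^{\OB}_t+\bigl((1+\gamma)\theta+1\bigr)\tfrac2\gamma\ln\tfrac1\delta+(L+2)K^{\OB}_t,
\]
where $L=\log_{(\theta-\alpha)/\alpha}\Lambda$. Bound $(L+2)\le(1+\gamma)^2(L+2)$ and take $\theta=\theta_\Lambda$, so the multiplicative term is at most $(1+\gamma)^2\kappa_\Lambda K^{\OB}_t$. The additive term is at most $(2\theta+1)\frac2\gamma\ln\frac1\delta$ since $\gamma\le1$. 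We have $\theta_\Lambda\le\kappa_\Lambda$, and also $\kappa_\Lambda\ge\theta+2>2$, so $1\le\kappa_\Lambda/2$. Hence $(2\theta+1)\cdot2\le 4\kappa_\Lambda+\kappa_\Lambda=5\kappa_\Lambda$, which gives $\eta_\delta$.

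\textbf{Exchanged runs and simultaneous bound.} The exchanged statement follows from \Cref{rem:exchange}. Swapping $U$ and $U'$ preserves the law of the whole trajectory, keeps the generator two-sided, and leaves $\Lambda$ unchanged, so the same event bound holds with the roles of the runs reversed. A union bound then gives both directions with probability at least $1-4\delta$.

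The step I expect to need the most care is the constant bookkeeping in Step 3: checking that $\theta_\Lambda$, rather than an arbitrary $\theta$, keeps the additive term within $5\kappa_\Lambda/\gamma$. The time-uniformity also needs attention. It is automatic here because \Cref{lem:ville} is uniform in $t$ and \Cref{lem:charge} is pathwise, so no stopping-time argument is required. Because $\kappa_\Lambda$ is an infimum that is attained, choosing $\theta=\theta_\Lambda$ is legitimate.
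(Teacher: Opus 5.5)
Your proposal is correct and follows the paper's proof essentially step for step. It uses the same two applications of \Cref{lem:ville} (part (i) to $(o_th_t, a_th_t)$, part (ii) to $(o'_t, b_t)$), the pathwise dear bound of \Cref{lem:charge}, the chaining through $a_sh_s\le\theta b_s$ with $\theta=\theta_\Lambda$, and \Cref{rem:exchange} with a union bound for the exchanged direction; the only cosmetic difference is that you absorb the additive constant via $\kappa_\Lambda\ge 2$ where the paper cites $\kappa_\Lambda>4$, and either suffices.
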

\begin{proof}
Let $\theta = \theta_\Lambda$ and let $h_t = \ind{\{a_t \le \theta b_t\}}$
be the cheapness indicator, which is $(\Gg_t)$-predictable. By the proof of
\Cref{lem:equiv}, each of the pairs
$(X_t, \mu_t) = (o_t \psi_t,\, a_t \psi_t)$ and
$(X_t, \mu_t) = (o'_t \psi_t,\, b_t \psi_t)$ satisfies the hypotheses of
\Cref{lem:ville}, for any $(\Gg_t)$-predictable $\psi_t \in [0,1]$.
Instantiate \Cref{lem:ville} twice, each with failure probability
$\delta$:
\begin{itemize}
\item $E_1$: part (i) for $(o_t h_t,\, a_t h_t)$, so that for all $t$,
$\sum_{s\le t} o_s h_s \le (1+\gamma)\sum_{s\le t} a_s h_s
+ \tfrac2\gamma\ln\tfrac1\delta$;
\item $E_2$: part (ii) for $(o'_t,\, b_t)$ (that is, $\psi \equiv 1$), so
that for all $t$, $\sum_{s\le t} b_s \le (1+\gamma)K^{\OB}_t
+ \tfrac2\gamma\ln\tfrac1\delta$.
\end{itemize}
On $E_1 \cap E_2$, for every $t$, splitting $K^{\A}_t$ at the cheap/dear
boundary, bounding the dear part pathwise by \Cref{lem:charge}, and chaining through the pointwise inequality
$a_s h_s \le \theta b_s$:
\begin{align*}
K^{\A}_t
&\;\le\; (1+\gamma)\sum_{s\le t} a_s h_s
+ \tfrac2\gamma\ln\tfrac1\delta
+ \bigl(\log_{(\theta-\alpha)/\alpha}\Lambda + 2\bigr)K^{\OB}_t \\
&\;\le\; (1+\gamma)\,\theta \sum_{s\le t} b_s
+ \tfrac2\gamma\ln\tfrac1\delta
+ \bigl(\log_{(\theta-\alpha)/\alpha}\Lambda + 2\bigr)K^{\OB}_t \\
&\;\le\; \Bigl((1+\gamma)^2\theta + \log_{(\theta-\alpha)/\alpha}\Lambda + 2\Bigr)
K^{\OB}_t
+ \bigl((1+\gamma)\theta + 1\bigr)\tfrac2\gamma\ln\tfrac1\delta .
\end{align*}
The additive term is at most
$(2\theta + 1)\tfrac2\gamma\ln\tfrac1\delta
\le \tfrac{4\theta+2}{\gamma}\ln\tfrac1\delta
\le \tfrac{5\,\kappa_\Lambda}{\gamma}\ln\tfrac1\delta = \eta_\delta$,
using $\gamma \le 1$, $\theta_\Lambda \le \kappa_\Lambda$, and
$\kappa_\Lambda > 4$. Since
$(1+\gamma)^2\theta + \log_{(\theta-\alpha)/\alpha}\Lambda + 2
\le (1+\gamma)^2\bigl(\theta + \log_{(\theta-\alpha)/\alpha}\Lambda + 2\bigr)
= (1+\gamma)^2 \kappa_\Lambda$, the bound holds on
$E_1 \cap E_2$, an event of probability at least $1 - 2\delta$. The bound with the runs exchanged is the same event for the exchanged generator, which has the same probability (\Cref{rem:exchange}); a union bound gives both simultaneously.
\end{proof}
 
The cost version needs two further events, which replace \eqref{eq:costcenter}.

\begin{cor}[High-probability cost robustness]\label{cor:upper_whp_cost}
For every $\eps \in (0,1]$ and $\delta \in (0,\tfrac12]$, in the setting
of \Cref{thm:upper_whp}: with probability at least $1 - \delta$,
simultaneously for all $t \ge 0$,
\[
C^{\A}_t \;\le\; 2(1+\eps)\, \kappa_\Lambda\, C^{\OB}_t
+ O\Bigl(\tfrac{\kappa_\Lambda}{\eps}\, f \ln\tfrac1\delta\Bigr),
\]
and the same holds with the runs exchanged (\Cref{rem:exchange}). The implied constant is
absolute.
\end{cor}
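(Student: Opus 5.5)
The plan is to convert the count bound of \Cref{thm:upper_whp} into a cost bound. The two inequalities in~\eqref{eq:costcenter} hold only in expectation, so I replace each by a time-uniform event from \Cref{lem:ville}. Let $m_t$ be the conditional expected cost of step $t$ for a run, as in the proof of \Cref{lem:equiv}; recall $\min\{f,d_t\}\le m_t\le 2fq_t$. Write the step cost as $c_t=(1-o_t)d_t+o_tf$. I work with the normalized step cost $X_t=c_t/(2f)$ and would like to apply \Cref{lem:ville} to it.

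The first obstacle is that $X_t\notin[0,1]$ when $d_t>2f$. But then $q_t=1$, so $o_t=1$ almost surely and $c_t=f$. With the convention $0\cdot\infty=0$, we get $c_t\in[0,2f]$ always. The time-$t$ routing part is $(1-o_t)d_t\le d_t<f$ whenever $o_t=0$ can actually occur.

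\textbf{Adaptive upper bound.} Apply part~(i) of \Cref{lem:ville} to $X_t=c^{\A}_t/(2f)$, whose conditional mean is $m^{\A}_t/(2f)\le a_t$. This gives, for all $t$,
\[
C^{\A}_t\le(1+\gamma)\sum_{s\le t}2fa_s+\tfrac{4f}{\gamma}\ln\tfrac1\delta .
\]

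\textbf{Replay lower bound.} For the replay I need $f\sum_{s\le t}b_s\lesssim C^{\OB}_t$. Since $fb_s=\min\{f,d^{\OB}_s\}\le m^{\OB}_s$, part~(ii) of \Cref{lem:ville} applied to $X_t=c^{\OB}_t/(2f)$ with $\mu_t=m^{\OB}_t/(2f)$ gives
\[
\sum_{s\le t}m^{\OB}_s\le(1+\gamma)C^{\OB}_t+\tfrac{4f}{\gamma}\ln\tfrac1\delta .
\]
Alternatively, one could simply use $K^{\OB}_t\le C^{\OB}_t/f$ together with event $E_2$.

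\textbf{Chaining.} Next I redo the chain of \Cref{thm:upper_whp} for $\sum_{s\le t}a_s$. The cheap part is at most $\theta\sum_{s\le t}b_s$. For the dear part, \Cref{lem:charge} bounds dear \emph{openings}, not dear probabilities, so I add one more application of part~(ii) of \Cref{lem:ville}. Take $X_t=o_t(1-h_t)$ with conditional mean $a_t(1-h_t)$; this converts $\sum_{s\le t} a_s(1-h_s)$ into $(1+\gamma)\,\#\{\text{dear openings}\}+\tfrac2\gamma\ln\tfrac1\delta$. \Cref{lem:charge} then bounds the dear openings by $(\log_{(\theta-\alpha)/\alpha}\Lambda+2)K^{\OB}_t$.

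Combining, on the intersection of the (at most four) events,
\[
C^{\A}_t\le 2f(1+\gamma)^2\bigl(\theta+\log_{(\theta-\alpha)/\alpha}\Lambda+2\bigr)\max\Bigl\{\sum_{s\le t}b_s,\;K^{\OB}_t\Bigr\}+O\Bigl(\tfrac{\theta}{\gamma}f\ln\tfrac1\delta\Bigr).
\]
I then bound both $f\sum_{s\le t}b_s$ and $fK^{\OB}_t$ by $(1+\gamma)C^{\OB}_t+O(\tfrac f\gamma\ln\tfrac1\delta)$, using the replay lower bound and $fK^{\OB}_t\le C^{\OB}_t$. Choosing $\theta=\theta_\Lambda$ and $\gamma=\eps/4$ gives $(1+\gamma)^3\le1+\eps$. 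Since $\theta_\Lambda\le\kappa_\Lambda$, the additive term is $O(\kappa_\Lambda f\ln(1/\delta)/\eps)$.

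\textbf{Probability and the exchanged direction.} Run each event at level $\delta/4$. This changes only the constant, because $\ln(4/\delta)\le 3\ln(1/\delta)$ for $\delta\le\tfrac12$; this is where the hypothesis $\delta\le\tfrac12$ is used. All the processes involved ($h_t$, $a_t$, $b_t$, $m_t$) are $\Gg_{t-1}$-measurable, so the argument applies to two-sided generators. The exchanged statement then follows from \Cref{rem:exchange}, since the exchange preserves the joint law of the trajectory.

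The main obstacle I expect is the dear-step accounting: \Cref{lem:charge} is pathwise in openings while the cost bound needs probabilities, which is exactly why the extra application of part~(ii) of \Cref{lem:ville} is needed.
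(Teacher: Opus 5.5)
\textbf{Verdict.} Your proof is correct. It uses the same ingredients as the paper: the time-uniform conversions of \Cref{lem:ville}, the sandwich $fq_t\le m_t\le 2fq_t$, the pathwise \Cref{lem:charge}, and $fK^{\OB}_t\le C^{\OB}_t$. What differs is the order in which you chain them.

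\textbf{The paper's chain.} It starts with the same step as yours, bounding $C^{\A}_t$ by $2f(1+\gamma)\sum_{s\le t}a_s$ up to an additive $O(\tfrac{f}{\gamma}\ln\tfrac1\delta)$. It then turns this probability sum back into the realized count $K^{\A}_t$ with a further instance of \Cref{lem:ville}(ii) for $(o_t,a_t)$, and applies \Cref{thm:upper_whp} as a black box. Inside that theorem the cheap openings are converted once more into $\sum_s a_sh_s\le\theta\sum_s b_s$ (event $E_1$), and $\sum_s b_s$ into $K^{\OB}_t$ (event $E_2$). The result is four events and a factor $(1+\gamma)^4$, hence the choice $\gamma=\eps/8$.

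\textbf{Your chain.} You split $\sum_{s\le t}a_s$ itself at the cheap/dear boundary and keep the cheap part as $\theta\sum_s b_s$ pathwise. You pass to realized openings only on the dear part, which is the one place the pathwise charging lemma requires it. This avoids the round trip $\sum_s a_s\to K^{\A}_t\to\sum_s a_sh_s$ and needs only three events. It loses $(1+\gamma)^3$, or indeed $(1+\gamma)^2$ if you bound the cheap and dear terms separately rather than through the $\max$.

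\textbf{Trade-off.} The paper's route is modular: the cost statement becomes a two-event add-on to the count theorem. Yours is self-contained and slightly sharper in constants.

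\textbf{Minor remark.} Normalizing by $2f$ is harmless but unnecessary. As you observe yourself, the realized step cost always lies in $[0,f]$, so normalizing by $f$ as the paper does works directly.
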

\begin{proof}
We prove the statement with $(1+\gamma)^4$ in place of $(1+\eps)$ and
failure probability $4\delta$; the stated form follows by taking
$\gamma = \eps/8$ (so that $(1+\gamma)^4 \le e^{4\gamma} =
e^{\eps/2} \le 1+\eps$ for $\eps \le 1$) and replacing $\delta$ by
$\delta/4$, using $\ln(4/\delta) = O(\ln(1/\delta))$ for
$\delta \le \tfrac12$.

The realized step cost of either run lies in $[0, f]$, and by the proof
of \Cref{lem:equiv} its conditional mean $m_t$ satisfies
$m_t \in [f q_t,\, 2 f q_t]$, where $q_t$ is that run's opening
probability; in particular the normalized step costs are adapted,
$[0,1]$-valued, and satisfy the hypotheses of \Cref{lem:ville}. Apply
\Cref{lem:ville}(i) to $X_t = (\text{step cost})/f$, with
$\mu_t = m^{\A}_t/f$, then bound $m^{\A}_s/f \le 2 a_s$ and pass to
$K^{\A}_t$ via a further instance of \Cref{lem:ville}(ii), for $(o_t, a_t)$:
\[
C^{\A}_t/f
\;\le\; (1+\gamma)\sum_{s \le t} m^{\A}_s/f + \tfrac2\gamma\ln\tfrac1\delta
\;\le\; 2(1+\gamma)\sum_{s\le t} a_s + \tfrac2\gamma\ln\tfrac1\delta
\;\le\; 2(1+\gamma)^2 K^{\A}_t
+ \bigl(4(1+\gamma)+2\bigr)\tfrac1\gamma\ln\tfrac1\delta .
\]
Chain with \Cref{thm:upper_whp} and conclude with the pathwise
inequality $f K^{\OB}_t \le C^{\OB}_t$ (as $C_t = f K_t + Q_t$).
Union bound: two events beyond the two of \Cref{thm:upper_whp}, four in total.
\end{proof}

\begin{remark}[The additive term is necessary]\label{rem:additive}
Some additive dependence on $\delta$ is unavoidable even for
oblivious inputs. To see this, consider the example of a ``star'' configuration: an anchor followed by points
$z_1, \dots, z_m$, each at distance $\eps f$ from the anchor and
pairwise at distance $2\eps f$, with $m\eps = 1$. Openings among the
$z_i$ are independent $\mathrm{Bern}(\eps)$ in either run (an opened
$z_j$ lies at distance $2\eps f > \eps f$ from every other $z_i$, so the
anchor remains nearest), whence each count is $1$ plus an approximately
Poisson$(1)$ variable, and the two runs are independent given the
stream. With probability $\Omega(\delta)$ the adaptive count exceeds
$\Omega(\log(1/\delta)/\log\log(1/\delta))$ while the replay count is
$1$. The gap between this lower bound and the additive term
$O\bigl((\kappa_\Lambda/\gamma)\log(1/\delta)\bigr)$ of
\Cref{thm:upper_whp} is open.
\end{remark}

\section{The Inflation Attack} \label{sec:anticertproof}
We next present a deterministic adaptive generator, on the real line, which forces the adaptive execution of the Meyerson sketch to incur total cost $\Omega(\log \Delta/\log \log \Delta)$ times the cost of the replay execution. 

\begin{theorem}[Inflation attack]
\label{thm:anticert}
There are absolute constants $C_0 \le 25$ and $R_0$ ($R_0 = 30$ suffices) such that for every $R \ge R_0$ there exist a finite metric space $\Mm \subset \mathbb{R}$ (with the line metric) of aspect ratio $\Delta = \exp(\Theta(R\log R))$, a deterministic one-sided adaptive
generator over $\Mm$, and an a.s.\ finite $(\Ff_t)$-stopping time $\tau$ with
\[
\E\bigl[K^{\A}_\tau\bigr] \;\ge\; R - 1,
\qquad
\E\bigl[K^{\OB}_\tau\bigr] \;\le\; C_0,
\]
and, separately in the routing components,
\[
\E\bigl[Q^{\A}_\tau\bigr] \;\ge\; (R-2)\,f,
\qquad
\E\bigl[Q^{\OB}_\tau\bigr] \;\le\; C_0\, f .
\]
Consequently the adaptivity inflation gaps of the total cost, the
center count, and the routing component are all
$\Om(\log\Delta/\log\log\Delta)$.
\end{theorem}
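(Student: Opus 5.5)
The plan is to build the generator in two phases, \emph{preparation} (anti-certification) and \emph{extraction}, on a finite set of reals whose scales are separated by powers of $R$. Every phase bookkeeps openings through the fair-count identity \eqref{eq:faircount} (\Cref{lem:equiv}). The generator is deterministic and sees only the adaptive transcript, so all replay-side bounds are computed conditionally on the realized stream, where the replay is an ordinary oblivious run. I would take $f=1$ and place an anchor at $0$, presented first and opened by both runs. Everything happens inside $[0,1]$, so $\Lambda=\Delta$ up to constants.

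\textbf{Step 1 (anti-certificate).} I would first prove a certification lemma. There is a one-sided subroutine that, using $O(1)$ expected replay openings per attempt, either fails (with constant probability; it is then restarted at a fresh, disjoint, scale-separated location) or outputs a point $z^*$ such that (a) $z^*\notin S^{\A}$, and the adaptive run's distance from $z^*$ is at least some scale $D$; and (b) conditionally on the stream, $\Pr[\text{replay holds a center within distance } D R^{-R^{2}}\text{ of } z^*]\ge 1-O(1/R)$. The only conduit from the adaptive coins to the replay is the stream, so (b) must come from the stream making the replay overwhelmingly likely to open at $z^*$: $z^*$ is presented about $N\approx (\ln R)/p$ times at opening probability $p$. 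The adaptive run must meanwhile be kept from opening there, and I would try to use the adaptive openings themselves to steer this. The generator watches the adaptive run. Whenever it opens at a decoy point placed next to the current candidate (at a much smaller scale than $D$), the generator abandons the candidate and restarts. It accepts only candidates at which the adaptive run has not opened. I would then verify that conditioning on adaptive non-opening leaves the replay's conditional opening probability at $1-O(1/R)$, and that the expected number of restarts is $O(1)$. This step is where I expect the main difficulty. Bounding $\E[K^{\OB}]$ by a constant $C_0$ over all restarts will go through a geometric-series argument on the restarts, together with \eqref{eq:faircount} applied to the replay's $b_t$.

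\textbf{Step 2 (extraction).} Given $z^*$ and scale $D$, I place locations $y_i=z^*+D R^{-i}$ for $i=1,\dots,R$ (with $D\le 1$ and spacing factor $R$). At $y_i$ the generator repeats the point until the adaptive run opens there; this is a.s.\ finite, since the adaptive distance is at least $D R^{-i}(1-1/R)>0$. Because the earlier $y_j$ lie farther out and $z^*$ is not adaptive-held, the adaptive distance at $y_i$ is $\Theta(DR^{-i})$. Hence the adaptive run opens exactly one center per location, giving $K^{\A}\ge R-1$. By \eqref{eq:faircount} for the adaptive run, the expected number of copies is $\Theta(R^{i}/D)$, and each copy pays routing $\Theta(DR^{-i})$, so each location contributes $\Omega(1)\cdot f$ to $Q^{\A}$, and the total is at least $(R-2)f$. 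On the good event of Step 1, the replay's distance at every copy is at most $D R^{-R^2}$. Applying \eqref{eq:faircount} to the replay with $h_t$ indicating extraction steps, its expected openings over all copies are at most $\sum_i (R^{i}/D)\cdot D R^{-R^{2}}\cdot O(1)=o(1)$, and its expected routing is similarly $o(f)$. On the bad event, which has probability $O(1/R)$, the replay pays at most its distance to the anchor per copy. The replay's contribution then behaves like an adaptive-type count and is $O(R)$ by the same geometric estimate, for a total of $O(1)$ after weighting by the probability. Summing with Step 1 gives $\E[K^{\OB}_\tau],\,\E[Q^{\OB}_\tau]/f\le C_0$.

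\textbf{Step 3 (aspect ratio and conclusion).} The smallest distance used is $D R^{-R^2}$ (or $DR^{-R}$ if the certification scale can be taken to be $R^{-O(R)}$). Making the certification precise only down to $R^{-O(R)}$ is what gives $\log\Delta=\Theta(R\log R)$, so $R=\Theta(\log\Delta/\log\log\Delta)$. \Cref{cor:equiv} then transfers the count gap to total cost. I expect Step 1, the certification whose replay success probability is $1-O(1/R)$ at only $O(1)$ total cost, to be the crux. If the decoy scheme fails, my fallback is to tune the number of repetitions per candidate so that the adaptive-acceptance probability and the replay-failure probability are balanced, at the price of the constant $C_0$.
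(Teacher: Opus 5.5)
\textbf{Gap in Step 1.} Your certification scheme cannot work, and this is exactly where the theorem's difficulty lies. You accept a candidate only if the adaptive run has \emph{not} opened anything near it (you abandon whenever it opens a decoy), and each restart is at a fresh, scale-separated location. Then both runs enter the candidate region with identical nearest-center distances. As long as the adaptive run opens nothing there, every presented point has the same opening probability in both runs. So on the (deterministic) accepted stream, the probability that the replay also opens nothing equals the acceptance probability $q$. Since the runs are conditionally independent given the stream, conditioning on acceptance cannot help. Hence:
\begin{itemize}
\item If acceptance has probability $\Omega(1)$, then a silent failure has probability $\Omega(1)$, and it costs the replay $\Theta(R)$ during extraction.
\item If replay certainty is $1-O(1/R)$, then you need $\Omega(R)$ restarts, each costing the replay $\Omega(1)$ openings.
\end{itemize}
Your fallback of balancing these gives replay cost about $1/q+qR$, hence $\Theta(\sqrt R)$. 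The resulting gap is only $\Theta(\sqrt R)=\Theta(\sqrt{\log\Delta/\log\log\Delta})$ at $\Delta=R^{\Theta(R)}$, short of the claim. This is precisely the obstruction formalized in \Cref{lem:posterior}: extraction steps must have replay-lack posterior $O(1/R)$.

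\textbf{The missing idea.} The asymmetry must be manufactured by an adaptive \emph{opening}, not a non-opening. The paper does it in two stages.
\begin{enumerate}
\item \emph{Selection.} Present spread-out attempt points, each with opening probability in $[1/R,2/R]$, until the adaptive run opens one, $u^*$. The replay holds that specific point with probability at most $2/R$. This costs the replay only $O(1)$ expected openings, because at most $R$ attempts are needed in expectation.
\item \emph{Certification.} Place $z^*=u^*+\eta$, where $\eta$ is a $\lceil\ln R\rceil$ factor smaller than the distance $\delta_\star$ from $z^*$ to every other presented point. Present $f/\eta$ copies of $z^*$. The adaptive run, holding $u^*$, opens each copy with probability exactly $\eta/f$, so it rejects all of them with probability at least $1/4$. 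The replay, lacking $u^*$, opens each copy with probability at least $\delta_\star/f$, so it rejects all of them with probability at most $1/R$.
\end{enumerate}
Extraction then climbs from $u^*$ toward $z^*$ at distances $\eta R^{-j}$.

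\textbf{Errors in your extraction arithmetic.} These are fixable.
\begin{itemize}
\item The replay's distance to $y_i$ is about $DR^{-i}$, its distance to $z^*$, not $DR^{-R^2}$.
\item The adaptive run's distance to $y_i$ is $(R-1)DR^{-i}$, its distance to $y_{i-1}$, not $\Theta(DR^{-i})$. If it really were $\Theta(DR^{-i})$, there would be no gap at all.
\item With the correct distances, the replay makes at most $1/(R-1)$ expected openings per location, so $O(1)$ in total, not $o(1)$.
\end{itemize}
The replay only needs to hold $z^*$ itself, so no scale below $\eta R^{-R}$ ever appears, and $\log\Delta=\Theta(R\log R)$ follows without the $R^{-R^2}$ worry. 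On the bad event, bound the replay by one opening per location, since the copies are identical. Then get routing from $\E[Q]\le f\,\E[K]$, i.e.\ \eqref{eq:routingdom}, rather than from per-copy anchor distances.
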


The remainder of this section establishes  \Cref{thm:anticert} and two extensions: a high probability bound in \Cref{cor:whpattack} and an implementation of the generator with canonical stopping rules in \Cref{rem:budget}.

\subsection{The Inflation Generator} \label{infgen:sec}

\medskip\noindent\textbf{Description.}
The configuration includes $N_2$ identically configured regions spaced far apart.
The attack processes the regions in order until it certifies, in some region, a point $z^*$ that is not a center point in the adaptive execution but is a center in the replay, with probability $1 - O(1/R)$; if no region is certified, we say the attack fails.
The partial configuration and processing of each region until declaring success or failure is illustrated in \Cref{fig:preparation}. 
If such a $z^*$ is certified, the attack then proceeds to an ``extraction'' process, in which the adaptive execution opens $R$ centers and the replay only opens $O(1)$ in expectation. The extraction configuration and process are illustrated in \Cref{fig:extraction}.

\begin{figure}[t]
\centering
\begin{tikzpicture}[line cap=round]
  \draw (-0.5,0) -- (2.1,0);
  \draw (2.35,-0.14) -- (2.65,0.14);
  \draw (2.55,-0.14) -- (2.85,0.14);
  \draw (3.1,0) -- (13.0,0);
  \fill (0,0) circle (3pt);
  \node[below=3pt] at (0,0) {$h$};
  \fill[gray!55] (3.8,0) circle (2pt);
  \node[below=2pt, gray!75!black] at (3.8,0) {$u_1$};
  \fill[gray!55] (4.9,0) circle (2pt);
  \fill[gray!55] (6.0,0) circle (2pt);
  \fill[gray!55] (7.1,0) circle (2pt);
  \draw[red!70!black, thin] (6.0,0) circle (5pt);
  \fill[blue!65!black] (8.2,0) circle (3pt);
  \node[below=3pt, blue!65!black] at (8.0,0) {$u^*$};
  \fill[red!70!black] (8.62,0) circle (3pt);
  \node[below=3pt, red!70!black] at (8.85,0) {$z^*$};
  \draw[red!70!black, thin] (8.74,0.30) circle (2pt);
  \draw[red!70!black, thin] (8.86,0.52) circle (2pt);
  \node[gray!75!black, right=2pt] at (8.98,0.46)
    {\small presented $n_2$ times};
  \draw[gray!45] (9.8,0) circle (2pt);
  \draw[gray!45] (10.9,0) circle (2pt);
  \draw[gray!45] (12.0,0) circle (2pt);
  \node[gray!60, below=2pt] at (10.9,-0.10) {\small never presented};
  \draw[<->] (0,-0.85) -- (3.8,-0.85)
    node[midway, below=2pt] {$\approx f/R$};
  \draw[<->] (3.8,-0.85) -- (4.9,-0.85)
    node[midway, below=2pt] {$\delta_\star$};
  \draw[decorate,decoration={brace,amplitude=3pt,raise=7pt}]
    (8.2,0) -- (8.62,0) node[midway, above=11pt, xshift=-3pt] {$\rho_0$};
\end{tikzpicture}
\caption{Inflation generator: The certification phase in a single region. 
A region is configured with a ``hub'' point $h$, and within distance $\approx f/R$ from it, a set of $n_1$ evenly-spaced attempt points $(u_i)$, distance of $\delta_\star = f/(Rn_1)$ apart, in an interval of size $f/R$. For each such point $u_i$, there is a respective certification point $z_i$ within distance $\eta = \delta_\star/\lceil\ln R\rceil$ (and, not shown here, a respective set of $R$ extraction points that get geometrically closer to the certification point, with distances $(\rho_j)$).  \\
When processing the region, the generator first requests the hub $h$ (which is opened in both the adaptive and replay executions). It then
requests the attempt points $(u_i)$ in order until a center is opened at one of them, i.e. $u_i\equiv u^*$. If all attempts fail, then certification failed. Otherwise, the attack places $n_2 = (f/\delta_\star)\lceil\ln R\rceil$ requests on the respective  certification point $z_i\equiv z^*$. The region is certified if \emph{none} of these $n_2$ requests opened a center $z^*$. Due to the spacings, the replay is unlikely (probability at most $2/R$) to hold $u^*$; not holding it, it holds $z^*$ after the $n_2$ requests with probability at least $1-1/R$. The adaptive run, which holds $u^*$, evades $z^*$ with constant probability. Overall, the region is certified (the adaptive run holds a $u^*$ and does not hold $z^*$) with constant probability, and on certification the probability of a \emph{silent failure} (the replay does not hold $z^*$) is at most $3/R$.
}
\label{fig:preparation}
\end{figure}

\begin{figure}[t]
\centering
\begin{tikzpicture}[line cap=round]
  \draw (-0.7,0) -- (13.1,0);
  \coordinate (y0) at (0,0);
  \coordinate (y1) at (6.76,0);
  \coordinate (y2) at (9.84,0);
  \coordinate (y3) at (11.24,0);
  \coordinate (cs) at (12.4,0);
  \fill (y1) circle (2pt);
  \node[below=2pt] at (y1) {$y_1$};
  \fill (y2) circle (2pt);
  \node[below=2pt] at (y2) {$y_2$};
  \fill (y3) circle (2pt);
  \node[below=2pt] at (y3) {$y_3$};
  \node at (11.82,0.02) {$\cdot\!\cdot\!\cdot$};
  \fill[blue!65!black] (y0) circle (3pt);
  \node[below=3pt, blue!65!black] at (y0) {$u^*\!=y_0$};
  \fill[red!70!black] (cs) circle (3pt);
  \node[below=3pt, red!70!black] at (cs) {$z^*$};
  \draw[decorate,decoration={brace,amplitude=6pt,raise=10pt}]
    (y0) -- (y1) node[midway, above=17pt] {$d_1$};
  \draw[decorate,decoration={brace,amplitude=5pt,raise=10pt}]
    (y1) -- (y2) node[midway, above=17pt] {$d_2$};
  \draw[decorate,decoration={brace,amplitude=4pt,raise=10pt}]
    (y2) -- (y3) node[midway, above=16pt] {$d_3$};
  \draw[decorate,decoration={brace,mirror,amplitude=4pt,raise=16pt}]
    (y2) -- (cs) node[midway, below=18pt] {$\rho_2=\rho_0/R^2$};
  \draw[decorate,decoration={brace,mirror,amplitude=5pt,raise=34pt}]
    (y1) -- (cs) node[midway, below=38pt] {$\rho_1=\rho_0/R$};
  \draw[decorate,decoration={brace,mirror,amplitude=6pt,raise=52pt}]
    (y0) -- (cs) node[midway, below=60pt] {$\rho_0$};
\end{tikzpicture}
\caption{Inflation attack: Exploiting a certified region.  The starting state has
an adaptive-only center $u^*$ (blue) and a replay-only center
$z^* = u^* + \rho_0$ (red); everything else, in both runs, is far away. The extraction configuration has locations $y_i = z^* - \rho_i$ with $\rho_i = \rho_0/R^i$ getting geometrically closer to $z^*$. In round $i$ the generator
presents copies of $y_i$ until the adaptive run opens one, so each $y_i$ becomes an adaptive center in turn. The adaptive run opens each copy with probability $d_i/f$, where $d_i = \rho_{i-1} - \rho_i$ is the hop from the previous location; the replay, which holds $z^*$, opens with probability at most $\rho_i/f$, a factor $R-1$ smaller.}
\label{fig:extraction}
\end{figure}

Throughout this section, we assume $R \ge R_0$, where $R_0$ an
absolute constant chosen so that the finitely many explicit numerical
inequalities below hold ($R_0 = 30$ suffices); no constant has been
optimized.

\medskip\noindent\textbf{Layout.}
The layout is configured using the following parameters.
\[
s = \frac{f}{R}, \qquad
n_1 = R\lceil \ln R\rceil, \qquad
\delta_\star = \frac{s}{n_1} = \frac{f}{R^2\lceil \ln R \rceil}, \qquad
\eta = \frac{\delta_\star}{\lceil \ln R\rceil }= \frac{f}{R^2\lceil \ln R \rceil^2} ,
\]
\[
n_2 = R^2 \lceil \ln R \rceil^2 , \qquad
\rho_j = \eta R^{-j} \ (0 \le j \le R), \qquad
N_2 = \lceil 5 \ln R\rceil .
\]
Note the exact identities $n_2 = f/\eta$ and
$n_2\,\delta_\star/f = \lceil\ln R\rceil$, which the certification
analysis uses.

The configuration consists of $N_2$ identically laid-out \emph{regions}, indexed by $c \in [N_2]$.%
The layout of region $c$ includes a hub $h_c = 8fc$, $n_1$ evenly spread
\emph{attempt positions} $u_{c,i} = h_c + s + (i-1)\,\delta_\star$ for $i \in [n_1]$, and 
for each $i$, a \emph{certification point} $z_{c,i} = u_{c,i} + \eta$. Finally, we define 
\emph{extraction points} $y_{c,i,j} = z_{c,i} - \rho_j$ for $j \in [R]$, which get geometrically closer to the certification point. Note that $y_{c,i,0} = u_{c,i}$, so the extraction points climb from the attempt point toward its certification point. 

Specifically, attempt points are at distance between
$s$ and $2s$ from the hub, with spacing  $\delta_\star$ apart; $\eta < \delta_\star$ is the
distance of a certification point from its attempt point; and
$\rho_j$ for $j\ge 1$ are geometrically shrinking distances of extraction points to the respective certification point, with
$\rho_0 = \eta$.

\medskip\noindent\textbf{Layout Properties.}
All points of region $c$ lie in $[h_c,\, h_c + 2s + \eta] \subseteq
[h_c,\, h_c + 3s]$, so all within-region distances are
at most $3s = 3f/R < f$, while distinct regions are at distance
$\ge 8f - 3s \ge 6f$. Therefore: 
\begin{itemize}
    \item [\emph{(P0)}]
When the first point of a region (its hub) is presented, every previously
presented point is at distance $\ge 6f \ge f$, so \emph{both} runs open
every hub with probability $1$.
    \item [\emph{(P1)}] Once a region's hub is placed, a within-region center at distance $< f$ always exists, so centers
in other regions (at distance $\ge 6f$) never determine an opening
probability. The relevant exact within-region distances:
$d(u_{c,i}, h_c) = s + (i-1)\,\delta_\star \in [s,\, 2s)$;
$d(u_{c,i}, u_{c,i'}) = |i - i'|\,\delta_\star$;
$d(z_{c,i}, u_{c,i}) = \eta$, with $z_{c,i}$ lying strictly between
$u_{c,i}$ and $u_{c,i+1}$; and
$d(y_{c,i,j}, z_{c,i}) = \rho_j$, $d(y_{c,i,j}, u_{c,i}) = \eta - \rho_j = \rho_0 - \rho_j$, with the extraction points lying strictly between $u_{c,i}$ and $z_{c,i}$.
\end{itemize}

\medskip\noindent\textbf{The generator.}
For $c = 1, 2, \dots, N_2$, or until some region is certified:
\begin{itemize}[leftmargin=2em]
\item \emph{Phase 1 (spread attempts).} Present $h_c$; then present
$u_{c,1}, u_{c,2}, \dots$ in order, stopping at the first index $T = T_c$
at which the \emph{adaptive} run opens; we then say the region
\emph{selects} $u^* := u_{c,T}$. If no opening occurs within $n_1$
attempts, the region fails; continue with region $c+1$.
\item \emph{Phase 2 (certification).} Write $z^* = z_{c,T}$. Present up
to $n_2$ identical copies of $z^*$. If the
adaptive run opens a copy, the region fails; continue with region $c+1$.
If it
rejects all $n_2$ copies, the region is \emph{certified}. 
\end{itemize}
\emph{Extraction.} At the first certified region (write $y_j = y_{c,T,j}$):
for $j = 1, \dots, R$, present identical copies of $y_j$ until the adaptive
run opens one.
Let $\tau$ be the end of round $R$, or the halting time if
all $N_2$ regions failed.

Every decision above depends only on the adaptive
transcript, so this is a deterministic one-sided generator and $\tau$ is an
$(\Ff_t)$-stopping time; $\tau$ is a.s.\ finite because, by
\Cref{clm:biases} below, every waiting time (determined by the number of identical copies of $z^*$ presented until
an opening) is distributed as a geometric random variable with a deterministic positive parameter.

\subsection{Analysis and Openings Bounds}

We first give the high-level argument; the formal proof is presented afterwards.
\begin{itemize}
\item {\bf Certifying a region.}
    A region is certified if (i)~the adaptive run opens some $u_i\equiv u^*$, and (ii)~the adaptive run then rejects all $n_2$ requests at the respective $z^*$. For (i): each presented attempt is opened with probability at least $1/R$, so selection fails only with probability $(1-1/R)^{n_1}\le 1/R$. For (ii): conditioned on selection, each copy of $z^*$ is opened with probability $\eta/f = 1/n_2$, so all $n_2$ are rejected with probability $(1-1/n_2)^{n_2}\approx 1/e$. Overall, a region is certified with probability at least $1/5$.
\item {\bf A silent failure} occurs when a region is certified but the replay does not hold $z^*$. This is bounded by the probability of the event that either $u^*$ is also a replay center (probability at most $2/R$), or that, the replay does not hold $u^*$ but it still rejected all $n_2$ requests at $z^*$. Since each copy is opened with probability at least $\delta_\star/f = 1/(R^2\lceil\ln R\rceil)$, this has probability $(1-\delta_\star/f)^{n_2}\le 1/R$. The silent failure probability is therefore at most $3/R$.
\item {\bf Extraction.}
The adaptive run opens exactly one center per location, $R$ in total. Absent a silent failure, the replay holds $z^*$ and opens only $\approx 1/R$ centers per location in expectation, $O(1)$ in total; on a silent failure it may open up to one center per location, at most $R$ in total --- but a silent failure only occurs with probability
$O(1/R)$.
\item {\bf Bounding expected openings.}
Each region is certified with constant probability, so only $O(1)$ regions are processed in expectation, and a failed region costs both runs $O(1)$ openings in expectation. By combining the above, it follows that the adaptive execution opens $\Omega(R)$ centers in expectation while the replay opens $O(1)$.
\item {\bf Resources.}
Each extraction location shrinks the scale by a factor of $R$, so the configuration spans $R$ geometric scales below $\eta$ and the aspect ratio is $\Delta = R^{\Theta(R)}$. By solving for $R$, we obtain $R = \Theta(\log\Delta/\log\log\Delta)$.
\end{itemize}

\begin{claim}[Exact opening probabilities]\label{clm:biases}
Deterministically, for every region $c$ that is processed:
\emph{(i)} when attempt $u_{c,i}$ is presented, its adaptive opening
probability equals $(s + (i-1)\,\delta_\star)/f \in [1/R,\, 2/R\,]$, and its
oblivious opening probability is at most $2/R$;
\emph{(ii)} in Phase~2, every copy of $z^*$ presented before the adaptive
run's first opening among the copies has adaptive opening probability
exactly $\eta/f$;
\emph{(iii)} in extraction round $j$, every copy of $y_j$ presented before
the adaptive run's first opening in that round has adaptive opening
probability exactly $a_j = (\rho_{j-1} - \rho_j)/f$, uniformly in
$j \in [R]$ (with the convention $y_0 = u^*$).
\end{claim}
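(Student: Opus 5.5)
The claim is a deterministic bookkeeping statement: each opening probability is read off the nearest center in the relevant run. Throughout, (P1) lets us ignore centers in other regions once the hub $h_c$ is placed. By (P0) both runs open $h_c$ with probability one, so both runs hold $h_c$ from then on, and every within-region distance is at most $3s<f$, so no probability is capped at $1$. For each part the plan is to list exactly which points of region $c$ can be adaptive centers at the relevant moment, using the generator's stopping rules, and then compute the nearest distance with the exact distances listed in (P1).

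\emph{(i)} When $u_{c,i}$ is presented in Phase~1, the adaptive run has rejected every earlier attempt $u_{c,1},\dots,u_{c,i-1}$, since Phase~1 stops at the first adaptive opening. So $S^{\A}\cap(\text{region }c)=\{h_c\}$, and $a = d(u_{c,i},h_c)/f = (s+(i-1)\delta_\star)/f$. Since $s+(i-1)\delta_\star \in[s,2s)$ and $s=f/R$, this lies in $[1/R,2/R]$. For the replay we do not know which earlier attempts were opened, but it certainly holds $h_c$. Hence $b\le d(u_{c,i},h_c)/f\le 2/R$, because the nearest-center distance only decreases when more centers are added.

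\emph{(ii)} When Phase~2 starts, the adaptive run holds, inside region $c$, exactly $h_c$ and $u^*=u_{c,T}$; it rejected all earlier attempts, and no later attempts were presented. Copies of $z^*$ presented before the first adaptive opening among them add no centers. Its candidate distances are $d(z^*,u^*)=\eta$ and $d(z^*,h_c)\ge s>\eta$. It has no other within-region centers to check, since attempts other than $u_{c,T}$ were rejected or never presented. So the minimum is $\eta$ and $a=\eta/f$ exactly.

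\emph{(iii)} This is an induction on the round $j$. At the start of round $j$, the adaptive centers in region $c$ are $h_c$, $u^*=y_0$, and $y_1,\dots,y_{j-1}$: one per completed round, since each round ends at its first opening. Certification ensures $z^*$ is not a center, and copies in the current round before the first opening add nothing. On the line, all these points lie to the left of $y_j$, and $y_{j-1}$ is the closest of them, at distance $\rho_{j-1}-\rho_j$. This uses that the $\rho_j$ are decreasing, so the extraction points are ordered $u^*<y_1<\dots<y_j<z^*$, and that $h_c$ is farther away. Hence $a_j=(\rho_{j-1}-\rho_j)/f$.

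The main point requiring care is the complete accounting of adaptive centers in (ii) and (iii). In particular we must use certification: the adaptive run holds no copy of $z^*$, which would otherwise be closer to $y_j$. We must also check that no attempt point other than $u^*$ lies between $u^*$ and $z^*$, which holds because $\eta<\delta_\star$ places $z_{c,i}$ strictly between $u_{c,i}$ and $u_{c,i+1}$ by (P1).
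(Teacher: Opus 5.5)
Your proof is correct and matches the paper's argument essentially step for step. In each part you enumerate the adaptive centers of region $c$ from the generator's stopping rules, using (P0) for the hub and certification to exclude $z^*$ in (iii), and read off the nearest distance from the exact layout distances; the only cosmetic difference is that in (iii) you identify $y_{j-1}$ as nearest via the left-to-right order on the line, whereas the paper compares the candidate distances $\rho_{j'}-\rho_j$, $\rho_0-\rho_j$, and $\ge s$ directly.
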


\begin{proof}
By (P1), only within-region centers matter; all quoted distances are the
exact ones from the layout. We now show each of the claims below.

\emph{(i)} Suppose the adaptive centers
of region $c$ are exactly $\{h_c\}$ and attempts $1, \dots, i-1$ were rejected
(otherwise Phase~1 would have ended at a smaller index) and no other point
of the region has been presented. Then,  the adaptive distance is
$s + (i-1)\,\delta_\star < 2s < f$. The oblivious centers of region $c$
form a subset of $\{h_c, u_{c,1}, \dots, u_{c,i-1}\}$ containing $h_c$
(by (P0)), so the oblivious distance is at most $2s$.

\emph{(ii)} The adaptive centers of region $c$ are exactly $\{h_c, u^*\}$:
all other presented attempts were rejected, and the copies of $z^*$
presented so far were rejected. The nearest is $u^*$ at distance $\eta$
(the hub is at distance $\ge s$).

\emph{(iii)} The adaptive centers of region $c$ are $\{h_c, u^*\}$ together
with one opened copy from each of extraction locations $1, \dots, j-1$; no copy of $z^*$
is an adaptive center, since the region is certified. The nearest is the location-$(j-1)$ center $y_{j-1}$ (for $j = 1$: $y_0 = u^*$), at distance
$\rho_{j-1} - \rho_j$: earlier locations $y_{j'}$ with $j' < j-1$ are at the
larger distances $\rho_{j'} - \rho_j$, $u^*$ is at
$\rho_0 - \rho_j \ge \rho_{j-1} - \rho_j$, and the hub is at distance
$\ge s$.
\end{proof}

The replay-side arguments rely on the following observation: 
The generator itself is deterministic, so conditioning
on the adaptive run's coins determines the stream, and the replay is then
just the Meyerson sketch on a fixed sequence.

\begin{lemma}[Conditional replay]\label{lem:condreplay}
The presented sequence is a function of the adaptive coins $U$ alone, and $U' \perp U$. Hence,
conditionally on $\sigma(U)$, the stream is fixed, and with it every quantity it
determines: the phase and round lengths, all region outcomes, the first
certified region, and $\tau$. The oblivious run is
exactly the Meyerson sketch executed with i.i.d.\ fresh coins on this
fixed sequence; in particular, each point presented at time $t$ is
opened, given $\sigma(U)$ and $U'_{<t}$, with probability
$\min\{1,\, d(x_t, S^{\OB}_{t-1})/f\}$.
\end{lemma}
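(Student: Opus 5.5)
The plan is to build everything on the one-sided structure of the model. Under \Cref{def:onesided}, with a deterministic generator (this section's generator is deterministic, and $W$ may be fixed in general by the remark on randomized generators), each $x_t$ is a measurable function of $U_{<t}$. A short induction on $t$ then gives: if $x_1,\dots,x_{t-1}$ are functions of $U$, then so are the adaptive indicators $o_1,\dots,o_{t-1}$, since $o_s=\ind{\{U_s\le\min\{1,d(x_s,S^{\A}_{s-1})/f\}\}}$ and $S^{\A}_{s-1}$ is built from $(x_r,o_r)_{r<s}$. Hence $x_t=g_t((x_1,o_1),\dots,(x_{t-1},o_{t-1}))$ is a function of $U$. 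It follows that the whole stream $X$, and every quantity defined from the adaptive transcript, is $\sigma(U)$-measurable. This covers the phase lengths $T_c$, the Phase~2 and extraction waiting times, the region outcomes, the first certified region, and $\tau$, since all are defined by $\Ff_t$-events.

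The second step uses independence. Since $U'$ is independent of $U$ (and of $W$), I will write the joint law as a product and use a regular conditional distribution, or equivalently Fubini. Conditionally on $\sigma(U)$, $U'$ remains an i.i.d.\ $\mathrm{Unif}[0,1]$ sequence, while $X$ becomes a fixed sequence $x=X(u)$. The replay is defined by the same deterministic recursion $S^{\OB}_t=S^{\OB}_{t-1}\cup\{x_t\}$ iff $U'_t\le\min\{1,d(x_t,S^{\OB}_{t-1})/f\}$. So on the fixed sequence $x$ it is literally \Cref{alg:meyerson} driven by fresh coins $U'$.

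For the final per-step claim, fix $t$. Given $\sigma(U)$ and $U'_{<t}$, the set $S^{\OB}_{t-1}$ is determined, and $U'_t$ is independent of $(U,U'_{<t})$, so it is still uniform. Therefore $\Pr[o'_t=1\mid\sigma(U),U'_{<t}]=\min\{1,d(x_t,S^{\OB}_{t-1})/f\}$, and for $x_t=\bot$ nothing happens.

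The only delicate point is measure-theoretic: making the conditioning on the infinite $\sigma$-algebra $\sigma(U)$ rigorous. I would avoid regular conditional probabilities by stating everything via Fubini. For any measurable functional $\Psi$ of the replay trajectory, $\E[\Psi\mid\sigma(U)](u)=\E_{U'}[\Psi(X(u),U')]$, which holds because $(U,U')$ has product law. No real obstacle is expected beyond this bookkeeping.
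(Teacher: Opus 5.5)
Your proposal is correct and is essentially the paper's argument. The paper's proof is the same two observations: $x_t$ is $\sigma(U_{<t})$-measurable by induction on the transcript, and $U'_t$ is independent of $(U, U'_{<t})$. Your Fubini/product-measure treatment of conditioning on $\sigma(U)$ just spells out the bookkeeping the paper leaves implicit.
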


\begin{proof}
$x_t$ is $\sigma(U_{<t})$-measurable by induction on the transcript, and
$U'_t$ is independent of $(U, U'_{<t})$.
\end{proof} 

\begin{claim}[Selection and the anti-certificate]\label{clm:anticert}
For each region $c$ that is started: \emph{(i)}
$\Pr[\,T_c > n_1\,] \le 1/R$; \emph{(ii)} let
$B_c := \{\text{the oblivious run opened } u_{c,T_c}\}$ (defined on
$\{T_c \le n_1\}$); then $\Pr[\,B_c \mid \sigma(U)\,] \le 2/R$ on
$\{T_c \le n_1\}$.
\end{claim}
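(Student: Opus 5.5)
Part (i) is a direct product bound. By \Cref{clm:biases}(i), whenever attempt $u_{c,i}$ is presented (all earlier attempts rejected by the adaptive run), its adaptive opening probability is at least $s/f = 1/R$, deterministically given the past. Hence, conditioning step by step on $\Ff_{t-1}$ and using that $U_t$ is fresh, the probability that all $n_1$ attempts are rejected is at most $(1-1/R)^{n_1} \le e^{-n_1/R} = e^{-\lceil \ln R\rceil} \le 1/R$. I would write this as an induction over $i$, multiplying the conditional rejection probabilities, each at most $1 - 1/R$.

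For part (ii), I would work conditionally on $\sigma(U)$ via \Cref{lem:condreplay}. On $\{T_c \le n_1\}$ the index $T_c$, and with it the position $u_{c,T_c}$ and the time $t^\ast$ at which it is presented, are $\sigma(U)$-measurable. Given $\sigma(U)$, the replay is the Meyerson sketch run with fresh coins on a fixed sequence. Its opening indicator at time $t^\ast$ is therefore $\Bern(b_{t^\ast})$ conditionally on $\sigma(U)$ and $U'_{<t^\ast}$. Since $b_{t^\ast}$ is a function of $U'_{<t^\ast}$ and the fixed stream, this gives $\Pr[B_c \mid \sigma(U)] = \E[b_{t^\ast} \mid \sigma(U)]$.

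It remains to bound $b_{t^\ast}$ pointwise. By (P0), the replay holds $h_c$ from the start of region $c$. The replay's region-$c$ centers at time $t^\ast$ are a subset of $\{h_c, u_{c,1},\dots,u_{c,T_c-1}\}$ that contains $h_c$, and by (P1) no center outside the region matters. So the replay distance is at most $d(u_{c,T_c}, h_c) < 2s$, and $b_{t^\ast} \le 2s/f = 2/R$; this is exactly the second half of \Cref{clm:biases}(i). Taking the conditional expectation gives the claim.

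The only delicate point is the measurability bookkeeping: $T_c$ is random, so I must justify evaluating the replay's conditional opening probability at a time determined by $U$. This is fine because, given $\sigma(U)$, that time is fixed and $U'_{t^\ast}$ is independent of $(U, U'_{<t^\ast})$. I expect no further obstacle.
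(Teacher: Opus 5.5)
Your proposal is correct and follows essentially the same route as the paper. Part (i) uses the per-attempt adaptive bias of at least $1/R$ from \Cref{clm:biases}(i) together with $(1-1/R)^{n_1}\le e^{-\lceil\ln R\rceil}\le 1/R$. Part (ii) conditions on $\sigma(U)$ via \Cref{lem:condreplay}, uses the fresh replay coin at the now-fixed presentation time, and bounds the replay's opening probability pathwise by $2/R$ through the hub, exactly as the paper does.
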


\begin{proof}
\emph{(i)} By \Cref{clm:biases}(i), each presented attempt is opened by the
adaptive run with probability at least $1/R$ by a fresh coin $U$, so
$\Pr[T_c > n_1] \le (1-1/R)^{n_1} \le e^{-n_1/R} = e^{-\lceil \ln R\rceil} \le 1/R$.
\emph{(ii)} Condition on $\sigma(U)$: the index $T_c$ and the entire
presented prefix are fixed, and attempt $u_{c,T_c}$ was presented exactly
once, at a fixed time. By \Cref{lem:condreplay}, its replay coin is fresh
and its conditional opening probability, given additionally the replay
coins of the previously presented points, is the oblivious bias of
\Cref{clm:biases}(i), which is at most $2/R$ \emph{pathwise}.
Integrating over those coins gives the claim.
\end{proof}

Conditioned on the event
$\{T_c \le n_1\} \cap \neg B_c$, at the start of
Phase~2 \emph{every} oblivious center of region $c$ is at distance
strictly greater than $\delta_\star$ from $z^*$: no point of the region to
the right of $u^*$ has yet been presented (the points $u_{c,j}$ with
$j > T_c$ are never presented; $z^*$ and the $y_j$ come later), so every
oblivious center other than the hub sits at some $u_{c,j}$ with
$j \le T_c - 1$, at distance
$(T_c - j)\,\delta_\star + \eta \ge \delta_\star + \eta > \delta_\star$
from $z^*$, while the hub is at distance $\ge s > \delta_\star$. 

\begin{claim}[Certification]\label{clm:cert}
For each region that selects: \emph{(i)} the region is certified with
probability exactly $(1-\eta/f)^{n_2} = (1 - 1/n_2)^{n_2} \ge 1/4$, and
this event is
$\sigma(U)$-measurable; \emph{(ii)} let
\[ \mathrm{HIT} := \{\text{the oblivious run opens at least one copy of } z^*\} ;\] then
\[
\Pr\bigl[\,\neg \mathrm{HIT} \;\big|\; \sigma(U),\ \neg B_c\,\bigr]
\;\le\; \Bigl(1 - \frac{\delta_\star}{f}\Bigr)^{n_2} \;\le\; \frac1R
\qquad \text{on the event that the region is certified.}
\]
\end{claim}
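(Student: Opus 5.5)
For part (i), I will work conditionally on the history up to the start of Phase~2, i.e.\ on $\sigma(U_{<t_0})$, where $t_0$ is the time of the first copy of $z^*$. By \Cref{clm:biases}(ii), every copy presented before the adaptive run's first opening has adaptive opening probability exactly $\eta/f = 1/n_2$, using the identity $n_2 = f/\eta$. Each such decision uses a fresh coin $U_t$. So the event that all $n_2$ copies are rejected has conditional probability exactly $(1-1/n_2)^{n_2}$.

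I will write this as a product of conditional probabilities over successive copies, stopping the product at the first opening. This is legitimate because the generator halts Phase~2 at the first opening, and before that point the bias is deterministic. Since $n_2 \ge 2$, we have $(1-1/n_2)^{n_2} \ge 1/4$, as the sequence is increasing in $n_2$ and equals $1/4$ at $n_2 = 2$. Measurability is immediate: the event is a function of the adaptive coins $U_t$ used in Phase~2, and hence is $\sigma(U)$-measurable.

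For part (ii), I condition on $\sigma(U)$ and invoke \Cref{lem:condreplay}. Given $\sigma(U)$, the whole stream is fixed, including the fact that the region is certified. Certification means that all $n_2$ copies of $z^*$ were actually presented, since the adaptive run rejected them all. The replay is then the Meyerson sketch on this fixed sequence with fresh coins $U'$.

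On $\neg B_c$, the remark preceding the claim shows that at the start of Phase~2 every replay center of region $c$ lies at distance strictly greater than $\delta_\star$ from $z^*$. Also, by (P1), centers outside the region are at distance $\ge 6f$.

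While the replay has not yet opened a copy of $z^*$, its center set in the region is unchanged, because the only points presented during Phase~2 are copies of $z^*$. Hence each copy has replay opening probability at least $\min\{1,\delta_\star/f\} = \delta_\star/f$. Chaining conditional probabilities over the $n_2$ copies, each with a fresh $U'_t$ given $\sigma(U)$ and $U'_{<t}$, gives
\[
\Pr\bigl[\neg\mathrm{HIT}\mid\sigma(U),\neg B_c\bigr] \le (1-\delta_\star/f)^{n_2} \le e^{-n_2\delta_\star/f} = e^{-\lceil\ln R\rceil}\le 1/R,
\]
using the identity $n_2\delta_\star/f = \lceil\ln R\rceil$.

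The main subtlety is conditioning on $\neg B_c$, which is an event about replay coins (namely $U'$ at the time $u^*$ was presented), together with $\sigma(U)$. I will handle it by conditioning on the full replay prefix $U'_{<t_0}$ on the event $\neg B_c$. The distance bound holds pathwise on that event, and the Phase~2 replay coins are independent of $(U, U'_{<t_0})$. So the pathwise bound integrates over the prefix, restricted to $\neg B_c$, to give the stated conditional bound.
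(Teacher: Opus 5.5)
Your proposal is correct and follows the paper's proof essentially step for step. Part (i) uses the exact adaptive bias $\eta/f = 1/n_2$ from \Cref{clm:biases}(ii) together with the monotonicity of $(1-1/m)^m$. Part (ii) conditions on $\sigma(U)$ and the pre-Phase-2 replay coins (the paper conditions only on the attempts' replay coins, which determine $B_c$, but yours is a harmless refinement), then telescopes the fresh Phase-2 replay coins using the distance-$>\delta_\star$ observation.
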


\begin{proof}
\emph{(i)} By \Cref{clm:biases}(ii) each copy is rejected by the adaptive
run with probability exactly $1 - \eta/f$, by fresh coins $U$;
certification is
the intersection of $n_2$ such rejections. Since $\eta/f = 1/n_2$ and
$(1 - 1/m)^{m}$ is increasing in $m$ with value $1/4$ at $m = 2$,
$(1-\eta/f)^{n_2} \ge 1/4$.
\emph{(ii)} Condition on $\sigma(U)$ and on the replay coins of the
attempts (which determine $B_c$). On the stated event
the region is certified, so all $n_2$ copies were presented, and by the
paragraph above every oblivious center of the region is at distance
$> \delta_\star$
from $z^*$ at Phase-2 start. This remains true until the replay's first
opening among the copies (only copies of $z^*$ are presented during
Phase~2), so by \Cref{lem:condreplay} each copy's conditional opening
probability on the surviving path is at least
$\min\{1, \delta_\star/f\} = \delta_\star/f$. Telescoping the conditional
survival probabilities over the $n_2$ fresh copy coins,
$\Pr[\neg \mathrm{HIT} \mid \cdot\,] \le (1-\delta_\star/f)^{n_2}
\le e^{-n_2 \delta_\star/f} = e^{-\lceil\ln R\rceil} \le 1/R$.
\end{proof}

\begin{claim}[Extraction accounting]\label{clm:extract}
Let $\mathrm{SUCC}$ be the event that some region is certified. On
$\mathrm{SUCC}$, extraction round $j$ lasts $T_j \sim \Geom(a_j)$ copies
($\sigma(U)$-measurable, a.s.\ finite) and contributes exactly one adaptive
opening, so $K^{\A}_\tau \ge R$ on $\mathrm{SUCC}$. Moreover,
\[
\E\bigl[\#\{\text{oblivious openings during extraction}\}\bigr]
\;\le\; \underbrace{\tfrac{R}{R-1}}_{\text{on } \mathrm{HIT}}
\;+\; \underbrace{R \cdot \Pr[\,\neg\mathrm{HIT} \cap \mathrm{SUCC}\,]}_{\text{on } \neg\mathrm{HIT}}
\;\le\; 1.1 + 3 \;=\; 4.1 .
\]
\end{claim}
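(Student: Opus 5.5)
The plan is to handle the adaptive count and the oblivious count separately. For the adaptive side, use \Cref{clm:biases}(iii): in round $j$ every copy presented before the first adaptive opening has adaptive bias exactly $a_j=(\rho_{j-1}-\rho_j)/f>0$, and each uses a fresh coin $U_t$. So the number of copies $T_j$ is $\Geom(a_j)$, it is a function of $U$ alone, it is a.s.\ finite, and the round ends at exactly one adaptive opening. Summing over $j\in[R]$ gives $K^{\A}_\tau\ge R$ on $\mathrm{SUCC}$.

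For the oblivious side, work conditionally on $\sigma(U)$ via \Cref{lem:condreplay}, so the stream, the certified region and every $T_j$ are fixed. Split on $\mathrm{HIT}$.

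\emph{Case $\mathrm{HIT}$.} The replay holds $z^*$ from some point in Phase~2 onward. In round $j$, each copy of $y_j$ is within distance $\rho_j$ of $z^*$, so the replay opens it with probability at most $\rho_j/f$. By the fair-count identity (\Cref{lem:equiv}, with $h_t$ the indicator of being an extraction step of round $j$ on $\mathrm{HIT}$, which is predictable), the expected number of replay openings in round $j$ is at most $\E[T_j]\,\rho_j/f$. Using $\E[T_j]=1/a_j=f/(\rho_{j-1}-\rho_j)$ and $\rho_{j-1}=R\rho_j$, this ratio is $\rho_j/(\rho_{j-1}-\rho_j)=1/(R-1)$. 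Here $T_j$ is independent of $U'$ given the stream, so the conditional expectations factor. Summing over $R$ rounds gives $R/(R-1)\le 1.1$ for $R\ge 30$.

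\emph{Case $\neg\mathrm{HIT}$.} Bound crudely by at most one replay opening per distinct location. Once the replay opens a copy of $y_j$, later copies are at distance $0$ and open with probability $0$. So there are at most $R$ replay openings on $\neg\mathrm{HIT}\cap\mathrm{SUCC}$.

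It then remains to show $\Pr[\neg\mathrm{HIT}\cap\mathrm{SUCC}]\le 3/R$. The certified region $c$ is determined by $U$. On it, $\neg\mathrm{HIT}$ implies either $B_c$ or $\neg\mathrm{HIT}\cap\neg B_c$. By \Cref{clm:anticert}(ii), $\Pr[B_c\mid\sigma(U)]\le 2/R$. By \Cref{clm:cert}(ii), the second event has conditional probability at most $1/R$. Averaging over $\sigma(U)$ gives $3/R$, so the $\neg\mathrm{HIT}$ contribution is at most $R\cdot 3/R=3$, and the total is at most $4.1$.

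The main obstacle is bookkeeping in the $\mathrm{HIT}$ case. Round lengths are random and determined by the adaptive coins, while replay openings are driven by $U'$, so I must make sure the per-copy bound $\rho_j/f$ can be summed against $\E[T_j]$ without correlation issues. Conditioning on $\sigma(U)$ handles this: $T_j$ becomes a constant, the replay coins are i.i.d.\ fresh, and linearity gives the sum $T_j\rho_j/f$. Taking expectation over $U$ then yields $\E[T_j]\rho_j/f$.
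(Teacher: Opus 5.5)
Your proposal is correct and follows the paper's proof essentially step for step. The steps match: the adaptive round structure from \Cref{clm:biases}(iii); the per-round bound $(\rho_j/f)\,\E[T_j] = 1/(R-1)$ on $\mathrm{HIT}$, obtained by conditioning on $\sigma(U)$ so that $T_j$ is fixed; at most one replay opening per location on $\neg\mathrm{HIT}$; and the silent-failure bound $2/R + 1/R$ from \Cref{clm:anticert}(ii) and \Cref{clm:cert}(ii). Your extra appeal to the fair-count identity, with the predictable indicator of a round-$j$ step on $\mathrm{HIT}$, is a valid alternative to the paper's direct conditioning, since $\mathrm{HIT}$ is decided by Phase~2 coins before extraction begins. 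Bounding $\Pr[\neg\mathrm{HIT}\cap\mathrm{SUCC}]$ instead of $\Pr[\neg\mathrm{HIT}\mid\mathrm{SUCC}]$ is equally sufficient here.
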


\begin{proof}
The round structure and $K^{\A}_\tau \ge R$ are \Cref{clm:biases}(iii). On
$\mathrm{HIT}$, every copy of $y_j$ is at distance exactly $\rho_j$ from the oblivious
center at $z^*$, so by \Cref{lem:condreplay} its conditional opening
probability given $\sigma(U)$ and the prior replay coins is at most
$\rho_j/f$. Since $T_j$ is $\sigma(U)$-measurable,
$\E[\#\text{openings in round } j;\, \mathrm{HIT} \mid \sigma(U)] \le
(\rho_j/f)\, T_j$, and taking expectations,
$\E[\cdot\,; \mathrm{HIT}] \le (\rho_j/f)\,\E[T_j] = \rho_j/(f a_j)
= \rho_j/(\rho_{j-1} - \rho_j) = \tfrac{1}{R-1}$, uniformly in $j$;
summing over the $R$ rounds gives at most $\tfrac{R}{R-1} \le 1.1$. On $\neg \mathrm{HIT}$, the copies within a round
are identical, so after the replay's first opening in a round every further
copy of that round is at distance $0$ and is never opened: at most one
oblivious opening per round, hence at most $R$ in total on this event.
Finally, all region outcomes and the identity of the first certified region
are $\sigma(U)$-measurable (\Cref{lem:condreplay}), so
\Cref{clm:anticert,clm:cert} apply conditionally on
$\mathrm{SUCC}$, bounding the \emph{silent failure} probability ($\neg \mathrm{HIT}$ on
a certified region):
\[
\Pr[\,\neg\mathrm{HIT} \mid \mathrm{SUCC}\,]
\;\le\; \Pr[B_{c^*}\mid\cdot]
+ \Pr[\neg\mathrm{HIT} \mid \cdot, \neg B]
\;\le\; \frac{2}{R} + \frac1R \;=\; \frac{3}{R}. \qedhere
\]
\end{proof}

\begin{proofof}{\Cref{thm:anticert}}
\emph{Certification probability and region count.} By
\Cref{clm:anticert}(i) and \Cref{clm:cert}(i), each started region is
certified
with probability at least $(1 - 1/R)\cdot\tfrac14 \ge \tfrac15$
(independently across regions, as all the coins involved are fresh). Hence
the number of started regions is stochastically dominated by a
$\Geom(1/5)$ variable, $\E[\#\text{regions}] \le 5$, and
$\Pr[\neg\mathrm{SUCC}] \le (4/5)^{N_2} \le e^{-5\ln R\cdot\ln(5/4)}
\le R^{-1.1} \le 1/R$.

\emph{Adaptive count.} By \Cref{clm:extract},
$\E[K^{\A}_\tau] \ge R\,\Pr[\mathrm{SUCC}] \ge R(1 - 1/R) = R - 1$.

\emph{Oblivious count.} Per started region, conditionally on it being
started (a $\sigma(U)$-event): the hub contributes exactly $1$; the
attempts contribute, by \Cref{clm:biases}(i) and \Cref{lem:condreplay},
at most $(2/R)\,\E[T_c \wedge n_1] \le (2/R) \cdot R = 2$
(the domination $\Pr[T_c > i] \le (1-1/R)^i$ gives
$\E[T_c \wedge n_1] \le R$); and the copies of $z^*$ contribute at
most $1$ (identical copies). Hence
\[
\E\bigl[K^{\OB}_\tau\bigr]
\;\le\; \E[\#\text{regions}]\cdot(1 + 2 + 1) \;+\; 4.1
\;\le\; 5 \cdot 4 + 4.1 \;=\; 24.1 \;\le\; 25 \;=\; C_0 .
\]

\emph{Resources.} The smallest distance between distinct presented points
is $d(y_R, z^*) = \rho_R = \eta R^{-R}$ (the attempt spacing $\delta_\star$
exceeds $\eta$, which exceeds $\rho_1$, and consecutive extraction points are
$\rho_{j-1} - \rho_j \ge (R-1)\rho_R$ apart); the largest is at most
$8f(N_2 + 1)$. Since $f/\eta = R^2\lceil\ln R\rceil^2 = O(R^2\ln^2 R)$,
\[
\log\Delta \;=\; R\ln R + O(\log R) \;=\; \Theta(R\log R),
\]
whence $R = \Omega(\log\Delta/\log\log\Delta)$ and the main display.
\end{proofof}

\subsection{Extensions}
Note that using a single region (as described above) already suffices for the expectation gaps in the statement of \Cref{thm:anticert}. The use of multiple regions yields the following high probability claim:
\begin{cor}[High-probability form]\label{cor:whpattack}
In the construction above,
\begin{equation}\label{whpratio:eq}
\Pr\Bigl[\,K^{\A}_\tau \ge R \;\text{ and }\; K^{\OB}_\tau = O(\log R)\,\Bigr] \;\ge\; 1 - O(1/R),
\end{equation}
so $K^{\A}_\tau/K^{\OB}_\tau = \Omega(R/\log R) = \Omega\bigl(\log\Delta/(\log\log\Delta)^2\bigr)$ with probability $1 - O(1/R)$.
\end{cor}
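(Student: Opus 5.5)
We need three high-probability events, each failing with probability $O(1/R)$, and we union-bound over them: some region is certified (so $K^{\A}_\tau \ge R$), the replay count is $O(\log R)$, and no silent failure occurs.

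\emph{Certification.} We already have $\Pr[\neg\mathrm{SUCC}] \le 1/R$ from the proof of \Cref{thm:anticert}. On $\mathrm{SUCC}$, \Cref{clm:biases}(iii) gives $K^{\A}_\tau \ge R$ deterministically.

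\emph{Silent failure.} Since the identity of the certified region is $\sigma(U)$-measurable, \Cref{clm:extract} gives $\Pr[\neg\mathrm{HIT}\cap\mathrm{SUCC}] \le 3/R$. Then we work on the event $\mathrm{SUCC}\cap\mathrm{HIT}$.

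\emph{Replay count during the regions.} Deterministically, the number of started regions is at most $N_2 = \lceil 5\ln R\rceil$. In each region the replay opens the hub (one opening) and at most one copy of $z^*$ (the copies are identical). The only random part is the attempts. Conditionally on $\sigma(U)$, each attempt's replay opening is a fresh coin with bias at most $2/R$ (\Cref{clm:biases}(i), \Cref{lem:condreplay}). At most $N_2 n_1 = O(R\log^2 R)$ attempts are presented in total, but this crude count would give mean $O(\log^2 R)$. It is better to use that the total number of attempts is $\sigma(U)$-measurable with expectation at most $5R$, and the sum of the biases $\sum b_t$ over attempts is at most $(2/R)\sum_c (T_c\wedge n_1)$. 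Each $T_c\wedge n_1$ is dominated by a $\Geom(1/R)$ variable, so the sum over at most $N_2$ regions is $O(R\log R)$ with probability $1-O(1/R)$ by a geometric-sum tail bound. On that event the replay's attempt openings have conditional mean $O(\log R)$. Applying \Cref{lem:ville}(i) to the indicators (with $\gamma=1$, $\delta=1/R$), they total $O(\log R)+O(\ln R)$ with probability $1-1/R$. So the region phases contribute $O(\log R)$ replay openings.

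\emph{Replay count during extraction.} On $\mathrm{HIT}$, round $j$ has replay bias at most $\rho_j/f$ per copy. The compensator $A=\sum_j(\rho_j/f)T_j$ has mean $R/(R-1)$, but it is a sum of scaled geometrics, $\sum_j T_j/(R-1)$ in units of $a_j$. This is the main obstacle: converting it into a high-probability bound. The plan is to use that $T_j a_j$ are independent with exponential-type tails, since $T_j \sim \Geom(a_j)$ (\Cref{clm:extract}). Then $\sum_{j\le R} a_jT_j = O(R+\log R)$ with probability $1-1/R$ by a Bernstein or Chernoff bound for sums of sub-exponential variables. Hence $A = O(1)$ on that event. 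Next, \Cref{lem:ville}(i) applied to the replay opening indicators during extraction, with the compensator restricted via the predictable indicator of $\mathrm{HIT}$ (valid since $h_t\in[0,1]$ predictable), gives $O(1)+O(\ln R)=O(\log R)$ openings with probability $1-1/R$. One care point: $A$ is a random compensator, so I would apply Ville's bound pathwise, $N\le 2A+2\ln R$, and then intersect with the event $A=O(1)$.

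Summing, $K^{\OB}_\tau = O(\log R)$ outside an event of probability $O(1/R)$, which proves \eqref{whpratio:eq}. The ratio statement then follows from $\log\Delta=\Theta(R\log R)$, as computed in the proof of \Cref{thm:anticert}.
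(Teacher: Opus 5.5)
Your proposal is correct and follows the paper's outline: on $\mathrm{SUCC}$ (probability at least $1-R^{-1.1}$) the adaptive run opens at least $R$ centers, and at most $N_2=O(\log R)$ regions are ever started. The paper's own proof stops at that point. It implicitly treats every region, and the extraction phase, as contributing $O(1)$ replay openings, but that holds pathwise only for the hub and the copies of $z^*$.

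Your version supplies the three steps the sketch omits, and all three are sound:
\begin{itemize}
\item You discard the silent-failure event $\mathrm{SUCC}\cap\neg\mathrm{HIT}$, of probability at most $3/R$. This is genuinely necessary, since on that event the replay may open one center per extraction round, i.e.\ $\Theta(R)$ in total.
\item You concentrate the replay's attempt openings with a negative-binomial tail bound for $\sum_c(T_c\wedge n_1)$, followed by \Cref{lem:ville}(i).
\item You concentrate the extraction compensator with a sub-exponential bound on $\sum_j a_jT_j$, followed by \Cref{lem:ville}(i) with the predictable weight $\ind{\mathrm{HIT}}$ times the extraction-step indicator.
\end{itemize}

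If you want a shorter argument, \Cref{lem:ville}(ii) applied to the adaptive run can replace both tail estimates. At attempt steps $b_t\le 2/R\le 2a_t$, and at extraction steps on $\mathrm{HIT}$ one has $b_t\le\rho_j/f=a_t/(R-1)$. The adaptive run opens at most one attempt per region and exactly one copy per round. So part (ii), with $\gamma=1$ and $\delta=1/R$, bounds the adaptive compensators on these steps by $2N_2+2\ln R$ and $2R+2\ln R$ respectively. This gives replay compensators of $O(\log R)$ and $O(1)$ directly, after which part (i) finishes as in your plan.
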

\begin{proof}
If there is a successful region, which holds with probability $\Pr[\mathrm{SUCC}] \ge 1-R^{-1.1}$), then $K^{\A}_\tau \ge R$ deterministically. With  probability $1 - O(1/R)$, we reach a successful region after $O(\log R)$ attempts, since each attempt succeeds with constant probability, and thus $K^{\OB}_\tau = O(\log R)$.\footnote{The $\log R$ loss in \eqref{whpratio:eq} is inherent to the construction: each attempted region fails with constant probability and contributes its hub to $K^{\OB}_\tau$, so with probability $\Omega(1/R)$ there are $\Omega(\log R)$ failed regions.}
\end{proof}

\begin{remark}[Fixed horizon and budget exhaustion stopping rules]\label{rem:budget}
The two canonical stopping rules mentioned earlier (\Cref{sec:model}) achieve the same asymptotic gap via adapted constructions:

\emph{Budget exhaustion.} Stopping when the
$k$th adaptive center opens: take $R = k$ and run the generator until the
\emph{total} adaptive center count reaches $k$. A few of the $k$ openings
are spent on steps before extraction begins (in failed regions, and on
the hub and selection of the certified region); but these pre-extraction
openings, in both runs, number only $O(1)$ in expectation (at most $3$
per started region and $\E[\#\text{regions}] \le 5$). Therefore
$\Omega(k)$ of the budget is spent on extraction steps.

\emph{Fixed horizon.} Stop after a \emph{deterministic} number of steps. In this case, a single region suffices:
allot deterministic windows of $n_1$, $n_2$, and $\lceil 1/a_j\rceil$
($j \in [R]$) steps to Phase~1, Phase~2, and extraction location $j$; run each phase
inside its window; and pad every unused step (after a selection or a
failure, and all extraction windows when the region is not certified) with
requests at the hub, which cost nothing in either run. The adaptive run
then opens each extraction location with probability at least $1 - 1/e$ (a missed location
only increases later biases), so $\E[K^{\A}_\tau] = \Omega(R)$ on
certification, which retains constant probability, while the replay
accounting is unchanged. The  deflation attack (\Cref{sec:lower_reverse}) uses a similar fixed-length block structure.
\end{remark}

\subsection{Routing Bounds}

We now establish the routing cost claims of \Cref{thm:anticert}.

\begin{proofof}{\Cref{thm:anticert}, routing bounds}
We bound the 
\emph{adaptive routing cost} from below, by only considering the contribution of the extraction rounds. Condition on $\mathrm{SUCC}$ and on the
pre-extraction transcript. 
For the $j$-th extraction round: By \Cref{clm:biases}(iii), every copy of $y_j$
presented before the round-$j$ adaptive opening is at adaptive distance
exactly $f a_j$, and by \Cref{clm:extract} round $j$ lasts
$T_j \sim \Geom(a_j)$ copies, of which the first $T_j - 1$ are rejected
and each pays routing $f a_j$; the parameters $a_j$ are the same for every
region and selection index, so the conditioning does not affect the
geometric law. Hence round $j$ contributes conditional expected adaptive
routing cost
\[
f a_j \bigl(\E[T_j] - 1\bigr)
\;=\; f a_j \Bigl(\frac{1}{a_j} - 1\Bigr)
\;=\; f\,(1 - a_j),
\]
and summing over rounds, using the telescoping identity
\[
\sum_{j=1}^{R} a_j
\;=\; \frac{\rho_0 - \rho_R}{f}
\;\le\; \frac{\eta}{f} \;\le\; 1,
\]
the extraction
phase alone contributes at least $Rf - f = (R-1)f$ on $\mathrm{SUCC}$.
Routing costs are nonnegative, so dropping the pre-extraction
contribution and
using $\Pr[\mathrm{SUCC}] \ge 1 - 1/R$ from the proof of
\Cref{thm:anticert},
\[
\E\bigl[Q^{\A}_\tau\bigr]
\;\ge\; \Pr[\mathrm{SUCC}] \cdot (R-1) f
\;\ge\; \Bigl(1 - \frac{1}{R}\Bigr)(R-1) f
\;\ge\; (R-2) f .
\]

\emph{Oblivious routing cost.} By \eqref{eq:routingdom} and the oblivious count
bound of \Cref{thm:anticert},
\[
\E\bigl[Q^{\OB}_\tau\bigr]
\;\le\; f\,\E\bigl[K^{\OB}_\tau\bigr]
\;\le\; C_0\, f . \qedhere
\]
\end{proofof}

\subsection{Posterior Pricing: Why Extraction Requires Confidence}
\label{sec:posterior}

Recall that the inflation generator first manufactures a point that the
replay very likely holds as a center and the adaptive run does not, and then exploits this asymmetry $R$ times. We show here that the first step is
unavoidable: a one-sided generator cannot obtain adaptive openings
cheaply unless it is confident that the replay already covers the
presented point. This subsection motivates the design of the attack but
is not used in the analysis.

\begin{restatable}[Posterior pricing]{lem}{lemposterior}\label{lem:posterior}
For $x_t \ne \bot$ let $r_t = d(x_t, S^{\A}_{t-1})$,
$\bar b_t = \E[\,b_t \mid \Ff_{t-1}]$, and
$\pi_t = \Pr\bigl[\,S^{\OB}_{t-1} \cap B(x_t, r_t/2) = \emptyset
\mid \Ff_{t-1}\bigr]$; all three are $\Ff_{t-1}$-measurable. Then
$\bar b_t \ge \pi_t\, a_t / 2$, and consequently, for every one-sided
generator, every $\pi \in (0,1]$, and every stopping time $\tau$,
\[
\E\Bigl[\#\bigl\{\text{adaptive openings at steps } t \le \tau
\text{ with } \pi_t \ge \pi\bigr\}\Bigr]
\;\le\; \frac{2}{\pi}\,\E\bigl[K^{\OB}_\tau\bigr].
\]
\end{restatable}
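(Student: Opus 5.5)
The plan has two parts: a pointwise bound on $b_t$ on the event of $\pi_t$, then the fair-count identity of \Cref{lem:equiv} to pass from probabilities to openings.

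\textbf{Pointwise bound.} Fix $t$ with $x_t \ne \bot$. On the event $E_t = \{S^{\OB}_{t-1}\cap B(x_t,r_t/2)=\emptyset\}$, every replay center is at distance at least $r_t/2$ from $x_t$. Hence $d^{\OB}_t \ge r_t/2$, and
\[
b_t=\min\{1,d^{\OB}_t/f\}\ge \min\{1,r_t/(2f)\}\ge \tfrac12\min\{1,r_t/f\}=a_t/2 .
\]
This uses only $d^{\OB}_t \ge r_t/2$ and no triangle inequality, so it holds in any semi-metric. It also covers $S^{\OB}_{t-1}=\emptyset$, where $b_t=1$. Since $b_t\ge0$, we get $b_t\ge \tfrac{a_t}{2}\ind{E_t}$.

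\textbf{Conditioning.} Take $\E[\cdot\mid\Ff_{t-1}]$. Because $a_t$ is $\Ff_{t-1}$-measurable (the point $x_t$ and $S^{\A}_{t-1}$ are determined by $U_{<t}$), this gives $\bar b_t\ge \pi_t a_t/2$. Measurability of $r_t$, $\bar b_t$ and $\pi_t$ in $\Ff_{t-1}$ is immediate from their definitions as a function of the adaptive transcript and as conditional expectations.

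\textbf{Counting.} Let $h_t=\ind{\{\pi_t\ge\pi\}}$. It is $\Ff_{t-1}$-measurable, hence $(\Gg_t)$-predictable. By \Cref{lem:equiv},
\[
\E\Bigl[\sum_{t\le\tau}o_th_t\Bigr]=\E\Bigl[\sum_{t\le\tau}a_th_t\Bigr].
\]
Pointwise, $a_th_t\le (2/\pi)\,\pi_ta_t/2\cdot h_t\le (2/\pi)\bar b_t$. We then need
\[
\E\Bigl[\sum_{t\le\tau}\bar b_t\Bigr]=\E\Bigl[\sum_{t\le\tau}b_t\Bigr]=\E[K^{\OB}_\tau].
\]
The first equality holds because $\{t\le\tau\}=\{\tau\le t-1\}^c$ is $\Ff_{t-1}$-measurable for an $(\Ff_t)$-stopping time. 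So $\E[b_t\ind{t\le\tau}]=\E[\bar b_t\ind{t\le\tau}]$ termwise, and we sum by monotone convergence. The second equality is \Cref{lem:equiv} with $h\equiv1$.

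\textbf{Main obstacle.} The delicate point is the stopping time. The tower step needs $\{t\le\tau\}\in\Ff_{t-1}$, so I will restrict to $(\Ff_t)$-stopping times, the model's default. For $(\Gg_t)$-stopping times the step fails, because $\pi_t$ conditions only on the adaptive transcript. One-sidedness is used exactly here and in defining $\pi_t$ via $\Ff_{t-1}$. For $\tau=\infty$, I will truncate at $\tau\wedge T$ and let $T\to\infty$.
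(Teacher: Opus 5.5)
Your proposal is correct and follows essentially the same route as the paper: a pointwise bound $b_t \ge a_t/2$ on the ball-emptiness event, conditioning on $\mathcal{F}_{t-1}$ to get $\bar b_t \ge \pi_t a_t/2$, and then the fair-count identity with the predictable indicator of $\{\pi_t \ge \pi\}$ plus the tower property to replace $\bar b_t$ by $b_t$. You also state explicitly that the tower step needs $\{t \le \tau\} \in \mathcal{F}_{t-1}$, i.e.\ an $(\mathcal{F}_t)$-stopping time (the model's default), which the paper's proof uses without saying so.
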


\begin{proof}
On the event $\{S^{\OB}_{t-1} \cap B(x_t, r_t/2) = \emptyset\}$ we have
$d(x_t, S^{\OB}_{t-1}) \ge r_t/2$, so
$b_t \ge \min\{f, r_t/2\}/f \ge \min\{f, r_t\}/(2f) = a_t/2$; taking
conditional expectations gives $\bar b_t \ge \pi_t a_t/2$. The indicator
$h_t = \ind{\{\pi_t \ge \pi\}}$ is $\Ff_{t-1}$-measurable, hence predictable;
by \Cref{lem:equiv} and the tower property,
\[
\E\Bigl[\sum_{t\le\tau} o_t h_t\Bigr]
= \E\Bigl[\sum_{t\le\tau} a_t h_t\Bigr]
\le \frac{2}{\pi}\,\E\Bigl[\sum_{t\le\tau} \bar b_t h_t\Bigr]
= \frac{2}{\pi}\,\E\Bigl[\sum_{t\le\tau} b_t h_t\Bigr]
\le \frac{2}{\pi}\,\E\bigl[K^{\OB}_\tau\bigr]. \qedhere
\]
\end{proof}

Quantitatively, this means that any generator with $\E[K^{\A}_\tau] \ge R$ and
$\E[K^{\OB}_\tau] \le C_0$ must place at least half its adaptive
openings, in expectation, at steps with lack-posterior
$\pi_t \le 4C_0/R$. Indeed, the construction presented above is organized around this constraint: the preparation phase finds a point that the replay holds as a center with probability
$1 - \widetilde O(1/R)$ while the adaptive run does not, at the cost of $O(1)$ leaked openings, and the extraction phase then utilizes this point $R$ times.

\section{The Deflation Attack}\label{sec:lower_reverse}

The deflation direction admits a matching deterministic attack which can be carrier out by a one-sided generator on the real line. Moreover, the stopping time $T$ is fixed in advance.
\begin{restatable}[Deflation attack]{thm}{thmdeflattack}
\label{prop:reverse}
For every integer $R \ge 2$ there exist a finite point
configuration on the real line, a deterministic one-sided adaptive
generator, and a \emph{deterministic} stopping time $\tau = T$ (a fixed
number of steps) with
\[
\E\bigl[K^{\A}_\tau\bigr] \;\le\; 4,
\qquad
\E\bigl[K^{\OB}_\tau\bigr] \;\ge\; \frac{R}{7},
\]
and, separately in the routing components,
\[
\E\bigl[Q^{\A}_\tau\bigr] \;\le\; \tfrac32\, f,
\qquad
\E\bigl[Q^{\OB}_\tau\bigr] \;\ge\; \frac{R f}{28},
\]
where the presented sequence has aspect ratio at most $2R^{R}$ almost
surely. Consequently, on streams of aspect ratio $\Delta \le 2R^{R}$,
\[
\frac{\E[K^{\A}_\tau]}{\E[K^{\OB}_\tau]} \;\le\; \frac{28}{R},
\qquad
\frac{\E[C^{\A}_\tau]}{\E[C^{\OB}_\tau]} \;\le\; \frac{56}{R},
\qquad
\frac{\E[Q^{\A}_\tau]}{\E[Q^{\OB}_\tau]} \;\le\; \frac{42}{R},
\]
and $R = \Om(\log\Delta/\log\log\Delta)$, so all three deflation ratios
are $O(\log\log\Delta/\log\Delta)$.
\end{restatable}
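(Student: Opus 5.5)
The plan is a single-region, fixed-length construction that mirrors the extraction phase of \Cref{thm:anticert} with the roles of the runs exchanged. First we make the \emph{adaptive} run hold a point $z^*$ that the replay misses with probability $\tfrac12$. Then we present points geometrically approaching $z^*$. The adaptive run, holding $z^*$, almost never opens them; the replay, lacking $z^*$, opens about one center per location.

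\textbf{Layout and generator.} Place a hub $h=0$ and a point $z=f/2$. Let $\rho_0=f/2$ and $\rho_j=\rho_0R^{-j}$, and set extraction locations $y_j=z+\rho_j$ for $j\in[R]$, with $y_0=h$. Since all points lie within $[0,f]$, every distance is below $f$ except the first request.
\begin{itemize}
\item Step 1 presents $h$; both runs open it.
\item Step 2 presents $z$ once; each run opens it independently with probability $\tfrac12$.
\item If the adaptive run did not open $z$, all remaining steps until the fixed horizon are hub requests, which are free in both runs.
\item Otherwise, for $j=1,\dots,R$, present exactly $m_j=\lceil f/(\rho_{j-1}-\rho_j)\rceil$ copies of $y_j$.
\end{itemize}
The horizon is $T=2+\sum_j m_j$. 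The generator depends only on the adaptive transcript, so it is one-sided and deterministic, and $\tau=T$ is deterministic. The minimum distance is $\rho_R$ and the maximum is below $f$, so $\Delta\le f/\rho_R\le 2R^R$ after accounting for constants.

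\textbf{Adaptive side.} The adaptive run opens at most $1$ (hub) $+1$ ($z$) centers before extraction. On the extraction branch it holds $z^*=z$, so each copy of $y_j$ is within $\rho_j$ of a center and is opened with probability at most $\rho_j/f$. By \Cref{lem:equiv} the expected extraction openings are at most
\[
\sum_j m_j\rho_j/f\le \sum_j\Bigl(\frac{\rho_j}{\rho_{j-1}-\rho_j}+\frac{\rho_j}{f}\Bigr)\le \frac{R}{R-1}+\frac{1}{R-1}\le 2 .
\]
Hence $\E[K^{\A}_T]\le 4$. For routing, the cost at $z$ is at most $\tfrac12\cdot\tfrac f2$ in expectation, and each extraction copy routes at cost at most $\rho_j$. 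This gives $\E[Q^{\A}_T]\le f/4+\sum_j m_j\rho_j\le\tfrac32 f$.

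\textbf{Replay side (main step).} We use \Cref{lem:condreplay}, conditioning on $U$. On the event $E$ that the adaptive run opened $z$ and the replay did not, $\Pr[E]=\tfrac14$. On $E$ the replay's centers at round $j$ lie in $\{h\}\cup\{y_i:i<j\}$. The nearest of these to $y_j$ is at distance at least $\rho_{j-1}-\rho_j$, because earlier locations and $h$ are farther away.

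Therefore, until the replay's first opening in round $j$, each copy opens with probability at least $(\rho_{j-1}-\rho_j)/f$, using fresh coins. The chance of no opening in round $j$ is at most $(1-(\rho_{j-1}-\rho_j)/f)^{m_j}\le 1/e$. This yields
\[
\E[K^{\OB}_T]\ge \tfrac14 R(1-1/e)\ge R/7 .
\]
For the replay's routing, on $E$ each copy presented before the round's opening pays at least $\rho_{j-1}-\rho_j$. The expected number of such copies is $\E[\min(G,m_j)]-1$ for a geometric variable $G$, which is $\Omega(f/(\rho_{j-1}-\rho_j))$. Bounding this carefully should give $\ge f/7$ per round, hence $\E[Q^{\OB}_T]\ge Rf/28$.

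This routing lower bound, together with making the constants ($\tfrac32$, $\tfrac1{28}$) come out exactly, is the step I expect to need the most care. The ratio statements then follow, using \Cref{cor:equiv} for total cost, and $\log\Delta=\Theta(R\log R)$ gives $R=\Om(\log\Delta/\log\log\Delta)$.
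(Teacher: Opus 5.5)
Your construction is essentially the paper's. It uses the same hub, the seed $z=f/2$, and extraction points $y_j=z+\rho_j$ with $\rho_j=\tfrac{f}{2}R^{-j}$. Your block lengths $m_j=\lceil f/(\rho_{j-1}-\rho_j)\rceil$ differ harmlessly from the paper's $k_j=2f/\rho_{j-1}$, and both of your count bounds go through. The genuine gap is the replay routing bound, where your reasoning runs the wrong way. The fact $d\ge\rho_{j-1}-\rho_j$ about the replay's current distance lower-bounds the cost of each rejected copy. But it also lower-bounds the opening probability $q=d/f$, and hence \emph{upper}-bounds the number of copies rejected before the first opening. It cannot give the claimed $\Omega(f/(\rho_{j-1}-\rho_j))$, and that claim is in fact false. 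Take a round $j\ge2$ in which the replay has opened none of $y_1,\dots,y_{j-1}$, which is a positive-probability event on $E$. Its nearest center is then the hub at distance $f/2+\rho_j$, so $q>1/2$ and fewer than one copy is rejected in expectation, versus $f/(\rho_{j-1}-\rho_j)=2R^j/(R-1)$. You cannot bound the cost per copy and the number of copies separately using opposite extremes of $d$.

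The missing idea is to treat cost and count jointly and to use an \emph{upper} bound on $d$. Given $\Gg_{t_0-1}$ at the start of block $j$, the replay's distance is a constant $d$ until its first opening in the block. The expected replay routing in the block is then exactly $d\cdot\frac{1-q}{q}\bigl(1-(1-q)^{m_j}\bigr)=(f-d)\bigl(1-(1-q)^{m_j}\bigr)$. Your lower bound on $d$ gives $qm_j\ge1$, so the second factor is at least $1-e^{-1}$. The first factor needs the hub, which the replay holds surely: $d\le f/2+\rho_j\le\tfrac{3}{4}f$, so $f-d\ge f/4$. Each block therefore contributes at least $\tfrac{f}{4}(1-e^{-1})\ge f/7$ on $E$. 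This upper bound is essential, not cosmetic: a replay distance near $f$ means near-certain opening and essentially zero routing, which is exactly how the appendix's starvation constructions eliminate routing.

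Separately, your adaptive routing chain $f/4+\sum_j m_j\rho_j\le\tfrac{3}{2}f$ fails for small $R$; at $R=2$ one has $\sum_j m_j\rho_j=2f$. You must weight the extraction term by $\Pr[S]=\tfrac12$, as the paper does, to get $f/4+\tfrac12\cdot 2f\le\tfrac{3}{2}f$. Likewise $\frac{R}{R-1}+\frac{1}{R-1}\le 2$ is false at $R=2$, although the true sum equals $2$ there because the ceilings are exact.
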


The remainder of this section describes the deflation generator and proves \Cref{prop:reverse}. The construction requires an adaptive-only center certified with \emph{constant} (not
high, in departure from the inflation generator) confidence.

\subsection{The Deflation Generator}

\medskip\noindent\textbf{Configuration.}
Fix an integer $R \ge 2$. Let $\eps_j = \tfrac12 R^{-j}$
for $j = 0, 1, \dots, R$, and place on the real line the \emph{hub}
$h = 0$, the \emph{seed} $z = \tfrac f2$, and the \emph{extraction locations}
$y_j = z + \eps_j f$ for $j \in [R]$. The seed plays the role of the
inflation attack's certified pair: a single fresh coin will supply
constant confidence that $z$ is an adaptive-only center, where the
inflation attack spent $n_2$ requests for confidence $1 - O(1/R)$.

\medskip\noindent\textbf{The generator:} present $h$ at step $1$
and $z$ at step $2$; if the adaptive run opened $z$ (the event
$S = \{o_2 = 1\}$, observable from the adaptive transcript), present
$k_j = 2/\eps_{j-1} = 4R^{\,j-1}$ copies of $y_j$ (which we refer to as block $B_j$), for
$j = 1, \dots, R$ in order; otherwise present $h$ at every remaining
step.

The number of steps is $\tau = T = 2 + \sum_{j=1}^{R} k_j$ (deterministically) and in particular this is an $(\Ff_t)$- and
$(\Gg_t)$-stopping time. Every choice is a deterministic function of the
adaptive transcript, so this is a deterministic one-sided generator. The
padding incurs no routing cost and no openings, as a repeatedly-presented hub is at
distance $0$ from an open center in both runs.

\subsection{Analysis}

\begin{claim}[Geometry]\label{clm:revgeom}
Exactly: $d(z, h) = \tfrac f2$; $d(y_j, z) = \eps_j f$;
$d(y_j, h) = (\tfrac12 + \eps_j)f$; and
$d(y_j, y_{j'}) = (\eps_{j'} - \eps_j)f \ge (\eps_{j-1} - \eps_j)f$
for $j' < j$. Moreover
$\eps_{j-1} - \eps_j = \eps_{j-1}(1 - \tfrac1R) \ge \tfrac12
\eps_{j-1} \ge \eps_j$. On the event $S$ the distinct presented
locations are $\{h, z, y_1, \dots, y_R\}$, with
$d_{\min} = \eps_R f$ and
$d_{\max} = (\tfrac12 + \eps_1) f \le \tfrac34 f$, so pathwise
$\Lambda \le \Delta \le \tfrac{3}{4}\,\eps_R^{-1} \le 2 R^{R}$; on
$S^{c}$ only $\{h, z\}$ are presented and $\Delta = 1$.
\end{claim}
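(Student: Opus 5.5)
The claim is a direct computation on the real line, so I would simply write out every coordinate and read off each distance. The presented locations are $h = 0$, $z = \tfrac f2$, and $y_j = z + \eps_j f = (\tfrac12 + \eps_j)f$. Since all of them lie on a line and are ordered, every distance is a difference of coordinates:
\begin{itemize}
\item $d(z,h) = \tfrac f2$ and $d(y_j,z) = \eps_j f$;
\item $d(y_j,h) = (\tfrac12+\eps_j)f$;
\item for $j' < j$, $d(y_j,y_{j'}) = (\eps_{j'}-\eps_j)f$, because $\eps_j = \tfrac12 R^{-j}$ is strictly decreasing in $j$.
\end{itemize}
The same monotonicity shows that $\eps_{j'} - \eps_j$ is minimized over $j' < j$ at $j' = j-1$, which gives the lower bound $(\eps_{j-1}-\eps_j)f$.

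Next I would verify the chain of inequalities. By definition $\eps_j = \eps_{j-1}/R$, so $\eps_{j-1}-\eps_j = \eps_{j-1}(1-\tfrac1R)$. Using $R \ge 2$ gives $1 - \tfrac1R \ge \tfrac12$. Finally, $\tfrac12\eps_{j-1} = \tfrac R2\,\eps_j \ge \eps_j$, again because $R \ge 2$.

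For the aspect ratio on $S$, I would first identify the extreme distances. For $d_{\min}$, compare the candidate distances:
\begin{itemize}
\item $d(y_j,z) = \eps_j f \ge \eps_R f$;
\item consecutive extraction points are at least $\eps_j f \ge \eps_R f$ apart, by the previous paragraph;
\item $d(z,h) = \tfrac f2$ and $d(y_j,h) > \tfrac f2$, both far larger than $\eps_R f$.
\end{itemize}
Hence $d_{\min} = \eps_R f$, attained by the pair $(y_R,z)$. The largest distance is between the extreme points $h$ and $y_1$, so $d_{\max} = (\tfrac12+\eps_1)f$. Since $\eps_1 = \tfrac1{2R} \le \tfrac14$, this is at most $\tfrac34 f$.

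Combining the two, $\Delta \le \tfrac34\eps_R^{-1} = \tfrac32 R^R \le 2R^R$. Also $\Lambda \le \Delta$, because $\min\{f,d_{\max}\} \le d_{\max}$.

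On $S^c$ the generator presents only $h$ and $z$, so there is a single nonzero distance and $\Delta = 1$. There is no real obstacle in this proof; the only care needed is confirming that no cross-pair undercuts $\eps_R f$, and that is handled by the bound on consecutive gaps.
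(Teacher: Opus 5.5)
Your proposal is correct and follows the paper's proof: read every distance off the coordinates $h=0$, $z=\tfrac f2$, $y_j=(\tfrac12+\eps_j)f$, then place $d_{\min}$ at the pair $(y_R,z)$. This uses the gap bound $\eps_{j-1}-\eps_j\ge\eps_j\ge\eps_R$ and the fact that every distance to the hub is at least $\tfrac f2$. Your arithmetic $\Delta\le\tfrac34\eps_R^{-1}=\tfrac32R^R$ just spells out what the paper calls immediate, and the paper also notes that the cap at $f$ is inactive, so in fact $\Lambda=\Delta$.
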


\begin{proof}
All distances are immediate from the positions; the minimum over
distinct pairs on $S$ is between $y_R$ and $z$, since
$(\eps_{j-1} - \eps_j) \ge \eps_j \ge \eps_R$ for all $j$ and the hub
is farther from every point than $z$ is. The cap at $f$ is inactive
as $d_{\max} < f$.
\end{proof}

\begin{claim}[Run analysis]\label{clm:revrun}
Write $N = \{o'_2 = 0\}$ for the event that the replay rejects the
seed. In every realization:
\emph{(i)} at step $1$ both runs face distance $+\infty$ and open $h$
surely; at step $2$ both runs' nearest center is $h$ at distance
exactly $\tfrac f2$, so $a_2 = b_2 = \tfrac12$, decided by the fresh,
mutually independent coins $U_2, U'_2$; in particular
$\Pr[S] = \Pr[N] = \tfrac12$ and $S, N$ are independent.
\emph{(ii)} On $S$, at every step of block $B_j$ the adaptive run's
nearest center is at distance at most $\eps_j f$, so
$a_t \le \eps_j$ throughout $B_j$.
\emph{(iii)} On $S \cap N$, at every step of block $B_j$ that
precedes the replay's first opening within $B_j$, the replay's
nearest center is at distance at least
$(\eps_{j-1} - \eps_j) f$, so
$b_t \ge \delta_j := \eps_{j-1}(1 - \tfrac1R) \ge \tfrac12
\eps_{j-1}$.
\end{claim}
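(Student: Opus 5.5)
The plan is to verify the three items of \Cref{clm:revrun} by tracking the center sets of both runs step by step, using only \Cref{clm:revgeom} and the generator's definition. For \emph{(i)}: at step~$1$ both center sets are empty, so by the convention $d(\cdot,\emptyset)=+\infty$ both runs open $h$ with probability $1$. At step~$2$ both runs hold exactly $\{h\}$, and $d(z,h)=\tfrac f2<f$ by \Cref{clm:revgeom}, so $a_2=b_2=\tfrac12$. The openings are $o_2=\ind{\{U_2\le 1/2\}}$ and $o'_2=\ind{\{U'_2\le 1/2\}}$ with $U_2,U'_2$ independent uniforms. Hence $\Pr[S]=\Pr[N]=\tfrac12$ and $S\perp N$.

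For \emph{(ii)}: on $S$ the adaptive run holds $z$ from step~$2$ onward. Every step of $B_j$ presents $y_j$, and $d(y_j,z)=\eps_j f$ by \Cref{clm:revgeom}. Therefore $d(y_j,S^{\A}_{t-1})\le \eps_j f$ throughout $B_j$, and $a_t\le \min\{f,\eps_j f\}/f=\eps_j$. This holds regardless of which further adaptive openings occur, since extra centers only decrease the distance.

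Item \emph{(iii)} is the only one needing care, because I have to determine the replay's center set exactly. On $S\cap N$ the replay rejected $z$, so after step~$2$ its centers are $\{h\}$. After that, the only points presented before block $B_j$ are copies of $y_1,\dots,y_{j-1}$, so $S^{\OB}$ at the start of $B_j$ is contained in $\{h,y_1,\dots,y_{j-1}\}$. Before the replay's first opening inside $B_j$ no copy of $y_j$ has been added, so this containment persists throughout that part of the block. By \Cref{clm:revgeom}, $d(y_j,y_{j'})\ge(\eps_{j-1}-\eps_j)f$ for every $j'<j$. Also $d(y_j,h)=(\tfrac12+\eps_j)f\ge \tfrac12 f\ge \eps_{j-1}f$, using $\eps_{j-1}\le \eps_0=\tfrac12$. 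So the replay's distance is at least $(\eps_{j-1}-\eps_j)f=\eps_{j-1}(1-\tfrac1R)f$, which is below $f$, so the cap does not bind. This gives $b_t\ge\delta_j$, and $\delta_j\ge\tfrac12\eps_{j-1}$ because $R\ge2$.

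The main (mild) obstacle is this bookkeeping of the replay's center set in (iii): the bound must use only that $z$ was rejected and that no copy of $y_j$ has yet been opened, with the hub bound checked separately against $\eps_{j-1}$. The inequalities themselves are immediate from \Cref{clm:revgeom}.
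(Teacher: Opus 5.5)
Your proposal is correct and follows the paper's proof essentially step for step. Item (i) comes from the empty-set convention and the fresh independent coins at step~2. Item (ii) comes from the adaptive run holding $z$ at distance $\eps_j f$ from $y_j$. Item (iii) comes from the containment $S^{\OB}\subseteq\{h,y_1,\dots,y_{j-1}\}$ before the replay's first opening in $B_j$, combined with the distances of \Cref{clm:revgeom}. Your explicit check that $d(y_j,h)\ge \eps_{j-1}f$ just spells out a step the paper leaves implicit when it appeals to \Cref{clm:revgeom}.
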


\begin{proof}
(i) is the first-point convention and \Cref{clm:revgeom}; the coins
at step $2$ are fresh and mutually independent by the model.
(ii) On $S$ the adaptive run holds $z$, at distance $\eps_j f$ from
$y_j$; the nearest-center distance is at most this.
(iii) On $S \cap N$ the replay never holds $z$: the seed is presented
exactly once, at step $2$, and $o'_2 = 0$ on $N$. Before its first
opening within $B_j$, the replay's centers among presented locations
are contained in $\{h\} \cup \{y_{j'} : j' < j\}$ (re-presented hubs
are never opened, and earlier blocks contribute only copies of
$y_{j'}$ with $j' < j$). By \Cref{clm:revgeom} the distance from
$y_j$ to every such point is at least
$(\eps_{j-1} - \eps_j) f = \delta_j f$, and
$\delta_j \le \eps_0 = \tfrac12 < 1$, so
$b_t = \min\{1, d/f\} \ge \delta_j$. Note the claim quantifies over
all candidate center sets, so no induction on the replay's path is
needed.
\end{proof}

\begin{claim}[Block yield]\label{clm:blockyield}
Fix $j \in [R]$ and let $t_0$ be the first step of block $B_j$.
Conditionally on $\Gg_{t_0 - 1}$, on the event $S \cap N$:
\emph{(i)} the replay opens at least one center during $B_j$ with
probability at least $1 - e^{-1}$; and
\emph{(ii)} the expected replay routing incurred during $B_j$ is at
least $f/7$.
\end{claim}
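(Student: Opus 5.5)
The plan is to work on the event $S \cap N$, conditionally on $\Gg_{t_0-1}$, and treat block $B_j$ as a run of $k_j = 2/\eps_{j-1}$ identical requests at $y_j$. The only input needed is the lower bound $b_t \ge \delta_j \ge \tfrac12\eps_{j-1}$ from \Cref{clm:revrun}(iii). This bound holds at every step of $B_j$ preceding the replay's first opening in the block, and it holds for every candidate center set. So no induction on the replay's path is required.

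For (i), I would telescope conditional survival probabilities, exactly as in the proof of \Cref{clm:cert}(ii). Each replay coin $U'_t$ is fresh and independent of $\Gg_{t-1}$. Hence, on the path where no replay opening has yet occurred in $B_j$, the conditional probability of rejecting the next copy is $1 - b_t \le 1 - \tfrac12\eps_{j-1}$. Multiplying over the $k_j$ steps bounds the probability of no opening during $B_j$ by
\[
\bigl(1 - \tfrac12\eps_{j-1}\bigr)^{2/\eps_{j-1}} \le e^{-1}.
\]
One point needs care: the event $S\cap N$ lies in $\Gg_2 \subseteq \Gg_{t_0-1}$, and the stream on $S$ is fixed. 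So conditioning on $\Gg_{t_0-1}$ does not disturb the freshness of the block's coins.

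For (ii), let $G$ be the number of copies rejected by the replay before its first opening in $B_j$, capped at $k_j$. Each such rejected copy pays routing at least $\delta_j f$. The expected routing is therefore at least $\delta_j f\, \E[G \mid \Gg_{t_0-1}]$, since every step before the first opening pays at least $\delta_j f$.

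The work is in lower-bounding $\E[G]$, because $b_t$ is only bounded below. An upper bound on $b_t$ is also needed, so that the replay does not open immediately. While the replay has not opened in $B_j$, its nearest center lies in $\{h\}\cup\{y_{j'}: j'<j\}$. If $j \ge 2$ and the replay opened $y_{j-1}$ during block $B_{j-1}$, that distance is exactly $\delta_j f$, which is the favourable case. In general, however, it could be as large as $(\tfrac12+\eps_j)f$, for example when the replay holds only $h$. This gives $b_t \le \tfrac34$, and only $\E[G] \ge \Pr[\text{first copy rejected}] \ge \tfrac14$ in the worst case.

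I would handle this by lower-bounding the routing directly rather than through $\delta_j$ alone. Let $b$ denote the (constant) opening probability of the block's copies up to the first replay opening; it is constant because the center set does not change before that opening. Then the expected routing before the first opening is
\[
b f \cdot \E[G] = b f\cdot \frac{(1-b)\bigl(1-(1-b)^{k_j}\bigr)}{b} = f(1-b)\bigl(1-(1-b)^{k_j}\bigr).
\]
Using $b \le \tfrac34$ and $(1-b)^{k_j} \le e^{-1}$ from part (i), this is at least $f\cdot\tfrac14(1-e^{-1}) \ge f/7$.

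The main obstacle is this case analysis on $b$, together with verifying $b \le 3/4$ on the relevant path. That bound follows from $d_{\max} \le \tfrac34 f$ in \Cref{clm:revgeom}.
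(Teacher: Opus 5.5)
Your proposal is correct and follows essentially the same route as the paper. You treat the replay's pre-opening distance $d=bf$ as a $\Gg_{t_0-1}$-measurable constant with $\delta_j f\le d\le \tfrac34 f$. Part (i) then follows from $b\,k_j\ge 1$, and part (ii) from the exact pre-opening routing $f(1-b)\bigl(1-(1-b)^{k_j}\bigr)\ge \tfrac f4(1-e^{-1})\ge f/7$. Your first attempt, which lower-bounded routing by $\delta_j f$ per rejected copy, was an unneeded detour, but you correctly saw why it falls short and ended at the paper's computation.
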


\begin{proof}
Within $B_j$ only copies of $y_j$ are presented, so the replay's center
set changes during the block only when it opens a copy of $y_j$; hence,
until that first opening, the replay's nearest-center distance is a
$\Gg_{t_0-1}$-measurable constant $d$, with $d \ge \delta_j f$ by
\Cref{clm:revrun}(iii) and $d \le (\tfrac12 + \eps_1) f \le \tfrac34 f$
via the hub, which the replay holds surely. Write $q = d/f$ and let $G$
be the index of the first replay opening in the block (possibly
$\infty$). Conditionally on $\Gg_{t_0-1}$, on $S \cap N$, the coins of
the block are fresh, so each pre-plug copy is opened with probability
$q$ and, if rejected, routes $d$. Using
$q k_j \ge \delta_j \cdot \tfrac{2}{\eps_{j-1}} = 2(1 - \tfrac1R) \ge 1$:
\emph{(i)} the probability of at least one opening is
$1 - (1-q)^{k_j} \ge 1 - e^{-q k_j} \ge 1 - e^{-1}$; and \emph{(ii)}
the expected replay routing is
\[
d\;\E\bigl[\min(G - 1,\ k_j)\bigr]
\;=\; (f - d)\bigl(1 - (1-q)^{k_j}\bigr)
\;\ge\; \frac f4\,\bigl(1 - e^{-1}\bigr)
\;\ge\; \frac f7 ,
\]
using $\E[\min(G-1,k)] = \sum_{i=1}^{k}(1-q)^i =
\tfrac{1-q}{q}\bigl(1 - (1-q)^{k}\bigr)$, $d\,\tfrac{1-q}{q} = f - d
\ge \tfrac f4$, and part \emph{(i)}.
\end{proof}

\begin{proof}[Proof of \Cref{prop:reverse}]

\emph{Aspect ratio.} Pathwise from \Cref{clm:revgeom}.

\emph{Adaptive count.} The stopping time is deterministic, so
$\E[K^{\A}_\tau] = \E[\sum_{t \le T} a_t]$ by \Cref{lem:equiv} with
weight $h_t \equiv 1$ (or by linearity and the tower property). Step $1$
contributes $1$ and step $2$ contributes $\tfrac12$
(\Cref{clm:revrun}(i)). On $S^{c}$ all remaining steps re-present the
hub and contribute $0$. On $S$, by \Cref{clm:revrun}(ii), block $B_j$
contributes at most
\[
k_j\, \eps_j
\;=\; \frac{2}{\eps_{j-1}}\,\eps_j
\;=\; \frac{2}{R},
\]
so the $R$ blocks contribute at most $2$ and
\[
\E\bigl[K^{\A}_\tau\bigr]
\;\le\; \tfrac32 + \Pr[S]\cdot 2
\;=\; \tfrac52 \;\le\; 4.
\]

\emph{Oblivious count.} Let $M$ be the number of blocks in which the
replay opens at least one center; then $K^{\OB}_\tau \ge M$
pointwise. Since $S \cap N$ is $\Gg_2$-measurable,
\Cref{clm:blockyield}(i), the tower property, and summing over the
$R$ blocks give
\[
\E\bigl[K^{\OB}_\tau\bigr]
\;\ge\; \E\bigl[M \ind{S \cap N}\bigr]
\;\ge\; \Pr[S \cap N] \cdot \bigl(1 - e^{-1}\bigr) R
\;\ge\; \tfrac14 \cdot 0.63\, R
\;\ge\; \frac{R}{7}.
\]

\emph{Adaptive routing.}
Step $1$ incurs no routing (both runs open surely) and re-presented
hubs are at distance $0$. The seed routes $\tfrac f2$ iff $o_2 = 0$,
contributing $\tfrac12 \cdot \tfrac f2 = \tfrac f4$. On $S$, every
step of block $B_j$ is at adaptive distance at most $\eps_j f$
(\Cref{clm:revrun}(ii)), so its routing is at most $\eps_j f$
pointwise and block $B_j$ contributes at most
$k_j \eps_j f = (2/R) f$. Summing over the $R$ blocks,
\[
\E\bigl[Q^{\A}_\tau\bigr]
\;\le\; \frac f4 + \Pr[S]\cdot 2 f
\;=\; \frac54\, f \;\le\; \frac32\, f .
\]

\emph{Replay routing.} By \Cref{clm:blockyield}(ii), the tower
property, and summing over the $R$ blocks,
\[
\E\bigl[Q^{\OB}_\tau\bigr]
\;\ge\; \Pr[S \cap N] \cdot R \cdot \frac f7
\;=\; \frac{R f}{28} .
\]

\emph{Ratio and costs.} By the bounds just proved, the count ratio is
at most $4/(R/7) = 28/R$; \eqref{eq:costcenter} for each run gives
$\E[C^{\A}_\tau] \le 2f\,\E[K^{\A}_\tau]$ and
$\E[C^{\OB}_\tau] \ge f\,\E[K^{\OB}_\tau]$, whence the cost ratio is
at most $56/R$; and $\E[Q^{\A}_\tau] \le \tfrac32 f$ against
$\E[Q^{\OB}_\tau] \ge Rf/28$ gives routing ratio at most $42/R$.
Finally $\log_2\Delta \le R\log_2 R + 1$ gives
$R = \Om(\log\Delta/\log\log\Delta)$.
\end{proof}

\section{Online Summaries and $k$-Clustering}\label{sec:onlinesummary}

The center set of the Meyerson sketch, weighted by the points
routed to each center, is a compact \emph{summary} (or coreset) of the stream.
In this section, we analyze the correctness guarantees provided by this summary and devise online $k$-clustering sketches under an adaptively chosen point set. 

\subsection{$k$-clustering and facility location costs}\label{sec:offlineobj}
Suppose $(\Mm,d)$ satisfies \cref{def:semimetric} for some $\alpha \geq 1$.
A \emph{weighted point set} $P$ assigns
multiplicities $w(x) > 0$ to the points (unweighted sequences carry unit
weights).
For a weighted $P$ and a center set $C \subseteq \Mm$, the clustering cost is defined as
\[
\mathrm{cost}(P, C) \;=\; \sum_{x \in P} w(x)\, d(x, C).
\]
The optimal cost of the $k$-clustering for $P$ is given by
\[
\mathrm{OPT}_k(P) \;=\; \min_{|C| = k} \mathrm{cost}(P, C)\ .
\]
The $k$-clustering and facility-location optima at price $f$ are
related by
\begin{equation} \label{fl2med:eq}
    \mathrm{OPT}_f(P) \;=\;
    \min_{F \subseteq \Mm}\,\bigl\{f|F| + \mathrm{cost}(P,F)\bigr\}
    \;=\; \min_{k \ge 1}\,\bigl\{fk + \mathrm{OPT}_k(P)\bigr\}\ 
\end{equation}
For each $k$, optimal $k$-clustering centers are a feasible
facility-location solution of cost $fk + \mathrm{OPT}_k(P)$, and
conversely, among facility-location solutions with exactly $k$
centers, the routing cost $\mathrm{cost}(P, F)$ is minimized by
optimal $k$-clustering centers. (As a function of $f$,
$\mathrm{OPT}_f(P)$ is thus the lower envelope of the lines
$f \mapsto fk + \mathrm{OPT}_k(P)$.)

Consequently, by selecting the price based on the scale of the $k$-clustering objective, we can track the facility location objective. Concretely, fix $k$ and take $f = \lambda/k$ for an upper bound
$\lambda \ge \mathrm{OPT}_k(P)$; the $k$-th term of \eqref{fl2med:eq}
gives
\begin{equation} \label{med2fl:eq}
    \mathrm{OPT}_f(P) \;\le\; fk + \mathrm{OPT}_k(P) \;\le\; 2\lambda\ .
\end{equation}
In the other direction, $\mathrm{OPT}_f(P) \ge \mathrm{OPT}_k(P)$: a
facility-location solution either opens at most $k$ centers, paying
routing at least $\mathrm{OPT}_k(P)$, or opens more, paying
$fk = \lambda \ge \mathrm{OPT}_k(P)$ in facilities alone. Hence, for  $\lambda \le 2\,\mathrm{OPT}_k(P)$, we have that
\[
\mathrm{OPT}_k(P) \;\le\; \mathrm{OPT}_f(P) \;\le\; 4\,\mathrm{OPT}_k(P).
\]
In particular, a geometric grid of prices tracks the $k$-clustering optimum, for
every $k$ simultaneously up to constant factors.

\subsection{Meyerson Online Summary} \label{sec:Meyersonsummary}

Let $P_t = (x_1, \dots, x_t)$ denote the prefix of the stream until time $t$.
\begin{defn}[Online summary and its distortion]\label{def:onlinesummary}
An algorithm is an \emph{online summary} if at every time $t$, it
maintains a set $S_t \subseteq P_t$ and an implicit assignment
$\pi_t : P_t \to S_t$, of which it publishes only the weighted point set
\[
\widehat P_t \;=\; (S_t,\, w_t), \qquad
w_t(y) = \bigl|\pi_t^{-1}(y)\bigr| \ \text{ for } y \in S_t .
\]
Its \emph{distortion} is
$\mathrm{dist}_t = \sum_{s \le t} d\bigl(x_s, \pi_t(x_s)\bigr)$.
\end{defn}

\begin{fact}\label{fact:distortion}
For every $t$ and every center set $C$,
\[
\mathrm{cost}(P_t, C)
\;\le\; \alpha\bigl(\mathrm{cost}(\widehat P_t, C) + \mathrm{dist}_t\bigr)
\qquad\text{and}\qquad
\mathrm{cost}(\widehat P_t, C)
\;\le\; \alpha\bigl(\mathrm{cost}(P_t, C) + \mathrm{dist}_t\bigr) .
\]
For a metric ($\alpha = 1$), the inequalities above imply that
$\bigl|\mathrm{cost}(P_t, C) - \mathrm{cost}(\widehat P_t, C)\bigr| \le \mathrm{dist}_t$.
\end{fact}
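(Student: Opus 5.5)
The statement to prove is \Cref{fact:distortion}. It follows pointwise from the $\alpha$-approximate triangle inequality of \Cref{def:semimetric}, applied separately to each stream point and then summed.

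Fix $t$ and a center set $C$, and write $y_s = \pi_t(x_s) \in S_t$. First we rewrite the summary cost as a sum over stream points. Since $w_t(y) = |\pi_t^{-1}(y)|$, we have
\[
\mathrm{cost}(\widehat P_t, C) \;=\; \sum_{y \in S_t} w_t(y)\, d(y, C) \;=\; \sum_{s \le t} d(y_s, C).
\]
This puts both costs on the same index set $s \le t$.

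For the first inequality, take $c \in C$ attaining $d(y_s, C)$. The approximate triangle inequality gives
\[
d(x_s, C) \;\le\; d(x_s, c) \;\le\; \alpha\bigl(d(x_s, y_s) + d(y_s, c)\bigr) \;=\; \alpha\bigl(d(x_s, y_s) + d(y_s, C)\bigr).
\]
Summing over $s \le t$ gives $\mathrm{cost}(P_t, C) \le \alpha\bigl(\mathrm{cost}(\widehat P_t, C) + \mathrm{dist}_t\bigr)$.

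The second inequality is symmetric. Choose $c$ nearest to $x_s$ and use symmetry of $d$:
\[
d(y_s, C) \;\le\; \alpha\bigl(d(y_s, x_s) + d(x_s, C)\bigr).
\]
Summing over $s \le t$ gives $\mathrm{cost}(\widehat P_t, C) \le \alpha\bigl(\mathrm{cost}(P_t, C) + \mathrm{dist}_t\bigr)$.

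For $\alpha = 1$ the two inequalities read $\mathrm{cost}(P_t, C) - \mathrm{cost}(\widehat P_t, C) \le \mathrm{dist}_t$ and the reverse. Together they give $|\mathrm{cost}(P_t, C) - \mathrm{cost}(\widehat P_t, C)| \le \mathrm{dist}_t$.

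The only delicate point is the existence of a nearest center. If $C$ is infinite, replace the minimizer $c$ by a near-minimizer and pass to the limit. Apart from that, and the weight-to-sum rewriting above, each step is immediate.
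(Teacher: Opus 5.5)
Your proof is correct and follows the paper's argument: apply the $\alpha$-approximate triangle inequality pointwise to each pair $(x_s, \pi_t(x_s))$ in both directions and sum over $s \le t$. Your rewriting $\mathrm{cost}(\widehat P_t, C) = \sum_{s\le t} d(\pi_t(x_s), C)$ makes explicit the weight bookkeeping the paper leaves implicit, and the near-minimizer caveat is harmless but unneeded, since taking the infimum over $c \in C$ in $d(x_s,c) \le \alpha\bigl(d(x_s,y_s)+d(y_s,c)\bigr)$ gives the pointwise bound directly.
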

\begin{proof}
For each point $x_s$, $d(x_s, C) \le \alpha\bigl(d(x_s, \pi_t(x_s)) + d(\pi_t(x_s), C)\bigr)$. Additionally, the same inequality holds with the roles of $x_s$ and $\pi_t(x_s)$ exchanged. By summing over all $s \le t$, the claim follows.
\end{proof}

Next, the Meyerson sketch as described in \Cref{alg:meyerson} can be augmented to maintain an online summary and an estimate of the distortion. Each arriving point is assigned to the center that determined its opening probability: concretely, the point is assigned to itself if a center is opened there, and otherwise the point is assigned to the nearest center in $S_{t-1}$. The published summary $\widehat P_t$ is then the center set weighted by the multiplicities of assigned points. By construction, the
distortion at time $t$ is equal to the routing cost at time $t$:

\begin{equation}\label{eq:distisQ}
\mathrm{dist}_t \;=\; Q_t ,
\end{equation}
which is also maintained by adding the distances of points to their assigned centers.

The lemma below presents the bounds \Cref{thm:upper,cor:transfer} and \eqref{eq:routingdom} for our online summary.

\begin{lem}[Online summaries under adaptive inputs]\label{lem:onlinesummary}
Run the Meyerson sketch at price $f > 0$ on the output of an adaptive
generator, publishing the online summary. Let
$\Phi$ bound the sketch's expected cost on fixed sequences as in
\Cref{cor:transfer}. Then for every $(\Gg_t)$-stopping time $\tau$:
\begin{enumerate}
\item[(i)] $\E\bigl[\mathrm{dist}^{\A}_\tau\bigr]
  \;\le\; \kappa_\Lambda\,\E\bigl[\Phi(P_\tau)\bigr]$;
\item[(ii)] $\E\bigl[\,\bigl|S^{\A}_\tau\bigr|\,\bigr]
  \;\le\; \kappa_\Lambda\,\E\bigl[\Phi(P_\tau)\bigr]/f$;
\item[(iii)] $\mathrm{OPT}_f(P_\tau)
  \;\le\; \mathrm{dist}^{\A}_\tau + f\bigl|S^{\A}_\tau\bigr|$ pathwise.
\end{enumerate}
Instantiating $\Phi(x) = \beta\,\mathrm{OPT}_f(x)$ with
$\beta = O(\log n/\log\log n)$ (over metrics by \citet{fotakis2008competitive} and over semi-metrics by \Cref{thm:semiratio}), and using
$\mathrm{OPT}_f \le fk + \mathrm{OPT}_k$ (the $k$-th term of
\eqref{fl2med:eq}),
\[
\E\bigl[\mathrm{dist}^{\A}_\tau\bigr]
  \le \kappa_\Lambda\beta\bigl(fk + \E[\mathrm{OPT}_k(P_\tau)]\bigr),
\qquad
\E\bigl[\,\bigl|S^{\A}_\tau\bigr|\,\bigr]
  \le \kappa_\Lambda\beta\Bigl(k + \E[\mathrm{OPT}_k(P_\tau)]/f\Bigr).
\]
Moreover, combining (iii) with \Cref{cor:transfer}: the sketch's total
cost is a pathwise upper estimate of the facility-location optimum, and
a $2\kappa_\Lambda\beta$-factor approximation of it in expectation,
\begin{equation}\label{eq:flestimate}
\mathrm{OPT}_f(P_\tau)
\;\le\; \mathrm{dist}^{\A}_\tau + f\bigl|S^{\A}_\tau\bigr|
\;=\; C^{\A}_\tau
\quad\text{pathwise},
\qquad
\E\bigl[C^{\A}_\tau\bigr]
\;\le\; 2\kappa_\Lambda\beta\,\E\bigl[\mathrm{OPT}_f(P_\tau)\bigr].
\end{equation}
\end{lem}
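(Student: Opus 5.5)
The plan is to reduce everything to \Cref{thm:upper} and \Cref{lem:equiv}. Both already hold for every $(\Gg_t)$-stopping time. The only place the model matters is the passage from the replay's cost to $\Phi$. Write $P_\tau = X_{\le\tau}$. By \eqref{eq:distisQ}, $\mathrm{dist}^{\A}_\tau = Q^{\A}_\tau$ and $|S^{\A}_\tau| = K^{\A}_\tau$, so (i) and (ii) are statements about the adaptive run's routing cost and center count at $\tau$.

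\textbf{Item (iii).} This is pathwise, and I will prove it first. The set $F = S^{\A}_\tau$ is a feasible facility-location solution. Each $x_s$ with $s\le\tau$ is assigned to $\pi_\tau(x_s)\in S^{\A}_\tau$, because centers are never closed. Hence
\[
\mathrm{OPT}_f(P_\tau) \le f|F| + \sum_{s\le\tau} d(x_s,F) \le f|S^{\A}_\tau| + \sum_{s\le\tau} d(x_s,\pi_\tau(x_s)) = f|S^{\A}_\tau| + \mathrm{dist}^{\A}_\tau .
\]
No triangle inequality is used. Opened points are assigned to themselves at distance $0$, consistent with the convention $0\cdot\infty=0$. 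The equality with $C^{\A}_\tau$ in \eqref{eq:flestimate} is just $C = fK + Q$.

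\textbf{Items (i) and (ii).} I chain three facts.
\begin{enumerate}
\item By \eqref{eq:routingdom}, $\E[Q^{\A}_\tau] \le f\,\E[K^{\A}_\tau]$.
\item By \Cref{thm:upper}, valid for $(\Gg_t)$-stopping times, $\E[K^{\A}_\tau]\le\kappa_\Lambda\E[K^{\OB}_\tau]$.
\item Pathwise, $fK^{\OB}_\tau \le C^{\OB}_\tau$.
\end{enumerate}
This gives $\E[\mathrm{dist}^{\A}_\tau]\le f\E[K^{\A}_\tau]\le\kappa_\Lambda\E[C^{\OB}_\tau]$ and $\E|S^{\A}_\tau|\le\kappa_\Lambda\E[C^{\OB}_\tau]/f$. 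The constant is $\kappa_\Lambda$ rather than $2\kappa_\Lambda$, because I never pass through the upper half of \eqref{eq:costcenter}. It remains to show $\E[C^{\OB}_\tau]\le\E[\Phi(P_\tau)]$.

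\textbf{The replay bound.} When $\tau$ is an $(\Ff_t)$-stopping time and the generator is one-sided, I use \Cref{lem:condreplay}. The stream and $\tau$ are $\sigma(U)$-measurable, so conditionally on $U$ the replay is the sketch on the fixed sequence $P_\tau$. The hypothesis on $\Phi$ then gives $\E[C^{\OB}_\tau\mid U]\le\Phi(P_\tau)$, and I integrate. For $\tau=\infty$ or non-finite $\tau$, I pass through $\tau\wedge T$ by monotone convergence, as in \Cref{thm:upper}.

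\textbf{The main obstacle.} For a general $(\Gg_t)$-stopping time, $\tau$ may depend on the replay coins, so $P_\tau$ is no longer fixed given $U$. A per-fixed-sequence hypothesis does not control $\E[C^{\OB}_\tau]$ directly. My first attempt is to read the hypothesis on $\Phi$ prefix-wise, as \Cref{cor:transfer} does: for every fixed infinite sequence and every stopping rule $\sigma$ of the sketch's own coins, $\E[C(x_{\le\sigma})]\le\E[\Phi(x_{\le\sigma})]$. Conditioning on $U$ turns $\tau$ into exactly such a stopping rule for the replay.

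For the instantiation $\Phi=\beta\,\mathrm{OPT}_f$, I would justify this stopping-time form directly. The competitive analyses of \citet{fotakis2008competitive} and \Cref{thm:semiratio} charge each step's conditional opening probability $b_t$ against a nondecreasing prefix quantity, namely the offline optimum of the prefix. They therefore go through with the fair-count identity \eqref{eq:faircount} in place of linearity of expectation: applying it with predictable weights at $\tau\wedge T$ replaces the fixed-horizon sums. Checking that this martingale substitution is valid throughout the charging in those analyses is the step I expect to take the most care.

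The displayed consequences then follow by substituting $\mathrm{OPT}_f\le fk+\mathrm{OPT}_k$ from \eqref{fl2med:eq}. The expectation half of \eqref{eq:flestimate} is \Cref{cor:transfer} applied verbatim.
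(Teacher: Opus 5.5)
Your proof of (iii), and the chain $\E[Q^{\A}_\tau]\le f\,\E[K^{\A}_\tau]\le\kappa_\Lambda f\,\E[K^{\OB}_\tau]\le\kappa_\Lambda\,\E[C^{\OB}_\tau]\le\kappa_\Lambda\,\E[\Phi(P_\tau)]$ for (i)--(ii), are exactly the paper's argument. For $(\Ff_t)$-stopping times, your use of \Cref{lem:condreplay} is what the paper means by ``conditioning on the sequence.'' You are right that this last link, in the paper as well, does not cover a $\tau$ that reads the replay coins $U'$.

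\textbf{The gap.} Your way of closing that case does not work. The stopping-time reading of the hypothesis on $\Phi$ is not what \Cref{cor:transfer} assumes: it assumes only fixed finite sequences, and it is stated for the default $(\Ff_t)$-stopping times. Your reading is also strictly stronger. Take $\Phi(x)=\E[C(x)]$ and the fixed sequence $h,z,B_1,\dots,B_R$ of \Cref{sec:lower_reverse}, stopped right after $z$ if the run opened $z$ and at the end otherwise. Let $L$ and $L_E$ be the expected block costs given that $z$ was rejected, respectively opened. Then $\E[C_\sigma]-\E[\Phi(x_{\le\sigma})]=\tfrac14(L-L_E)$. Here $L\ge(1-e^{-1})Rf$, since each block opens with probability at least $1-e^{-1}$ as in \Cref{clm:blockyield}. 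And $L_E\le 4f$: once $z$ is a center the blocks' opening probabilities sum to at most $2$, and a step costs at most $2f$ times its opening probability.

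So for $R\ge 7$, apply this rule to $U'$, against the oblivious generator presenting this sequence. This gives a $(\Gg_t)$-stopping time with $\E[C^{\OB}_\tau]>\E[\Phi(P_\tau)]$. The intermediate inequality you are aiming for is false in general, so no argument can supply it.

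Your fallback for $\Phi=\beta\,\mathrm{OPT}_f$ is unverified and is not a routine martingale substitution. \Cref{thm:semiratio} charges against one optimal solution of the whole fixed sequence (cores, rings, one waiting cost per group). At a random $\sigma$, the benchmark is instead the optimum of a random prefix.

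\textbf{The missing idea.} Items (i) and (ii) never mention the replay, which is only an analysis device, so you may replace it with another one.
\begin{enumerate}
\item Since the generator is one-sided, absorb $U'$ into the private randomness, as in \Cref{claim:environment}. Then $W_{\mathrm{eff}}=(W,U')$ is independent of $U$, and the generator is still one-sided.
\item Because $\Gg_t\subseteq\sigma(W_{\mathrm{eff}},U_{\le t})$, $\tau$ is an $(\Ff_t)$-stopping time of the enlarged model. The stream, $\Lambda$, $\mathrm{dist}^{\A}_\tau$, $|S^{\A}_\tau|$ and $P_\tau$ are unchanged.
\item Compare against a fresh replay with coins $U''$ independent of $(W_{\mathrm{eff}},U)$, with count and cost $\widetilde K_\tau,\widetilde C_\tau$. \Cref{thm:upper} gives $\E[K^{\A}_\tau]\le\kappa_\Lambda\,\E[\widetilde K_\tau]$.
\item Conditioning on $(W_{\mathrm{eff}},U)$ fixes both the stream and $\tau$. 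Hence $f\,\E[\widetilde K_\tau]\le\E[\widetilde C_\tau]\le\E[\Phi(P_\tau)]$, using only the stated fixed-sequence hypothesis.
\end{enumerate}
This gives (i) and (ii) with constant $\kappa_\Lambda$. The same device extends \Cref{cor:transfer}, and hence the expectation half of \eqref{eq:flestimate}, to $(\Gg_t)$-stopping times.

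Separately, your truncation step for non-finite $\tau$ needs $\Phi$ to be monotone along prefixes. That holds for $\beta\,\mathrm{OPT}_f$ but is not assumed in general. For a.s.\ finite $\tau$, no truncation is needed.
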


\begin{proof}
By \eqref{eq:distisQ}, $\mathrm{dist}^{\A}_\tau = Q^{\A}_\tau$, and
$\bigl|S^{\A}_\tau\bigr| = K^{\A}_\tau$. \Cref{thm:upper} gives
$\E[K^{\A}_\tau] \le \kappa_\Lambda\,\E[K^{\OB}_\tau]$, and
$f\,\E[K^{\OB}_\tau] \le \E[C^{\OB}_\tau] \le \E[\Phi(P_\tau)]$, the last
step by conditioning on the sequence, under which the replay is the
sketch on a fixed input; this establishes (ii). For (i),
$\E[Q^{\A}_\tau] \le f\,\E[K^{\A}_\tau]$ by \eqref{eq:routingdom}, and
the same argument applies. 
For (iii), take $F = S^{\A}_\tau$ in \eqref{fl2med:eq}: since centers are never removed, each point's arrival-time assignment is a member of $F$, and the distance to it can only exceed the distance to the closest center of $F$. Hence
$\mathrm{cost}(P_\tau, S^{\A}_\tau) \le \mathrm{dist}^{\A}_\tau$,
and adding the opening cost $f\bigl|S^{\A}_\tau\bigr|$ gives the
claim. Finally,
$\mathrm{dist}^{\A}_\tau + f\bigl|S^{\A}_\tau\bigr|
= Q^{\A}_\tau + f K^{\A}_\tau = C^{\A}_\tau$, and the expectation
bound in \eqref{eq:flestimate} is \Cref{cor:transfer} with
$\Phi = \beta\,\mathrm{OPT}_f$.
\end{proof}

\subsection{Multi-price Summaries and Online $k$-clustering}
\label{sec:multiprice}

\begin{algorithm2e}[t]
\caption{Multi-price summary}
\label{alg:multiprice}
{\small
\KwIn{
  $k$; bounds $n, \Delta$; an offline $\rho$-approximation $\mathcal{B}$ for weighted $k$-clustering; a capacity $B \in \mathbb{N} \cup \{\infty\}$.
}
\tcp{Note on capacity: $B = \infty$ is an analysis device (the coupling shadow of \Cref{lem:capping}); finite $B$ enforces memory $O(B \log(n\Delta))$ pathwise, with guarantees transferred by \Cref{lem:capping}.}
\tcc{Initialization}
$P \gets \lceil \log_2(n\Delta) \rceil$\;
\ForEach{level $j = 0, \dots, P$}{
  $f_j \gets 2^j d_{\min}/\bigl(k(1+\lceil\ln n\rceil)\bigr)$;\quad $S^{(j)} \gets \emptyset$;\quad
  $w^{(j)} \gets \text{empty}$;\quad $\mathrm{dist}^{(j)} \gets 0$\;
}
\tcc{Online Updates}
\ForEach{arriving point $x$}{
  \ForEach{level $j = 0, \dots, P$
    \tcp*{one step of \Cref{alg:meyerson} at price $f_j$, with weights}}{
    $d \gets d(x, S^{(j)})$; draw a fresh coin
    $U^{(j)} \sim \mathrm{Unif}[0,1]$\;
    \eIf{$\bigl|S^{(j)}\bigr| < B$ \textbf{and}
      $U^{(j)} \le \min\{1,\ d/f_j\}$}{
      open a center: $S^{(j)} \gets S^{(j)} \cup \{x\}$;\quad
      $w^{(j)}(x) \gets 1$\;
    }{
      $y \gets$ nearest center to $x$ in $S^{(j)}$
      \tcp*{copy at capacity; it routes but never opens}
      $w^{(j)}(y) \gets w^{(j)}(y) + 1$;\quad
      $\mathrm{dist}^{(j)} \gets \mathrm{dist}^{(j)} + d$\;
    }
  }
  publish $\widehat P^{(j)} = (S^{(j)}, w^{(j)})$ and
  $\mathrm{dist}^{(j)}$ for all $j$\;
}
\BlankLine
\tcc{$k$-clustering reporting (at any time on demand)}
\ForEach{level $j = 0, \dots, P$}{
  $C^{(j)} \gets \mathcal{B}\bigl(\widehat P^{(j)}\bigr)$;\quad
  $\mathrm{UB}^{(j)} \gets
    \alpha\bigl(\mathrm{cost}\bigl(\widehat P^{(j)}, C^{(j)}\bigr)
    + \mathrm{dist}^{(j)}\bigr)$\;
}
report $C^{(j^*)}$ with certificate $\mathrm{UB}^{(j^*)}$, where
$j^* \in \arg\min_j \mathrm{UB}^{(j)}$\;
}
\end{algorithm2e}

As evident from \eqref{med2fl:eq}, using a fixed-price  sketch to obtain an approximate $k$-clustering requires a price $f\approx \mathrm{OPT}_k(P_t)/k$. A high price loses granularity and a low price may inflate the summary size. Since $\mathrm{OPT}_k(P_t)$ is increasing, we need to adjust the price scale accordingly. Online clustering algorithms therefore use multiple scales or adjust the scale in phases~\citep{CharikarOP03,BMORST11,ShindlerWM11,LattanziV17}.

\Cref{alg:multiprice} describes the multi-price sketch and the approximate $k$-clustering reporting.
Fix $k$ and a
bound $n$ on the stream length. 
The sketch consists of $P+1$ concurrent Meyerson online summary copies (\Cref{sec:Meyersonsummary}) with prices 
\[\left(f_j = \frac{\lambda_j}{\bigl(k(1+\lceil\ln n\rceil)\bigr)}\right)_{j\in\{0,\ldots,P\}}\; \text{ where } P = \lceil \log_2 (n\Delta) \rceil\, ;\, \lambda_j := 2^j d_{\min}\ .\]
Each copy $j$ is applied to the input stream and publishes its summary in the triplet form  $(C^{(j)},\mathrm{UB}^{(j)},\mathrm{dist}^{(j)})$ of centers, weights, and distortion. 
The sketch receives as an input a capacity bound $B$ on the number of open centers per copy. Copies that reach capacity freeze their set of open centers and route all new requests to the existing set.

At each time step, the reported $k$-clustering solution is computed from the published summaries: For each summary $j$, we apply the $k$-clustering algorithm $\mathcal{B}$ to the summary to obtain $C^{(j)}$. We then apply \cref{fact:distortion} to compute an upper bound $\mathrm{UB}^{(j)}$ on the $k$-clustering cost of $C^{(j)}$ on the full prefix $P_t$ from the distortion and the cost of $C^{(j)}$ on the summary. 
We report the centers $C^{(j^*(t))}$ of the summary $j^*(t)$ with the best upper bound and the certificate value $\mathrm{UB}^{(j^*(t))}$.

\begin{thm}[Online adaptive $k$-clustering]\label{thm:kmedianmain}
Let $\mathcal{B}$ be an offline $\rho$-approximation algorithm for
weighted $k$-clustering, let $\delta \in (0,1)$, and run
\Cref{alg:multiprice} with capacity
\[
B = \Theta\bigl(\kappa_\Delta\bigl(k\log n +
\log\tfrac{\log(n\Delta)}{\delta}\bigr)\bigr)
\]
 against an adaptive
generator that observes everything published. Then:
\begin{enumerate}
\item[(i)] \emph{(Space.)} 
The algorithm uses
$O\bigl(\kappa_\Delta\bigl(k\log n +
\log\tfrac{\log(n\Delta)}{\delta}\bigr)\log(n\Delta)\bigr)$
words.\footnote{A word stores a presented point, a distance or cost value, or an integer of $O(\log(n\Delta))$ bits.}
\item[(ii)] \emph{(Certificate.)} Deterministically, at every time $t$, the reported centers satisfy
$\mathrm{cost}\bigl(P_t,\, C^{(j^*(t))}_t\bigr)
\le \mathrm{UB}^{(j^*(t))}_t$.
\item[(iii)] \emph{(Approximation.)} With probability at least
$1 - \delta$, simultaneously for every time $t$,
\[
\mathrm{cost}\bigl(P_t,\, C^{(j^*(t))}_t\bigr)
\;\le\; O(\rho)\,\kappa_\Delta
\Bigl(1 + \tfrac{1}{k\log n} \log\tfrac{\log(n\Delta)}{\delta}\Bigr)
\cdot \max\bigl(\mathrm{OPT}_k(P_t),\, d_{\min}\bigr) .
\]
\item[(iv)] \emph{(Update Time.)} Each update is processed in worst-case time
linear in the space bound of (i), with one distance evaluation per
stored center and $O(1)$ further work per level. Reporting applies
$\mathcal{B}$ to $O(\log(n\Delta))$ weighted instances of at most $B$
points each.
\end{enumerate}
In particular, at $\delta = n^{-O(1)}$: with
$O\bigl(\kappa_\Delta\, k \log^2 n \bigr)$ words of memory,
 the reported $k$ centers are with high
probability an
$O\bigl(\rho \cdot \tfrac{\log\Delta}{\log\log\Delta}\bigr)$-approximation
of the current optimum, simultaneously at every prefix with $\mathrm{OPT}_k(P_t) > 0$.\footnote{Here we assume $\Delta = n^{O(1)}$, a standard assumption in streaming
clustering. This is without loss of generality as distances below
$f_j/n^2$ contribute $O(f_j/n)$ to any copy's cost and trigger
openings with probability $O(1/n)$ in total, so each copy's
effective aspect ratio is at most $n^{O(1)}$ regardless of $\Delta$}  All constants
depend only on $\alpha$.
\end{thm}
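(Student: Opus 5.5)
Items (i), (ii) and (iv) are pathwise. For (i), each of the $P+1 = O(\log(n\Delta))$ levels stores at most $B$ centers with their weights, plus one distortion counter. For (iv), an update evaluates one distance per stored center per level, and reporting runs $\mathcal{B}$ on $P+1$ instances of size at most $B$. For (ii), note that capped copies still assign every point to a center of $S^{(j)}$ and add exactly that distance to $\mathrm{dist}^{(j)}$. Hence $\mathrm{dist}^{(j)}$ is the distortion of the online summary $\widehat P^{(j)}$ in the sense of \Cref{def:onlinesummary}, and \Cref{fact:distortion} with $C = C^{(j)}$ gives $\mathrm{cost}(P_t, C^{(j)}) \le \mathrm{UB}^{(j)}$ for every $j$, in particular for $j^*$. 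The substance is (iii).

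For (iii) I would first analyze the uncapped shadow ($B=\infty$) and then transfer to the capped algorithm by a coupling. Run all levels on shared coins. A copy behaves identically to its shadow until the shadow's center count first exceeds $B$. So it suffices to show that, with probability $1-\delta$, at every time $t$, the level $j(t)$ that is good for $P_t$ has shadow count at most $B$. This is presumably the content of \Cref{lem:capping}. Copies at low levels may saturate, but their UB remains a valid certificate by (ii), so they can only help the argmin.

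Fix $t$ and set $\lambda = \max(\mathrm{OPT}_k(P_t), d_{\min})$. Choose $j$ with $\lambda \le \lambda_j < 2\lambda$, which exists since $\mathrm{OPT}_k \le n\,d_{\max}$. Then $f_j = \lambda_j/(k(1+\lceil\ln n\rceil))$. Each level is a Meyerson sketch, so \Cref{thm:upper_whp} applies per level with $\delta' = \delta/(4(P+1))$, followed by a union bound over levels; this is where the $\log\frac{\log(n\Delta)}{\delta}$ term comes from. That gives, simultaneously for all $t$, $K^{\A,(j)}_t \le 4\kappa_\Delta K^{\OB,(j)}_t + O(\kappa_\Delta\ln\frac1{\delta'})$, and similarly for the cost via \Cref{cor:upper_whp_cost}. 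It remains to bound the replay with high probability. On a fixed sequence I would use the high-probability oblivious bounds of \citep{BMORST11} (see \Cref{app:semiratio}), which give $K^{\OB,(j)}_t = O(k\log n + \ln\frac1{\delta'})$ and $C^{\OB,(j)}_t = O(\lambda_j + f_j\ln\frac1{\delta'})$ uniformly over prefixes. The price was scaled by $1/\log n$ exactly so that the $\log n$ loss shows up in the center count rather than in the cost. Two choices are needed here. First, to get the bound uniformly in time on the replay, I would apply it to each of the $n$ fixed prefixes and take a union bound, absorbing the factor $n$ into $\delta'$ (or use monotonicity over the $O(\log n)$ doubling epochs of $\mathrm{OPT}_k$). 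Second, since the replay is oblivious conditional on the stream, these bounds hold conditionally on $X$ and hence unconditionally. Combining the pieces, the adaptive count at level $j(t)$ is at most $B$ for the stated $B$, so that level is uncapped. Its distortion satisfies $\mathrm{dist}^{(j)}_t = Q^{\A}_t \le C^{\A}_t = O(\kappa_\Delta)(\lambda_j + f_j\log\frac{\log(n\Delta)}{\delta})$, which is $O(\kappa_\Delta)\lambda(1+\frac{1}{k\log n}\log\frac{\log(n\Delta)}{\delta})$.

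Finally, $\mathrm{cost}(\widehat P^{(j)}, C^{(j)}) \le \rho\,\mathrm{OPT}_k(\widehat P^{(j)}) \le \rho\alpha(\mathrm{OPT}_k(P_t) + \mathrm{dist}^{(j)})$ by \Cref{fact:distortion} applied to optimal centers. Hence $\mathrm{UB}^{(j)} = O(\rho\,\alpha^2)(\lambda + \mathrm{dist}^{(j)})$, and since $j^*$ minimizes UB, the claimed bound follows from (ii). I expect the main obstacle to be the uniform-in-time high-probability bound on the replay. The replay must hold simultaneously at every $t$ for the level that is good at that $t$, while the good level changes over time and the stream itself is adaptive. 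Carrying this out needs care with conditioning on $X$, with the union bound over prefixes and levels, and with confirming that \Cref{thm:upper_whp} applies to the capped copy through the coupling. The coupling is valid because the capped and uncapped copies agree pathwise until the cap binds.
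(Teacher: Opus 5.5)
\textbf{Gap: the coupling with a fully uncapped shadow fails against an adaptive generator.} Your treatment of (i), (ii), (iv) and the final chaining through \Cref{fact:distortion} matches the paper. Your replay bound also works: a union bound over the at most $n$ prefixes only adds a $\log n \le k\log n$ term, which is absorbed. The paper uses monotonicity instead and evaluates the replay once per level, at the window end $T_j$, which is a deterministic time given the stream.

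The problem is that you take as shadow the whole algorithm with $B=\infty$ and assert that each copy agrees with its shadow until that copy's own count exceeds $B$. That is true for an oblivious stream. Here, however, the generator observes \emph{every} published summary. So the deployed and fully uncapped systems coincide only until the first time \emph{any} copy is at capacity and its coin says ``open.''

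The lowest levels, with price $f_0 = d_{\min}/(k(1+\lceil\ln n\rceil)) < d_{\min}$, open at every new location and saturate after $B$ distinct points. From then on the generator sees different level-$0$ summaries in the two systems and may present different streams. The shadow's bounds for the good level $j(t)$ then say nothing about the deployed copy $j(t)$. Your remark that saturated low levels ``can only help the argmin'' concerns reporting, not this divergence of the stream. Nor can \Cref{thm:upper_whp} be applied to the deployed copy directly, since a capped copy is not a Meyerson sketch.

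\textbf{The missing idea is a per-level shadow.} For each $j$, remove only copy $j$'s cap and keep every other copy at capacity $B$. The deployed and shadow systems then differ only in copy $j$'s gate, so they coincide until copy $j$ itself reaches $B$.

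To apply \Cref{thm:upper_whp,cor:upper_whp_cost} to this uncapped copy, you must check that its environment is a one-sided generator. That environment is the generator together with the capped copies $i\neq j$, with private randomness $(W,(U^{(i)})_{i\neq j})$ and coins $U^{(j)}$. This holds because distinct copies use independent coins, and each capacity gate is a function of the stream prefix and that copy's own coins. The same measurability argument shows the stream is independent of copy $j$'s replay coins, so the fixed-sequence bound applies conditionally on the stream.

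With this construction, failure probability $\delta/(P+1)$ per level, and monotonicity of count and distortion within each window, the rest of your argument goes through. For the final ``in particular'' clause, you also need that $\mathrm{OPT}_k(P_t)>0$ forces $\mathrm{OPT}_k(P_t)\ge d_{\min}/\alpha$, so the $\max$ with $d_{\min}$ costs only a factor $\alpha$.
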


\subsection{Analysis}\label{sec:spine}
In this subsection we prove \cref{thm:kmedianmain}. 
To aid analysis, we add the step subscript to the notation. 
For level $j$ and time $t$,
$C^{(j)}_t = \mathcal{B}(\widehat P^{(j)}_t)$,
is the set of centers, $\mathrm{dist}^{(j)}_t = Q^{(j)}_t$ is the distortion, maintained as a running counter, and the computed upper bound is
\[
\mathrm{UB}^{(j)}_t \;=\; \alpha\Bigl(\mathrm{cost}\bigl(\widehat P^{(j)}_t,\, C^{(j)}_t\bigr) + \mathrm{dist}^{(j)}_t\Bigr) .
\]
The selected index is $j^*(t) \in \arg\min_j \mathrm{UB}^{(j)}_t$, and the reported clustering and certificate are $C^{(j^*(t))}_t$ and  $\mathrm{UB}^{(j^*(t))}_t$.

We establish the space, time, and certificate claims:
\begin{proof}[Proof of \Cref{thm:kmedianmain} (i): space; (iv): time]
The algorithm stores $B$ weighted centers and one counter per level. There are $P + 1 = O(\log(n\Delta))$ levels. Therefore, the algorithm requires $O(B
\log(n\Delta))$ words of space  in total. 
An update at level $j$ computes $d(x,S^{(j)})$ with one distance evaluation per center of $S^{(j)}$ and otherwise performs $O(1)$ work, one coin and $O(1)$ counter updates, identically for frozen and unfrozen copies. Since $|S^{(j)}| \le B$ at all times, an update costs one distance evaluation per stored center plus $O(1)$ per level, worst case and linear in the space bound. A reporting step applies, at each level, the algorithm $\mathcal{B}$ to the weighted instance $\widehat P^{(j)}$ of at most $B$ points and evaluates $\mathrm{cost}\bigl(\widehat P^{(j)}, C^{(j)}\bigr)$ with $k\,\bigl|S^{(j)}\bigr|$ distance evaluations.
\end{proof}

\begin{proof}[Proof of \Cref{thm:kmedianmain} (ii): certificate]
The upper bound $\mathrm{UB}^{(j)}_t$
holds pathwise, deterministically, at all times $t$ and for every copy
$j$, by \Cref{fact:distortion} with \eqref{eq:distisQ}; in particular
it holds for the minimizer $j^*(t)$. 

The only nuance is when a copy reaches capacity. 
A frozen copy does not open new centers but continues 
to \emph{route}: each subsequent point is assigned to its nearest frozen
center, incrementing the weights and the distortion counter. The frozen copy therefore remains a valid online summary with its tracked routing cost equal to the distortion. 
\end{proof}

It remains to establish claim (iii) (approximation). 
As an analytical device, 
we define the \emph{window} of level $j$ as the  prefix of the stream during which $\mathrm{OPT}_k(P_t) \le \lambda_j := 2^j d_{\min}$. 
(Since $\mathrm{OPT}_k(P_t)$ is nondecreasing in $t$, this is a prefix).
The endpoint of this window, $T_j$, is defined as:
\[
T_j \;=\; \sup\{\, t \,:\, \mathrm{OPT}_k(P_t) \le \lambda_j \,\}
\;\in\; \mathbb{N} \cup \{\infty\} .
\]

The analysis proceeds by first establishing per-level properties at time $T_j$ and then combining them. For each level $j$, we first establish the properties in a shadow system in which copy $j$ has no capacity bound (\cref{lem:perlevel}), and then transfer them to \Cref{alg:multiprice} by bounding the probability of a \emph{premature freeze}, the event that copy $j$ is at capacity by time $T_j$ (\cref{lem:capping}).

\paragraph{Shadow system for level $j$.}
The \emph{shadow system} of level $j$ differs from
\Cref{alg:multiprice} in that copy $j$'s capacity bound is removed; the copies $i \ne j$ run at the system's capacity $B$. The shadow
system uses the same randomness $(W, (U^{(i)})_i)$ as the deployed system, with the same generator reacting to the shadow system's published summaries. The \emph{shadow} of copy $j$ is copy $j$'s run in this system.

\begin{lem}[Per-level high-probability bounds, shadow system]\label{lem:perlevel}
Fix a level $j$, a capacity $B$, and $\delta \in (0,1)$, and run the
shadow system of level $j$ against any adaptive generator. With
probability $1 - \delta$,
\[
\bigl|S^{(j)}_{T_j}\bigr|
\;=\; O\bigl(\kappa_\Delta\bigl(k\log n + \log\tfrac1\delta\bigr)\bigr)
\qquad\text{and}\qquad
\mathrm{dist}^{(j)}_{T_j}
\;=\; O(\kappa_\Delta)
\Bigl(\lambda_j + \frac{\lambda_j}{k\log n}\log\frac1\delta\Bigr) .
\]
\end{lem}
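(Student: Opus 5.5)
The plan is to combine the high-probability robustness bound (\Cref{thm:upper_whp} for counts, \Cref{cor:upper_whp_cost} for costs) with a high-probability bound for the oblivious replay of the shadow copy $j$ on the sequence it sees. We first stop at the $(\Ff_t)$-stopping time $T_j$; although $T_j$ is defined through $\mathrm{OPT}_k(P_t)$, it is determined by the presented prefix and hence adapted. Since $\mathrm{OPT}_k(P_t)\le\lambda_j$ for all $t\le T_j$, \eqref{med2fl:eq} gives $\mathrm{OPT}_{f_j}(P_t)\le f_jk+\lambda_j\le 2\lambda_j$ at every such time. The copy $j$ in the shadow system is an uncapped Meyerson sketch at price $f_j$. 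It is driven by a generator that sees all published summaries, and the other copies' coins enter only through the generator's private randomness. This is therefore a one-sided adaptive generator for copy $j$ in the sense of \Cref{def:onesided}, with $W$ absorbing the other levels' coins.

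\textbf{Step 1 (replay).} Show that with probability $1-\delta/2$ the replay's count satisfies
\[
K^{\OB}_{T_j}=O\bigl(k\log n+\log\tfrac1\delta\bigr).
\]
We condition on the stream, so that the replay is the sketch on a fixed sequence. Next we use the oblivious high-probability bound of \Cref{app:semiratio} (the \citet{BMORST11} style bound). It states that with price $f=\lambda/(k(1+\lceil\ln n\rceil))$ and $\mathrm{OPT}_k\le\lambda$, the number of centers is $O(k\log n+\log(1/\delta))$ and the cost is $O(\lambda+f\log(1/\delta))$ with probability $1-\delta$. If that appendix gives only an expectation bound $\E[C^{\OB}]=O(\lambda_j)$, I would instead apply \Cref{lem:ville}(i) to the replay's count. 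Its running sum of opening probabilities is bounded via the classical ring charging against a fixed optimal $k$-clustering of $P_{T_j}$. That charging is pathwise in the standard analysis up to the $\log n$ rings, yielding the sum $O(k\log n)$ because $f_j$ already carries the $1+\lceil\ln n\rceil$ factor. A subtlety is that $P_{T_j}$ is only known at the end. I would avoid this by applying the bound at the fixed (conditioned) sequence prefix of length $T_j$, which is legitimate after conditioning on the stream.

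\textbf{Step 2 (transfer).} Apply \Cref{thm:upper_whp} with $\gamma=1$ and failure probability $\delta/4$ to copy $j$, using $\Lambda\le\Delta$ and monotonicity of $\kappa$. This gives
\[
K^{\A}_{T_j}\le 4\kappa_\Delta K^{\OB}_{T_j}+O\bigl(\kappa_\Delta\log\tfrac1\delta\bigr)=O\bigl(\kappa_\Delta(k\log n+\log\tfrac1\delta)\bigr),
\]
which is the first claim since $|S^{(j)}_{T_j}|=K^{\A}_{T_j}$. For the distortion, $\mathrm{dist}^{(j)}_{T_j}=Q^{\A}_{T_j}\le C^{\A}_{T_j}$. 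Then \Cref{cor:upper_whp_cost} with $\eps=1$ gives
\[
C^{\A}_{T_j}\le 4\kappa_\Delta C^{\OB}_{T_j}+O\bigl(\kappa_\Delta f_j\log\tfrac1\delta\bigr).
\]
Bounding $C^{\OB}_{T_j}=O\bigl(\lambda_j+f_j\log\tfrac1\delta\bigr)$ via Step 1 and substituting $f_j=\Theta(\lambda_j/(k\log n))$ yields the stated distortion bound. A union bound over the constant number of events finishes the proof.

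The main obstacle is Step 1: obtaining a genuinely high-probability, rather than expected, oblivious bound on the replay's count and cost at price $\lambda_j/(k\log n)$. This requires concentration of the classical ring-charging analysis. I would first try to cite \Cref{app:semiratio} directly. Failing that, I would run the Ville-type argument of \Cref{lem:ville} on the replay's openings inside each optimal cluster's rings, which contribute $O(\log n)$ expected openings per cluster.
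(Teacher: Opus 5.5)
Your proposal is correct and follows essentially the same route as the paper. The paper also treats copy $j$'s environment as a one-sided generator (\Cref{claim:environment}), applies the time-uniform bounds of \Cref{thm:upper_whp,cor:upper_whp_cost} at the random time $T_j$, bounds the replay on the fixed prefix $P_{T_j}$ via \Cref{thm:semiratio_whp} conditionally on the stream, and concludes with a union bound. \Cref{thm:semiratio_whp} already supplies exactly the high-probability count and cost bounds you need at price $f_j$, so your fallback is unnecessary; as phrased it would also fail, because the opening probabilities accumulated before the first opening in each ring are not bounded pathwise.
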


The proof applies the high-probability comparison of \Cref{thm:upper_whp,cor:upper_whp_cost} to relate the shadow to its replay, and the fixed-sequence bounds of \Cref{thm:semiratio_whp} to bound the replay in terms of the clustering optimum. To apply \Cref{thm:upper_whp,cor:upper_whp_cost} to the shadow, we verify that its environment, the generator operating in the shadow system, observing its full transcript, is a one-sided generator (\cref{def:onesided}):

\begin{claim}[Shadow environment]\label{claim:environment}
Fix a level $j$. The environment of the shadow of copy $j$ consists of the generator together with the capped copies $i \ne j$, whose published summaries it observes. This environment is a one-sided generator (\Cref{def:onesided}) with private randomness $W_{\mathrm{eff}} = (W, (U^{(i)})_{i \ne j})$ and coins $U^{(j)}$.
\end{claim}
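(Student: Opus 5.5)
We will verify the claim by an induction on time that makes the points fed to the shadow copy $j$ explicit measurable functions of $(W_{\mathrm{eff}}, U^{(j)}_{<t})$. The target is \Cref{def:onesided}: there must be a measurable map $g_t$ with $x_t = g_t(W_{\mathrm{eff}}, U^{(j)}_{<t})$. We also need $W_{\mathrm{eff}} = (W, (U^{(i)})_{i\ne j})$ to be independent of $U^{(j)}$ and of the replay coins $U'$ used in the analysis. Independence is immediate. The coin sequences $U^{(i)}$ of distinct levels are i.i.d.\ and independent of $W$, because \Cref{alg:multiprice} draws a fresh coin per level per step. The replay coins $U'$ are an auxiliary independent sequence introduced for the analysis, so they are independent of everything in the shadow system.

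The measurability goes by simultaneous induction on $t$. The hypothesis $H_t$ states that the prefix $x_1,\dots,x_t$ and the full states of all copies after step $t$ are measurable functions of $(W_{\mathrm{eff}}, U^{(j)}_{\le t})$. A copy's state consists of its center set, weights, distortion counter, and (for $i\ne j$) whether it is at capacity. For $t=0$ all states are empty and this is trivial.

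For the step $t-1 \to t$, we proceed in three parts. The generator chooses $x_t$ as a measurable function of $W$ and of everything published up to time $t-1$. The published material consists of the summaries $(S^{(i)}, w^{(i)}, \mathrm{dist}^{(i)})$ of all levels, including the shadow's own summary. By $H_{t-1}$ these are functions of $(W_{\mathrm{eff}}, U^{(j)}_{<t})$, and hence so is $x_t$. Next, each capped copy $i\ne j$ updates deterministically from its previous state, $x_t$, and its coin $U^{(i)}_t$. Its capacity test depends only on its own state, and $U^{(i)}_t$ is a component of $W_{\mathrm{eff}}$. So its new state is a function of $(W_{\mathrm{eff}}, U^{(j)}_{<t})$; in particular it does not depend on $U^{(j)}_t$. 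Finally, the shadow copy $j$ updates from its previous state, $x_t$, and $U^{(j)}_t$ through the uncapped rule of \Cref{alg:meyerson}. This establishes $H_t$. It also shows that $x_t$ is $\sigma(W_{\mathrm{eff}}, U^{(j)}_{<t})$-measurable, which is exactly \Cref{def:onesided}.

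The only delicate point is the one just isolated: the capped copies $i \ne j$ act on $x_t$ with their own coins $U^{(i)}_t$ at the same step. One must check that folding these coins into the "private randomness" does not let the environment see the shadow's current coin $U^{(j)}_t$. It cannot, because the coins of different levels are independent and each copy's update reads only its own coin. The generator's interaction with the shadow's own published summary is the standard adaptive feedback already allowed in \Cref{sec:model}. The environment also decides when to halt, and that decision is likewise a measurable function of the same data.

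Consequently, the shadow of copy $j$ is exactly the adaptive run of the Meyerson sketch at price $f_j$ against a one-sided generator with private randomness $W_{\mathrm{eff}}$. Any stopping rule defined from the transcript, such as $T_j$ or $T_j \wedge t$, is then an $(\Ff_t)$-stopping time with $\Ff_t = \sigma(W_{\mathrm{eff}}, U^{(j)}_{\le t})$. Hence \Cref{thm:upper_whp,cor:upper_whp_cost} apply to it verbatim.
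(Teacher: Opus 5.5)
Your proof is correct and follows essentially the same route as the paper. Both establish that $W_{\mathrm{eff}}$ is independent of $U^{(j)}$ and of the replay coins, and both run an induction in which each capped copy's state, and hence its capacity gate, is a function of the stream prefix and its own coins, so the generator's next point depends only on $W_{\mathrm{eff}}$ and copy $j$'s past. The only difference is cosmetic: the paper states the conclusion in transcript form, as a function of $(W_{\mathrm{eff}}; (x_s, o^{(j)}_s)_{s<t})$, while you state it directly as $\sigma(W_{\mathrm{eff}}, U^{(j)}_{<t})$-measurability, which is exactly the form of the one-sided definition.
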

\begin{proof}
The tuple $W_{\mathrm{eff}}$ absorbs the generator's own randomness and the internal coins of the other copies. Since the coin sequences of distinct copies are mutually independent, and independent of $W$ and of copy $j$'s replay coins (which are fresh, introduced only for the analysis), the pair $(W_{\mathrm{eff}}, U^{(j)})$ satisfies the independence required of $(W, U)$ in \Cref{def:onesided}. For measurability: copy $i$'s state at time $t-1$ is a function of the stream prefix and $U^{(i)}_{\le t-1}$, and the capacity gate $|S^{(i)}| < B$ is a function of that state, so capping does not affect the induction. Copy $j$'s own published summary is reconstructible from its transcript $(x_s, o^{(j)}_s)_{s < t}$, and the stream prefix is part of that transcript. Hence everything the generator observes, and therefore its next point, is a measurable function of $\bigl(W_{\mathrm{eff}};\ (x_s,o^{(j)}_s)_{s < t}\bigr)$.
\end{proof}

\begin{proof}[Proof of \cref{lem:perlevel}]
By \Cref{claim:environment}, copy $j$ against its replay is an
instance of the model, so \Cref{thm:upper_whp} and
\Cref{cor:upper_whp_cost} apply to it. 
Their events are uniform in
time, so the bounds hold at the random time $T_j$: except with
probability $O(\delta)$,
\begin{equation} \label{eq:a2rbound}
K^{(j)}_{T_j} \;=\; O(\kappa_\Delta)\, K^{\OB,(j)}_{T_j}
+ O\Bigl(\kappa_\Delta \log\frac1\delta\Bigr),
\qquad
C^{(j)}_{T_j} \;=\; O(\kappa_\Delta)\, C^{\OB,(j)}_{T_j}
+ O\Bigl(\kappa_\Delta\,\frac{\lambda_j}{k\log n}\log\frac1\delta\Bigr) .
\end{equation}

For the replay side, condition on the stream. The stream is a measurable function of $(W_{\mathrm{eff}}, U^{(j)})$: inductively, copy $j$'s transcript up to time $t-1$ is determined by the stream prefix and $U^{(j)}_{\le t-1}$, and the generator's next point is determined by $W_{\mathrm{eff}}$ and that transcript. The replay coins $U'^{(j)}$ are independent of this pair (\Cref{claim:environment}). Conditionally on the stream, therefore, $T_j$ is a deterministic time and the replay is the sketch with i.i.d.\ fresh coins on the fixed sequence $P_{T_j}$ of at most $n$ points.

The high-probability fixed-sequence bounds of \Cref{thm:semiratio_whp} at price $f_j$, with $\mathrm{OPT}_k(P_{T_j}) \le \lambda_j$ and $\mathrm{OPT}_k(P_{T_j})/f_j \le k(1+\lceil\ln n\rceil)$, bound the replay's count and routing cost; since $C = fK + Q$ and $f_j\bigl(k\log n + \log\tfrac1\delta\bigr) = O\bigl(\lambda_j + \tfrac{\lambda_j}{k\log n}\log\tfrac1\delta\bigr)$, they give, conditionally on every stream realization, except with probability $\delta$ over the replay coins,
\begin{equation} \label{eq:replaybound}
K^{\OB,(j)}_{T_j} \;=\; O\Bigl(k\log n + \log\frac1\delta\Bigr),
\qquad
C^{\OB,(j)}_{T_j} \;=\; O\Bigl(\lambda_j
+ \frac{\lambda_j}{k\log n}\log\frac1\delta\Bigr) .
\end{equation} 

We now combine \eqref{eq:a2rbound} and \eqref{eq:replaybound}. There are two failure events. That of \eqref{eq:a2rbound} has probability $O(\delta)$ over the joint randomness $(W_{\mathrm{eff}}, U^{(j)}, U'^{(j)})$. That of \eqref{eq:replaybound} has conditional probability at most $\delta$ given each stream realization; since this holds for every realization, its unconditional probability is at most $\delta$. By a union bound, both displays hold simultaneously except with probability $O(\delta)$. Substituting \eqref{eq:replaybound} into \eqref{eq:a2rbound} and identifying $|S^{(j)}_{T_j}| = K^{(j)}_{T_j}$ and $\mathrm{dist}^{(j)}_{T_j} = Q^{(j)}_{T_j} \le C^{(j)}_{T_j}$ yields both bounds of the lemma.
\end{proof}

\begin{lem}[Capacity transfer]\label{lem:capping}
There is a constant $C_0$ such that for
$B \ge C_0\,\kappa_\Delta\bigl(k\log n + \log\tfrac1\delta\bigr)$,
for each level $j$, except with probability $O(\delta)$: the
deployed and shadow systems of level $j$ coincide through $T_j$; in
particular, copy $j$ never reaches capacity within its window and
satisfies the bounds of \Cref{lem:perlevel} at $T_j$.
\end{lem}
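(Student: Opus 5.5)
The plan is a coupling argument. Fix a level $j$ and run the deployed system and the shadow system of level $j$ on the same randomness $(W,(U^{(i)})_i)$, with the generator reacting to whichever system it faces. The two systems differ only in the capacity gate of copy $j$; the copies $i\ne j$ are capped identically in both. By induction on $t$, I will show that if the shadow copy $j$ has $|S^{(j)}_{t-1}|<B$, then at step $t$ the deployed and shadow systems have identical states. The inductive step goes as follows. The published summaries up to $t-1$ coincide, so the generator emits the same $x_t$ (it is a deterministic function of $W$ and the identical observed transcript). Every copy then receives the same coin. Copies $i\ne j$ evolve identically, and copy $j$'s gate $|S^{(j)}|<B$ is satisfied in the deployed system, so its opening decision is the same as the shadow's.

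Hence the systems coincide through $T_j$ on the event $E_j=\{|S^{(j),\mathrm{shadow}}_{t}|<B \text{ for all } t<T_j\}$. Since the shadow's center count is nondecreasing, it suffices that $|S^{(j),\mathrm{shadow}}_{T_j}|<B$; strictly, the last step also needs the count just before $T_j$ to be below $B$, which is implied.

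By \Cref{lem:perlevel}, with probability $1-\delta$ we have
\[
|S^{(j)}_{T_j}| \le C_1\kappa_\Delta\bigl(k\log n+\log\tfrac1\delta\bigr),
\]
where $C_1$ depends only on $\alpha$. I will choose $C_0>C_1$, so that this bound is strictly below $B$. On this event copy $j$ never reaches capacity within its window, the two systems coincide through $T_j$, and the deployed copy $j$ inherits both bounds of \Cref{lem:perlevel} at $T_j$, since these are functions of the identical trajectory. The failure probability is $O(\delta)$, as claimed.

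The main subtlety is the case $T_j=\infty$, together with making sure the shadow is a legitimate instance of the model. For the second point, \Cref{claim:environment} already establishes that the shadow is a valid instance, and the time-uniform nature of \Cref{thm:upper_whp} makes the bound of \Cref{lem:perlevel} valid at the random time $T_j$, including as a monotone limit when $T_j=\infty$. For the first point, the coupling induction must be stated for all finite $t\le T_j$, with coincidence holding at every finite time. I also need to check that the generator's adaptivity does not break the coupling: since it sees only published outputs, which agree whenever the states agree, the induction is closed.
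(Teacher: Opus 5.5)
Your proposal is correct and follows essentially the same route as the paper. You couple the deployed and shadow systems on common randomness and note that they coincide up to and including the step at which copy $j$'s count reaches $B$; you then invoke \Cref{lem:perlevel} with $C_0$ exceeding its implied constant, so that the shadow's count at $T_j$ stays strictly below $B$. The one point the paper states explicitly, and which your induction already yields, is that the coincidence extends to step $T_j+1$: the point $x_{T_j+1}$ that closes the window is the same in both systems, so the window endpoint $T_j$ is common to the two systems.
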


\begin{proof}
The two systems differ only in copy $j$'s gate. If they agree through step $t-1$, the generator presents the same point, and every copy applies the same rule with the same coin; the opening that brings copy $j$'s count to $B$ passes the gate $|S^{(j)}| < B$. Hence the systems coincide through $\tau^{(j)}_B$, the first time copy $j$'s count reaches $B$, and can diverge only afterwards.

\Cref{lem:perlevel} applies to the shadow. On its event, the shadow's count at $T_j$ is below $B$ for $C_0$ large enough, so $\tau^{(j)}_B > T_j$, and the coincidence covers every step through $T_j$ --- including the step $T_j + 1$ that closes the window, so $T_j$ itself is common to the two systems. Copy $j$ therefore never reaches capacity within its window, coincides with its shadow there, and inherits both bounds at $T_j$. All of this fails only on the failure event of \Cref{lem:perlevel}; no conditioning and no further union bound is involved.
\end{proof}

We now combine the per-level bounds to conclude the proof. 
At every time $t$, some level's window contains $t$. Since $\mathrm{OPT}_k(P_t) \le n\, d_{\max} \le \lambda_P$, the least level $j(t)$ with $\lambda_{j(t)} \ge \mathrm{OPT}_k(P_t)$ exists, and $t \le T_{j(t)}$. By minimality, $\lambda_{j(t)} \le 2\max\bigl(\mathrm{OPT}_k(P_t),\, d_{\min}\bigr)$. Moreover, if $\mathrm{OPT}_k(P_t) > 0$ then more than $k$ distinct locations were presented, so some center of an optimal solution serves two of them, and the $\alpha$-approximate triangle inequality gives $\mathrm{OPT}_k(P_t) \ge d_{\min}/\alpha$. Hence $\lambda_{j(t)} \le 2\alpha\,\mathrm{OPT}_k(P_t)$ whenever $\mathrm{OPT}_k(P_t) > 0$. We call $j(t)$ the \emph{designated} level at time $t$.

\begin{proof}[Proof of \Cref{thm:kmedianmain}(iii)]
Apply \Cref{lem:capping} to each level at failure probability $\delta/(P+1)$; the stated capacity $B$ meets its requirement, since $\log((P+1)/\delta) = O(\log(\log(n\Delta)/\delta))$. A union bound over the $P+1 = O(\log(n\Delta))$ levels leaves one event of probability at least $1-\delta$. On this event, for every level $j$, copy $j$ never reaches capacity within its window and satisfies the bounds of \Cref{lem:perlevel} at $T_j$.

Fix any time $t$ and let $j = j(t)$ be its designated level, so that $t \le T_j$ and $\lambda_j \le 2\max\bigl(\mathrm{OPT}_k(P_t),\, d_{\min}\bigr)$. The certificate is a valid upper bound and $j^*(t)$ minimizes it, so
\[
\mathrm{cost}\bigl(P_t,\, C^{(j^*(t))}_t\bigr)
\;\le\; \mathrm{UB}^{(j)}_t
\;\le\; \alpha^2\rho\,\mathrm{OPT}_k(P_t)
+ (\alpha^2\rho + \alpha)\,\mathrm{dist}^{(j)}_{T_j} ,
\]
using \cref{fact:distortion}, the $\rho$-approximation of $\mathcal{B}$, and the monotonicity $\mathrm{dist}^{(j)}_t \le \mathrm{dist}^{(j)}_{T_j}$. Substituting the distortion bound of \Cref{lem:perlevel} at failure probability $\delta/(P+1)$, with $\lambda_j \le 2\alpha\,\mathrm{OPT}_k(P_t)$ and $\log(P+1) = O(\log\log(n\Delta))$, yields (iii).
\end{proof}

\section{Discussion and Open Problems} \label{sec:openproblems}

\paragraph{Adaptive-to-OPT ratio.}
In this work, we introduce the \textit{replay} benchmark, where the performance of a randomized algorithm on an adaptive input sequence is compared against the performance of a fresh instance of the algorithm with independent randomness (called the \textit{replay}) on the same input sequence. This benchmark is designed to pinpoint the cost of adaptivity in a way that decouples it from the hardness of the input sequence itself. The classical adaptive adversaries of competitive analysis \citep{BBKTW94}, and the competitive bounds of \citet{fotakis2008competitive} and \citet{Lang:SODA2018} for the Meyerson sketch, compare against the offline optimal solution. As a byproduct of our analysis, 
\Cref{cor:transfer} allows us to compose the adaptivity gap of the Meyerson sketch with the
per-sequence guarantee of \citet{fotakis2008competitive}, yielding an upper bound of
\begin{equation} \label{acompetitive:eq}
O\left(\frac{\log\Lambda}{\log\log\Lambda}\cdot \frac{\log n}{\log\log n}\right)=O\left(\frac{\log^2 n}{(\log\log n)^2} \right)
\end{equation} on the adaptive-to-OPT ratio. A natural question is whether this composed adaptive-to-OPT ratio is tight. In particular, note that the adaptive-to-OPT ratio is at least the \emph{maximum} of the two
factors: The generator can present a hard fixed sequence, or run
the attack of \Cref{thm:anticert}, whose replay is $O(1)$-competitive
on its own stream. We leave the gap between them open:
\begin{openprob}\label{op:adaptive2optratio}
Quantify the adaptive-to-OPT ratio of the Meyerson sketch.
\end{openprob}

\paragraph{The replay benchmark beyond Meyerson.}
Our definitions of adaptivity ratio and adaptivity gaps in \Cref{sec:model} generally apply for any randomized online optimization algorithm whose internal
state is released to the adversary.

For instance, in the \textit{resettable streaming} model, the sketch maintains an approximate statistic 
under insertions of (copies of) elements and ``reset'' operations \footnote{A reset operation on an element $x$ sets the count to $0$, i.e. $c_x \gets 0$.}. Specifically, consider the following simple sketch for counting distinct elements, by maintaining a Bernoulli sample: for each element $x$ inserted into the stream of length $T$, the element will be resampled with a fixed probability $p$, even if $x$ is already present in the sample
\citep{GemullaLH:VLDB2006,GemullaLH:PODS2007,CohenCD:SIGMETRICS2012}. If $x$ is sampled, then set $S \gets S \cup \{x\}$. Otherwise, if a reset operation occurs for $x$ and $x \in S$, then simply remove $x$ from $S$; if $x$ was not in the sample to begin with, then the algorithm does nothing and moves on to process the next stream update. 
Then, to estimate the number of distinct elements under insertions and resets, the sketch continuously reports the unbiased estimator $|S|/p$. The behavior
of this classical Bernoulli sample-based sketch under adaptive inputs was studied in
\citep{CohenGNS:ICML2026}, which showed that simple adaptive attacks force extreme gaps: deleting each key that is revealed to have been sampled empties the adaptive sample while the replay retains
$\Theta(pT)$ keys, an unbounded deflation gap; deleting the unsampled keys instead inflates by $\Theta(1/p)$; and even on insertion-only
streams, re-inserting revealed keys deflates by $\Theta(1/p)$ through
the resampling semantics. \citet{CohenGNS:ICML2026} proposed a
robustified sketch that protects the random coins (i.e. the sampled elements $S$) with tools from differential
privacy.

At the opposite extreme, a different, insertion-only sampler with a single coin per distinct key (equivalently, persistent per-key
randomness, as in Bernoulli sampling by hashing or MinHash sketches~\citep{FlajoletMartin85,ECohen6f})
has adaptivity gap exactly $1$: for every stopping time, the expected
sample size is $p$ times the expected number of distinct keys in both
runs, since each key contributes one coin, at its first arrival, that
no later operation can re-flip. The replay lens thus places these summaries on one scale: fully robust (single-coin insertions-only samplers, gap
$1$), robust at a modest instance-dependent price (the Meyerson
sketch, $\Theta(\log\Delta/\log\log\Delta)$), and fragile without
active robustification (resampling and resettable samplers,
polynomial or unbounded gaps).
\begin{openprob}\label{op:replaytheory}
 Is there a
general theory, such as a structural parameter of the sampling rule, that recovers these adaptivity gaps?
\end{openprob}

\paragraph{More robust OFL sketches.}
Our adaptivity gap bounds are specific to the (randomized) Meyerson sketch. Deterministic sketches have adaptivity ratio identically $1$. However, even under adaptive inputs, there is no known deterministic  OFL sketch with a better competitive ratio than Meyerson \cref{acompetitive:eq}, with $\mathrm{poly}(k, \log n)$ space, where $k$ is the number of open facilities.
In particular, the deterministic online facility location algorithm of \citet{fotakis2008competitive}, which attains the optimal $\Theta(\log n/\log\log n)$ competitive ratio, maintains space linear in $n$.
\begin{openprob}\label{op:bettersketch}
 Is there a more robust OFL sketch, with $\Theta(\log n/\log\log n)$ competitive ratio, adaptivity gaps $o(\log\Delta/\log\log\Delta)$ that uses 
 $\mathrm{poly}(k, \log n)$ space?
\end{openprob}

\paragraph{Routing adaptivity gaps.}
The known bounds on the routing ratio, all of them attacks, are summarized below. 
No upper bound on the routing ratio is known.
\begin{center}
\begin{tabular}{lll}
routing ratio & inflation & deflation \\
\hline
one-sided & $\Om(\Lambda) = \Om(\Delta)$ & $\Om(\log\Delta/\log\log\Delta)$ \\
 & {\small synthetic finite metric (\Cref{prop:onesidedstarve})} & {\small(\Cref{prop:reverse})} \\[3pt]
two-sided & $2^{\Omega(D)}$ at $\Delta = O(1)$ & $2^{\Omega(D)}$ at $\Delta = O(1)$ \\
 & {\small Euclidean $\mathbb{R}^{D+1}$ (\Cref{prop:starve})} & {\small(\Cref{cor:starveflip})} \\
\end{tabular}
\end{center}
The natural open questions are the gaps in this table:
\begin{openprob}\label{op:routing}
\begin{itemize}
\item \emph{One-sided lower bounds:} can a one-sided generator force an unbounded routing ratio at constant aspect ratio, matching \Cref{prop:starve}, or deflation $\Om(\Delta)$, matching the inflation of \Cref{prop:onesidedstarve}? The obstruction to both is the same: certifying the replay's \emph{absence} cheaply from one-sided observations.
\item \emph{Fixed dimension:} can the routing ratio be made unbounded at constant aspect ratio in fixed ambient dimension, say on the real line? The ratios of our two-sided constructions grow with the dimension.
\item \emph{Upper bounds:} does the aspect ratio together with the dimension bound the two-sided routing ratio, and does the aspect ratio alone bound the one-sided one?
\end{itemize}
\end{openprob}

\paragraph{Improved approximation factor in online $k$-clustering.}
A limitation of \Cref{thm:kmedianmain} is that its error bound is the distortion $\mathrm{dist}_t$ (\Cref{fact:distortion}), and the distortion admits no size--accuracy tradeoff: at price $f$ it is $O(\alpha\,\mathrm{OPT}_k + f\,k\log n)$, keeping more centers by lowering the price shrinks only the second term, and routing cost $\Omega(\mathrm{OPT}_k)$ is unavoidable under any assignment of the points to few representatives.\footnote{On the uniform metric on $m$ points, with all pairwise distances $r$, any assignment to $B \le m/2$ representatives routes $m - B \ge m/2$ points at distance $r$ each, while $\mathrm{OPT}_k = (m-k)r \le mr$ for every $k$. The weighted summary may still \emph{estimate} costs well on such benign instances; the point is that the architecture certifies and, adaptively, controls the error only through the distortion.} The distortion caps the certified accuracy at a constant factor already under oblivious inputs, and adaptively $\kappa_\Delta$ multiplies it (\Cref{lem:perlevel}; the inflation is realized by the routing claim of \Cref{thm:anticert}), giving the factor $O(\rho\,\kappa_\Delta)$. In sampling-based coresets, by contrast, space buys accuracy: the estimation error is $\eps\cdot\mathrm{OPT}_k$ with $\eps$ shrinking in the sample size, and no $\rho$ appears until an offline solver is applied to the coreset. Merge-and-reduce maintains such coresets, but not as an online summary. We \emph{suspect} that an online summary can be a $(1\pm\eps)$-coreset under adaptive inputs, by adding a sampling layer with respect to the online centers: points sampled with probabilities determined by their distances to the current center set, the Meyerson centers serving as an online bicriteria solution.
The layer size is larger by a factor depending on the ambient dimension or on $\log|\Mm|$. Its adaptive robustness requires a separate argument, as the sampling probabilities are state dependent. 
\begin{openprob}\label{op:samplinglayer}
Is there an online summary that is, under adaptive inputs, a $(1\pm\eps)$-coreset at all times, with space comparable to \Cref{thm:kmedianmain}?
\end{openprob}

\paragraph*{AI Disclosure.}
We used AI, mostly Claude (Anthropic) and also Chat GPT (OpenAI) and Gemini (Google) to assist in preparing this paper. For producing initial drafts of some arguments, for drafting and revising the exposition, and for formalizing and auditing proofs. The tools were used interactively and materially affected the results and writing throughout. The authors verified the correctness and originality of all content, and take full responsibility for all claims.

\section*{Acknowledgments}

\paragraph{Edith Cohen:}  Partially supported by the Israel Science Foundation (grant 1156/23). 

\paragraph{Elena Gribelyuk:} Partially supported by Huacheng Yu's NSF CAREER award CCF-2339942.

\paragraph{Uri Stemmer:} Partially supported by the Israel Science Foundation (grant 1419/24), and the Blavatnik Research Foundation.

\bibliographystyle{plainnat}
\bibliography{refs,adaptiverefs}

\appendix

\section{Extensions for the Meyerson Sketch on oblivious inputs}
\label{app:semiratio}
We present two self-contained extensions of prior results for the analysis of the Meyerson Sketch on oblivious (fixed in advance) inputs: We extend the competitive ratio bound of \citet{fotakis2008competitive} to semi-metrics (used to instantiate \cref{cor:transfer}) and extent the parameter range for the high probability bounds of \citet{BMORST11} (used in the proof  of \cref{thm:kmedianmain}).

\subsection{Competitiveness over Semi-Metrics}
The Meyerson sketch's optimal oblivious
competitive ratio was originally established for metrics
\citep{meyerson2001online,fotakis2008competitive}. In this section, we verify that it extends to semi-metrics which satisfy the $\alpha$-approximate triangle inequality (See \cref{def:semimetric}).

\begin{thm}[Competitive ratio over semi-metrics]\label{thm:semiratio}
Let $d$ satisfy the $\alpha$-approximate triangle inequality,
$\alpha \ge 1$. For every fixed sequence $x$ of $n$ points and every
price $f > 0$, the Meyerson sketch satisfies, for every $\mu \ge 2$,
\[
\E\bigl[C\bigr] \;\le\;
\Bigl(2\log_\mu n + 2\alpha\mu + 6\alpha + 4\Bigr)\cdot
\mathrm{OPT}_f(x) .
\]
Choosing $\mu = 2 + \lceil \ln n/\ln\ln n\rceil$ (for $n \ge 3$) yields
$\E[C] = O\bigl(\alpha \cdot \tfrac{\log n}{\log\log n}\bigr)\cdot
\mathrm{OPT}_f(x)$.
\end{thm}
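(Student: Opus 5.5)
We will follow the Meyerson/Fotakis ring-decomposition argument, checking that each use of the triangle inequality costs at most a factor $\alpha$. By \Cref{lem:equiv} (with a deterministic generator) we have $\E[C] \le 2f\,\E[K]$, and each step's expected cost is within a factor $2$ of $\min\{f,d_t\}$. It therefore suffices to bound $\sum_t \E[\min\{f, d(x_t,S_{t-1})\}]$ against $\mathrm{OPT}_f(x)$. Fix an optimal solution $F^*$, let $c^*$ denote the optimal facility serving each point, and write $d_t^* = d(x_t, c^*)$. Since $\mathrm{OPT}_f = f|F^*| + \sum_t d_t^*$, we charge each point either to its own routing cost $d_t^*$ or to the facility price $f$ of its optimal center.

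Now fix one optimal facility $c$ with cluster $X_c$, and let $A = \sum_{t \in X_c} d_t^*/|X_c|$ be the average distance. Partition $X_c$ into rings around $c$: an inner ball of radius comparable to $A$, and annuli $\{x : \mu^{i-1} r_0 < d(x,c) \le \mu^i r_0\}$ with base $\mu$. The number of relevant annuli is $O(\log_\mu n)$, because points farther than $n\cdot A$ from $c$ are impossible (or are charged directly to their own $d^*$). For each annulus, we split its points into two phases.

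\emph{Before} some center has been opened within distance $\alpha(1+\mu)\,d(x,c)$ of $c$, the points of the annulus each open with probability at least about $d_t/f$. In expectation, the phase pays $O(f)$ before such an opening happens, and this is charged to the facility price $f$ of $c$, once per annulus. That gives the $2\log_\mu n$ term summed over clusters, after the usual rescaling by $f$ versus total.

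\emph{After} such a center exists, each later point $x$ in the annulus has
\[
d(x,S)\le \alpha\bigl(d(x,c) + \alpha(1+\mu) d(x,c)\bigr) \lesssim \alpha\mu\, d_t^*,
\]
up to the annulus ratio. These points are charged to their own routing costs, with factor $O(\alpha\mu + \alpha)$, which produces the $2\alpha\mu + 6\alpha$ terms. Points in the inner ball are handled similarly with radius $\Theta(A)$. The first opening there costs $O(f)$ in expectation, and afterwards $d(x,S) \le \alpha(d_t^* + O(A))$, summing to $O(\alpha)\sum d_t^*$. That accounts for the constant $4$ and the remaining $\alpha$ terms.

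The main obstacle is making the "before the first opening" phase charge correctly. The approximate triangle inequality degrades the radius at which a center opened at one point of an annulus serves the others: the separation factor becomes $\alpha(1+\mu)$ rather than $1+\mu$. I would handle this by defining the serving event as a center within $\mu$ times the annulus radius. I would then show, via the fair-count identity with stopping time equal to the first such opening, that the expected accumulated $\sum \min\{f,d_t\}$ before it is at most $f$ plus $O(\alpha)$ times the annulus points' own $d^*$; the $\alpha$ losses feed only the linear-in-$\mu$ terms. Finally, we choose $\mu = 2+\lceil \ln n/\ln\ln n\rceil$, which balances $\log_\mu n$ against $\alpha\mu$ and yields $O(\alpha\log n/\log\log n)$.
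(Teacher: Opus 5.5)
Your outline is the paper's own argument: an average-radius core plus base-$\mu$ rings in each optimal cluster, at most $2+\log_\mu n$ nonempty groups, $2f$ per group for the waiting phase, and a triangle-inequality bound afterwards. Your reduction to $2\,\E\bigl[\sum_t \min\{f,d_t\}\bigr]$ is a fine substitute for the paper's waiting-cost claim. The flaw is in how you end the waiting phase, and that is exactly where the stated constants are lost.

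Suppose the phase ends once a center appears within distance $\alpha(1+\mu)\,d(x,c)$ of $c$. The radius depends on $x$, so this is not one stopping time per annulus. Worse, your own display evaluates to $d(x,S)\le\alpha\bigl(1+\alpha(1+\mu)\bigr)d(x,c)$, which is of order $\alpha^2\mu\,d^*_t$, not $\alpha\mu\,d^*_t$. Using one radius for the whole annulus (the largest) is worse still, of order $\alpha^2\mu^2$. Neither gives the coefficient $2\alpha\mu+6\alpha$, nor the linear dependence on $\alpha$ in the final rate. Your repair, ``within $\mu$ times the annulus radius,'' works only if it means the annulus's outer radius $\mu^i r_0$. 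Read as $\mu\cdot\mu^i r_0$, it yields $\alpha(1+\mu^2)$.

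The clean choice, which is what the paper does, is to end the phase at the first opening at a point $q$ of the annulus itself. Then the approximate triangle inequality is used once, through $c$: $d(p,q)\le\alpha\bigl(d(p,c)+d(q,c)\bigr)$. For a ring, $d(q,c)\le\mu^i r_0<\mu\,d(p,c)$, giving $\alpha(1+\mu)\,d(p,c)$. For the core, $d(q,c)\le A$, giving $\alpha\bigl(d(p,c)+A\bigr)$. With the factor $2$, these sum to $\bigl(2\alpha(1+\mu)+4\alpha\bigr)$ times the routing part of $\mathrm{OPT}_f(x)$.

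You have also misplaced the obstacle: the waiting phase needs no triangle inequality and produces no $d^*$ term. Let $h_t$ indicate that $x_t$ lies in the annulus and let $\tau$ be the first opening there. The fair-count identity gives $\E\bigl[\sum_{t\le\tau}h_t\min\{f,d_t\}\bigr]=f\,\E\bigl[\sum_{t\le\tau}h_t o_t\bigr]\le f$, since at most one annulus opening occurs by time $\tau$. That the opening probabilities are ``at least about $d_t/f$'' plays no role.
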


We use the following two simple bounds:
\begin{fact}[Step Cost]\label{fact:stepcost}
   A step presented at current distance
$\delta = d(x_t, S_{t-1})$ has expected cost
\begin{equation}\label{eq:stepcost}
f\min\{1, \delta/f\} + \delta\bigl(1 - \min\{1, \delta/f\}\bigr)
\;\le\; 2\min\{\delta, f\} \;\le\; 2\delta .
\end{equation} 
\end{fact}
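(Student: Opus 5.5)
The inequality in \eqref{eq:stepcost} is a two-case computation on the opening probability $q = \min\{1, \delta/f\}$. The conditional expected cost of the step is $fq + \delta(1-q)$: the sketch pays $f$ when it opens, which happens with probability $q$, and pays $\delta$ when it routes, with probability $1-q$. Since the coin is fresh and $\delta$ is fixed before it is drawn, this is exactly the expected step cost. The only point needing care is the convention $0\cdot\infty = 0$ when $\delta = +\infty$, which handles the first point.

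\emph{Case $\delta \ge f$.} Here $q = 1$, so the expression equals $f$; the convention covers $\delta = \infty$. Then $f = \min\{\delta, f\} \le 2\min\{\delta, f\}$.

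\emph{Case $\delta < f$.} Here $q = \delta/f$, so the cost is
\[
\delta + \delta\Bigl(1 - \frac{\delta}{f}\Bigr) = 2\delta - \frac{\delta^2}{f} \le 2\delta = 2\min\{\delta, f\}.
\]

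In both cases the final inequality $2\min\{\delta, f\} \le 2\delta$ is immediate. This is the same calculation as the upper half of the cost--center bound in the proof of \Cref{lem:equiv}, so I would cite that computation rather than redo it. There is no real obstacle here; the only subtle point is the infinite-distance convention.
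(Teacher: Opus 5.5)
Your proof is correct and takes the same approach as the paper: the two-case computation in the proof of \Cref{lem:equiv}. There, $q=1$ gives cost $f=\min\{\delta,f\}$ (with $0\cdot\infty=0$ covering the first point), and $\delta<f$ gives $2\delta-\delta^2/f\le 2\delta=2\min\{\delta,f\}$.
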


\begin{claim}[Waiting cost]\label{clm:wait}
Fix any subset $G$ of the presented points and let $\tau$ be the first
step at which the sketch opens a facility at a point of $G$. The
expected routing paid by points of $G$ presented before $\tau$ is at
most $f$.
\end{claim}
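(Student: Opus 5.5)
Let $G$ be fixed and consider the steps presenting points of $G$, in order, up to and including $\tau$. The plan is to show that the expected routing paid at these steps before $\tau$ is dominated by $f$ times the expected number of openings at points of $G$ up to $\tau$, which is at most one.

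The key pointwise fact comes from Fact~\ref{fact:stepcost}. At a step $t$ presenting $x_t\in G$ at current distance $\delta_t$, the opening probability is $q_t=\min\{1,\delta_t/f\}$. The expected routing paid is $(1-q_t)\delta_t \le f q_t$, since $\delta_t\le f q_t$ when $\delta_t<f$ and the routing is $0$ when $q_t=1$ (using the convention $0\cdot\infty=0$). So each such step's expected routing is at most $f$ times its opening probability.

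Now let $h_t=\ind{\{x_t\in G,\ t\le\tau\}}$, where $\tau$ is the first opening at a point of $G$. The event $\{t\le\tau\}$ means there was no opening at a $G$-point before $t$, so it is determined by the coins up to $t-1$. Hence $h_t$ is predictable, and on a fixed sequence membership in $G$ is deterministic.

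We then apply the fair-count identity \eqref{eq:faircount} of Lemma~\ref{lem:equiv} to the fixed sequence (or just linearity with the tower property). This gives
\[
\E\Bigl[\sum_t h_t(1-o_t)\delta_t\Bigr]
= \E\Bigl[\sum_t h_t (1-q_t)\delta_t\Bigr]
\le f\,\E\Bigl[\sum_t h_t q_t\Bigr]
= f\,\E\Bigl[\sum_t h_t o_t\Bigr]\le f.
\]
The first equality holds because $(1-o_t)\delta_t$ has conditional mean $(1-q_t)\delta_t$ given the past. The last inequality holds because at most one $G$-step with $t\le\tau$ opens, namely step $\tau$ itself. Points of $G$ presented at steps strictly before $\tau$ are exactly those with $h_t=1$ and $o_t=0$, which is the routing we wanted to bound.

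The only delicate point is steps with $\delta_t=\infty$ or $\delta_t\ge f$. These open surely, so they contribute no routing and end the sum; they are handled by the convention. I expect no other obstacle. Monotone convergence handles infinite sequences.
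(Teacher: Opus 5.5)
Your proof is correct and is essentially the paper's argument. Both rest on the pointwise bound $(1-q_t)\delta_t \le f q_t$, and then on the fact that the expected sum of opening probabilities over the $G$-steps up to $\tau$ equals $\Pr[\tau<\infty]\le 1$; the paper gets this by telescoping the survival probabilities $\Pr[\tau>l-1]-\Pr[\tau>l]$, which is exactly your fair-count identity applied with the predictable indicator $h_t$.
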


\begin{proof}
List the points of $G$ in arrival order with current distances
$\delta_1, \delta_2, \dots$ and opening probabilities
$q_l = \min\{1, \delta_l/f\}$; each $\delta_l$ (hence $q_l$) is a
function of the past, and the $l$-th coin is fresh, so
$\E[\mathbf{1}\{\tau > l\}] = \E[\mathbf{1}\{\tau > l-1\}(1 - q_l)]$.
Since $\delta_l(1 - q_l) \le f q_l$ pointwise (for $\delta_l \le f$
this is $f(q_l - q_l^2) \le f q_l$; for $\delta_l > f$ the left side
is $0$), the expected pre-$\tau$ routing telescopes:
\[
\E\Bigl[\sum_{l < \tau} \delta_l\Bigr]
= \sum_l \E\bigl[\mathbf{1}\{\tau > l-1\}(1-q_l)\,\delta_l\bigr]
\le f \sum_l \E\bigl[\mathbf{1}\{\tau > l-1\} - \mathbf{1}\{\tau > l\}\bigr]
\le f . \qedhere
\]
\end{proof}

\begin{proofof}{\Cref{thm:semiratio}}
Fix an optimal facility-location solution: facilities $F^*$ and, for
each $c^*_i \in F^*$, the cluster $C^*_i$ of points assigned to it,
with $d^*_p = d(p, c^*_i)$ for $p \in C^*_i$,
$A^*_i = \sum_{p \in C^*_i} d^*_p$, and average radius
$a^*_i = A^*_i/|C^*_i|$, so that
$\mathrm{OPT}_f(x) = f|F^*| + \sum_i A^*_i$.

Decompose each cluster into the \emph{core}
$S_0 = \{p \in C^*_i : d^*_p \le a^*_i\}$ and the \emph{rings}
$S_j = \{p : \mu^{j-1} a^*_i < d^*_p \le \mu^j a^*_i\}$ for $j \ge 1$.
Since a point of ring $j$ contributes more than $\mu^{j-1} a^*_i$ to
$A^*_i = |C^*_i|\,a^*_i$, ring $j$ is nonempty only if
$\mu^{j-1} < |C^*_i| \le n$; hence at most $2 + \log_\mu n$ of the sets
$S_0, S_1, \dots$ are nonempty. (If $a^*_i = 0$ then every $d^*_p = 0$
and all points lie in the core.)

Charge each nonempty set $S_j$ as follows.
By by \Cref{clm:wait}, we know that before the sketch first
opens a facility at a point of $S_j$, the routing paid by points of
$S_j$ is at most $f$ in expectation, and the opening
that ends the wait costs at most $f$; together, this corresponds to routing cost at most $2f$ per
nonempty set, hence at most $2f(2 + \log_\mu n)$ per cluster.

After a facility is open at some $q \in S_j$, every subsequent
$p \in S_j$ has current distance
$\delta \le d(p, q) \le \alpha\bigl(d^*_p + d^*_q\bigr)$. Note that this is the only application
of the approximate triangle inequality. For a ring $j \ge 1$:
$d^*_q \le \mu^j a^*_i < \mu\, d^*_p$, so
$\delta \le \alpha(1 + \mu)\, d^*_p$ and, by \eqref{eq:stepcost}, the
step costs at most $2\alpha(1+\mu)\, d^*_p$ in expectation. For the
core: $d^*_q \le a^*_i$, so the step costs at most
$2\alpha(d^*_p + a^*_i)$, and summing over the core,
$2\alpha\sum_{p \in S_0}(d^*_p + a^*_i) \le 4\alpha A^*_i$.
Summing the ring contributions,
$2\alpha(1+\mu)\sum_{p \notin S_0} d^*_p \le 2\alpha(1+\mu) A^*_i$.

Adding up the total cost over all clusters, we obtain
\[
\E[C] \;\le\; 2\bigl(2 + \log_\mu n\bigr)\, f|F^*|
\;+\; \bigl(2\alpha(1+\mu) + 4\alpha\bigr) \sum_i A^*_i
\;\le\; \bigl(2\log_\mu n + 2\alpha\mu + 6\alpha + 4\bigr)\,
\mathrm{OPT}_f(x) .
\]
With $\mu = 2 + \lceil \ln n/\ln\ln n \rceil$ we have
$\alpha\mu = O(\alpha \log n/\log\log n)$ and
$\log_\mu n = \ln n/\ln\mu = O(\log n / \log\log n)$, proving the rate.
\end{proofof}

\subsection{High-probability bounds}

\citet[Theorem~3.1]{BMORST11} established the high probability bounds below for semi-metrics but at the fixed price
$f = L/\bigl(k(1+\log_2 n)\bigr)$ with $L \le \mathrm{OPT}_k(P)$ and
failure probability $1/n$; see also \citet{ShindlerWM11}. We give a
self-contained price- and $\delta$-parametrized form, using the decomposition of \Cref{thm:semiratio} into core and ring sets.

\begin{thm}[High-probability fixed-sequence bounds]\label{thm:semiratio_whp}
Let $(\Mm,d)$ satisfy \cref{def:semimetric} with
$\alpha \ge 1$. Run the Meyerson sketch at price $f > 0$ on a
fixed sequence $P\subset \Mm$ of at most $n$ points, and fix $k \ge 1$ and
$\delta \in (0,1)$. With probability at least $1 - \delta$,
simultaneously,
\[
K \;\le\; k\bigl(2 + \lgp n\bigr)
+ \frac{15\alpha}{2}\,\frac{\mathrm{OPT}_k(P)}{f}
+ \frac{3}{2}\ln\frac{2}{\delta}
\qquad\text{and}\qquad
Q \;\le\; 5\alpha\,\mathrm{OPT}_k(P)
+ 2f\Bigl(k\bigl(2 + \lgp n\bigr) + \ln\frac{2}{\delta}\Bigr) ,
\]
with constants depending only on $\alpha$.
\end{thm}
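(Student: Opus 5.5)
Fix an optimal $k$-clustering $c^*_1,\dots,c^*_k$ of $P$ with clusters $C^*_i$, distances $d^*_p = d(p, c^*_i)$ and costs $A^*_i$, so that $\sum_i A^*_i = \mathrm{OPT}_k(P)$. I would reuse the core/ring decomposition from the proof of \Cref{thm:semiratio}, but make the radius floor depend on the price. Set $r_i = \max\{a^*_i, \cdot\}$, or more simply use the core $S_{i,0} = \{p : d^*_p \le a^*_i\}$ and dyadic rings $S_{i,j} = \{p : 2^{j-1}a^*_i < d^*_p \le 2^j a^*_i\}$ with $\mu = 2$. As before, at most $2 + \lgp n$ of these sets are nonempty per cluster, so there are at most $m := k(2+\lgp n)$ nonempty sets in total.

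Every step is classified as either a \emph{first-hit} step or a \emph{covered} step. A first-hit step is a step of a set $S$ occurring before, or at, the first opening inside $S$. A covered step is a later step of $S$. At a covered step the current distance satisfies $\delta \le \alpha(d^*_p + d^*_q)$. This gives $\delta \le 3\alpha d^*_p$ on rings and $\delta \le \alpha(d^*_p + a^*_i)$ on the core. Hence the sum of covered opening probabilities is $\Sigma_{\mathrm{cov}} \le \sum_p \delta_p/f \le (3\alpha+2\alpha)\,\mathrm{OPT}_k/f$, and the covered routing is pathwise at most $5\alpha\,\mathrm{OPT}_k$, deterministically, since routing is at most $\delta$. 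This already yields the $5\alpha\,\mathrm{OPT}_k$ term in the $Q$ bound. Openings on first-hit steps number at most one per set, so they contribute at most $m$ to $K$ pathwise. Covered openings are handled by \Cref{lem:ville}(i), applied to the predictable covered indicator. With $\gamma = 1/2$ and failure probability $\delta/2$, the number of covered openings is at most $\tfrac32\Sigma_{\mathrm{cov}} + 4\ln\tfrac2\delta$. The constants then have to be tuned: the statement wants $\tfrac{15\alpha}{2}$ and $\tfrac32\ln\tfrac2\delta$, which suggests using the tighter form $\frac{e^\gamma-1}{\gamma}A + \frac1\gamma\ln$ from inside that proof, with $\gamma$ chosen so that the constants match (e.g.\ $\gamma=2/3$ gives $\tfrac1\gamma=\tfrac32$). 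I would not fight over constants at the planning stage.

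The remaining term is the routing paid at first-hit steps that do not open, i.e.\ the ``waiting'' cost. \Cref{clm:wait} bounds this by $f$ per set, but only in expectation, and a high-probability version summed over $m$ sets is needed. The plan is to treat all waiting steps together: let $h_t$ be the predictable indicator that step $t$ is a first-hit step of its set. The routing increment is $(1-o_t)\delta_t h_t$, and pathwise $\delta_t(1-q_t) \le f q_t$. So the waiting routing satisfies $\sum_t h_t(1-o_t)\delta_t \le f\sum_t h_t q_t$, up to the martingale difference between $(1-o_t)\delta_t$ and its conditional mean. I would apply \Cref{lem:ville}(ii) to $X_t = h_t o_t$ and $\mu_t = h_t q_t$. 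Since $\sum h_t o_t \le m$ pathwise (at most one first-hit opening per set), this gives $\sum_t h_t q_t \le (1+\gamma)m + \tfrac2\gamma\ln\tfrac2\delta$ with probability $1-\delta/2$. The waiting routing is then bounded directly, pathwise, by the observed steps rather than by a conditional mean: for steps that reach $\delta_t \ge f$, $q_t = 1$ forces an opening, and for the others the realized routing equals $\delta_t = f q_t$ exactly, so $\sum h_t(1-o_t)\delta_t \le f\sum h_t q_t$ holds pathwise. This avoids a second concentration step and yields $Q_{\mathrm{wait}} \le 2f(m + \ln\tfrac2\delta)$ with $\gamma=1$. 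A union bound over the two events gives failure probability $\delta$.

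The main obstacle is this step. A naive per-set geometric tail bound would cost a factor $\log(m/\delta)$ per set, and the point of the argument is to aggregate all the sets into a single predictable process. Exploiting the pathwise cap of one opening per set, together with the identity that realized routing equals $f q_t$, is what removes that loss.
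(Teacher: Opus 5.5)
Your proof is correct, and its architecture is the paper's. You use the same core/dyadic-ring decomposition at $\mu=2$, with at most $k(2+\lgp n)$ groups. You use the same predictable first-hit/covered split. You use the same pathwise facts: one first-hit opening per group, covered routing at most $5\alpha\,\mathrm{OPT}_k(P)$, and covered opening-probability mass at most $5\alpha\,\mathrm{OPT}_k(P)/f$. And you finish with a union bound over two events of probability $\delta/2$.

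\emph{Covered openings.} Your tuned step matches the paper's \Cref{clm:opentail}. Its supermartingale $2^{N}e^{-A}$ is exactly the unrounded form of \Cref{lem:ville}(i) at $\gamma=\ln 2$. Your choice $\gamma=2/3$ also gives the stated constants, since $(e^{2/3}-1)/(2/3)\approx 1.42\le\tfrac32$ and $1/\gamma=\tfrac32$. The rounded lemma statement at $\gamma=1/2$ would only give $4\ln\tfrac2\delta$, as you noticed.

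\emph{Waiting cost.} This is where you take a genuinely different route. The paper builds a bespoke supermartingale $\exp\bigl(Q^{\mathrm{wait}}_t/(2f)-O_t\bigr)$ in \Cref{clm:waittail}, which tracks realized waiting routing against the number of groups already holding a facility. You instead note that $Q^{\mathrm{wait}}\le f\sum_t h_tq_t$ holds pathwise. The only probabilistic input you then need is the reverse bound \Cref{lem:ville}(ii) on first-hit probability mass, whose realized count is capped by $m$.

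\emph{Comparison.} Your version is more modular: it reuses the main-text time-uniform lemma and isolates the single probabilistic fact required. In unrounded form it even improves the waiting constant from $2$ to $1/(1-e^{-1})\approx 1.58$. The paper's version is self-contained within the appendix and needs only Markov's inequality at the final step.
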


\begin{claim}[Waiting-cost tail]\label{clm:waittail}
Partition the presented points into groups, of which at most $\Gamma$
are nonempty, and let $Q^{\mathrm{wait}}$ be the total routing paid by
points that arrive before the first opening at a point of their own
group. Then for every $m \ge 0$,
\[
\Pr\bigl[\,Q^{\mathrm{wait}} > 2f\,(\Gamma + m)\,\bigr] \;\le\; e^{-m} .
\]
\end{claim}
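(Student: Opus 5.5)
We will prove the tail bound with an exponential supermartingale in the spirit of \Cref{lem:ville}, applied to the waiting cost normalized by $2f$. Think of each group $G$ as \emph{waiting} until its first opening and \emph{plugged} afterwards; only waiting points contribute to $Q^{\mathrm{wait}}$. Order the steps by time. For a step $t$ whose point lies in a still-waiting group, let $\delta_t$ be its current distance and $q_t=\min\{1,\delta_t/f\}$. Its contribution to the waiting cost is $(1-o_t)\delta_t$. Whether or not it opens, it contributes an increment $\mathbf{1}\{o_t=1\}$ to the number of groups that become plugged, and that number is at most $\Gamma$ pathwise.

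The key per-step inequality compares the waiting routing with the plug events. We use the normalized routing $Y_t=(1-o_t)\delta_t/(2f)\in[0,1/2]$, noting $\delta_t(1-q_t)\le f q_t$ and, when $\delta_t<f$, also $\delta_t\le f$. For a parameter $\gamma$ we aim to show that
\[
Z_t=\exp\Bigl(\sum_{s\le t}Y_s-\sum_{s\le t}o_s\Bigr),
\]
with sums over waiting steps, is a supermartingale; we would take $\gamma=1$. Conditionally on the past, a waiting step gives a factor $\E[e^{Y_t-o_t}\mid\cdot]=q_te^{-1}+(1-q_t)e^{\delta_t/(2f)}$. We must show this is $\le1$. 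Put $x=\delta_t/f\in[0,1)$, so $q=x$. The requirement is $xe^{-1}+(1-x)e^{x/2}\le1$. This holds at $x=0$ with equality. Its derivative at $0$ is $e^{-1}-1+\tfrac12<0$. The function is convex, and at $x\to1$ it equals $e^{-1}<1$. Hence it lies below $1$ on $[0,1]$. When $\delta_t\ge f$ the point opens surely, and the factor is $e^{-1}\le1$. Steps in plugged groups contribute neither term, so they give factor $1$.

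Then Ville's inequality (\Cref{fact:ville}), or simply Markov's inequality at the end of the finite sequence, gives $\Pr[\sup_t Z_t\ge e^m]\le e^{-m}$. On the complement, $Q^{\mathrm{wait}}/(2f)\le\#\{\text{plugs}\}+m\le\Gamma+m$, since each nonempty group is plugged at most once. This yields $\Pr[Q^{\mathrm{wait}}>2f(\Gamma+m)]\le e^{-m}$.

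The main obstacle is the elementary inequality certifying the supermartingale factor. The normalization $2f$ (rather than $f$) is exactly what makes the factor at most $1$ with unit weight on openings. I expect the convexity check above to settle it; if not, the fallback is to use $e^{x/2}\le1+x/2+x^2/4$ on $[0,1]$.
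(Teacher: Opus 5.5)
Your argument is the paper's proof. It uses the same process $\exp\bigl(Q^{\mathrm{wait}}_t/(2f) - O_t\bigr)$, the same one-step multiplier $(1-q)e^{q/2} + q e^{-1}$, and Markov's inequality at the final step together with $O_N \le \Gamma$.

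One step needs correcting. The function $g(x) = x e^{-1} + (1-x)e^{x/2}$ is concave, not convex, since $g''(x) = -\tfrac{3+x}{4}e^{x/2} < 0$. Your conclusion still holds: concavity with $g(0) = 1$ and $g'(0) = e^{-1} - \tfrac12 < 0$ gives $g(x) \le 1 + (e^{-1} - \tfrac12)x \le 1$ from the tangent line at $0$. The paper reaches the same bound differently, using $1 - q \le e^{-q}$, $e^{-1} < \tfrac38$, and $e^{-y} \le 1 - y + y^2/2$.
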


\begin{proof}
Let $Q^{\mathrm{wait}}_t$ and $O_t$ be, after step $t$, the
accumulated waiting routing and the number of groups holding a
facility, and let $M_t = \exp\bigl(Q^{\mathrm{wait}}_t/(2f) -
O_t\bigr)$, with $M_0 = 1$. A step in a group that already holds a
facility changes neither quantity. A step in a facility-less group at
current distance $\delta$ opens with probability
$q = \min\{1, \delta/f\}$, multiplying $M$ by $e^{-1}$, and otherwise
routes at cost $\delta = qf$ (the routing branch has probability zero
unless $\delta < f$), multiplying $M$ by $e^{q/2}$. The coin is fresh,
so the conditional expected multiplier is
\[
(1-q)\,e^{q/2} + q\,e^{-1}
\;\le\; e^{-q/2} + \tfrac38\,q
\;\le\; 1 - \tfrac{q}{2} + \tfrac{q^2}{8} + \tfrac38\,q
\;\le\; 1
\qquad\text{for } q \in [0,1],
\]
using $1-q \le e^{-q}$, then $e^{-x} \le 1 - x + x^2/2$ at $x = q/2$
together with $e^{-1} < \tfrac38$, and finally $q^2 \le q$. 
Hence $(M_t)$ is a supermartingale, so at the last step $N$ of the
(finite) sequence, $\E[M_N] \le M_0 = 1$. Since $O_N \le \Gamma$ on
every path, the event $Q^{\mathrm{wait}} > 2f(\Gamma + m)$ forces
$M_N \ge e^{Q^{\mathrm{wait}}/(2f) - \Gamma} > e^{m}$; applying
Markov's inequality to the nonnegative variable $M_N$ bounds its
probability by $\E[M_N]\, e^{-m} \le e^{-m}$.
\end{proof}

\begin{claim}[Late-opening tail]\label{clm:opentail}
Let $(o_t)$ be $\{0,1\}$-valued random variables adapted to a
filtration $(\mathcal{H}_t)$ such that
$p_t = \Pr[\,o_t = 1 \mid \mathcal{H}_{t-1}\,]$ is
$\mathcal{H}_{t-1}$-measurable and $\sum_t p_t \le m$ on every path.
Then for every $\delta \in (0,1)$,
\[
\Pr\Bigl[\,\sum_t o_t \;\ge\; \tfrac32\Bigl(m + \ln\tfrac1\delta\Bigr)\Bigr]
\;\le\; \delta .
\]
\end{claim}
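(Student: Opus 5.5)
The natural route is an exponential supermartingale combined with Markov's inequality, in the spirit of \Cref{lem:ville}(i) and \Cref{clm:waittail}. Since $\sum_t p_t \le m$ holds pathwise, a fixed-time Chernoff-type argument at the end of the sequence suffices; no time-uniformity is needed. For a parameter $\gamma > 0$, define
\[
Z_t \;=\; \exp\Bigl(\gamma \sum_{s\le t} o_s - (e^{\gamma}-1)\sum_{s\le t} p_s\Bigr), \qquad Z_0 = 1 .
\]
Because $o_t \in \{0,1\}$ and $\E[o_t \mid \mathcal{H}_{t-1}] = p_t$, we have
\[
\E[e^{\gamma o_t}\mid\mathcal{H}_{t-1}] = 1 + p_t(e^{\gamma}-1) \le e^{p_t(e^\gamma-1)} .
\]
Hence $(Z_t)$ is a nonnegative supermartingale.

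To pass to the (possibly infinite) sum $\sum_t o_t$, we bound $\E[Z_T] \le 1$ at each finite $T$. Letting $T\to\infty$, Fatou's lemma gives $\E[Z_\infty] \le 1$, where $Z_\infty = \liminf_T Z_T$. The partial sums of the $p_s$ are nondecreasing and bounded by $m$, so they converge, and $Z_\infty$ is the exponential of the limits. The pathwise bound $\sum_t p_t \le m$ then gives
\[
\E\bigl[e^{\gamma\sum_t o_t}\bigr] \;\le\; e^{(e^{\gamma}-1)m} .
\]
Markov's inequality yields
\[
\Pr\Bigl[\sum_t o_t \ge a\Bigr] \;\le\; \exp\bigl((e^\gamma-1)m - \gamma a\bigr).
\]

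The remaining step, which is the only delicate one, is choosing $\gamma$ so that $a = \tfrac32(m+\ln\tfrac1\delta)$ makes the exponent at most $\ln\delta$. We need
\[
\gamma\cdot\tfrac32\bigl(m + \ln\tfrac1\delta\bigr) - (e^\gamma - 1)m \;\ge\; \ln\tfrac1\delta .
\]
It suffices to satisfy two conditions separately: $\tfrac32\gamma \ge e^\gamma - 1$, which handles the $m$ terms, and $\tfrac32\gamma \ge 1$, which handles the $\ln\tfrac1\delta$ term. The second forces $\gamma \ge \tfrac23$. Taking $\gamma = \tfrac23$ in the first gives $e^{2/3}-1 \approx 0.948 \le 1$, so both hold.

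With $\gamma = 2/3$ the exponent is at most $-\ln\tfrac1\delta$, so the probability is at most $\delta$. I would verify the constant $e^{2/3} \le 2$ explicitly, for example via $e^{2/3} < e^{0.7} < 2.02$. Sharper, the bound $2/3 < \ln 2 \approx 0.693$ gives $e^{2/3} < 2$ directly.
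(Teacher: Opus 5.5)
Your proof is correct and is essentially the paper's argument: the paper uses the same exponential supermartingale with $\gamma = \ln 2$ (the process $2^{N_t}e^{-A_t}$, so that $e^{\gamma}-1 = 1$), and applies Markov at the threshold $(m+\ln\frac1\delta)/\ln 2 \le \frac32\bigl(m+\ln\frac1\delta\bigr)$. You instead take $\gamma = 2/3$ and check $e^{2/3} < 2$, and your Fatou step for a possibly infinite sequence adds care that the paper leaves implicit.
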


\begin{proof}
Since $\E\bigl[2^{o_t} \mid \mathcal{H}_{t-1}\bigr] = 1 + p_t \le
e^{p_t}$, the process $2^{N_t} e^{-A_t}$, where $N_t = \sum_{s \le t}
o_s$ and $A_t = \sum_{s \le t} p_s$, is a supermartingale with initial
value $1$. With $A \le m$ pathwise, the event $N \ge y$ forces
$2^{N} e^{-A} \ge 2^{y} e^{-m}$, so Markov's inequality gives
$\Pr[N \ge y] \le 2^{-y} e^{m}$. At
$y = \bigl(m + \ln\tfrac1\delta\bigr)/\ln 2 \le
\tfrac32\bigl(m + \ln\tfrac1\delta\bigr)$ the bound is $\delta$.
\end{proof}

Finally, we use the lemmas developed above, which apply to semi-metrics with $\alpha$-approximate triangle inequality, to prove \Cref{thm:semiratio_whp}.

\begin{proofof}{\Cref{thm:semiratio_whp}}
Fix an optimal $k$-clustering: centers $c^*_i$, clusters $C^*_i$,
optimal distances $d^*_p$, cluster costs $A^*_i$, and average radii
$a^*_i$, as in the proof of \Cref{thm:semiratio}, so that
$\mathrm{OPT}_k(P) = \sum_i A^*_i$. Decompose each cluster into the
core $S_0 = \{p : d^*_p \le a^*_i\}$ and the rings
$S_j = \{p : 2^{j-1} a^*_i < d^*_p \le 2^{j} a^*_i\}$ for $j \ge 1$
(the decomposition of \Cref{thm:semiratio} at $\mu = 2$; its
quantitative optimization of $\mu$ is not needed here). By the
counting argument there, at most $2 + \lgp n$ groups per cluster are
nonempty, hence at most $\Gamma = k(2 + \lgp n)$ in total, and every
presented point belongs to exactly one group.

Write the opening indicators as $o_t = \ind{\{U_t \le q_t\}}$ with
$q_t = \min\{1,\, \delta_t/f\}$, where $\delta_t$ is the current
distance, and let $\zeta_t \in \{0,1\}$ indicate that the group of
$x_t$ already holds a facility; $\zeta_t$ is a function of
$U_{<t}$, hence predictable. Decompose
\[
K \;=\; \sum_t (1-\zeta_t)\, o_t \;+\; \sum_t \zeta_t\, o_t ,
\qquad
Q \;=\; \underbrace{\sum_t (1-\zeta_t)(1-o_t)\,\delta_t}_{Q^{\mathrm{wait}}}
\;+\; \underbrace{\sum_t \zeta_t (1-o_t)\,\delta_t}_{Q^{\mathrm{late}}} .
\]

\emph{Pathwise bounds.} Each group contributes at most one first
opening, so $\sum_t (1-\zeta_t)\, o_t \le \Gamma$ deterministically.
Once the group of a point $p$ holds a facility at some point $q$ of
the same group, the current distance satisfies $\delta_p \le d(p, q)
\le \alpha\bigl(d^*_p + d^*_q\bigr)$ --- the only use of the
(approximate) triangle inequality. For a ring point ($j \ge 1$):
$d^*_q \le 2^{j} a^*_i < 2\, d^*_p$, so $\delta_p \le 3\alpha\,
d^*_p$; for a core point, $\delta_p \le \alpha(d^*_p + a^*_i)$.
Summing within a cluster,
\[
Q^{\mathrm{late}}
\;\le\; \sum_i \Bigl(3\alpha \sum_{p \notin S_0} d^*_p
+ \alpha \sum_{p \in S_0} (d^*_p + a^*_i)\Bigr)
\;\le\; \sum_i \bigl(3\alpha A^*_i + 2\alpha A^*_i\bigr)
\;=\; 5\alpha\,\mathrm{OPT}_k(P)
\]
on every path, and, since $q_t \le \delta_t / f$, the conditional
means of the late openings satisfy $\sum_t \zeta_t q_t \le
5\alpha\,\mathrm{OPT}_k(P)/f$ on every path as well.

\emph{Concentration.} \Cref{clm:waittail} at $m = \ln\tfrac2\delta$
gives $Q^{\mathrm{wait}} \le 2f\bigl(\Gamma + \ln\tfrac2\delta\bigr)$
except with probability $\delta/2$. \Cref{clm:opentail}, applied to
the adapted indicators $(\zeta_t o_t)$ (whose conditional
probabilities are $\zeta_t q_t$, since $\zeta_t$ is predictable and
the coin $U_t$ is fresh) with $m = 5\alpha\,\mathrm{OPT}_k(P)/f$,
bounds the late openings by $\tfrac32\bigl(5\alpha\,
\mathrm{OPT}_k(P)/f + \ln\tfrac2\delta\bigr)$ except with probability
$\delta/2$. A union bound and $\Gamma \le k(2+\lgp n)$ complete the
proof.
\end{proofof}

\section{Routing Lower Bounds}\label{app:starvation}

This appendix establishes much stronger lower bounds on the routing adaptivity
ratio $\E[Q^{\A}_\tau]/\E[Q^{\OB}_\tau]$ (\emph{routing ratio} for
short) than \Cref{thm:anticert}:
The two-sided routing
\emph{inflation} and \emph{deflation} gaps are both unbounded already at constant aspect ratio.  The one-sided inflation gap is at least linear in the aspect ratio.

\begin{prop}[Routing Ratio; proofs in \Cref{app:starvation}]\label{prop:starvesummary}

\begin{itemize}
    \item [(i)] A \emph{two-sided} generator in $\mathbb{R}^{D+1}$ forces routing ratio $2^{\Omega(D)}$ at aspect ratio $\Delta \le 29$ (\Cref{prop:starve}). 
 \item [(ii)]
Symmetrically, a two-sided generator forces routing ratio
$2^{-\Omega(D)}$ at aspect ratio $\Delta \le 29$ (\Cref{cor:starveflip}). 
\item [(iii)] A deterministic
\emph{one-sided} generator in a finite metric forces ratio
$\Om(\Lambda) = \Om(\Delta)$ (\Cref{prop:onesidedstarve}).
\end{itemize}
In the constructions of (i) and (iii),
$\E[K^{\A}_\tau] \le \E[K^{\OB}_\tau]$ and
$\E[C^{\A}_\tau] \le \E[C^{\OB}_\tau]$: the generator relocates cost
between components rather than inflating it; the construction of
(ii) relocates in the opposite direction.
\end{prop}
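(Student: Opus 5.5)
The three parts need different tools: part (ii) follows from part (i) by symmetry, while parts (i) and (iii) each need an explicit generator. I would order the work as (i), then (ii), then (iii).

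\emph{Part (ii) from part (i).} Part (ii) follows at once by \Cref{rem:exchange}. Exchanging the coin sequences $U$ and $U'$ maps a two-sided generator to a two-sided generator and leaves the stream distribution, hence $\Delta$, unchanged. It swaps $Q^{\A}$ with $Q^{\OB}$, $K^{\A}$ with $K^{\OB}$, and $C^{\A}$ with $C^{\OB}$. So the routing ratio $2^{\Omega(D)}$ of (i) becomes $2^{-\Omega(D)}$, and the cost relocation reverses direction. This is why (ii) ``relocates in the opposite direction.''

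\emph{Part (i): two-sided generator in $\mathbb{R}^{D+1}$.} The plan is to starve the replay's routing while the adaptive run keeps paying. The geometry is a hub $h$ together with a spherical code of $M = 2^{\Omega(D)}$ points at distance $r$ from $h$, chosen so that all pairwise distances lie in $[r, 2r]$. This keeps $\Delta$ bounded by an absolute constant (I expect something like $29$ once the extra scales are added). A two-sided generator sees both center sets, so it can present only points whose status is favourable, meaning points currently held by the replay but not by the adaptive run. Each such copy costs the replay routing $0$ and costs the adaptive run routing of order $r$. The aim is to pay for the replay's discovery cost once, of order $f$ by \Cref{clm:wait}, and then to harvest adaptive routing across $2^{\Omega(D)}$ near-orthogonal directions. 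The high dimension should ensure that an adaptive center opened in one direction does not reduce the adaptive distance in the others. To keep $\E[K^{\A}_\tau] \le \E[K^{\OB}_\tau]$, I would set $r/f$ small, so that the adaptive run rarely opens on harvested copies. The count inequality would then be checked with \Cref{lem:equiv}, and the total-cost inequality with \eqref{eq:costcenter}.

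\emph{Part (iii): one-sided generator in a finite metric.} Here I would reuse the certification phase of \Cref{thm:anticert}, which produces a point $z^*$ held by the replay with probability $1 - O(1/R)$ and not held by the adaptive run. In the extraction phase I would then present copies of $z^*$ itself, or of points at distance $d_{\min}$ from it, located at adaptive distance $r = f/\Lambda$. Each copy costs the adaptive run routing of order $r$ with opening probability $r/f$, so the adaptive run pays about $f$ before it opens, while the replay pays almost nothing. A synthetic finite metric gives the freedom to attach $\Theta(\Lambda)$ such exploitation targets to a single certified $z^*$. Each target must be equidistant from $z^*$ while staying pairwise far apart, so that an adaptive opening at one target does not cover the others. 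This yields adaptive routing $\Omega(\Lambda f)$ against replay routing $O(f)$, the latter including the silent-failure term. The expected count of adaptive openings is $O(\Lambda)$, which would have to be matched by the replay's count. To ensure this, I would pad the replay's openings during certification, for example with many fresh points at distance close to $f$.

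\emph{Main obstacle.} The hardest step is the metric and geometric design in (i) and (iii). I must keep adaptive openings from covering the other targets and keep $\E[K^{\A}_\tau] \le \E[K^{\OB}_\tau]$, all at the required aspect ratio. In a true metric the triangle inequality limits how many points can sit near $z^*$ while remaining pairwise far apart. My first attempt would be a star metric with distance $\varepsilon$ from $z^*$ to each target and distance $2\varepsilon$ between targets. I would then check whether the resulting adaptive saving on later targets still leaves an $\Omega(\Lambda)$ gap. If it does not, I would fall back to one fresh certification per target and compare the costs.
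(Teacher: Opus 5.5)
\paragraph{Verdict.} Your reduction of (ii) to (i) via \Cref{rem:exchange} is sound and matches the paper: its flipped success test $\{o'=1,o=0\}$ is exactly the coin exchange. The genuine gap is in (i) and (iii), where you build the differential in the wrong direction.

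\paragraph{Why a replay-only differential fails.} Suppose the replay holds a point $p$ that the adaptive run lacks. Copies of $p$ are free for the replay, but the differential is spent as soon as the adaptive run opens $p$. That happens after expected adaptive routing $f-r=O(f)$. This is the same order as the replay routing you pay to discover $p$: about $f/r$ fresh presentations at distance $r$, each routing about $r$ in both runs.

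Nor can one replay-only center feed many directions. A harvest point $w$ is cheaper for the replay because of $p$ only when $d(w,p)$ is much smaller than the adaptive distance. Harvest points at a comparable distance $\epsilon$ from $p$ are within $2\epsilon$ of each other, so after the first adaptive opening among them the two runs' distances differ by a factor of at most $2$. This is just the triangle inequality, so it bites in $\mathbb{R}^{D+1}$ exactly as in your star metric.

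Escaping it requires a geometric climb toward $p$, as in \Cref{thm:anticert}. That climb gains only $\Theta(\log\Delta/\log\log\Delta)$, which is $O(1)$ at $\Delta\le 29$ and far from $\Om(\Delta)$. Your fallback of one certification per target again trades $\Theta(f)$ replay routing for $O(f)$ adaptive routing.

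The relocation claims fail as well. Harvest steps open adaptive centers but essentially no replay centers, so $\E[K^{\A}_\tau]>\E[K^{\OB}_\tau]$. Padding with fresh points that both runs see at the same distance adds equal expected openings to both, by \Cref{lem:equiv}. Making the replay open more requires an adaptive-only center nearby, which is precisely the mechanism you are missing. Taking $r/f$ small is likewise the wrong regime.

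\paragraph{The missing idea: saturation.} Starve the replay by saturation rather than by proximity.

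First manufacture a center $A$ that the adaptive run holds and the replay lacks.
\begin{itemize}
\item Two-sided: present a hub, then $A$ at distance $f/2$, and accept when $o=1$, $o'=0$. This has probability $1/4$ and is retried over four regions at $O(f)$ routing.
\item One-sided: present attempts at distance $\eps' f$ from the hub until the adaptive run opens one. The replay then holds the selected $A_J$ only with the fixed probability $\eps'$, independent of $U$.
\end{itemize}

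Then present many payoff points at distance just below $f$ from $A$, but at distance at least $f$ from every other presented point, including the other payoffs. The replay, lacking $A$, faces a saturated distance, so it opens every payoff surely and routes nothing. The adaptive run routes at distance $0.9f$ with probability $0.1$ (resp.\ $(1-\eps')f$ with probability $\eps'$). Its openings never help on later payoffs, so a single differential is harvested $m$ times.

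Your spherical-code intuition is right, but the code must be centered at the adaptive-only $A$ with radius just below $f$. In $\mathbb{R}^{D+1}$ this needs $m=e^{\Omega(D)}$ points on the sphere of radius $0.9f$ around $A$, orthogonal to the hub direction, with pairwise distance above $f$. That means inner products at most $1/3$ (\Cref{lem:signvec}), and all distances then lie in $[f/2,14.1f]$.

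In the synthetic metric, the triangle inequality through the hub forces the payoff distance up to $(1-\eps')f$, so each payoff yields only $\eps'(1-\eps')f$. Taking $m=\eps'^{-2}$ gives adaptive routing $\Theta(\sqrt m\,f)$ against replay routing $O(f)$ at $\Delta=1/\eps'=\sqrt m$; the silent-failure term is $\eps'\cdot m\eps' f=f$.

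Finally, the replay opens every payoff, while the adaptive run opens only a $0.9$ (resp.\ $1-\eps'$) fraction and otherwise pays less than $f$ per payoff. So $\E[K^{\A}_\tau]\le\E[K^{\OB}_\tau]$ and $\E[C^{\A}_\tau]\le\E[C^{\OB}_\tau]$ follow termwise.
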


The two-sided inflation and deflation generators are presented in \Cref{sec:twosidedrouting}. The one-sided inflation generator is presented in \Cref{app:onesided}.

The constructions work by manufacturing a location about which the two
runs' center sets disagree, then exploiting it. We call such a location a \emph{differential}. The constructions differ in what 
they utilize: The two-sided construction relies on a high Euclidean dimension. The one-sided one relies on higher aspect ratio in a synthetic finite metric.

\subsection{Two-Sided: Unbounded Inflation and Deflation at Constant
Aspect Ratio} \label{sec:twosidedrouting}

\subsubsection{Technical overview} 
We describe our two-sided routing inflation generator. The routing deflation generator is symmetric.

The generator manufactures a single center $A$ that the adaptive run holds and the replay lacks. This is achieved with a $1/4$ success probability by presenting a hub that both runs open, followed by a point $A$ at a distance of $f/2$ from the hub. This can be attempted multiple times to ensure a high success probability, with both runs paying an $O(f)$ routing cost on these attempts.

The generator then issues exponentially many (in the dimension) payout points that are at a distance of $0.9f$ from $A$, but at a distance $>f$ from each other and the hub. The replay pays for these payout steps entirely in \emph{openings}, incurring $0$ routing cost on the payout points. The adaptive run, holding $A$, pays in routing on the $0.1$-fraction of payout points which are rejected. Notably, the adaptive run opens \emph{fewer} centers and pays \emph{less} total cost in expectation: the generator relocates the adaptive run's payment from facilities into routing, and the replay's from routing into opening costs for new centers.

The routing ratio we obtain is exponential in the dimension, while the aspect ratio is constant.

\subsubsection{Vector packing property} 

We make use of the following lemma in the analysis.
\begin{lem}[Sign-vector packing]\label{lem:signvec}
For every $D \ge 2$ there exist $m = \lfloor e^{D/36} \rfloor$ unit
vectors $v_1, \dots, v_m \in \mathbb{R}^{D}$ with
$\langle v_i, v_j \rangle \le 1/3$ for all $i \ne j$.
\end{lem}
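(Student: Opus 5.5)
We use the probabilistic method with random sign vectors. Draw $m$ independent vectors $v_1,\dots,v_m$, each uniform on $\{\pm 1/\sqrt{D}\}^D$; each is automatically a unit vector. For a fixed pair $i \ne j$, the inner product $\langle v_i, v_j\rangle = \frac1D\sum_{\ell=1}^D \sigma_\ell$, where the $\sigma_\ell = D\,(v_i)_\ell (v_j)_\ell$ are i.i.d.\ uniform signs. This holds because the coordinatewise product of two independent uniform sign vectors is again a uniform sign vector.

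Hoeffding's inequality then gives
\[
\Pr\bigl[\langle v_i, v_j\rangle > 1/3\bigr] \;\le\; \exp\bigl(-2D^2(1/3)^2/(4D)\bigr) \;=\; e^{-D/18}.
\]
A union bound over the $\binom{m}{2} < m^2/2$ pairs bounds the failure probability by $\tfrac{m^2}{2} e^{-D/18}$. With $m = \lfloor e^{D/36}\rfloor$ we have $m^2 \le e^{D/18}$, so the failure probability is at most $1/2 < 1$. Some realization therefore satisfies $\langle v_i, v_j\rangle \le 1/3$ for all $i \ne j$, which proves the lemma.

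Two edge cases need a check. If $m \le 1$, which happens whenever $D < 36\ln 2$ and so covers small $D$ such as $D=2$, the pairwise condition is vacuous and any unit vector works. The exponent must also be computed carefully. A one-sided Hoeffding bound for a sum of $D$ variables in $[-1,1]$ gives $\Pr[\sum\sigma_\ell \ge t] \le \exp(-t^2/(2D))$, and with $t = D/3$ this yields exactly $e^{-D/18}$, so the constant $36$ in the statement is what this computation produces.

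This is the only place where the constant is sensitive. If one used a weaker concentration bound, the $1/2$ margin of the union bound would not cover the extra constant, and $m$ would have to shrink. Even the strict inequality $\binom m2 < m^2/2$ leaves enough room, so no further slack is needed.
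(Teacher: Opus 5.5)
Your proposal is correct and follows the paper's proof essentially verbatim: i.i.d.\ $\pm D^{-1/2}$ sign vectors, Hoeffding giving $e^{-D/18}$ per pair, and a union bound over $\binom{m}{2}\le m^2/2\le e^{D/18}/2$ pairs to get failure probability at most $1/2$. Your separate check of the case $m\le 1$ is harmless but not needed, since the union bound already covers it (there are then no pairs).
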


\begin{proof}
Let $v_i = D^{-1/2}(\xi_{i,1}, \dots, \xi_{i,D})$ with $(\xi_{i,k})$
i.i.d.\ uniform signs; each $v_i$ is a unit vector. For fixed
$i \ne j$ the coordinatewise products $(\xi_{i,k}\,\xi_{j,k})_{k \le D}$
are again i.i.d.\ uniform signs, so by Hoeffding's inequality
\[
\Pr\bigl[\langle v_i, v_j\rangle > 1/3\bigr]
\;=\; \Pr\Bigl[\sum_{k=1}^{D} \xi_{i,k}\,\xi_{j,k} > D/3\Bigr]
\;\le\; e^{-(D/3)^2/(2D)} \;=\; e^{-D/18} .
\]
A union bound over the $\binom{m}{2} \le m^2/2 \le e^{D/18}/2$ pairs
bounds the probability of failure by $1/2$, so there exists a set of $m$ unit vectors such  that all pairwise inner products $\le 1/3$.
\end{proof}

\subsubsection{The generator}
\medskip\noindent\textbf{Configuration.}
Fix $D \ge 2$, let $m$ and $v_1, \dots, v_m$ be as in
\Cref{lem:signvec}. For this lower bound construction, we consider the Euclidean metric in $\mathbb{R}^{D+1}$, and let $e_1$ denote the first standard basis vector and embedding the $v_i$ in the
orthogonal complement $\{0\} \times \mathbb{R}^{D}$. The construction uses four
\emph{regions}, and each region $c \in \{1, 2, 3, 4\}$ consists of a ``hub,'' an
attempt point, and $m$ payoff points, which we denote as follows:
\[
H_c \;=\; 4f(c-1)\, e_1, \qquad
A_c \;=\; H_c + \tfrac{f}{2}\, e_1, \qquad
w_{c,i} \;=\; A_c + 0.9f\, v_i \quad (i \in [m]) .
\]

\begin{claim}[Layout distances]\label{clm:layout}
Exactly, $d(A_c, H_c) = f/2$ and $d(w_{c,i}, A_c) = 0.9f$; moreover
$d(w_{c,i}, H_c) = f\sqrt{1.06} \in [\,1.02f,\ 1.03f\,]$ and
$d(w_{c,i}, w_{c,j}) \ge f\sqrt{1.08} \ge 1.03f$ for $i \ne j$. Every
point of region $c$ lies in $\bar B(H_c,\ 1.03f)$, so points of
distinct regions are at distance at least $4f - 2(1.03f) = 1.94f$.
Consequently all pairwise distances of the configuration lie in
$[\,f/2,\ 14.1f\,]$.
\end{claim}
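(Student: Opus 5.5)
The plan is direct computation in $\mathbb{R}^{D+1}$, using that $e_1$ is orthogonal to every $v_i$ (the $v_i$ live in $\{0\}\times\mathbb{R}^D$) and that the $v_i$ are unit vectors with $\langle v_i,v_j\rangle\le 1/3$ by \Cref{lem:signvec}. I will first record the displacement vectors within a region. Namely, $A_c-H_c=\tfrac f2 e_1$ and $w_{c,i}-A_c=0.9f\,v_i$, so the first two distances are immediate. Next, $w_{c,i}-H_c=\tfrac f2 e_1+0.9f\,v_i$. By orthogonality (Pythagoras) its squared norm is $f^2(0.25+0.81)=1.06f^2$, and $\sqrt{1.06}\approx 1.0296$ lies in $[1.02,1.03]$.

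For two payoff points, $w_{c,i}-w_{c,j}=0.9f(v_i-v_j)$. Its squared norm is $0.81f^2(2-2\langle v_i,v_j\rangle)\ge 0.81f^2\cdot\tfrac43=1.08f^2$, and $\sqrt{1.08}\approx1.039\ge1.03$. Every region point is within $\max(f/2,\,f\sqrt{1.06})\le1.03f$ of $H_c$, so it lies in the closed ball. Distinct hubs satisfy $d(H_c,H_{c'})=4f|c-c'|\ge4f$. The triangle inequality then gives cross-region distances at least $4f-2(1.03f)=1.94f$.

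Finally, I will bound all pairwise distances. The lower bound $f/2$ follows from a case check: within a region the distances are $f/2$, $0.9f$, $\ge1.02f$, and $\ge1.03f$, while across regions they are $\ge1.94f$. For the upper bound, any two points lie within $1.03f$ of hubs that are at most $12f$ apart (regions $1$ and $4$), giving at most $12f+2.06f=14.06f\le14.1f$. This yields aspect ratio at most $28.2\le29$, consistent with later use.

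None of these steps is a real obstacle; the only care needed is the numeric rounding of the square roots and the inner-product bound entering with the correct sign.
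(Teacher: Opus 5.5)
Your proposal is correct and follows essentially the same route as the paper. The paper's proof uses the same ingredients: Pythagoras via $e_1 \perp v_i$, the bound $\|v_i - v_j\|^2 = 2 - 2\langle v_i, v_j\rangle \ge 4/3$ from \Cref{lem:signvec}, containment of each region in $\bar B(H_c, 1.03f)$, the triangle inequality through the $4f$-spaced hubs for both the $1.94f$ separation and the $14.1f$ diameter, and the observation that the listed within-region distances are the only ones below $1.94f$, so the minimum is $f/2$.
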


\begin{proof}
The first two distances are immediate. Since $e_1 \perp v_i$,
Pythagoras gives
$d(w_{c,i}, H_c)^2 = (f/2)^2 + (0.9f)^2 = 1.06 f^2$, and with
\Cref{lem:signvec},
\[
d(w_{c,i}, w_{c,j})^2
= (0.9f)^2\,\|v_i - v_j\|^2
= 0.81 f^2 \bigl(2 - 2\langle v_i, v_j\rangle\bigr)
\;\ge\; 0.81 \cdot \tfrac43\, f^2 \;=\; 1.08 f^2 ;
\]
note $1.02^2 = 1.0404 \le 1.06 \le 1.0609 = 1.03^2 \le 1.08$. The
farthest point of region $c$ from $H_c$ is a payoff point, at
$f\sqrt{1.06} \le 1.03f$, so region $c$ lies in $\bar B(H_c, 1.03f)$;
the hubs are $4f$-spaced along $e_1$, which gives the inter-region
bound and, by the triangle inequality through the hubs,
$d_{\max} \le 12f + 2(1.03f) \le 14.1f$. The within-region distances
listed are the only distances below $1.94f$, and the smallest among
them is $d(A_c, H_c) = f/2$.
\end{proof}

\medskip\noindent\textbf{Generator and stopping time.}
For $c = 1, 2, 3, 4$: present the hub point $H_c$ followed by the attempt point $A_c$; if the adaptive run
opened $A_c$ \emph{and} the replay rejected it (i.e.\ $o = 1$ and
$o' = 0$ at that step --- this is the only two-sided observation the
generator makes), declare success, set $c^* = c$, and present
$w_{c^*,1}, \dots, w_{c^*,m}$ in order ( emit $\bot$ forever). 
Otherwise, continue with region $c+1$, emitting $\bot$ forever after
four failures. Let $\mathrm{SUCC}$ denote the event that some region
is successful. Also, observe that each decision is a deterministic function of
$(U_{<t}, U'_{<t})$, so this defines a deterministic two-sided generator in
the sense of \Cref{def:twosided}. A failed region occupies exactly
two steps and the generator advances immediately, so $H_c$ and $A_c$,
occupy the deterministic step indices $2c - 1$
and $2c$, and on $\mathrm{SUCC}$ the payoff points occupy steps
$2c^* + 1, \dots, 2c^* + m$. We stop at the deterministic time $\tau = m + 8$ (a fixed number of
steps); $\bot$-steps incur no cost.

\subsubsection{Analysis}

\begin{claim}[Run analysis]\label{clm:starverun}
In every realization:
\emph{(i)} every presented hub is at distance $\ge f$ from all
previously presented points, so both runs open it with probability
$1$ and it does not contribute routing cost for either run;
\emph{(ii)} when $A_c$ is presented, both runs' nearest-center
distances equal $f/2$ exactly, so the opening probability in each run is
$1/2$, where each decision is made using  fresh, mutually independent coins
$U_{2c}, U'_{2c}$;
\emph{(iii)} on $\mathrm{SUCC}$, when $w_{c^*,i}$ is presented the
adaptive nearest-center distance is exactly $0.9f$ (opening
probability $0.9$), while every oblivious center is at distance
$\ge 1.02f$, so the replay opens $w_{c^*,i}$ with probability $1$ and no routing costs are incurred at this step.
\end{claim}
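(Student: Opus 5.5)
The plan is to prove each of the three parts of \Cref{clm:starverun} directly from the layout distances of \Cref{clm:layout} and the generator's control flow. We track the invariant that, before region $c$ begins, every presented point lies in regions $c' < c$. This holds because a failed region occupies exactly two steps, $H_{c'}$ and $A_{c'}$, and payoff points are presented only after success. After success only payoff points of region $c^*$ and $\bot$ follow, and $\bot$ steps never open anything.

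\emph{Part (i).} When $H_c$ is presented, every previously presented point belongs to a region $c' < c$. By \Cref{clm:layout}, such points are at distance at least $1.94f \ge f$ from $H_c$. The same holds for $H_1$ through the convention $d(\cdot,\emptyset) = +\infty$. So both $a_t$ and $b_t$ equal $1$, both runs open $H_c$ surely, and the routing term $(1-o_t)d_t$ vanishes; the convention $0\cdot\infty = 0$ covers $c = 1$.

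\emph{Part (ii).} Just before $A_c$ is presented, both runs hold $H_c$ by part (i). Every other center of either run lies in an earlier region, so it is at distance at least $1.94f$ from $A_c$. Hence the nearest-center distance is exactly $d(A_c, H_c) = f/2$ in each run, and the opening probability is $1/2$ in each. The decisions use the coins $U_{2c}$ and $U'_{2c}$. These coins are fresh and mutually independent given $\Gg_{2c-1}$, because the step index $2c$ is deterministic and the model draws $U, U'$ as independent i.i.d.\ sequences.

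\emph{Part (iii).} On $\mathrm{SUCC}$, the adaptive run holds $A_{c^*}$, since $o = 1$ at that step. No adaptive center in region $c^*$ is closer to $w_{c^*,i}$ than $A_{c^*}$. The other candidates are $H_{c^*}$, at distance $\ge 1.02f$, and earlier payoff points $w_{c^*,j}$, at distance $\ge 1.03f$. Centers in other regions are at distance $\ge 1.94f$. So the adaptive distance is exactly $0.9f$ and the opening probability is $0.9$.

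For the replay we do a case check on its possible centers. It rejected $A_{c^*}$, since $o' = 0$. Its other centers within region $c^*$ are therefore $H_{c^*}$ and possibly some earlier $w_{c^*,j}$ with $j < i$, all at distance $\ge 1.02f$ from $w_{c^*,i}$ by \Cref{clm:layout}. Centers in other regions are at distance $\ge 1.94f$. Thus $b_t = 1$, the replay opens $w_{c^*,i}$ surely, and it pays no routing at this step.

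The only step that needs care is this last one. We must make sure that all of the replay's centers, including payoff points it opened earlier in the sequence, stay at distance above $f$, and that claim uses the $\langle v_i, v_j\rangle \le 1/3$ packing bound of \Cref{lem:signvec}. Everything else is bookkeeping over the deterministic step indices.
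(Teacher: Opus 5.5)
Your proposal is correct and follows essentially the same route as the paper's proof. The shared steps are: cross-region distances of at least $1.94f$ from the layout claim rule out centers in other regions; the earlier in-region points are enumerated at each step; and part (iii) is a check over all candidate replay center sets rather than an induction on the replay's path. The paper states explicitly one fact that you leave implicit in your control-flow invariant: $A_{c^*}$ is presented exactly once, so the rejection $o'_{2c^*}=0$ means the replay never holds $A_{c^*}$ afterwards.
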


\begin{proof}
Centers are always a subset of previously presented points, and by
\Cref{clm:layout} all cross-region distances are $\ge 1.94f \ge f$;
hence cross-region centers never realize a nearest-center minimum
below $f$. When a hub is presented, its own region contains no
earlier point (the first hub sees the empty set, at distance
$+\infty$), so both runs are dear there and open with probability
$1$, paying no routing: this is (i). For (ii): the only region-$c$
point presented before $A_c$ is $H_c$, which both runs hold by (i),
at distance exactly $f/2 < f$; the coins at step $2c$ are fresh and
mutually independent, and $\min\{1, (f/2)/f\} = 1/2$. For (iii): the
region-$c^*$ points presented before $w_{c^*,i}$ are $H_{c^*}$,
$A_{c^*}$, and $w_{c^*,j}$ for $j < i$, at distances $\ge 1.02f$,
$= 0.9f$, and $\ge 1.03f$ from $w_{c^*,i}$ respectively
(\Cref{clm:layout}). The adaptive run holds $A_{c^*}$, since success
requires $o_{2c^*} = 1$; hence its minimum is exactly $0.9f$ and its
opening probability is $0.9$. The replay does \emph{not} hold
$A_{c^*}$: success requires $o'_{2c^*} = 0$, and $A_{c^*}$ is
presented exactly once. Hence every oblivious center --- whichever
subset of the remaining presented points the replay happens to hold
--- is at distance $\ge 1.02f \ge f$, the step is dear for the
replay, and it opens with probability $1$, paying no routing. Note
that (iii) quantifies over all \emph{candidate} center sets, so no
induction on the replay's path is needed.
\end{proof}

\begin{prop}[Two-sided routing starvation]\label{prop:starve}
For every $D \ge 2$, with $m = \lfloor e^{D/36} \rfloor$, the above
configuration in $\mathbb{R}^{D+1}$, two-sided generator, and
deterministic stopping time $\tau = m + 8$ satisfy: in every realization
the presented stream has $d_{\min} = f/2$ and
$d_{\max} \in [\,f,\ 14.1f\,]$, hence $\Lambda = 2$ and
$\Delta \le 29$; and
\[
\E\bigl[Q^{\A}_\tau\bigr] \;\ge\; 0.06\, f\, m,
\qquad
\E\bigl[Q^{\OB}_\tau\bigr] \;\le\; f,
\]
while
$\E[K^{\A}_\tau] \le \E[K^{\OB}_\tau]$ and
$\E[C^{\A}_\tau] \le \E[C^{\OB}_\tau]$. Consequently the
two-sided routing \emph{inflation gap} is not bounded by
any function of the aspect ratio: it reaches $2^{\Omega(D)}$ at
$\Delta \le 29$, on instances where the adaptive run simultaneously
opens fewer centers and pays less total cost in expectation.
\end{prop}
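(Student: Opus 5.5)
The plan is to read everything off \Cref{clm:layout,clm:starverun}, treating the failed-region prefix and the payout phase separately. \emph{Aspect ratio} first: every realization presents at least $H_1,A_1$, so $d_{\min}=d(A_c,H_c)=f/2$ by \Cref{clm:layout}, which also says every pairwise distance of the configuration is at least $f/2$. The pair $H_1,A_1$ realizes this minimum. For $d_{\max}$, the upper bound $14.1f$ is from \Cref{clm:layout}. For the lower bound $d_{\max}\ge f$, I would argue as follows: either a second region is presented, at distance $\ge 1.94f$, or success happens in region $1$, so payoff points appear at distance $\ge 1.02f$ from $H_1$. The one degenerate case is when only region $1$ is presented and it fails with nothing after it; but failure in region $1$ forces region $2$ to be presented, so this case does not arise. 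Hence $\Lambda=\min\{f,d_{\max}\}/d_{\min}=2$ and $\Delta\le 14.1f/(f/2)\le 29$.

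\emph{Success probability.} By \Cref{clm:starverun}(ii), each region succeeds independently with probability $\Pr[o=1,o'=0]=1/4$. So $\Pr[\mathrm{SUCC}]=1-(3/4)^4\ge 0.68$.

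\emph{Adaptive routing.} On $\mathrm{SUCC}$, each payoff step has adaptive distance exactly $0.9f$ by \Cref{clm:starverun}(iii). It is rejected with probability $0.1$, by a fresh coin, and then pays $0.9f$. Since the payoff points have deterministic positions given $\mathrm{SUCC}$ and $c^*$, the expected routing over the $m$ payoff steps is $0.09fm$ conditionally on the prefix. Dropping the nonnegative prefix routing gives $\E[Q^{\A}_\tau]\ge 0.68\cdot 0.09fm\ge 0.06fm$.

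\emph{Replay routing.} Hubs cost no routing (\Cref{clm:starverun}(i)), and payoff steps are opened surely by the replay (\Cref{clm:starverun}(iii)). So the replay's routing comes only from the attempt points. Each presented $A_c$ routes $f/2$ with probability $1/2$, and at most $4$ are presented, so $\E[Q^{\OB}_\tau]\le 4\cdot\tfrac12\cdot\tfrac f2=f$.

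\emph{Counts and costs.} I would compare the two runs step by step through the opening probabilities, using \Cref{lem:equiv} at the deterministic time $\tau$, i.e.\ $\E[K_\tau]=\E\sum_t a_t$ and $\E\sum_t b_t$.
\begin{itemize}
\item Hubs: $a_t=b_t=1$.
\item Attempt points: $a_t=b_t=1/2$.
\item Payoff steps: $a_t=0.9\le 1=b_t$.
\end{itemize}
Since the presented sequence is common to both runs and these identities hold pathwise, $\sum_t a_t\le\sum_t b_t$, hence $\E[K^{\A}_\tau]\le\E[K^{\OB}_\tau]$.

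For the total cost I would compare the expected step costs $m_t=(1-q_t)d_t+q_tf$ from the proof of \Cref{lem:equiv} pathwise.
\begin{itemize}
\item Hubs: both runs pay $f$.
\item Attempt points: both pay the same $m_t$.
\item Payoff steps: the adaptive run pays $0.1\cdot0.9f+0.9f=0.99f$, while the replay pays $f$.
\end{itemize}
Summing gives $\E[C^{\A}_\tau]\le\E[C^{\OB}_\tau]$. Finally, $m=\lfloor e^{D/36}\rfloor=2^{\Omega(D)}$ yields the gap claim.

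The only delicate point is making the pathwise step comparisons legitimate despite the two-sided coupling. This is handled by \Cref{clm:starverun}, which quantifies over all candidate center sets, together with the predictability of $a_t,b_t$ in \Cref{lem:equiv}.
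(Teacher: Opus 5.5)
Your proposal is correct and follows the paper's proof essentially step for step: the same aspect-ratio case split, the same $1-(3/4)^4\ge 0.68$ success bound, and the same per-step comparison of opening probabilities and expected step costs via \Cref{clm:starverun} and \Cref{lem:equiv}. The only difference is that you bound the replay's attempt routing crudely by $4\cdot\tfrac12\cdot\tfrac f2=f$, whereas the paper sums the geometric series exactly to $\tfrac{175}{256}f$; this does not affect the result.
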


\begin{proof}
\emph{Aspect ratios.} Every realization presents $H_1$ and $A_1$
(distance $f/2$, and no smaller distance exists in the configuration
by \Cref{clm:layout}), and then either $H_2$ (at distance
$\ge 3.5f$ from $H_1$) or, on success in region $1$, a payoff point
(at distance $f\sqrt{1.06} \ge f$ from $H_1$); so
$d_{\min} = f/2$ and $f \le d_{\max} \le 14.1f$ pathwise, giving
$\Lambda = \min\{f, d_{\max}\}/d_{\min} = 2$ and
$\Delta \le 14.1f/(f/2) \le 29$.

\emph{Success probability.} By \Cref{clm:starverun}(ii), region $c$,
if reached, succeeds iff $U_{2c} \le 1/2 < U'_{2c}$, an event of
probability $1/4$ determined by the fresh coin pair at the
deterministic step index $2c$; these events are independent across
$c$, and region $c$ is reached iff regions $1, \dots, c-1$ all
failed. Hence
\[
\Pr[\text{region } c \text{ reached}] = \Bigl(\tfrac34\Bigr)^{c-1},
\qquad
\Pr[\mathrm{SUCC}] = 1 - \Bigl(\tfrac34\Bigr)^{4}
= \tfrac{175}{256} \;\ge\; 0.68 .
\]

\emph{Oblivious routing.} Hubs and payoff points contribute zero
routing surely, by \Cref{clm:starverun}(i) and (iii). Attempt $A_c$,
if presented, contributes $(1 - o'_{2c}) \cdot \tfrac f2$; the event
that it is presented is determined by the coins at steps $< 2c$,
while $U'_{2c}$ is fresh, so the two are independent and
\[
\E\bigl[Q^{\OB}_\tau\bigr]
\;=\; \sum_{c=1}^{4} \Bigl(\tfrac34\Bigr)^{c-1}
\cdot \tfrac12 \cdot \tfrac f2
\;=\; \tfrac f4 \cdot \tfrac{175}{64}
\;=\; \tfrac{175}{256}\, f \;\le\; f .
\]

\emph{Adaptive routing.} Fix $i \in [m]$ and consider the step
$s = 2c^* + i$ on $\mathrm{SUCC}$. Conditionally on $\Gg_{s-1}$, on
the event that step $s$ presents $w_{c^*,i}$, the adaptive distance
is exactly $0.9f$ by \Cref{clm:starverun}(iii) and the coin $U_s$ is
fresh, so the step's conditional expected routing is
$(1 - 0.9) \cdot 0.9f = 0.09f$. Summing over the $m$ payoff steps by
the tower property and dropping the (nonnegative) contributions of
the attempts,
\[
\E\bigl[Q^{\A}_\tau\bigr]
\;\ge\; \Pr[\mathrm{SUCC}] \cdot m \cdot 0.09 f
\;\ge\; 0.68 \cdot 0.09\, f m \;\ge\; 0.06\, f m .
\]

\emph{Counts.} The two runs see the same presented sequence. Both
open every presented hub surely (\Cref{clm:starverun}(i)); each
presented attempt is opened by either run with probability exactly
$1/2$ (\Cref{clm:starverun}(ii), fresh coins); and on $\mathrm{SUCC}$
each payoff point is opened by the replay surely and by the adaptive
run with probability $0.9$ (\Cref{clm:starverun}(iii)). Taking
expectations termwise,
\[
\E\bigl[K^{\OB}_\tau\bigr] - \E\bigl[K^{\A}_\tau\bigr]
\;=\; \Pr[\mathrm{SUCC}] \cdot m \cdot (1 - 0.9) \;\ge\; 0 .
\]

\emph{Costs.} As in the proof of \eqref{eq:costcenter}, a step at
distance $d$ with opening probability $q = \min\{1, d/f\}$ has
conditional expected cost $f$ if $d \ge f$, and
$d + \tfrac{d}{f}(f - d)$ if $d < f$. Both runs pay $f$ per presented
hub and $\tfrac f2 + \tfrac12 \cdot \tfrac f2 = 0.75f$ per presented
attempt; on $\mathrm{SUCC}$ each payoff step costs the replay $f$
(dear) and the adaptive run $0.9f + 0.9 \cdot 0.1f = 0.99f$. Hence
\[
\E\bigl[C^{\OB}_\tau\bigr] - \E\bigl[C^{\A}_\tau\bigr]
\;=\; \Pr[\mathrm{SUCC}] \cdot m \cdot 0.01 f \;\ge\; 0 .
\qedhere
\]
\end{proof}

\begin{cor}[Two-sided routing deflation]\label{cor:starveflip}
Flipping the success test at the attempt from
$\{o_{2c} = 1,\, o'_{2c} = 0\}$ to $\{o'_{2c} = 1,\, o_{2c} = 0\}$
--- an event of the same probability $\tfrac14$ --- exchanges the
roles of the runs in \Cref{clm:starverun}(iii): the \emph{replay}
then holds $A_{c^*}$ at distance $0.9f$ from every payoff point while
every adaptive center is at distance $\ge 1.02f$, so the adaptive run
opens each payoff surely with no routing and the replay pays $0.09f$
per payoff in expectation. The identical accounting yields
$\E[Q^{\OB}_\tau] \ge 0.06fm$ and $\E[Q^{\A}_\tau] \le f$: the
two-sided routing \emph{deflation gap} is likewise unbounded ---
the ratio is driven below any function of the aspect ratio --- at
$\Delta \le 29$.
 Termwise, as in the \emph{Counts} and \emph{Costs} paragraphs of
the proof of \Cref{prop:starve}, the inequalities reverse: now
$\E[K^{\OB}_\tau] \le \E[K^{\A}_\tau]$ and
$\E[C^{\OB}_\tau] \le \E[C^{\A}_\tau]$ --- the starved run is
compensated in openings and pays more in total.
\end{cor}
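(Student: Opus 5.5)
The statement to prove is \Cref{cor:starveflip}. I would not build a new generator. Instead I would reuse the configuration and four-region schedule of \Cref{prop:starve} verbatim and change only the success test at step $2c$, replacing $\{o_{2c}=1,\,o'_{2c}=0\}$ by $\{o'_{2c}=1,\,o_{2c}=0\}$. This test is still a deterministic function of $(U_{<t},U'_{<t})$, so the result is a two-sided generator. The stopping time $\tau=m+8$ is still deterministic. The step indices of hubs, attempts and payoffs are unchanged.

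The aspect-ratio paragraph and \Cref{clm:layout} carry over word for word, since they never mention the success test, so $d_{\min}=f/2$ and $\Delta\le 29$. \Cref{clm:starverun}(i),(ii) also carry over unchanged. I then re-derive (iii) with the runs swapped. On $\mathrm{SUCC}$ the replay holds $A_{c^*}$, because $o'_{2c^*}=1$. The adaptive run does not hold it, because $A_{c^*}$ is presented once and $o_{2c^*}=0$.

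For payoff $w_{c^*,i}$, the previously presented region-$c^*$ points are $H_{c^*}$, $A_{c^*}$ and the earlier $w_{c^*,j}$. Two consequences follow, quantifying over all candidate center sets as in the original proof, so no path induction is needed:
\begin{itemize}
\item Every adaptive center is at distance $\ge 1.02f\ge f$. So the adaptive run opens surely and routes nothing.
\item The replay's nearest center is exactly $A_{c^*}$ at $0.9f$, since the others are at distance $\ge 1.02f$. So the replay opens with probability $0.9$.
\end{itemize}

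Next comes the success probability. The event $U'_{2c}\le\tfrac12<U_{2c}$ has probability $\tfrac14$, uses fresh coins, and is independent across regions, so $\Pr[\mathrm{SUCC}]=175/256\ge 0.68$ as before.

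The routing bounds then follow by the same tower-property computations with $\A$ and $\OB$ interchanged:
\begin{itemize}
\item Each payoff step contributes conditional replay routing $0.1\cdot 0.9f=0.09f$. Hence $\E[Q^{\OB}_\tau]\ge 0.68\cdot 0.09fm\ge 0.06fm$.
\item Adaptive routing comes only from attempts, each paying $\tfrac f2$ with probability $\tfrac12$. Hence $\E[Q^{\A}_\tau]=\tfrac{175}{256}f\le f$.
\end{itemize}
The ratio is therefore $\E[Q^{\A}_\tau]/\E[Q^{\OB}_\tau]\le 1/(0.06m)=2^{-\Omega(D)}$ at $\Delta\le 29$.

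For counts and costs I compare termwise. Hubs and attempts contribute identically to both runs, with the same per-step expectations as before. On $\mathrm{SUCC}$ each payoff gives the adaptive run one opening and cost $f$, and gives the replay $0.9$ openings and cost $0.99f$. Hence $\E[K^{\A}_\tau]-\E[K^{\OB}_\tau]=0.1\Pr[\mathrm{SUCC}]m\ge0$ and $\E[C^{\A}_\tau]-\E[C^{\OB}_\tau]=0.01f\Pr[\mathrm{SUCC}]m\ge 0$.

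The only real care point is the swapped claim (iii). I must check that the adaptive run's rejection of $A_{c^*}$ leaves no adaptive center within distance $f$ of a payoff. This holds because the hub and the other payoffs are $\ge1.02f$ away and cross-region points are $\ge1.94f$ away. Everything else is a relabeling.
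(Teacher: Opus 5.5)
Your proposal is correct and matches the paper's argument: flip the success test, re-derive \Cref{clm:starverun}(iii) with the runs exchanged over all candidate center sets, and redo the routing, count, and cost accounting of \Cref{prop:starve}. Your explicit values all check out: $\E[Q^{\A}_\tau]=\tfrac{175}{256}f$, $\E[Q^{\OB}_\tau]\ge 0.06fm$, and the termwise differences $0.1\,m\Pr[\mathrm{SUCC}]$ and $0.01\,fm\Pr[\mathrm{SUCC}]$.
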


\subsection{One-Sided: Routing Inflation $\Om(\Lambda)$}
\label{app:onesided}

\subsubsection{Technical overview} 
We describe our one-sided routing inflation generator. Unlike the two-sided construction, a one-sided generator cannot observe the replay outputs to guarantee a differential; instead, it creates a statistical differential in a custom finite metric space, paying for it with the aspect ratio.

The generator manufactures a center $A$ that the adaptive run holds and the replay likely lacks. It presents a hub $H$ that both runs open, followed by attempt points $A$ at a microscopic distance of $\epsilon' f$ from the hub. The generator stops at the first attempt $A$ that the adaptive run opens. Because the distance to the hub is so small, the oblivious replay opens $A$ with a tiny probability of exactly $\epsilon'$. Thus, with high probability ($1 - \epsilon'$), the generator successfully creates the differential: the adaptive run holds $A$, and the replay does not.

The generator then issues $m = 1/(\epsilon')^2$ payout points. For the oblivious replay, which likely lacks $A$, all payout points are at a distance of $f$ (from the hub). The replay finds these steps dear, opens a facility at almost every payout point, and pays a capped $O(f)$ in total routing. For the adaptive run, which holds $A$, the payout points are at a distance of $(1-\epsilon')f$. The adaptive run rejects each payout point with probability $\epsilon'$, paying $\approx f$ in routing each time it does so. Over $m$ payout points, it accumulates $\Omega(m \epsilon' f) = \Omega(\sqrt{m} f)$ in routing cost. As before, the adaptive run pays more in routing, but \emph{less} in total cost and \emph{fewer} total openings.

Because the minimum distance in this metric space is $\epsilon' f$, the aspect ratio is $\Delta = 1/\epsilon' = \sqrt{m}$. Therefore, the one-sided routing ratio we obtain is $\Omega(\sqrt{m}) = \Omega(\Delta)$, scaling linearly with the aspect ratio.

\subsubsection{The generator}

\medskip\noindent\textbf{Configuration.}
Fix an integer $m \ge 100$ and set
\[
\eps' = m^{-1/2} \;\le\; \tfrac1{10},
\qquad
n = \bigl\lceil 2/\eps' \bigr\rceil .
\]
The point set is
$\{H\} \cup \{A_1, \dots, A_n\} \cup
\{w^{(j)}_i : j \in [n],\ i \in [m]\}$
(one hub, $n$ attempts, and $m$ payoff points per attempt), with
distances, for all admissible distinct indices,
\[
d(H, A_j) = \eps' f, \qquad
d(A_j, A_k) = 2\eps' f, \qquad
d(A_j, w^{(j)}_i) = (1 - \eps') f,
\]
and $d(x, y) = f$ for every remaining pair --- that is,
$d(H, w^{(j)}_i) = d(A_k, w^{(j)}_i) = d(w^{(j)}_i, w^{(j')}_{i'}) = f$
for $k \ne j$ and distinct payoff pairs.

\begin{claim}[This is a metric]\label{clm:synthmetric}
The distances above satisfy the triangle inequality, so they define a
finite metric space with $d_{\min} = \eps' f$ and $d_{\max} = f$.
\end{claim}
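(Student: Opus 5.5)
The plan is to check positivity, symmetry and the triangle inequality $d(x,z)\le d(x,y)+d(y,z)$ for every triple of distinct points. Only five distance values occur: $\eps' f$ (hub to attempt), $2\eps' f$ (attempt to attempt), $(1-\eps')f$ (attempt to its own payoff points), and $f$ (every other pair). Positivity and symmetry hold by definition. Since $\eps'\le 1/10$, all values lie in $[\eps' f, f]$, which gives $d_{\min}=\eps' f$ and $d_{\max}=f$ directly.

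For the triangle inequality, I would first dispose of the generic case with one observation. Every distance is at most $f$. So whenever the two legs $d(x,y)+d(y,z)$ sum to at least $f$, the inequality is automatic. This happens as soon as either leg equals $f$ or $(1-\eps')f$ and the other leg is at least $\eps' f$, which is always true. The case of a $(1-\eps')f$ leg plus a second leg $\ge \eps' f$ gives a sum $\ge f$.

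The remaining triples have both legs in $\{\eps' f, 2\eps' f\}$. Such legs only join the hub and attempts, so $x,y,z\in\{H,A_1,\dots,A_n\}$. I would check this star-plus-clique structure directly:
\begin{itemize}
\item For $H,A_j,A_k$: $d(A_j,A_k)=2\eps' f = d(A_j,H)+d(H,A_k)$, so the inequality holds with equality.
\item Any other ordering has a left side at most $2\eps' f$ and a right side at least $\eps' f + \eps' f$, so it holds as well.
\end{itemize}
The triangle inequality holds in all cases, so the distances define a metric.

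The only place that needs care is the tight case through the hub, since $2\eps'f$ is exactly the sum of the two hub legs. Everything else is absorbed by the cap at $f$.
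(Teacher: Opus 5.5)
Your proof is correct and follows essentially the same route as the paper: any triple with a leg of length at least $(1-\eps')f$ is settled by the cap at $f$, and every remaining triple lies in the hub--attempt cluster, where the largest side $2\eps' f$ is at most the sum of two sides of length at least $\eps' f$. One minor slip: you announce five distance values but list four ($\eps' f$, $2\eps' f$, $(1-\eps')f$, $f$), which is the correct count.
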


\begin{proof}
All distances are positive and symmetric. A violation of
$d(x, z) \le d(x, y) + d(y, z)$ requires the largest side of a triple
to exceed the sum of the other two. If either of the other two sides
is at least $(1-\eps')f$, then, since every distance is at least
$\eps' f$, the sum is at least $(1-\eps')f + \eps' f = f$, which no
distance exceeds. Otherwise both other sides are at most $2\eps' f$;
but distances below $(1-\eps')f$ occur only within the cluster
$\{H, A_1, \dots, A_n\}$, so all three points lie in the cluster,
where the largest side, at most $2\eps' f$, does not exceed the sum
of the other two ($\ge \eps' f$ each). The extreme distances are read
off the configuration.
\end{proof}

\medskip\noindent\textbf{Generator and stopping time.}
Present $H$. Then for $j = 1, \dots, n$: present $A_j$; if the
adaptive run opens it, set $J = j$ and proceed to the payoff phase; if
no attempt is opened after $n$ tries, emit $\bot$ forever. In the
payoff phase, present $w^{(J)}_1, \dots, w^{(J)}_m$ in order, then
emit $\bot$ forever. Write $\mathrm{SEL} = \{J \text{ is defined}\}$.
Every decision depends only on the adaptive transcript, so this is a
deterministic \emph{one-sided} generator. Attempts are presented
consecutively from step $2$, so $A_j$, when presented, occupies the
deterministic step index $1 + j$, and on $\{J = j\}$ the payoff point
$w^{(j)}_i$ occupies step $1 + j + i$. We stop at the deterministic time $\tau = 1 + n + m$ (a fixed number
of steps); $\bot$-steps incur no cost.

\subsubsection{Analysis}

\begin{claim}[Run analysis]\label{clm:osrun}
In every realization:
\emph{(i)} $H$ is opened by both runs with probability $1$ and
contributes no routing to either;
\emph{(ii)} when $A_j$ is presented, the adaptive nearest-center
distance is exactly $\eps' f$, and the oblivious nearest-center
distance is exactly $\eps' f$ \emph{regardless of the replay's path};
hence both opening probabilities equal $\eps'$, decided by the fresh
independent coins $U_{1+j}, U'_{1+j}$;
\emph{(iii)} on $\{J = j\}$, when $w^{(j)}_i$ is presented, the
adaptive nearest-center distance is exactly $(1-\eps')f$ (opening
probability $1 - \eps'$), while the oblivious nearest-center distance
is exactly $(1-\eps')f$ if $A_j \in S^{\OB}$ and exactly $f$
otherwise; in the latter case the replay opens with probability $1$
and pays no routing.
\end{claim}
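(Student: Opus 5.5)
The plan is to follow the template of \Cref{clm:revrun} and \Cref{clm:starverun}: determine, at each step type, exactly which previously presented points can be centers of each run. Then read off nearest-center distances from the synthetic distance table, which \Cref{clm:synthmetric} shows is a genuine metric. We use only two structural facts. Centers are always a subset of previously presented points. The generator presents $H$, then the attempts $A_1,A_2,\dots$ in order, then (on $\{J=j\}$) only payoff points $w^{(j)}_i$ of that single attempt.

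For (i), $H$ is the first point, so both runs face $d(H,\emptyset)=+\infty$ and open surely. By the convention $0\cdot\infty=0$, this step contributes no routing.

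For (ii), when $A_j$ is presented, the previously presented points are $H,A_1,\dots,A_{j-1}$. In the adaptive run, the attempts $A_1,\dots,A_{j-1}$ were all rejected (otherwise the phase would have ended), so $S^{\A}=\{H\}$ and the distance is exactly $\eps'f$. In the replay, $S^{\OB}$ is some subset of $\{H,A_1,\dots,A_{j-1}\}$ that contains $H$ by (i). The distances from $A_j$ are $\eps'f$ to $H$ and $2\eps'f$ to each $A_k$, so the minimum is exactly $\eps'f$ whatever the subset. This is the point to stress: we quantify over all candidate center sets, so no induction on the replay path is needed. Both opening probabilities are then $\min\{1,\eps'\}=\eps'$. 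Freshness and independence of $U_{1+j},U'_{1+j}$ given $\Gg_{j}$ come straight from the model, since the step index is deterministic.

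For (iii), on $\{J=j\}$ at payoff step $w^{(j)}_i$, the adaptive centers are $H$, $A_j$, and some subset of $w^{(j)}_1,\dots,w^{(j)}_{i-1}$. Their distances are $f$, $(1-\eps')f$, and $f$ respectively, so the minimum is exactly $(1-\eps')f$ and the opening probability is $1-\eps'$. The replay centers are $H$, a subset of $\{A_1,\dots,A_j\}$, and a subset of earlier payoffs of block $j$. From the table, $d(A_k,w^{(j)}_i)=f$ for $k\ne j$, and all other distances are $f$ except the one to $A_j$. So the replay's distance is $(1-\eps')f$ if $A_j\in S^{\OB}$ and exactly $f$ otherwise. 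In the latter case the opening probability is $\min\{1,1\}=1$, so the step opens surely and its routing term $(1-o'_t)d$ vanishes.

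The only delicate point is bookkeeping, not estimation. We must make sure that no other presented point (an earlier attempt or earlier payoff) sits closer than the table says. The explicit "all remaining pairs are at distance $f$" clause settles this, so I expect no real obstacle.
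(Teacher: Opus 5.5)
Your proposal is correct and follows essentially the same argument as the paper. Both enumerate the points presented before each step and use the selection rule to pin down the adaptive center set exactly. For the replay, both quantify over all candidate center sets and read the nearest-center distances off the distance table. Your explicit appeal to the convention $0\cdot\infty=0$ in (i) is a harmless addition.
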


\begin{proof}
(i): $H$ is the first point, at distance $+\infty$ from the empty
center sets. (ii): the points presented before $A_j$ are $H$ and
$A_1, \dots, A_{j-1}$. The adaptive run holds $H$ (by (i)) and none
of the earlier attempts (it rejected each --- otherwise the attempt
phase would have ended earlier), so its distance is
$d(A_j, H) = \eps' f$ exactly. The replay's center set is a subset of
$\{H, A_1, \dots, A_{j-1}\}$ containing $H$; since
$d(A_j, H) = \eps' f < 2\eps' f = d(A_j, A_k)$, the minimum is
realized by $H$ whatever strays the replay holds, and equals
$\eps' f$ exactly. The coins at step $1+j$ are fresh and mutually
independent, and $\min\{1, \eps' f / f\} = \eps'$. (iii): the points
presented before $w^{(j)}_i$ are $H$, $A_1, \dots, A_j$, and
$w^{(j)}_{i'}$ for $i' < i$, at distances $f$, $f$ (for indices
$k < j$), $(1-\eps')f$ (for $A_j$), and $f$ respectively. The
adaptive run holds $A_j$ (this is the selection condition) and none
of $A_1, \dots, A_{j-1}$, so its minimum is exactly $(1-\eps')f$.
Every \emph{candidate} oblivious center other than $A_j$ is at
distance exactly $f$, so the replay's minimum is $(1-\eps')f$ or $f$
according to whether $A_j \in S^{\OB}$; at distance $f$ the opening
probability is $\min\{1, f/f\} = 1$. As in \Cref{clm:starverun},
(iii) quantifies over candidate center sets, so no induction on the
replay's path is needed.
\end{proof}

\begin{claim}[Deterministic lack-posterior]\label{clm:oscert}
Let $\chi_j = \ind{\{U'_{1+j} \le \eps'\}}$; the family
$(\chi_j)_{j \le n}$ is independent of $U$. On $\{J = j\}$ (a
$\sigma(U)$-measurable event), $A_j \in S^{\OB}_t$ if and only if
$\chi_j = 1$, for every $t \ge 1 + j$. In particular
$\Pr\bigl[A_J \in S^{\OB} \mid \sigma(U)\bigr] = \eps'$ on
$\mathrm{SEL}$.
\end{claim}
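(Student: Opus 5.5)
The plan is to track the replay's membership of $A_j$ directly through the fixed step index at which $A_j$ is presented, and then compute the posterior by conditioning on the adaptive coins. There are three steps: independence of $(\chi_j)$ from $U$, the pathwise equivalence, and the posterior computation.

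\emph{Independence.} Each $\chi_j$ is a function of the single coin $U'_{1+j}$ at the deterministic index $1+j$. The family $U'$ is independent of $U$ by the model, so $(\chi_j)_{j\le n}$ is independent of $U$; note this does not depend on whether $A_j$ is actually presented.

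\emph{Equivalence.} Fix a realization in $\{J=j\}$. By construction $A_j$ is presented exactly once, at step $1+j$. Later steps present only payoff points $w^{(j)}_i$ (and then $\bot$), and none of these coincides with $A_j$. Hence the replay holds $A_j$ at any time $t\ge 1+j$ if and only if it opened at step $1+j$, i.e.\ iff $o'_{1+j}=1$. By \Cref{clm:osrun}(ii), the replay's nearest-center distance at that step is exactly $\eps' f$ regardless of its path. So $o'_{1+j}=\ind{\{U'_{1+j}\le\eps'\}}=\chi_j$, and the claimed equivalence holds for every $t\ge 1+j$. This step is the only delicate point, since the replay's state is path-dependent. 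It is handled precisely by the path-independence in \Cref{clm:osrun}(ii), because $H$ realizes the minimum over every candidate center set.

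\emph{Posterior.} The event $\{J=j\}$ is determined by $U_2,\dots,U_{1+j}$ and so is $\sigma(U)$-measurable. On $\mathrm{SEL}=\bigcup_j\{J=j\}$, where the union is over disjoint events,
\[
\Pr\bigl[A_J\in S^{\OB}\mid\sigma(U)\bigr]=\sum_{j}\ind{\{J=j\}}\,\Pr\bigl[\chi_j=1\mid\sigma(U)\bigr]=\sum_j\ind{\{J=j\}}\,\eps'=\eps'.
\]
The middle equality uses that $\chi_j$ is independent of $U$, so its conditional law given $\sigma(U)$ is $\Bern(\eps')$. This is the same conditional-replay reasoning as in \Cref{lem:condreplay}.
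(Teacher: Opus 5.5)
Your proposal is correct and follows essentially the same route as the paper. $A_j$ is presented only at step $1+j$, and \Cref{clm:osrun}(ii) makes its replay opening probability the deterministic constant $\eps'$, so $o'_{1+j}=\chi_j$ is a function of $U'$ alone; $\sigma(U)$-measurability of $\{J=j\}$ then yields the posterior. You also spell out two steps the paper leaves implicit: membership persists because $A_j$ is never re-presented and centers are never removed, and the conditional expectation is a sum over the disjoint events $\{J=j\}$.
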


\begin{proof}
On $\{J = j\}$, the point $A_j$ is presented exactly once, at the
deterministic step $1+j$, where by \Cref{clm:osrun}(ii) its replay
opening probability is the \emph{deterministic constant} $\eps'$;
hence $o'_{1+j} = \ind{\{U'_{1+j} \le \eps'\}} = \chi_j$, a function
of $U'$ alone, independent of $\sigma(U)$. Selection is determined by
the adaptive coins, so $\{J = j\} \in \sigma(U)$, and the conditional
probability follows.
\end{proof}

\begin{prop}[One-sided routing starvation]\label{prop:onesidedstarve}
For every integer $m \ge 100$, the above finite metric space,
deterministic one-sided generator, and deterministic stopping time
$\tau = 1 + n + m$ satisfy: in every realization the presented stream
has $d_{\min} = \eps' f$ and $d_{\max} \le f$, hence
$\Delta = \Lambda \le \sqrt m$; and
\[
\E\bigl[Q^{\A}_\tau\bigr] \;\ge\; 0.75\,\sqrt m\, f,
\qquad
\E\bigl[Q^{\OB}_\tau\bigr] \;\le\; 2f,
\qquad
\E\bigl[Q^{\A}_\tau\bigr] \;\ge\; \frac{\sqrt m}{2}\,
\E\bigl[Q^{\OB}_\tau\bigr],
\]
while $\E[K^{\A}_\tau] \le \E[K^{\OB}_\tau]$ and
$\E[C^{\A}_\tau] \le \E[C^{\OB}_\tau]$. Consequently the
routing adaptivity ratio over one-sided generators grows at least
linearly in the capped aspect ratio: it is
$\Om(\Lambda) = \Om(\Delta)$.
\end{prop}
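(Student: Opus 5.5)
The plan is to use \Cref{clm:osrun} and \Cref{clm:oscert} for termwise accounting of each run, in the same way as the proof of \Cref{prop:starve}.

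\emph{Aspect ratio.} The claim $\Delta=\Lambda\le\sqrt m$ follows directly from \Cref{clm:synthmetric}, which gives $d_{\min}=\eps' f$ and $d_{\max}=f$. On any path where at least one attempt is presented, both extremes are attained.

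\emph{Selection probability.} By \Cref{clm:osrun}(ii), each attempt is opened by the adaptive run with the fresh probability $\eps'$. Hence $\Pr[\neg\mathrm{SEL}]=(1-\eps')^n\le e^{-2}$, so $\Pr[\mathrm{SEL}]\ge 1-e^{-2}\ge 0.86$.

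\emph{Adaptive routing.} I lower-bound $\E[Q^{\A}_\tau]$ by the payoff phase alone. On $\{J=j\}$, each payoff step has adaptive distance exactly $(1-\eps')f$ and a fresh coin, so its conditional expected routing is $\eps'(1-\eps')f$. Summing over the $m$ steps with the tower property gives
\[
\E\bigl[Q^{\A}_\tau\bigr]\ge \Pr[\mathrm{SEL}]\, m\eps'(1-\eps')f \ge 0.86\cdot 0.9\,\sqrt m\, f\ge 0.75\sqrt m f.
\]

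\emph{Replay routing.} I split this into three contributions.
\begin{itemize}
\item The hub contributes $0$.
\item Each presented attempt routes at most $\eps' f$. There are at most $n$ of them, so the total is at most $n\eps' f\le (2+\eps')f$. To sharpen this, I use that the number of presented attempts is $\min(J,n)$, which is stochastically dominated by a geometric variable with mean $1/\eps'$. Since whether an attempt is presented is independent of its replay coin, the expected attempt routing is at most $(1-\eps')\eps' f\cdot\E[\min(J,n)]\le f$.
\item Payoff steps route nothing unless $A_J\in S^{\OB}$, by \Cref{clm:osrun}(iii). By \Cref{clm:oscert}, this event has conditional probability $\eps'$ given $\sigma(U)$, and on it the payoff routing is at most $m\,\eps'(1-\eps')f$ in conditional expectation (each step routes with probability at most $\eps'$, at cost at most $f$). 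The payoff contribution is therefore at most $\eps'\cdot m\eps' f=f$.
\end{itemize}
The total is $\E[Q^{\OB}_\tau]\le 2f$. The ratio bound then follows: $0.75\sqrt m f\ge(\sqrt m/2)\cdot 2f\ge(\sqrt m/2)\E[Q^{\OB}_\tau]$ requires $0.75\ge 1$, which fails. So I will tighten the second step by conditioning more carefully. The replay's payoff opening probability is $1-\eps'$ at distance $(1-\eps')f$, so the payoff term is $\le\eps'\cdot m\eps'(1-\eps')f\le 0.9f$, and the attempt term is $\le (1-\eps')f\cdot\Pr$-weighted, which is at most $0.9f$. This gives $\E[Q^{\OB}_\tau]\le 1.8f\le 2f$, and then $0.75\sqrt m\ge(\sqrt m/2)\cdot1.5$, i.e.\ $\E[Q^{\A}_\tau]\ge(\sqrt m/2)\E[Q^{\OB}_\tau]$ holds whenever $\E[Q^{\OB}_\tau]\le1.5f$. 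Getting this constant is the main obstacle. If the bounds above do not reach $1.5f$, I will bound the attempt routing exactly as $(1-\eps')\eps' f\cdot\E[\min(J,n)]$ with $\E[\min(J,n)]=(1-(1-\eps')^n)/\eps'\le 1-e^{-2}\cdot$, which gives about $0.87f$. Adding the payoff term $\eps'^2 m(1-\eps')f\le 0.9f$ overshoots $1.5f$. In that case I would instead compare against the exact adaptive value $\Pr[\mathrm{SEL}]\sqrt m(1-\eps')f$, which yields a ratio of at least $0.86\cdot0.9\sqrt m/1.8\ge 0.43\sqrt m$, and state the constant accordingly.

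\emph{Counts and costs.} I compare the two runs termwise on the common sequence.
\begin{itemize}
\item The hub is opened by both runs surely.
\item Each attempt is opened with probability $\eps'$ by both runs, at equal cost.
\item At each payoff step, the replay has distance $f$, so it opens with probability $1$ and pays $f$; otherwise it faces the same distance as the adaptive run. The adaptive run opens with probability $1-\eps'\le 1$ and pays expected cost $(1-\eps')f+\eps'(1-\eps')f\le f$, with equality only when $\eps'=0$.
\end{itemize}
Summing via the tower property gives $\E[K^{\A}_\tau]\le\E[K^{\OB}_\tau]$ and $\E[C^{\A}_\tau]\le\E[C^{\OB}_\tau]$.
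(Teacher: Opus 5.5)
Your termwise accounting follows the paper's route. The aspect-ratio, adaptive-routing, $\E[Q^{\OB}_\tau]\le 2f$, count, and cost parts are fine. The gap is the third display, $\E[Q^{\A}_\tau]\ge\frac{\sqrt m}{2}\E[Q^{\OB}_\tau]$: you never establish it. Your fallback of ``stating the constant accordingly'' (about $0.43\sqrt m$) proves a weaker statement. The constant is lost because you insert the numerical bounds $\Pr[\mathrm{SEL}]\ge 0.86$ and $1-\eps'\ge 0.9$ into the numerator, while bounding the denominator by a constant that ignores $\Pr[\mathrm{SEL}]$. Your tightening step is also wrong as written. Since $\eps'\le 1/10$, we have $\eps'\cdot m\eps'(1-\eps')f=(1-\eps')f\ge 0.9f$, so the claimed ``$\le 0.9f$'' fails; what actually makes the payoff term smaller is a $\Pr[\mathrm{SEL}]$ factor you dropped.

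The missing observation is that both replay contributions carry exactly the factor $(1-\eps')\Pr[\mathrm{SEL}]$ that appears in your adaptive bound.
\begin{itemize}
\item The number of presented attempts has mean $\sum_{j\le n}(1-\eps')^{j-1}=(1-(1-\eps')^n)/\eps'=\Pr[\mathrm{SEL}]/\eps'$. So the replay's attempt routing is exactly $(1-\eps')\Pr[\mathrm{SEL}]f$.
\item The payoff phase occurs only on $\mathrm{SEL}$, and by \Cref{clm:oscert} $\E[\ind{\{J=j\}}\chi_j]=\eps'\Pr[J=j]$. Using $m\eps'^2=1$, the payoff routing is exactly $m\eps'(1-\eps')f\cdot\eps'\Pr[\mathrm{SEL}]=(1-\eps')\Pr[\mathrm{SEL}]f$.
\end{itemize}
Hence $\E[Q^{\OB}_\tau]=2(1-\eps')\Pr[\mathrm{SEL}]f$, while $\E[Q^{\A}_\tau]\ge(1-\eps')\Pr[\mathrm{SEL}]\sqrt m\,f$. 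Divide before substituting numbers: the common factor cancels and the ratio is at least $\sqrt m/2$. The asymptotic $\Om(\Lambda)$ conclusion survives your weaker constant, but the explicit inequality in the statement does not.

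A minor point: on $\neg\mathrm{SEL}$ only the hub and the attempts are presented, so there $d_{\max}=2\eps' f$, not $f$. This is harmless, since only $d_{\max}\le f$ is needed.
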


\begin{proof}
\emph{Aspect ratios.} Every realization presents $H$ and $A_1$
(distance $\eps' f$, the smallest distance of the configuration by
\Cref{clm:synthmetric}), so $d_{\min} = \eps' f$ pathwise, and
$d_{\max} \le f$ since no configuration distance exceeds $f$; hence
$\Lambda = \min\{f, d_{\max}\}/d_{\min} = d_{\max}/(\eps' f) \le
1/\eps' = \sqrt m$ and $\Delta = \Lambda$ pathwise.

\emph{Selection.} By \Cref{clm:osrun}(ii) the adaptive attempt coins
are fresh Bernoulli($\eps'$) at the deterministic steps
$2, \dots, 1+n$, so
\[
\Pr[\mathrm{SEL}] \;=\; 1 - (1 - \eps')^{n}
\;\ge\; 1 - e^{-n\eps'} \;\ge\; 1 - e^{-2} \;\ge\; 0.86 .
\]

\emph{Oblivious routing, exactly.} $H$ contributes zero. Attempt
$A_j$ is presented iff the adaptive run rejected
$A_1, \dots, A_{j-1}$ (probability $(1-\eps')^{j-1}$, an event of the
earlier adaptive coins, independent of the fresh $U'_{1+j}$), and
then contributes $(1 - o'_{1+j})\cdot\eps' f$, of conditional
expectation $(1-\eps')\eps' f$; summing the geometric series,
\[
\E\bigl[Q^{\OB}(\text{attempts})\bigr]
= \eps'(1-\eps') f \sum_{j=1}^{n} (1-\eps')^{j-1}
= (1-\eps')\bigl(1 - (1-\eps')^{n}\bigr) f
= (1-\eps')\,\Pr[\mathrm{SEL}]\, f .
\]
For the payoff phase, fix $i \in [m]$ and condition on
$\Gg_{s-1}$ at the step $s = 1+j+i$ on $\{J = j\}$: by
\Cref{clm:osrun}(iii) the step's conditional expected replay routing
is $\eps'\,(1-\eps')f$ on $\{\chi_j = 1\}$ and $0$ on
$\{\chi_j = 0\}$ (dear, opened surely). By \Cref{clm:oscert},
$\E[\ind{\{J=j\}}\chi_j] = \eps'\,\Pr[J = j]$, so summing over $j$
and the $m$ payoff indices, and using $m\,\eps'^2 = 1$,
\[
\E\bigl[Q^{\OB}(\text{payoffs})\bigr]
= m\,\eps'(1-\eps')f\cdot \eps'\,\Pr[\mathrm{SEL}]
= (1-\eps')\,\Pr[\mathrm{SEL}]\, f .
\]
Hence, exactly,
\[
\E\bigl[Q^{\OB}_\tau\bigr]
\;=\; 2\,(1-\eps')\,\Pr[\mathrm{SEL}]\, f \;\le\; 2f .
\]

\emph{Adaptive routing.} By \Cref{clm:osrun}(iii), on $\mathrm{SEL}$
each of the $m$ payoff steps has conditional expected adaptive
routing exactly $\eps'\,(1-\eps')f$, deterministically on that
branch. Dropping the (nonnegative) attempt contributions and using
$m\eps' = \sqrt m$,
\[
\E\bigl[Q^{\A}_\tau\bigr]
\;\ge\; \Pr[\mathrm{SEL}]\cdot m\,\eps'(1-\eps') f
\;=\; (1-\eps')\,\Pr[\mathrm{SEL}]\,\sqrt m\, f
\;\ge\; 0.86 \cdot 0.9\,\sqrt m f \;\ge\; 0.75\,\sqrt m\, f ,
\]
using $\eps' \le 1/10$. Dividing the two displays, the factors
$(1-\eps')\Pr[\mathrm{SEL}] > 0$ cancel:
\[
\frac{\E[Q^{\A}_\tau]}{\E[Q^{\OB}_\tau]}
\;\ge\; \frac{(1-\eps')\Pr[\mathrm{SEL}]\,\sqrt m\, f}
{2(1-\eps')\Pr[\mathrm{SEL}]\, f}
\;=\; \frac{\sqrt m}{2} .
\]

\emph{Counts.} $H$ is opened by both runs surely. Each presented
attempt is opened by either run with probability exactly $\eps'$
(\Cref{clm:osrun}(ii), fresh coins), and whether it is presented is
independent of both step-$(1{+}j)$ coins, so the attempts contribute
equally in expectation. Per payoff step, conditionally on
$\Gg_{s-1}$, the adaptive run opens with probability $1 - \eps'$
while the replay opens with probability $1$ on $\{\chi_J = 0\}$ and
$1 - \eps'$ on $\{\chi_J = 1\}$; in either case at least $1 - \eps'$.
Hence $\E[K^{\OB}_\tau] - \E[K^{\A}_\tau] =
m\,\eps'\,(1-\eps')\,\Pr[\mathrm{SEL}] \ge 0$.

\emph{Costs.} As in the proof of \eqref{eq:costcenter}, a step at
distance $d$ with opening probability $q = \min\{1, d/f\}$ has
conditional expected cost $f$ if $d \ge f$ and $d + \tfrac df(f - d)$
if $d < f$. The hub costs both runs $f$; each presented attempt costs
both runs $\eps' f + \eps'(f - \eps' f) = \eps'(2 - \eps')f$; each
payoff step costs the adaptive run
$(1-\eps')f + (1-\eps')\eps' f = (1 - \eps'^2)f$, and the replay the
same on $\{\chi_J = 1\}$ but $f \ge (1-\eps'^2)f$ on
$\{\chi_J = 0\}$. Hence
$\E[C^{\OB}_\tau] - \E[C^{\A}_\tau]
= m\,\eps'^2\,(1-\eps')\,\Pr[\mathrm{SEL}]\, f
= (1-\eps')\,\Pr[\mathrm{SEL}]\, f \ge 0$.
\qedhere
\end{proof}

\end{document}